\newif\ifmaybe
\newtcolorbox{mybox}{                       % this will create the main box
   enhanced,
   %overlay unbroken and first={\mytitle},   % this will ensure that the title 
%will not be printed two times in broken boxes
   colframe=blue!50,
   boxrule=1pt,
   width=1.05\textwidth,
   arc=1mm,
   breakable,                               % this will make the box breaks when 
%the end of the page is reached
   %top=15pt,
   %before=\vskip18pt,
}
\title{A Notion of Entropy for Stochastic Processes on Marked Rooted Graphs}
\author{Payam Delgosha and Venkat Anantharam\\[2mm]
\small Department of Electrical Engineering and Computer Sciences\\
\small University of California, Berkeley\\
\small \{pdelgosha, ananth\} @ berkeley.edu
}
\newcommand{\ev}[1]{\mathbb{E} \left [ #1 \right ] }
\newcommand{\evwrt}[2]{\mathbb{E}_{#1} \left [ #2 \right ] }
\newcommand{\pr}[1]{\mathbb{P} \left ( #1 \right ) }
\newcommand{\prwrt}[2]{\mathbb{P}_{#1} \left ( #2 \right ) }
\newcommand{\snorm}[1]{\Vert #1 \Vert}
\newcommand{\one}[1]{\mathbbm{1} \left [ #1 \right ]}
\newtheorem{lem}{Lemma}
\newtheorem{thm}{Theorem}
\newtheorem{definition}{Definition}
\newtheorem{prop}{Proposition}
\newtheorem{rem}{Remark}
\newtheorem{cor}{Corollary}
\newcommand{\mG}{\mathcal{G}}
\newcommand{\mH}{\mathcal{H}}
\newcommand{\vm}{\vec{m}}
\newcommand{\vu}{\vec{u}}
\newcommand{\vd}{\vec{d}}
\newcommand{\vmn}{\vec{m}^{(n)}}
\newcommand{\vun}{\vec{u}^{(n)}}
\newcommand{\mn}{m^{(n)}}
\newcommand{\tmn}{\widetilde{m}^{(n)}}
\newcommand{\un}{u^{(n)}}
\newcommand{\tun}{\widetilde{u}^{(n)}}
\newcommand{\tG}{\widetilde{G}}
\newcommand{\mAdelta}{\mathcal{A}^{(\delta)}}
\newcommand{\mAr}{\mathcal{A}^{(r)}}
\newcommand{\mAk}{\mathcal{A}^{(k)}}
\newcommand{\mD}{\mathcal{D}}
\newcommand{\mP}{\mathcal{P}}
\newcommand{\mM}{\mathcal{M}}
\newcommand{\mS}{\mathcal{S}}
\newcommand{\mF}{\mathcal{F}}
\newcommand{\mT}{\mathcal{T}}
\newcommand{\mC}{\mathcal{C}}
\newcommand{\mCdelta}{\mathcal{C}^{(\delta)}}
\newcommand{\mCk}{\mathcal{C}^{(k)}}
\newcommand{\mX}{\mathcal{X}}
\newcommand{\mGb}{\bar{\mathcal{G}}}
\newcommand{\mTb}{\bar{\mathcal{T}}}
\newcommand{\mGn}{\mathcal{G}^{(n)}}
\newcommand{\tx}{\tilde{x}}
\newcommand{\txp}{\tilde{x}'}
\newcommand{\mGnmnun}{\mathcal{G}^{(n)}_{\vmn, \vun}}
\newcommand{\Anmnun}{A^{(n)}_{\vmn, \vun}}
\newcommand{\Bnmnun}{B^{(n)}_{\vmn, \vun}}
\newcommand{\Yn}{Y^{(n)}}
\newcommand{\Gn}{G^{(n)}}
\newcommand{\Gni}{G^{(n_i)}}
\newcommand{\tGn}{\widetilde{G}^{(n)}}
\newcommand{\gn}{g^{(n)}}
\newcommand{\Hn}{H^{(n)}}
\newcommand{\Pn}{P^{(n)}}
\newcommand{\Sn}{S^{(n)}}
\newcommand{\tSn}{\widetilde{S}^{(n)}}
\newcommand{\reals}{\mathbb{R}}
\newcommand{\integers}{\mathbb{Z}}
\newcommand{\nats}{\mathbb{N}}
\newcommand{\ugwt}{\mathsf{UGWT}}
\newcommand{\cugwt}{\mathsf{CUGWT}}
\newcommand{\LP}{L\'{e}vy--Prokhorov }
\newcommand{\dlp}{d_\text{LP}} % Levy Prohorov distance
\newcommand{\dtv}{d_\text{TV}} % Levy Prohorov distance
\newcommand{\bch}{ \Sigma} % Bordenave Caputo entropy
\newcommand{\bchover}{\overline{\Sigma}}
\newcommand{\bchunder}{\underbar{$\Sigma$}}
\newcommand{\condmnun}{|_{(\vmn, \vun)}} % conditioning BC entropy on mn and un
\newcommand{\edgemark}{\Xi} % edge mark set
\newcommand{\vermark}{\Theta} % vertex mark set
\newcommand{\etype}{\varphi} % edge type
\newcommand{\vtype}{\Pi} % vertex type
\newcommand{\vvtype}{\vec{\Pi}} %the vertex type vector
\newcommand{\vdeg}{\vec{\deg}} % the degree vector
\newcommand{\dist}{\text{dist}} %distance between two vertices in a graph
\newcommand{\hP}{\widehat{P}} % sized biased distribution
\newcommand{\hQ}{\widehat{Q}} % sized biased distribution
\newcommand{\hmG}{\widehat{\mathcal{G}}}
\newcommand{\CB}{\mathsf{CB}} % colorblind version of a directed colored graph
\newcommand{\MCB}{\mathsf{MCB}} % marked colorblind version of a directed colored graph
\newcommand{\colored}{\mathsf{C}} % colored version of a marked graph (which is a colored directed graph
\newcommand{\CM}{\mathsf{CM}} % configuration model
\newcommand{\vD}{\vec{D}} % the degree sequence of a colored graph
\newcommand{\vDn}{\vec{D}^{(n)}} % the degree sequence of a colored graph
\newcommand{\Dn}{D^{(n)}} % the degree sequence of a colored graph
\newcommand{\vDpn}{\vec{D}^{'(n)}}
\newcommand{\Dpn}{D^{'(n)}}
\newcommand{\tP}{\widetilde{P}}
\newcommand{\tQ}{\widetilde{Q}}
\newcommand{\vbeta}{\vec{\beta}} % vector of marks
\newcommand{\vbetan}{\vec{\beta}^{(n)}} % vector of marks
\newcommand{\betan}{\beta^{(n)}} % vector of marks
\newcommand{\muk}{\mu^{(k)}}
\newcommand{\vtau}{\vec{\tau}}
\colorlet{pedit}{ForestGreen!80!black}
\colorlet{pedit2}{blue}
\newglossaryentry{not:mPX}{%
  type = {notation},
  name = {$\mP(X)$},
  description = {The set of Borel probability measures on a metric space $X$}
}%
\newglossaryentry{not:LPdist}{%
  type = {notation},
  name = {$\dlp(\mu, \nu)$},
  description = {The \LP distance between probability measures $\mu$ and $\nu$}
}%
\newglossaryentry{not:edgemarkset}{%
  type = {notation},
  name = {$\edgemark$},
  sort = {edgemarkset},
  description = {Edge mark set}
}%
\newglossaryentry{not:vermarkset}{%
  type = {notation},
  name = {$\vermark$},
  sort = {vermarkset},
  description = {Vertex mark set}
}%
\newglossaryentry{not:VG}{%
  type = {notation},
  name = {$V(G)$},
  description = {Vertex set of graph $G$}
}%
\newglossaryentry{not:edgemark}{%
  type = {notation},
  name = {$\xi_G(v,w)$},
  description = {The mark of edge $(v,w)$ towards vertex $w$}
}%
\newglossaryentry{not:vermark}{%
  type = {notation},
  name = {$\tau_G(u)$},
  description = {The mark of vertex $u$}
}%
\newglossaryentry{not:vtauG}{%
  type = {notation},
  name = {$\vtau_G$},
  description = {The vertex mark vector of $G$, i.e. $\vtau_G = (\tau_G(v): v \in
  V(G))$}
}%
\newglossaryentry{not:deg-x-x'}{%
  type = {notation},
  name = {$\deg_G^{x,x'}(o)$},
  description = {the number of vertices $v$ connected to $o$ in $G$ such that $\xi_G(v,o) = x$ and $\xi_G(o,v) = x'$}
}%
\newglossaryentry{not:deg}{%
  type = {notation},
  name = {$\deg_G(o)$},
  description = {the total number of vertices connected to $o$ in $G$, irrespective of their marks}
}%
\newglossaryentry{not:edgemarkcountvector}{%
  type = {notation},
  name = {$\vm_G$},
  description = {The edge mark count vector of the marked graph $G$, i.e. $\vm_G = (m_G(x, x'): x, x' \in \edgemark)$ where $m_G(x, x')$ is the number of edges $(i,j)$ in $G$ such that $\xi_G(i, j) = x, \xi_G(j,i) = x'$ or $\xi_G(i,j) = x', \xi_G(j,i) = x$}
}%
\newglossaryentry{not:vertexmarkcountvector}{%
  type = {notation},
  name = {$\vu_G$},
  description = {The vertex mark count vector of the marked graph $G$, i.e. $\vu_G = (u_G(\theta): \theta \in \vermark)$ where $u_G(\theta)$ denotes the number of vertices in $G$ carrying mark $\theta$}
}%
\newglossaryentry{not:vertices-connected}{%
  type = {notation},
  name = {$v \sim_G w$},
  description = {vertices $v$ and $w$ are adjacent in $G$}
}%
\newglossaryentry{not:dist_G}{%
  type = {notation},
  name = {$\dist_G(v,w)$},
  description = {The distance between vertices $v$ and $w$ in $G$, which is the
    length of the shortest path connecting $v$ to $w$}
}%
\newglossaryentry{not:G-o}{%
  type = {notation},
  name = {$(G,o)$},
  description = {for a connected marked graph $G$ on a finite or countably
    infinite vertex set and a vertex $o \in V(G)$, we call the pair $(G,o)$ a
    rooted connected marked graph. Also, for a marked graph $G$ which is not
    necessarily connected and a vertex $o \in V(G)$, we use $(G,o)$ to denote $(G(o), o)$, where $G(o)$ is
  the connected component of $o$ in $G$}
}%
\newglossaryentry{not:G-o_h}{%
  type = {notation},
  name = {$(G,o)_h$},
  description = {the $h$--neighborhood of vertex $o$ in $G$}
}%
\newglossaryentry{not:G-o_iso}{%
  type = {notation},
  name = {$[G,o]$},
  description = {the isomorphism class of 
  %the simple marked rooted graph 
  $(G,o)$}
}%
\newglossaryentry{not:G-o_iso_h}{%
  type = {notation},
  name = {$[G,o]_h$},
  description = {the isomorphism class of the $h$--neighborhood $(G,o)_h$}
}%
\newglossaryentry{not:G_star}{%
  type = {notation},
  name = {$\mGb_*(\edgemark, \vermark)$, $\mGb_*$},
  description = {The space of the isomorphism classes $[G,o]$ of all simple
    marked rooted graphs on a finite of
countable vertex set, where edge and vertex marks come from the sets $\edgemark$
and $\vermark$, respectively. When the mark sets are clear from the context, we
use $\mGb_*$ as a shorthand for $\mGb_*(\edgemark, \vermark)$}
}%
\newglossaryentry{not:T_star}{%
  type = {notation},
  name = {$\mTb_*(\edgemark, \vermark)$, $\mTb_*$},
  description = {Subset of $\mGb_*(\edgemark, \vermark)$ consisting of
    isomorphism classes $[T,o]$ where $T$ is a tree. When the mark sets are
    clear from the context, we use $\mTb_*$ as a shorthand for $\mTb_*(\edgemark,
    \vermark)$}
}%
\newglossaryentry{not:mu_h}{%
  type = {notation},
  name = {$\mu_h$},
  description = {
For $h \ge 0$, 
$\mu_h \in \mP(\mGb_*^h)$ is the law of
$[G,o]_h$, where $[G,o]$ has law $\mu \in \mP(\mGb_*)$}
}%
\newglossaryentry{not:G_star_h}{%
  type = {notation},
  name = {$\mGb_*^h$},
  description = {The space of isomorphism classes of connected rooted marked
    graphs with depth at most $h$}
}%
\newglossaryentry{not:hmG_star}{%
  type = {notation},
  name = {$\hmG_*$},
  description = {The space of isomorphism classes of rooted multigraphs}
}%
\newglossaryentry{not:UG}{%
  type = {notation},
  name = {$U(G)$},
  description = {Neighborhood structure of a typical vertex in a finite graph
    $G$}
}%
\newglossaryentry{not:Ehgg'}{%
  type = {notation},
  name = {$E_h(g,g')(G,o)$},
  description = {The number of $v \sim_G o$ such that $\etype_G^h(o,v) = (g,g')$}
}%
\newglossaryentry{not:Ehgg'-iso}{%
  type = {notation},
  name = {$E_h(g,g')([G,o])$},
  description = {For $[G,o] \in \mGb_*$, integer $h \geq 1$, and $g,g' \in
    \edgemark \times \mGb_*^{h-1}$, we write $E_h(g,g')([G,o])$ for
    $E_h(g,g')(G,o)$, where $(G,o)$ is an arbitrary member of $[G,o]$}
}
\newglossaryentry{not:e-p-g-g'}{%
  type = {notation},
  name = {$e_p(g,g')$},
  description = {For  $h \geq 1$, $P \in \mP(\mGb_*^h)$, and $g, g' \in \Xi
    \times \mGb_*^{h-1}$, we define $e_P(g, g') := \evwrt{P}{E_h(g, g')(G, o)}$.
  Here, $(G, o)$ is a member of the isomorphism class $[G, o]$ that has law $P$}
}
\newglossaryentry{not:Guv}{%
  type = {notation},
  name = {$G(u,v)$},
  description = {The pair $(\xi_G(u,v), (G',v))$ where $G'$ is obtained from $G$ by
    removing the edge $u,v$}
}%
\newglossaryentry{not:Guvh}{%
  type = {notation},
  name = {$G(u,v)_h$},
  description = {The pair $(\xi_G(u,v), (G',v)_h)$ where $G'$ is obtained from $G$ by
    removing the edge $u,v$}
}%
\newglossaryentry{not:Guv_iso}{%
  type = {notation},
  name = {$G[u,v]$},
  description = {The pair $(\xi_G(u,v), [G',v])$ where $G'$ is obtained from $G$ by
    removing the edge $u,v$}
}%
\newglossaryentry{not:Guvh_iso}{%
  type = {notation},
  name = {$G[u,v]_h$},
  description = {The pair $(\xi_G(u,v), [G',v]_h)$ where $G'$ is obtained from $G$ by
    removing the edge $u,v$}
}%
\newglossaryentry{not:ugwhP}{%
  type = {notation},
  name = {$\ensuremath{\ugwt}_h(P)$},
  description = {Unimodular Galton--Watson Tree with depth $h$ neighborhood distribution $P$}
}%
\newglossaryentry{not:oplus}{%
  type = {notation},
  name = {$t \oplus t'$},
  description = {For $t \in \edgemark \times \mTb_*^k$ and $t' \in \edgemark
    \times \mTb_*^l$, returns an object in $\mTb_*^{\max\{k, l+1\}}$, see Figure~\ref{fig:t_oplus_tprime}}
}%
\newglossaryentry{not:unimodular-measure}{%
  type = {notation},
  name = {$\mP_u(\mGb_*)$, $\mP_u(\mTb_*)$},
  description = {The set of unimodular measures on $\mGb_*$ and $\mTb_*$, respectively}
}%
\newglossaryentry{not:degxx'mu}{%
  type = {notation},
  name = {$\deg_{x,x'}(\mu)$},
  description = {The expected number of edges connected to the root in $\mu$ with mark $x$ towards the root and mark $x'$ towards the offspring}
}%
\newglossaryentry{not:degmu}{%
  type = {notation},
  name = {$\deg(\mu)$},
  description = {The expected degree at the root in $\mu$}
}%
\newglossaryentry{not:vdegmu}{%
  type = {notation},
  name = {$\vdeg(\mu)$},
  description = {The average degree vector at the root, i.e. $\vdeg(\mu) = (\deg_{x,x'}(\mu): x, x' \in \edgemark$)}
}%
\newglossaryentry{not:vtypemu}{%
  type = {notation},
  name = {$\vtype_\theta(\mu)$},
  description = {The probability of the root in $\mu$ having mark $\theta \in \vermark$}
}%
\newglossaryentry{not:vvtypemu}{%
  type = {notation},
  name = {$\vvtype(\mu)$},
  description = {The root mark probability vector of $\mu$, i.e. $\vvtype(\mu) = (\vtype_\theta(\mu): \theta \in \vermark)$}
}%
\newglossaryentry{not:Gnmu}{%
  type = {notation},
  name = {$\mGn_{\vm, \vu}$},
  description = {The set of marked graphs $G$ on the vertex set $[n]$ such that $\vm_G = \vm$ and $\vu_G = \vu$}
}%
\newglossaryentry{not:sofd}{%
  type = {notation},
  name = {$s(.)$},
  description = {The function defined on $[0,\infty)$ such that $s(0) =0$ and $s(d) = \frac{d}{2} - \frac{d}{2} \log d$}
}%
\newglossaryentry{not:sofvecd}{%
  type = {notation},
  name = {$s(\vd)$},
  description = {For an average degree vector $\vd = (d_{x,x'}: x, x' \in \edgemark)$, $s(\vd)$ is defined to be
    $\sum_{x,x' \in \edgemark} s(d_{x,x'})$}
}
\newglossaryentry{not:mPh}{%
  type = {notation},
  name = {$\mP_h$},
  description = {The set of strongly admissible probability measures on
  $P \in \mP(\mTb_*^h)$, that is, $P$ is admissible, $H(P) < \infty$ and
  $\evwrt{P}{\deg_T(o) \log \deg_T(o)} < \infty$. In
  Corollary~\ref{cor:deg-log-deg-Ph}, we have shown that $P \in \mP_h$ is
  equivalent to $P$ being admissible and  $\evwrt{P}{\deg_T(o) \log \deg_T(o)} <
  \infty$}
}%
\newglossaryentry{not:etype}{%
  type = {notation},
  name = {$\etype^h_G(u,v)$},
  description = {for two adjacent vertices $u$ and $v$ in a simple marked graph
    $G$ and $h \geq 1$,
  is defined to be the type of the edge between $u$ and $v$, i.e. the pair
  $(G[v,u]_{h-1}, G[u,v]_{h-1}) \in (\edgemark \times \mGb_*^{h-1}) \times (\edgemark \times \mGb_*^{h-1})$}
}%
\newglossaryentry{not:colorset}{%
  type = {notation},
  name = {$\mC$},
  sort = {colorset},
  description = {The set of colors in a colored configuration model, consisting of pairs $(i,j)$ where $1 \leq i,j \leq L$. Here, $L$ is a fixed integer}
}%
\newglossaryentry{not:mhGmC}{%
  type = {notation},
  name = {$\hmG(\mC)$},
  description = {The set of directed multigraphs on a finite or countable vertex
    set with edges having colors coming from $\mC$. More precisely, each member
    of $\hmG(\mC)$ is of the form $G = (V, \omega)$ where $V$ is a finite or
    countable vertex set, $\omega = (\omega_c: c \in \mC)$ and for $c \in \mC$,
    $\omega_c: V^2 \rightarrow \integers_+$. Furthermore, we require that $(i)$ For
    $c \in \mC_=$, $\omega_c(v,v)$ is even for all $v \in V$, and $\omega_c(u,v)
    = \omega_c(v,u)$ for all $u, v \in V$, $(ii)$ For $c \in \mC_{\neq}$, we
    have $\omega_c(u,v) = \omega_{\bar{c}}(v,u)$ for all $u,v \in V$ and $(iii)$
  For all $v \in V$ and $c \in \mC$, $\sum_{u \in V} \omega_c(v,u) < \infty$}
}%
\newglossaryentry{not:mhGmCstar}{%
 type = {notation},
  name = {$\hmG_*(\mC)$},
  description = {Set of equivalence classes of
rooted 
directed colored multigraphs
in $\hmG(\mC)$.}
}%
\newglossaryentry{not:mhG-star-mC}{%
  type = {notation},
  name = {$\hmG(\mC)$},
  description = {The space of isomorphism classes of rooted directed colored multigraphs}
}%
\newglossaryentry{not:mGmC}{%
  type = {notation},
  name = {$\mG(\mC)$},
  description = {The set of simple directed colored graphs in $\hmG(\mC)$, i.e. the set of $G \in \hmG(\mC)$ where $\CB(G)$ is simple}
}%
\newglossaryentry{not:colorblind}{%
  type = {notation},
  name = {$\CB(G)$},
  description = {The colorblind version of a graph $G \in \hmG(\mC)$}
}%
\newglossaryentry{not:mML}{%
  type = {notation},
  name = {$\mM_L$},
  description = {The set of $L \times L$ matrices with nonnegative integer valued entries}
}%
\newglossaryentry{not:mMLdelta}{%
  type = {notation},
  name = {$\mM_L^{(\delta)}$},
  description = {The set of $L \times L$ matrices with nonnegative integer
    valued entries bounded by $\delta$}
}%
\newglossaryentry{not:JhP}{%
  type = {notation},
  name = {$J_h(P)$},
  description = {For $h \ge 1$ and admissible $P \in \mP(\mTb_*^h)$ with $H(P) < \infty$ and $\evwrt{P}{\deg_T(o)}
 > 0$, define
$J_h(P) := -s(d) + H(P) - \frac{d}{2} H(\pi_P) - \sum_{t, t' \in \edgemark
  \times \mTb_*^{h-1}} \evwrt{P}{\log E_h(t, t')!}$, where $d := \evwrt{P}{\deg_T(o)}$ is the average degree at the root and
    $s(d) = \frac{d}{2} - \frac{d}{2} \log d$}
}
\newglossaryentry{not:colored-degree}{%
  type = {notation},
  name = {$\vD^G$},
  description = {The colored degree sequence of a graph $G = (V, \omega) \in \hmG(\mC)$. Here $\vD^G = (D^G(v): v \in V)$, where $D^G(v) = (D^G_c(v): c \in \mC)$ and $D^G_c(v)$ is the number of edges going out of $v$ with color $c$}
}%
\newglossaryentry{not:DGc}{%
  type = {notation},
  name = {$D^G_c(v)$},
  description = {For $G = (V, \omega)\in \hmG(\mC)$, $v \in V$ and $c \in \mC$,
    $D^G_c(v)$ denotes the number of edges going out of $v$ with color $c$}
}
\newglossaryentry{not:DG}{%
  type = {notation},
  name = {$D^G(v)$},
  description = {For $G = (V, \omega)\in \hmG(\mC)$ and $v \in V$, $D^G(v) =
    (D^G_c(v): c \in \mC)$ where $D^G_c(v)$ is the number of edges going out of $v$ with color $c$}
}
\newglossaryentry{not:g-m}{%
  type = {notation},
  name = {$g[m]$},
  description = {For $g \in \edgemark \times \mGb_*$, we call the $\edgemark$ component of $g$
its {\em mark component} and denote it by $g[m]$}
}%
\newglossaryentry{not:g-s}{%
  type = {notation},
  name = {$g[s]$},
  description = {For $g \in \edgemark \times \mGb_*$, we call the $\vermark$ component of $g$
its {\em subgraph component} and denote it by $g[s]$}
}%
\newglossaryentry{not:g-k}{%
  type = {notation},
  name = {$g_k$},
  description = {For $g \in \edgemark \times \mGb_*^h$ and an integer $k \ge 0$, we define $g_k \in
\edgemark \times \mGb_*^{\min\{h, k\}}$ to have the same mark component as $g$, i.e.\ $g_k[m] := g[m]$,
and subgraph component the truncation
  of the subgraph component of $g$ up to depth $k$, i.e.\ $g_k[s] := (g[s])_k$}
}%
\newglossaryentry{not:mDn}{%
  type = {notation},
  name = {$\mD_n$},
  description = {The set of vectors $(D(1), \dots, D(n))$ such that $D(i) \in \mM_L$ for $1 \leq i \leq n$ and $S := \sum_{i=1}^n D(i)$ is a symmetric matrix in $\mM_L$ with even diagonal entries}
}%
\newglossaryentry{not:mhGmD}{%
  type = {notation},
  name = {$\hmG(\vD)$},
  description = {The set of directed colored multigraphs $G \in \hmG(\mC)$ whose colored degree sequence coincides with $\vD \in \mD_n$}
}%
\newglossaryentry{not:hmG-star-mC}{%
  type = {notation},
  name = {$\hmG_*(\mC)$},
  description = {The set of equivalence classes of rooted directed colored
    multigraphs. Each member 
of $\hmG_*(\mC)$ is of the form $[G, o]$ where $G \in \hmG(\mC)$ is
connected and $o$ is a distinguished vertex in $G$}
}
\newglossaryentry{not:mGmDh}{%
  type = {notation},
  name = {$\mG(\vD,h)$},
  description = {The set of directed colored multigraphs $G \in \hmG(\vD)$ where $\CB(G)$ has no cycles of length $l \leq h$}
}%
\newglossaryentry{not:CMD}{%
  type = {notation},
  name = {$\CM(\vD)$},
  description = {For $\vD \in \mD_n$, denotes the law of the configuration model
    given the colored degree sequence $\vD$}
}%
\newglossaryentry{not:cugwt}{%
  type = {notation},
  name = {$\cugwt(P)$},
  description = {Colored Unimodular Galton--Watson Tree, which is a probability
    measure on $\hmG_*(\mC)$. Here, $\mC$ is a color set of size $L \times L$
    and $P$ is a probability distribution over $\mM_L$}
}%
\newglossaryentry{not:coloredG}{%
  type = {notation},
  name = {$\colored(G)$},
  description = {for a simple marked graph $G$ on the vertex set $[n]$, is
    defined to be a simple directed colored graph where each edge is replaced by
  two directed edges with colors coming from the type of that edge}
}%
\newglossaryentry{not:MCB}{%
  type = {notation},
  name = {$\MCB_{\vbeta}(H)$},
  description = {given a simple directed colored graph $H \in \mG(\mC)$ where $\mC
    = \mF \times \mF$ and  $\mF \subset \edgemark \times \mGb_*^{h-1}$ is finite,
    and also a vector $\vbeta = (\beta(v): v \in V)$ with elements in
    $\vermark$, is defined to be a simple marked graph on $V$ where a pair of
    directed edges, one
    directed from $u$ towards $v$ with color $(g,g')$ and one directed from $v$
    towards $u$ with color $(g',g)$, are substituted with a marked edge with
    mark $g[m]$ towards $u$ and $g'[m]$ towards $v$. Furthermore, a vertex $v$ is
    given mark $\beta(v)$}
}%
\newglossaryentry{not:otimes}{%
  type = {notation},
  name = {$(\theta, x) \otimes t$},
  description = {given $h \in \nats$,  $t \in \edgemark \times \mTb_*^{h-1}$, $x \in \edgemark$
and $\theta \in \vermark$, define $(\theta, x) \otimes t$ to be the  element in
$\mTb_*^h$ where the root $o$ has mark $\theta$,  and attached to it is one offspring $v$ where the
subtree of $v$ is isomorphic to $t[s]$ and the edge connecting $o$ to $v$ has
mark $x$ towards $o$ and $t[m]$ towards $v$, see Figure~\ref{fig:otimes}}
}%
\newglossaryentry{not:times}{%
  type = {notation},
  name = {$x \times s$},
  description = {For $x \in \edgemark$ and $s \in \mTb_*$, is the element $t
    \in \edgemark \times \mTb_*$ such that $t[m] = x$ and $t[s] = s$}
}%
\newglossaryentry{not:odot}{%
  type = {notation},
  name = {$s \odot s'$},
  description = {For two rooted trees $s, s'
\in \mTb_*$ which have the same mark at the root, define $s \odot s'$ to be the
element in $\mTb_*$ obtained by jointing $s$ and $s'$ at a common root, see Figure~\ref{fig:odot}}
}%
\newglossaryentry{not:nDbeta}{%
  type = {notation},
  name = {$n(\vD, \vbeta)$},
  description = {For colored degree sequence $\vD = (D(v): v \in [n])$ and
    vertex mark sequence $\vbeta = (\beta(v): v \in [n])$, is defined to be the number
    of distinct pairs $(\vD^\pi, \vbeta^\pi)$ where $\pi$ ranges over the set of
    permutations $\pi: [n] \rightarrow [n]$. Here, for $1\leq i \leq n$,
    $D^\pi(i) = D(\pi(i))$ and $\beta^\pi(i) = \beta(\pi(i))$}
}%
\newglossaryentry{not:Gn}{%
  type = {notation},
  name = {$\mG_n$},
  description = {The set of graphs on the vertex set $[n]$}
}%
\newglossaryentry{not:Gbarn}{%
  type = {notation},
  name = {$\mGb_n$},
  description = {The set of marked graphs on the vertex set $[n]$}
}%
\newglossaryentry{not:mG-star-mC}{%
  type = {notation},
  name = {$\mG_*(\mC)$},
  description = {The subset of $\hmG_*(\mC)$ consisting of equivalence classes
    of rooted directed colored graphs, i.e.\ for which the associated colorblind multigraph $\CB(G)$ is a graph}
}
\newglossaryentry{not:eP-c}{%
  type = {notation},
  name = {$e_P(c)$},
  description = {For $c = (t, t')$ where $t, t' \in \edgemark \times \mTb_*^{h-1}$ for some $h \geq 1$, $e_P(c)$ is defined to be $e_P(t, t')$}
}
\newglossaryentry{not:T-o}{%
  type = {notation},
  name = {$(T,o)$},
  description = {See $(G,o)$},
  nonumberlist
}
\newglossaryentry{not:T-o-h}{%
  type = {notation},
  name = {$(T,o)_h$},
  description = {See $(G,o)_h$},
  nonumberlist
}
\newglossaryentry{not:Eh-tt'-To}{%
  type = {notation},
  name = {$E_h(t,t')(T,o)$},
  description = {See $E_h(g,g')(G,o)$},
  nonumberlist
}
\newglossaryentry{not:Eh-tt'-To-iso}{%
  type = {notation},
  name = {$E_h(t,t')([T,o])$},
  description = {See $E_h(g,g')([G,o])$},
  nonumberlist
}
\newglossaryentry{not:Ekl-tt'-To}{%
  type = {notation},
  name = {$E_{k,l}(t,t')(T,o)$},
  description = {the number of $v\sim_T o $ such that $T(v,o)_{k-1} \equiv t$
    and $T(o,v)_{l-1} \equiv t'$}
}
\newglossaryentry{not:T-u-v}{%
  type = {notation},
  name = {$T(u,v)$},
  description = {See $G(u,v)$},
  nonumberlist
}
\newglossaryentry{not:T-u-v-h}{%
  type = {notation},
  name = {$T(u,v)_h$},
  description = {See $G(u,v)_h$},
  nonumberlist
}
\newglossaryentry{not:T-u-v-iso}{%
  type = {notation},
  name = {$T[u,v]$},
  description = {See $G[u,v]$},
  nonumberlist
}
\newglossaryentry{not:T-u-v-h-iso}{%
  type = {notation},
  name = {$T[u,v]_h$},
  description = {See $G[u,v]_h$},
  nonumberlist
}
\newglossaryentry{not:V-T}{%
  type = {notation},
  name = {$V(T)$},
  description = {See $V(G)$},
  nonumberlist
}
\newglossaryentry{not:T-o-iso}{%
  type = {notation},
  name = {$[T,o]$},
  description = {See $[T,o]$},
  nonumberlist
}
\newglossaryentry{not:T-o-h-iso}{%
  type = {notation},
  name = {$[T,o]_h$},
  description = {See $[G,o]_h$},
  nonumberlist
}
\newglossaryentry{not:deg-T-o}{%
  type = {notation},
  name = {$\deg_T(o)$},
  description = {See $\deg_G(o)$},
  nonumberlist
}
\newglossaryentry{not:deg-T-xx'-o}{%
  type = {notation},
  name = {$\deg_T^{x,x'}(o)$},
  description = {See $\deg_G^{x,x'}(o)$},
  nonumberlist
}
\newglossaryentry{not:dist-T}{%
  type = {notation},
  name = {$\dist_T(v,w)$},
  description = {See $\dist_G(v,w)$},
  nonumberlist
}
\newglossaryentry{not:phi-T-u-v}{%
  type = {notation},
  name = {$\etype_T^h(u,v)$},
  description = {See $\etype_G^h(u,v)$},
  nonumberlist
}
\newglossaryentry{not:tau-T}{%
  type = {notation},
  name = {$\tau_T(u)$},
  description = {See $\tau_G(u)$},
  nonumberlist
}
\newglossaryentry{not:vD-T}{%
  type = {notation},
  name = {$\vD^T$},
  description = {See $\vD^G$},
  nonumberlist
}
\newglossaryentry{not:vm-T}{%
  type = {notation},
  name = {$\vm_T$},
  description = {See $\vm_G$},
  nonumberlist
}
\newglossaryentry{not:vtau-T}{%
  type = {notation},
  name = {$\vtau_T$},
  description = {See $\vtau_G$},
  nonumberlist
}
\newglossaryentry{not:vu-T}{%
  type = {notation},
  name = {$\vu_T$},
  description = {See $\vu_G$},
  nonumberlist
}
\newglossaryentry{not:xi-T-v-w}{%
  type = {notation},
  name = {$\xi_T(v,w)$},
  description = {See $\xi_G(v,w)$},
  nonumberlist
}
\newglossaryentry{not:e-P-tt'}{%
  type = {notation},
  name = {$e_P(t,t')$},
  description = {See $e_P(g,g')$},
  nonumberlist
}
\newglossaryentry{not:tm}{%
  type = {notation},
  name = {$t[m]$},
  description = {See $g[m]$},
  nonumberlist
}
\newglossaryentry{not:ts}{%
  type = {notation},
  name = {$t[s]$},
  description = {See $g[s]$},
  nonumberlist
}
\newglossaryentry{not:tk}{%
  type = {notation},
  name = {$t_k$},
  description = {See $g_k$},
  nonumberlist
}
\newglossaryentry{not:v-simT-w}{%
  type = {notation},
  name = {$v \sim_T w$},
  description = {See $v \sim_G w$},
  nonumberlist
}
\newglossaryentry{trm:admissible}{%
  type = {term},
  name = {admissible},
  description = {A probability distribution $P \in \mP(\mGb_*^h)$ is called admissible if $\evwrt{P}{\deg_G(o)} < \infty$ and for all $g, g' \in \edgemark \times \mGb_*^{h-1}$, we have $e_P(g, g') = e_P(g',g)$}
}
\newglossaryentry{trm:strongadmissible}{%
  type = {term},
  name = {strongly admissible},
  description = {A probability distribution $P \in \mP(\mTb_*^h)$ is called
    strongly admissible if $P$ is admissible, $H(P) < \infty$ and
$\evwrt{P}{\deg_T(o) \log \deg_T(o)} < \infty$. Also, $\mP_h$ denotes the set of
strongly admissible probability distributions $P \in \mP(\mTb_*^h)$}
}
\newglossaryentry{trm:graphical}{%
  type = {term},
  name = {graphical},
  description = {With $h \in \nats$, $\mF \subset \edgemark \times \mTb_*^{h-1}$
    such that $|\mF| = L$ and $\mC = \mF \times \mF$, a  pair $(\theta, D)$
    where $\theta \in \vermark$ and $D = (D_{t,t'}: t, t' \in \mF) \in \mM_L$ is called
    graphical if there exists $[T,o] \in \mTb_*^h$ such that $\tau_T(o) =
    \theta$ and $E_h(t,t')(T,o) = D_{t,t'}$ for all $t, t' \in \mF$. Moreover,
    $E_h(t,t')(T,o) = 0 $ when either $t \notin \mF$ or $t' \notin \mF$}
}
\newglossaryentry{trm:h-treelike}{%
  type = {term},
  name = {$h$ tree--like},
  description = {A marked or unmarked graph $G$ is said to be $h$ tree--like if for all vertices $v$ in
  $G$, the depth $h$ local neighborhood of $v$ in $G$, i.e.\ $(G, v)_h$, is a
  rooted tree. This condition is equivalent to requiring that there is no cycle
  of length $2h+1$ or less in $G$}
}
\newglossaryentry{trm:Delta-graphical-matrix}{%
  type = {term},
  name = {$\Delta$--graphical matrix},
  description = {A matrix $D \in \mM_L^{(\delta)}$ is said to be $\Delta$--graphical if there
exists $r \in \Delta$ such that $D = D(r)$. Here, $\Delta \subset \mTb_*^h$ is
a finite set. Furthermore, $\mF$ denotes the set of $T[o,v]_{h-1}$ and
$T[v,o]_{h-1}$ arising from $[T,o] \in \Delta$ and vertices $v \sim_T o$. Also,
for $r \in \Delta$, $D(r) \in \mM_L^{(\delta)}$ is the matrix such that, for $t,
t' \in \mF$, $D_{t,t'}(r) = E_h(t, t')(r)$} 
}
\newglossaryentry{trm:Delta-graphical-rooted-graph}{%
  type = {term},
  name = {$\Delta$--graphical rooted directed colored graph},
  description = {We say that  a rooted directed colored graph $[F, o] \in
    \mG_*(\mC)$ is $\Delta$--graphical if for each vertex $v$ in $F$, $D^F(v)$ is
$\Delta$--graphical}
}
\definecolor{bluenodecolor}{RGB}{62,126,176}
\definecolor{rednodecolor}{RGB}{173,61,58}
\definecolor{blueedgecolor}{RGB}{3,151,255}
\definecolor{orangeedgecolor}{RGB}{255,149,8}
\tikzstyle{nodeB} = [fill=bluenodecolor, circle, inner sep = 4pt]
\tikzstyle{nodeR} = [fill=rednodecolor, rectangle, inner sep = 5pt]
\tikzstyle{edgeB} = [very thick, blueedgecolor]
\tikzstyle{edgeO} = [very thick, orangeedgecolor, decoration = {zigzag,segment length = 0.2cm, amplitude = 0.5mm},decorate]
\newcommand{\drawedge}[4]{\draw[edge#3] (#1) -- ($(#1)!0.5!(#2)$); \draw[edge#4]
  (#2) -- ($(#2)!0.5!(#1)$);}
\newcommand{\nodelabel}[3]{\node at ($(#1)+(#2:5mm)$) {#3};}
\begin{document}

\colorlet{Cyan}{cyan}
\colorlet{Orange}{orange}
\tikzstyle{Node} = [circle,fill,inner sep=1.5pt]
\tikzstyle{Node2} = [rectangle,fill,inner sep=2pt]
\tikzstyle{Root} = [circle,fill=magenta,inner sep=1.7pt]
\tikzstyle{Root2} = [rectangle,fill=magenta,inner sep=2.5pt]
\tikzstyle{Cedge} = [Cyan, thick]
\tikzstyle{Oedge} = [Orange, densely dotted, very thick]

\maketitle

% \begin{tikzpicture}[remember picture, overlay]
%   \node at ($(current page.south east)+(-2,1)$) {\verb+VN041-V5+};
% \end{tikzpicture}

% bb

\begin{abstract}
In this document, we introduce a notion of entropy for stochastic processes on marked
rooted graphs.
%The local weak limit theory for sparse graphs, also known as the objective
%method, due to Benjamini, Schramm,  Aldous, Steele and Lyons, provides a
%framework which enables one to make sense of a stationary process for  graphs
%\cite{BenjaminiSchramm01rec, aldous2004objective, aldous2007processes}.
%Employing this framework, we introduce a notion of entropy for probability
%distributions on marked rooted graphs. 
%\color{red}
For this, we employ the framework of local weak limit theory for sparse marked graphs, also known as the objective
method, due to Benjamini, Schramm,  Aldous, Steele and Lyons
%provides a
%framework which enables one to make sense of a stationary process for  graphs
\cite{BenjaminiSchramm01rec, aldous2004objective, aldous2007processes}.
%\color{black}
%This is a generalization 
%\color{red}
Our contribution is a generalization
%\color{black}
of the notion of
entropy introduced by Bordenave and Caputo 
\cite{bordenave2015large} to graphs which carry marks on their
vertices and  edges.

The theory of time series is the
engine driving an enormous range of applications in areas such as control
theory, communications, information theory and signal processing. It is to be
expected that a theory of stationary stochastic processes 
%for combinatorial structures, 
%\color{red}
indexed by combinatorial structures,
%\color{black}
in particular graphs, would eventually have a
similarly wide-ranging  impact.
% \color{red}
% vacomment: Note, new para for second part
% of the abstract.
% \color{black}
\end{abstract}

%aa
\section{Preliminaries and Notation}
\label{sec:notations}
% bb
 
$\nats$ denotes the set of natural numbers, $\integers_+$ the set of nonnegative integers, $\integers$ the set of integers and $\reals$  the set
real numbers. 
%$\nats$ and $\reals$ denote the set of natural numbers and the set of real numbers,
%respectively. Moreover, $\integers$ and $\integers_+$ denote the sets of
%integers and the set of nonnegative integers, respectively.
For $n \in \nats$, $[n]$
denotes the set $\{1, \dots, n\}$. For a probability distribution $Q$ defined on
a finite set, $H(Q)$ denotes the Shannon entropy of $Q$. For a metric space $X$,
we denote the set of Borel probability measures on $X$ by $\mP(X)$.
\glsadd{not:mPX}
For two Borel probability measures $\mu, \nu \in \mP(X)$, $\dlp(\mu, \nu)$ denotes the
\LP distance between $\mu$ and $\nu$ \cite{billingsley2013convergence}. \glsadd{not:LPdist}
All logarithms in this document are to the natural base. Equality by definition is denoted by $:=$.

\subsection{Marked Graphs}
\label{sec:marked-graphs}

All graphs in this
  document are defined on a finite or countably infinite vertex set, and are assumed to
  be locally finite, i.e. the degree of each vertex is finite.
  Given a graph $G$, we denote its vertex set by $V(G)$. 
  A simple graph is a graph without self-loops or multiple edges between pairs of vertices.
  A simple marked graph is a simple graph where each 
  edge carries two marks coming from a finite
  edge mark set, one towards each of its endpoints, and each vertex carries a mark from a
  finite vertex mark set. We denote the edge and
  vertex mark sets by $\edgemark$ and $\vermark$ respectively.
  %and assume that
  %they are fixed and finite sets unless otherwise stated.
  \glsadd{not:edgemarkset} \glsadd{not:vermarkset} 
  %{\color{pedit} Since} $\edgemark$ and $\vermark$ are finite, we may treat them as
 % totally ordered sets. 
  For an edge
  between vertices $v, w \in V(G)$, we denote its mark towards the vertex $v$ by
  $\xi_G(w, v)$, and its mark towards the vertex $w$ by $\xi_G(v, w)$.
  \glsadd{not:edgemark}
  Also,
  $\tau_G(v)$ denotes the mark of a vertex $v \in V(G)$. 
  Let $\mG_n$ denote the set of graphs and
   $\mGb_n$ the set of marked graphs on the vertex set $[n]$.
  \glsadd{not:VG}%
  \glsadd{not:vermark}%
  \glsadd{not:Gn}%
   \glsadd{not:Gbarn}%
   
   %\color{red}
   %Note to Payam:
   %The notation $\mG_n$ should also be included in the glossary.
   %\color{black}
%\pres{added to glossary}
  
  All graphs and marked graphs appearing in this document are also assumed to be simple, unless otherwise stated. Therefore we will use the terms ``graph" and ``marked graph" as synonymous with ``simple locally finite graph" and ``simple locally finite marked graph" respectively.
Further, since a graph can be considered to be a marked graph with the edge and vertex mark sets being of cardinality $1$, all definitions that are made for marked graphs will be considered to have been simultaneously made for graphs.
  
  Let $G$ be a finite marked graph. We define the \emph{edge mark count vector} of $G$ by $\vm_G := (m_G(x,x'): x,
  x' \in \edgemark)$ where $m_G(x,x')$ is the number of edges $(v,w)$ in $G$
  where $\xi_G(v,w) = x$ and $\xi_G(w,v) = x'$, or $\xi_G(v,w) = x'$ and
  $\xi_G(w,v) = x$. 
\glsadd{not:edgemarkcountvector}
Likewise, 
%for a finite marked graph $G$, 
we define the \emph{vertex mark count vector} of $G$ by $\vu_G :=
(u_G(\theta): \theta \in \vermark)$ where $u_G(\theta)$ is the number of
vertices $v \in V(G)$ with $\tau_G(v) = \theta$. 
\glsadd{not:vertexmarkcountvector}
 %It is also convenient to denote
 % the vertex mark vector of a marked graph $G$, with finite or countably infinite vertex set, by $\vtau_G$, i.e.\ $\vtau_G := (\tau_G(v): v \in V(G))$. \glsadd{not:vtauG}
 %\begin{marginpar}
    %{\color{pedit} Regarding your comment about redefining these when used in a
      %more general framework, I use these notations only for simple marked
      %graphs.} 
  %\end{marginpar}
  
For a marked graph $G$ and vertices $v, w \in V(G)$, we write $v \sim_G
w$ to denote that $v$ and $w$ are adjacent in $G$.
\glsadd{not:vertices-connected}
Moreover, for a vertex $o \in V(G)$, $\deg_G^{x,x'}(o)$ denotes the number of
vertices $v$ connected to $o$ in $G$ such that $\xi_G(v,o) = x$ and $\xi_G(o,v)
= x'$, \glsadd{not:deg-x-x'} and $\deg_G(o)$ denotes the degree of $o$,
i.e.\ the total number of vertices connected to $o$ in $G$, which is precisely
$\sum_{x, x' \in \edgemark} \deg_G^{x,x'}(o)$. \glsadd{not:deg}
Additionally, for vertices $v, w \in V(G)$, $\dist_G(v, w)$ denotes the distance
between $v$ and $w$, which is the length of the shortest path connecting $v$ to
$w$. \glsadd{not:dist_G}

A marked forest is a marked graph with no cycles. A marked tree is a connected marked forest.

\subsection{The Framework of Local Weak Convergence}
\label{sec:local-weak-convergence-framework}

%A rooted marked graph is a 
Given a connected marked graph $G$ on a finite
or countably infinite vertex set and a vertex 
$o \in V(G)$, we call the pair $(G,o)$ a \emph{rooted connected marked graph}.
%where $G$ is connected and has a distinguished
%vertex $o \in V(G)$. We denote such a rooted marked graph by .
We extend this notation to a marked graph $G$ that is not necessarily connected and 
a vertex $o \in V(G)$ by 
defining $(G,o)$ to be 
%denotes the connected component of the graph
%$G$ containing $o$, rooted at $o$, i.e.\ 
%$(G,o) = (G(o), o)$
$(G(o), o)$,
where $G(o)$ denotes
the connected component of $o$ in $G$. 
In general, we call $(G,o)$ a \emph{rooted marked graph}.
%\color{red}
%   Note to Payam:
%   The notation $(G,o)$ should also be included in the glossary.
%   \color{black}
\glsadd{not:G-o}   
%\pres{added to glossary.}

 \begin{definition}
    \label{def:isomorphism}
    %{\em [Defining $\mGb_*$] }
    
%We say that two rooted marked graphs
Let $G$ and $G'$ be marked graphs. Let $o \in V(G)$
and $o' \in V(G')$. We say that
$(G, o)$ and $(G',o')$ are isomorphic, and write $(G, o) \equiv
(G',o')$, if there exists a bijection between 
the sets of vertices 
%in the connected components $(G(o), o)$ and 
%$(G'(o'), o')$ of the respective roots
of $G(o)$ and $G'(o')$
which maps $o$ to $o'$ while preserving vertex marks, the adjacency structure of these connected components, and the
edge marks. 
\end{definition}

Isomorphism defines an equivalence relation on rooted connected marked graphs.
The isomorphism class of
a rooted marked graph 
$(G, o)$ is denoted by $[G,o]$, \glsadd{not:G-o_iso}
and is determined by $(G(o),o)$.
The set comprised of the
isomorphism classes $[G,o]$ of all rooted marked graphs on any finite or
countably infinite vertex set, 
%together with any distinguished vertex, 
where the edge and vertex marks come from the sets $\edgemark$
and $\vermark$ respectively, is denoted by 
$\mGb_*(\edgemark, \vermark)$.
When the mark sets are clear from the context, we
use $\mGb_*$ as a shorthand for $\mGb_*(\edgemark, \vermark)$. 
Likewise, let
$\mTb_*(\edgemark, \vermark)$ denote the subset of $\mGb_*(\edgemark, \vermark)$
consisting of all isomorphism classes $[T,o]$ where $(T,o)$ is a rooted
marked forest. As for general graphs, the isomorphism class of $(T,o)$ is determined by 
the marked tree
$(T(o),o)$, where $T(o)$ is the connected component of the vertex $o \in T$.
When the mark sets are clear from the context, we use
$\mTb_*$ as a shorthand for $\mTb_*(\edgemark, \vermark)$. \glsadd{not:T_star}

%i.e.\ an adjacency preserving map $\sigma: V(G) \rightarrow V(G')$
%where $\sigma(o) = o'$, for all $v \in V(G)$, we have $\tau_G(v) = \tau_{G'}(\sigma(o))$,
%and for all adjacent vertices $v, w \in V(G)$, we have $\xi_G(v,w) =
%\xi_{G'}(\sigma(v), \sigma(w))$.

For an integer $h \geq 0$, we
denote by $(G,o)_h$ the $h$--neighborhood of
the vertex $o \in V(G)$, rooted at $o$. This is defined by considering the subgraph of $G$ consisting of
all the vertices $v \in V(G)$ such that $\dist_G(o,v) \leq h$ and then making this
subgraph rooted at $o$. \glsadd{not:G-o_h}
The
isomorphism class of the $h$--neighborhood $(G, o)_h$ is denoted by $[G,o]_h$.
\glsadd{not:G-o_iso_h} 
It is straightforward to check that $[G,o]_h$ is determined by $[G,o]$.

%We turn $\mGb_*$ into a metric space using the  local metric $d_*$. 
For $[G,o],
[G',o'] \in \mGb_*$, we define $d_*([G,o], [G',o'])$ to be $1/(1+h_*)$, where
$h_*$ is the maximum over integers $h \geq 0$ such that $(G, o)_h \equiv
(G',o')_h$. If $(G, o)_h \equiv (G',o')_h$ for all $h \ge 0$, it is easy to see
that $(G,o) \equiv (G',o')$, i.e. $[G,o] = [G',o']$. In this case, $d_*([G,o], [G',o'])$ is defined to be
zero. It can be easily checked that $\mGb_*$, 
%(i.e. $\mGb_*(\edgemark, \vermark)$), 
equipped with $d_*$, is a metric space. In
particular, it satisfies the triangle equality. 
In fact, it can be shown, for any finite sets $\edgemark$ and
$\vermark$, that $\mGb_*(\edgemark, \vermark)$ and $\mTb_*(\edgemark, \vermark)$ are complete and separable metric spaces, i.e.
Polish spaces \cite{aldous2007processes}.\footnote{In fact, a more general
  statement without requiring that $\edgemark$ and $\vermark$ be finite sets holds,
but we refer the reader to \cite{aldous2007processes} for more details about this, as we do
not need that more general statement here.}

For an integer $h \geq 0$, let $\mGb_*^h \subset \mGb_*$ consist of isomorphism
classes of rooted marked graphs 
where all the vertices of the connected component of the root are at distance at most
$h$ from the root.
%with depth at most $h$.
\glsadd{not:G_star_h}
For instance,
for $[G, o] \in \mGb_*$, we have $[G,o]_h \in \mGb_*^h$. We define $\mTb_*^h
\subset \mTb_*$ similarly. Note that, by definition, we have $\mGb_*^0 \subset
\mGb_*^1 \subset \dots \subset \mGb_*$. Consequently, for $[G, o] \in \mGb_*^h$
and $0 \leq k \leq h$, we have $[G,o]_k \in \mGb_*^k$.

\ifmaybe
\color{red}
To simplify the notation, we sometimes denote a member of $\mGb_*$ by $g$
instead of the longer notation $[G,o]$. It can be seen that for any rooted
isomorphism class in $\mGb_*$, there is a canonical connected marked rooted
graph with the vertex set $\nats$ rooted at $0$ \cite{aldous2007processes}.
\color{blue}
Is this canonical connected marked rooted graph unique?
It would seem not, since many labellings might be possible for it.
\color{black}
%{\color{red} make sure all subsequent uses of $o$ in this case is changed to
%  $0$}
Based on this, by an abuse of notation, for $g \in \mGb_*$, we use
$\deg_g(0)$ for $\deg_G(0)$ where $(G, 0)$ is the canonical member of the
isomorphism class $g$.  Also, for $g \in \mGb_*$, and $h \in \nats$, $g_h$
denotes $[G,0]_h$ where $(G,0)$ is the canonical member of $g$. 
\color{black}
\color{blue}
This use of the notation $g$ seems inconsistent with its use in Section 2. It seems to me (at the moment) that it would be best to delete the red paragraph.
\color{black}
\fi

% {\color{pedit} It can be seen that one can assign to any rooted
% isomorphism class $[G,o]$ in $\mGb_*$ a  unique canonical connected marked rooted
% graph $(\tilde{G}, 0) \in [G,o]$ with the vertex set $\integers_+$ rooted at $0$ \cite{aldous2007processes}.}

% \begin{marginpar}
%   {\color{pedit}
%     I brought back the discussion for the canonical member emphasizing on its
%     uniqueness, since this concept is referred to in Section 2. Everything can
%     be stated without referring to this concept, but I thought it might be good
%     to mention it briefly.
%   }
% \end{marginpar}

For a Polish space 
%(complete and separable metric space) 
$X$, we say that a
sequence of Borel probability measures $(\mu_n \in \mP(X): n \in \nats)$ converges
weakly to $\mu \in \mP(X)$, and write $\mu_n \Rightarrow \mu$, if, for any bounded
continuous function $f: X \rightarrow \reals$, we have $\int f d \mu_n
\rightarrow \int f d \mu$.
If $X$ is Polish, then weak convergence is equivalent to
  convergence with respect to the \LP distance, i.e.\ $\mu_n \Rightarrow \mu$ is
  equivalent to $\dlp(\mu_n, \mu) \rightarrow 0$.
% see Theorem 6.8, page 73, of Billingsley, Patrick. Convergence of probability
% measures. John Wiley & Sons, 2013, second edition
See \cite{billingsley1971weak,
  billingsley2013convergence} for more details on weak convergence of
Borel probability measures.

For a marked graph $G$ on a finite vertex set, we define $U(G) \in
\mP(\mGb_*)$ as
\begin{equation}
  \label{eq:UG-def}
  U(G) := \frac{1}{|V(G)|} \sum_{v \in V(G)} \delta_{[G,v]},
\end{equation}
where
%, recalling from the above, 
$[G,v]$ denotes the isomorphism class of the
connected component of $v$ in $G$ rooted at $v$. In words, $U(G)$ is the
neighborhood structure of the graph $G$ from the point of view of a vertex
chosen uniformly at random. Moreover, for  $h \geq 0$, let
\begin{equation}
  \label{eq:UGh-def}
  U(G)_h := \frac{1}{|V(G)|} \sum_{v \in V(G)} \delta_{[G,v]_h},
\end{equation}
be the depth $h$ neighborhood structure of a vertex in $G$ chosen uniformly at
random. Note that $U(G)_h \in \mP(\mGb_*^h)$. 

Given a sequence $(G_n: n \in \nats)$ of marked graphs, if $U(G_n)
\Rightarrow \mu$  for some $\mu \in \mP(\mGb_*)$, then we say that the sequence
$G_n$ converges {\em in the local weak sense} to $\mu$, and say that $\mu$ is the
{\em local weak limit} of the sequence.  A Borel probability measure
$\mu \in \mP(\mGb_*)$ is called {\em sofic} if it is the local weak limit of a
sequence of finite marked graphs. Not all Borel probability measures on $\mGb_*$ are sofic. A
necessary condition for a measure to be sofic exists, called {\em unimodularity}
\cite{aldous2007processes}. To define this, let $\mGb_{**}$ be the set of
isomorphism classes $[G, o, v]$ of marked connected graphs with two distinguished
vertices $o, v \in V(G)$ (which are ordered, but need not be distinct). Here, isomorphism is naturally defined as a bijection
preserving marks and adjacency structure which maps the two distinguished
vertices of one object to the respective ones of the other. A measure $\mu \in \mP(\mGb_*)$ is
called unimodular if, for all measurable non--negative functions $f: \mGb_{**}
\rightarrow \reals_+$, we have
\begin{equation}
  \label{eq:unimodularity}
  \int \sum_{v \in V(G)}f([G,o,v]) d \mu([G,o]) = \int \sum_{v \in V(G)} f([G,v,o]) d \mu([G,o]),
\end{equation}
where in each expression the summation is over $v \in V(G)$ that are in the same connected component of $G$ as $o$, since otherwise the expression 
$[G,o,v]$ is not defined.
It can be seen that, in order to check unimodularity, it suffices to check the
above condition for functions $f$ such that $f([G,o,v]) = 0$ unless $v$ is
adjacent to $o$. This is called {\em involution invariance}
\cite{aldous2007processes}. We denote the set of unimodular probability measures
on $\mGb_*$ by $\mP_u(\mGb_*)$. Similarly, as $\mTb_* \subset \mGb_*$, we can
define the set of unimodular probability measures on $\mTb_*$, which we denote by
$\mP_u(\mTb_*)$. \glsadd{not:unimodular-measure}
%\color{red}
%Need to clarify whether $o$ and $v$ need to be distinct vertices. Need to clarify
%whether $f([G,o,v])$ should be considered as defined when $v$ and $o$ are not in the
%same connected component. Need to clarify how to make sense of the statements for
%the situation where the probability distribution on $\mGb_*$ gives all its mass to
%a marked graph comprised of a single vertex (i.e. the root). In particular, what
%is the claim about involution invariance saying in this case?
%\color{black}

For $\mu \in \mP(\mGb_*)$, and $\theta \in \vermark$, we denote by
$\vtype_\theta(\mu)$ the probability under $\mu$ of the root having mark
$\theta$, i.e.\ $\pr{\tau_G(o) = \theta}$ where $[G,o]$ has law $\mu$.\footnote{Here we observe that $\tau_G(o)$ is the same for all $(G,o)$ in the equivalence class $[G,o]$, so we can unambiguously write $\tau_G(o)$ given only 
  the equivalence class $[G,o]$.}
\glsadd{not:vtypemu}%
With this, let $\vvtype(\mu) := (\vtype_\theta(\mu): \theta \in \vermark)$ be
the \emph{probability vector of the root mark}. \glsadd{not:vvtypemu}
Also, for $x,x' \in \edgemark$, we define
$\deg_{x,x'}(\mu) := \ev{\deg^{x,x'}_G(o)}$ where $[G,o]$ has law $\mu$.\footnote{Here we observe that $\deg^{x,x'}_G(o)$ is the same for all $(G,o)$
in the equivalence class $[G,o]$.}
In fact,
$\deg_{x,x'}(\mu)$ denotes the expected number of edges connected to the root
with mark $x$ towards the root and mark $x'$ towards the other endpoint.
\glsadd{not:degxx'mu}
Moreover, let $\deg(\mu)$ be the expected degree at the root. Note that, by
definition, we have $\deg(\mu) = \sum_{x,x' \in \edgemark} \deg_{x,x'}(\mu)$.
\glsadd{not:degmu} Furthermore, let $\vdeg(\mu) := (\deg_{x,x'}(\mu): x,x' \in
\edgemark)$.
% be the degree vector associated to $\mu$.
%\begin{marginpar}
%{\color{blue}
%I deleted the phrase that called this a degree vector,
%because the term ``degree vector'' has been used in a restricted sense later.}
%\end{marginpar}
\glsadd{not:vdegmu}

Every $\mu \in \mP(\mGb_*)$ that appears in this document will be assumed to satisfy $\deg(\mu) < \infty$. However, for clarity, we will explicitly repeat this condition wherever necessary.

\subsection{Local Weak Convergence for Multigraphs}
\label{sec:multigraph-lwc}

The framework above, which was defined for (locally finite, simple) graphs, can be extended to
multigraphs, as defined in \cite[Section~2]{bordenave2015large}.
Here we
give a brief introduction, and refer the reader to \cite{bordenave2015large}, and also to \cite{aldous2004objective}, \cite{aldous2007processes}, for further reading.

A multigraph on a finite or countably infinite vertex set $V$ is a pair $G = (V, \omega)$
where $\omega: V^2 \rightarrow \integers_+$ is such that, for $u, v \in V$,
$\omega(u,u)$ is even and $\omega(u,v) = \omega(v,u)$. We interpret $\omega(u,u)
/ 2$ as the number of self-loops at vertex $u$, and $\omega(u,v)$ as the number
of edges between vertices $u$ and $v$. The degree of a vertex $u$ is defined to
be $\deg(u) := \sum_{v \in V} \omega(u,v)$. 
The notions of path, distance
and connectivity are naturally defined for multigraphs.
A multigraph $G$ is called locally
finite if $\deg(v) < \infty$ for all $v \in V$. 

All multigraphs encountered in this document will be locally
finite, so the term ``multigraph" will be considered synonymous with
``locally finite multigraph".
Further, we assume that all
multigraphs are unmarked.% unless otherwise stated. 

It can be checked that a multigraph is a graph
(i.e. a locally finite multigraph is a simple locally finite 
graph) precisely when $\omega(u,v) \in \{0,1\}$ for all 
pairs of vertices $u$ and $v$ (in particular, $\omega(u,u) = 0$
for all vertices $u$).

A rooted multigraph $(G, o)$ is a multigraph on a finite or countably infinite
vertex set $V$ together with a distinguished vertex $o \in V$. 

\begin{definition}
    \label{def:multigraph-isomorphism}
    %{\em [Isomorphism of rooted multigraphs]  }
    
Two rooted multigraphs
$(G_1, o_1) = ((V_1, \omega_1), o_1)$ and $(G_2, o_2) = ((V_2, \omega_2), o_2)$ are said to be isomorphic if there is a bijection $\sigma$
%$\sigma: V_1 \rightarrow V_2$  
between the sets of vertices of the respective connected components of the roots
which preserves the roots and
connectivity. Namely, \ $\sigma(o_1) = o_2$ and we have $\omega_2(\sigma(v), \sigma(u)) =
\omega_1(v,u)$ for all
$u$ and $v$ in the connected component of $o_1$.
%$u, v \in V_1$. 
We denote this by writing $(G_1, o_1)
\equiv (G_2, o_2)$. 
\end{definition}

This notion of isomorphism defines an equivalence relation on rooted connected multigraphs,
where the equivalence class to which a rooted multigraph belongs is determined by the connected component of the root.
Let $\hmG_*$ be the set of all equivalence classes $[G,o]$ of rooted
multigraphs corresponding to this isomorphism relation. 
\glsadd{not:hmG_star}
%\glsadd{not:mu_h}

For $h
\geq 0$, let $(G,o)_h$ denote the induced multigraph defined by the vertices in $G$
with distance no more than $h$ from $o$, rooted at $o$. 
%We equip $\hmG_*$ with the following local metric similar to the one defined in the previous section.
Let $[G_1, o_1], [G_2, o_2]
\in \hmG_*$ and $(G_1, o_1)$ and $(G_2, o_2)$ be arbitrary members of $[G_1,
o_1]$ and $[G_2, o_2]$, respectively.
The distance between $[G_1, o_1], [G_2, o_2]
\in \hmG_*$ is defined to be $1/(1+h_*)$, where $h_*$ is the maximum $h$ such that $(G_1, o_1)_h \equiv (G_2,
o_2)_h$. If $(G_1, o_1)_h \equiv (G_2,
o_2)_h$ for all $h \ge 0$, then we define
the distance to be zero, because this occurs precisely when 
$(G_1, o_1) \equiv (G_2,
o_2)$.
It can be checked that this distance defined on $\hmG_*$ is indeed a metric, and
$\hmG_*$ equipped with this metric is a Polish space \cite{aldous2007processes}.

%%% Local Variables: 
%%% mode: latex
%%% TeX-master: "Note-41_BC-ent-arxiv.tex"
%%% End: 

%aa

\section{Marked Unimodular Galton--Watson Trees}
\label{sec:markovian-ugwt}

% bb

In this section, we introduce an important class of unimodular probability distributions on
$\mTb_*$, called {\em marked unimodular Galton--Watson trees}. These probability distributions can be thought of as the counterpart of finite memory Markov processes in the local weak
convergence language.
The construction here is a generalization of the one in
Section 1.2 of \cite{bordenave2015large}.
Before giving the definition, we need to set up some
notation. 

Given $\mu \in \mP(\mGb_*)$, let $\mu_h \in \mP(\mGb_*^h)$ denote the law of
$[G,o]_h$, where $[G,o]$ has law $\mu$. \glsadd{not:mu_h}
We similarly define $\mu_h \in \mP(\mTb_*^h)$ for $\mu \in
\mP(\mTb_*)$, recalling that $\mTb_* \subset \mGb_*$. For a marked graph
$G$, on a finite or countably infinite vertex set, and adjacent vertices $u$ and $v$ in $G$, we define $G(u,v)$ to be the pair
$(\xi_G(u,v), (G',v))$ where  $G'$ is the connected component of $v$ in the graph
obtained from $G$ by removing the edge between $u$ and $v$. Similarly, for
 $h \geq 0$, $G(u,v)_h$ is defined as $(\xi_G(u,v), (G',v)_h)$.
\glsadd{not:Guv} \glsadd{not:Guvh}
% \begin{marginpar}
% {\color{pedit} The sentence in blue about $(G',v)_h$ does not seem correct to me. Note that the
%   distance in $G'$ is defined after we remove the edge between $u$ and $v$ so
%   that a path is not allowed to go over this edge, whereas in $(G,v)_h$ such a
%   thing is allowed. In other words, leaving the edge $(u,v)$ can cause certain
%   vertices to have distance less than $h$ from $v$ which in fact should have distance larger
%   than $h$ in $(G',v)_h$.}
% \end{marginpar}
% \color{blue}
% Note that $(G',v)_h$ can also be arrived at by first creating $(G,v)_h$ and then removing the edge between $u$ and $v$ if $h > 0$.
% \color{black}
See Figure~\ref{fig:Guv} for an example.
Let $G[u,v]$ denote the pair $(\xi_G(u,v), [G',v])$, so $G[u,v] \in \edgemark \times \mGb_*$. Likewise, for $h \geq 0$, let $G[u,v]_h$ denote
$(\xi_G(u,v), [G',v]_h)$, so $G[u,v]_h \in \edgemark \times \mGb_*^h$.
\glsadd{not:Guv_iso} \glsadd{not:Guvh_iso}

\begin{figure}
  \centering
  \begin{tikzpicture}
  \begin{scope}[xshift=-5cm]
  \node[nodeR,label={[label distance=1mm]90:1}] (n1) at (0,0) {};
  \node[nodeB,label={[label distance=1mm]180:2}] (n2) at (-1.5,-1.5) {};
  \node[nodeB,label={[label distance=1mm]0:3}] (n3) at (1.5,-1.5) {};
  \node[nodeR,label={[label distance=1mm]180:4}] (n4) at (-1.5,-3.7) {};
  \node[nodeR,label={[label distance=1mm]0:5}] (n5) at (1.5,-3.7) {};

  \drawedge{n1}{n3}{B}{O}
  
  \drawedge{n1}{n2}{B}{O}

  \drawedge{n2}{n4}{B}{O}

  \drawedge{n4}{n5}{B}{B}

  \drawedge{n3}{n5}{B}{B}

  \drawedge{n2}{n5}{O}{B}

  \node at (0,-4.7) {$(a)$};
\end{scope}

\begin{scope}
  \node[nodeB] (n3) at (0,0) {};
  \nodelabel{n3}{0}{3};
  \node[nodeR] (n5) at (0,-1.5) {};
  \nodelabel{n5}{0}{5};
  \node[nodeB] (n2) at (-0.9,-2.6) {};
  \nodelabel{n2}{180}{2};
  \node[nodeR] (n4) at (0.9,-2.6) {};
  \nodelabel{n4}{0}{4};
  \node[nodeR] (n1) at (0,-3.7) {};
  \nodelabel{n1}{0}{1};

  \drawedge{n3}{n5}{B}{B}
  \drawedge{n5}{n2}{B}{O}
  \drawedge{n5}{n4}{B}{B}
  \drawedge{n2}{n1}{O}{B}
  \drawedge{n2}{n4}{B}{O}

  \draw[edgeO] ($(n3)+(0,0.6)$) -- (n3);
  \node at (0,-4.7) {$(b)$};
\end{scope}

\begin{scope}[xshift=5cm]
  \node[nodeB] (n3) at (0,0) {};
  \nodelabel{n3}{0}{3};
  \node[nodeR] (n5) at (0,-1.5) {};
  \nodelabel{n5}{0}{5};
  \node[nodeB] (n2) at (-0.9,-2.6) {};
  \nodelabel{n2}{180}{2};
  \node[nodeR] (n4) at (0.9,-2.6) {};
  \nodelabel{n4}{0}{4};

  \drawedge{n3}{n5}{B}{B}
  \drawedge{n5}{n2}{B}{O}
  \drawedge{n5}{n4}{B}{B}
  \drawedge{n2}{n4}{B}{O}

  \draw[edgeO] ($(n3)+(0,0.6)$) -- (n3);
  \node at (0,-4.7) {$(c)$};
\end{scope}
  
\end{tikzpicture}

  \caption[Marked Graph]{$(a)$ A marked graph $G$ on the vertex set $\{1, \dots, 5\}$
    with vertex mark set $\vermark = \{\tikz{\node[nodeB,scale=0.5] at (0,0) {};}, \tikz{\node[nodeR,scale=0.5] at (0,0)
  {};}\}$ and
    edge mark set $\edgemark =
  \{\text{\color{blueedgecolor} Blue (solid)}, \text{\color{orangeedgecolor}
    Orange (wavy)} \}$. In $(b)$, $G(1,3)$ is
    illustrated where the first component $\xi_G(1,3)$ is depicted as a half
    edge with the corresponding mark going towards the root $3$,  and
  $(c)$ illustrates $G(1,3)_2$. Note that $G(1,3)$ can be interpreted as cutting
the edge between $1$ and $3$ and leaving the half edge connected to $3$ in
place. Moreover, note that, in constructing $G(u,v)$, although we are removing
the edge between $u$ and $v$, it might be the case that $u$ is still reachable
from $v$ through another path, as is the case in the above example.}
  \label{fig:Guv}
\end{figure}

For $g \in \edgemark \times \mGb_*$, we call the $\edgemark$ component of $g$
its {\em mark component} and denote it by $g[m]$. Moreover, we call the $\mGb_*$
component of $g$ its {\em subgraph component} and denote it by $g[s]$.
%\color{red}
%Note for Payam:
%The notation $g[m]$ and the notation $g[s]$ need to be included in the glossary.
%\color{black}
%\pres{added to glossary}
\glsadd{not:g-m}
\glsadd{not:g-s}
Given a marked graph $G$ and adjacent vertices $u$ and $v$ in $G$, and
for $g \in \edgemark \times \mGb_*$,  we write $G(u,v) \equiv g$ to denote that
$\xi_G(u,v) = {g[m]}$ and also $(G',v)$ falls in the isomorphism class {$g[s]$}.
We define the expression $G(u,v)_h \equiv g$ for $g \in \edgemark \times
\mGb_*^h$ in a similar fashion. 
For $g \in \edgemark \times \mGb_*^h$ and an integer $k \ge 0$, we define $g_k \in
\edgemark \times \mGb_*^{\min\{h, k\}}$ to have the same mark component as $g$, i.e.\ $g_k[m] := g[m]$,
and subgraph component the truncation
of the subgraph component of $g$ up to depth $k$, i.e.\ $g_k[s] := (g[s])_k$.
\glsadd{not:g-k}
%\color{red}
%Note for Payam:
%The notation $g_k$ needs to be included in the glossary.
%\color{black}
%\pres{added to glossary}
%$(g_s)_k$, i.e.\ the truncation
%of the subgraph component of $g$.
For a marked graph $G$,  two adjacent vertices $u, v$ in $G$, and  $h \geq 1$, we define the \emph{depth $h$ type} of the edge $(u,v)$ as
  \begin{equation}
    \label{eq:depth-h-type}
    \etype^h_G(u,v) := (G[v,u]_{h-1}, G[u,v]_{h-1}) \in (\edgemark \times \mGb_*^{h-1}) \times (\edgemark \times \mGb_*^{h-1}).
  \end{equation}
  Note that we have employed the convention that the first component on the
  right hand side (i.e.\ $G[v,u]_{h-1}$) is the neighborhood of the first vertex
  appearing on the left hand side (i.e.\ $u$). See Figure~\ref{fig:etype_example}
  for an example.
\glsadd{not:etype}
  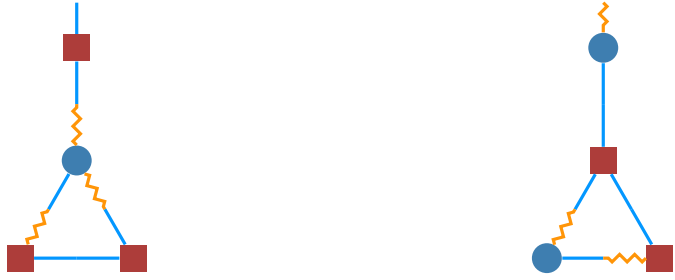
\begin{figure}
    \centering
    \begin{tikzpicture}
  \begin{scope}[xshift=-3.5cm]
    \node[nodeR] (n1) at (0,0) {};
    \node[nodeB] (n2) at (0,-1.5) {};
    \node[nodeR] (n3) at ($(0,-1.5) + (-120:1.5)$) {};
    \node[nodeR] (n4) at ($(0,-1.5) + (-60:1.5)$) {};

    \drawedge{n1}{n2}{B}{O}
    \drawedge{n2}{n3}{B}{O}
    \drawedge{n2}{n4}{O}{B}
    \drawedge{n3}{n4}{B}{B}

    \draw[edgeB] ($(n1)+(0,0.6)$) -- (n1);
  \end{scope}

  \begin{scope}[xshift=3.5cm]
    \node[nodeB] (n1) at (0,0) {};
    \node[nodeR] (n2) at (0,-1.5) {};
    \node[nodeB] (n3) at ($(0,-1.5) + (-120:1.5)$) {};
    \node[nodeR] (n4) at ($(0,-1.5) + (-60:1.5)$) {};

    \drawedge{n1}{n2}{B}{B}
    \drawedge{n2}{n3}{B}{O}
    \drawedge{n2}{n4}{B}{B}
    \drawedge{n3}{n4}{B}{O}

    \draw[edgeO] ($(n1)+(0,0.6)$) -- (n1);
  \end{scope}

\end{tikzpicture}

    \caption{$\etype_G^3(1,3)$ for the graph in Figure~\ref{fig:Guv}, with the
      first component on the left and the second component on the right. Note
      that the order in setting the notation $\etype_G^h(u,v)$ is chosen so that the first
      component ($G[3,1]_2$ here) is the neighborhood of the first vertex mentioned
      in the notation ($1$ in this example),  and the second component is the neighborhood of the vertex mentioned
      second ($3$ in this example).
      %\color{blue}
      %Also note that even though labels are shown on the vertices for clarity in this figure, the subgraph part of each of the two components of $\etype_G^3(1,3)$ in this example is an equivalence class, which should be thought of as unlabeled or labeled canonically.
      %\color{black}}
      Also note that  the subgraph part of each of the two
        components of $\etype_G^3(1,3)$ in this example is an equivalence class,
        which is the reason why there are no vertex labels.}
    \label{fig:etype_example}
  \end{figure}

  % \begin{marginpar}
  %   {\color{pedit}  I think it is better to remove vertex labels in
  %     Figure~\ref{fig:etype_example} to avoid confusion. With this, I have
  %     changed the last sentence in the caption.}
  % \end{marginpar}
  For a rooted marked graph $(G, o)$, integer $h \ge 1$, and $g, g' \in \Xi \times \mGb_*^{h-1}$, we define
  \begin{equation}
    \label{eq:Eh-g-g'}
    E_h(g, g')(G, o) := |\{v \sim_G o: \etype_G^h(o,v) = (g, g') \}|.
  \end{equation}
  \glsadd{not:Ehgg'}
  %When it is clear from the context, we may drop $(G, o)$ from the notation and
  %simply write $E_h(g, g')$.
Also, for $[G,o] \in \mGb_*$, we can write $E_h(g,g')([G,o])$ for
$E_h(g,g')(G,o)$, where $(G,o)$ is an arbitrary member of $[G,o]$.
\glsadd{not:Ehgg'-iso}
%\color{red}
%Note for Payam:
%The notation $E_h(g,g')([G,o])$ needs to be included in the glossary.
%\color{black}
%\pres{added to glossary}
This notation is well-defined, since $E_h(g, g')(G,o)$,
thought of as a function of $(G,o)$ for fixed integer $h \ge 1$
and $g, g' \in \Xi \times \mGb_*^{h-1}$, is invariant under rooted isomorphism.
%Note that since all such members are rooted isomorphic by definition, they will all result in
%  the same value for $E_h(g,g')(G,o)$.
% \begin{marginpar}
%   {
%   {\color{blue}
% The canonical member seems to be unique as an equivalence class but not as a graph, so perhaps $(G,0)$ should be written $[G,0]$?
% Also, the use of $g$ for members of 
% $\Xi \times \mGb_*$ as well as for members of 
% $\mGb_*$ could be confusing.
% }
% {\color{pedit}
% \textbf{Response:} The canonical member is not an equivalence class, for each
% equivalence class, there is a unique marked graph with labels in $\integers_+$
% with root being $0$. For uniqueness, see \cite{aldous2007processes} page 1461.
% }
% }
% \end{marginpar}

  For  $h \geq 1$, $P \in \mP(\mGb_*^h)$, and $g, g' \in \Xi \times \mGb_*^{h-1}$, define
  \begin{equation*}
    e_P(g, g') := \evwrt{P}{E_h(g, g')(G, o)}.
  \end{equation*}
  % \reversemarginpar%
  % \begin{marginpar}
  %   {
  %     {\color{blue}
  %       What is your convention for when you use $P$ or
  %       when you use $\mu$ for an element of
  %       $\mP(\mGb_*)$?
  %     }
  %     {\color{pedit}
  %       \textbf{Response:} I use $P$ for probability distributions on $\mGb_*^h$
  %       for integer $h \geq 0$, while $\mu$ is used for probability
  %       distributions on $\mGb_*$. 
  %     }
  %   }%
  % \end{marginpar}%
  Here, $(G, o)$ is a member of the isomorphism class $[G, o]$ that has law $P$.
  \glsadd{not:e-p-g-g'}
%\color{red}
%Note for Payam:
%The notation $e_P(g, g')$ needs to be included in the glossary.
%\color{black}
%\pres{added to glossary}
This notation is well-defined
for the same reason as above.
%since $E_h(g, g')(G,o)$,
%thought of as a function of $(G,o)$ for fixed integer $h \ge 1$
%and $g, g' \in \Xi \times \mGb_*^{h-1}$, is invariant under rooted isomorphism.

  \begin{definition}
    \label{def:admissible}
    %{\em [Admissible probability distributions on $\mGb_*^h$
%isomorphism classes of rooted marked graphs of depth bounded by a fixed bound ] }
    
      Let $h \ge 1$. A probability distribution $P \in \mP(\mGb_*^h)$ is called
  {\em admissible} if $\evwrt{P}{\deg_G(o)} < \infty$ and $e_P(g, g') = e_P(g', g)$
  for all $g, g' \in \edgemark \times \mGb_*^{h-1}$. \glsadd{trm:admissible}
\end{definition}
% \begin{marginpar}
%   {
% {  \color{blue}
%   Since the notion of admissibility seems to be important in the sequel, it should probably be highlighted as an explicit definition.
% }
% {\color{pedit} \textbf{Response:} Changed.}
% }
% \end{marginpar}

The following simple lemma indicates the importance of the concept of admissibility.
\begin{lem}
  \label{lem:unimodular-is-admissible}
Let $h \ge 1$,
%$h \in \nats$ 
and let
$\mu \in \mP_u(\mGb_*)$ be a unimodular probability measure with $\deg(\mu) < \infty$. Let
 $P := \mu_h$. Then $P$ is admissible.
\end{lem}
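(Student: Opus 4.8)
The plan is to apply involution invariance --- the simplified form of unimodularity noted in the text, where it suffices to check \eqref{eq:unimodularity} for functions $f$ supported on pairs $(o,v)$ with $v \sim_G o$ --- to a carefully chosen test function that ``reads off'' the depth $h$ type of an edge at the root. Fix $g, g' \in \edgemark \times \mGb_*^{h-1}$. I would define $f: \mGb_{**} \to \reals_+$ by
\[
  f([G,o,v]) := \one{v \sim_G o \text{ and } \etype_G^h(o,v) = (g,g')}.
\]
This is well-defined on isomorphism classes because adjacency and the depth $h$ type are invariant under rooted isomorphism preserving the two distinguished vertices, and it is nonnegative and measurable. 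First I would observe that, for $[G,o]$ fixed, $\sum_{v \in V(G)} f([G,o,v]) = E_h(g,g')(G,o)$, directly from the definition \eqref{eq:Eh-g-g'}. Hence the left-hand side of \eqref{eq:unimodularity} with this $f$ and $\mu$ equals $\evwrt{\mu}{E_h(g,g')(G,o)}$, which by definition of $P = \mu_h$ (note $E_h(g,g')$ depends only on the depth $h$ neighborhood of the root) is exactly $e_P(g,g')$.

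For the right-hand side, the summand is $f([G,v,o])$, which is nonzero precisely when $o \sim_G v$ and $\etype_G^h(v,o) = (g,g')$. By the convention in \eqref{eq:depth-h-type} and the symmetry $\etype_G^h(v,o) = (G[o,v]_{h-1}, G[v,o]_{h-1})$ versus $\etype_G^h(o,v) = (G[v,o]_{h-1}, G[o,v]_{h-1})$, the condition $\etype_G^h(v,o) = (g,g')$ is the same as $\etype_G^h(o,v) = (g',g)$. Therefore $\sum_{v} f([G,v,o]) = E_h(g',g)(G,o)$, and the right-hand side of \eqref{eq:unimodularity} equals $\evwrt{\mu}{E_h(g',g)(G,o)} = e_P(g',g)$. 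Unimodularity then gives $e_P(g,g') = e_P(g',g)$ for all $g,g'$. Finiteness of $\evwrt{P}{\deg_G(o)}$ is immediate: $\deg_G(o)$ is determined by $(G,o)_1$, hence by $(G,o)_h$ since $h \ge 1$, so $\evwrt{P}{\deg_G(o)} = \deg(\mu) < \infty$ by hypothesis. Together these two facts are exactly the definition of admissibility.

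The only genuinely delicate point --- and the one I would be most careful about --- is the bookkeeping of the orientation convention in $\etype_G^h$: one must verify that swapping the roles of $o$ and $v$ in the test function corresponds to swapping the two components of the type, so that the right-hand side produces $e_P(g',g)$ rather than $e_P(g,g')$. Everything else (measurability, invariance under isomorphism, the reduction to involution invariance) is routine and already granted in the excerpt. I would also briefly remark that applying \eqref{eq:unimodularity} is legitimate because $f$ is genuinely a function on $\mGb_{**}$ and is $0$ unless $v$ is adjacent to $o$, so the involution-invariance form quoted from \cite{aldous2007processes} applies directly.
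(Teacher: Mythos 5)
Your proof is correct and takes essentially the same approach as the paper: both apply involution invariance to the indicator test function $f([G,o,v]) = \one{v \sim_G o,\ \etype_G^h(o,v) = (g,g')}$ and use the orientation-swap identity $\etype_G^h(v,o) = (g,g') \Leftrightarrow \etype_G^h(o,v) = (g',g)$ to conclude $e_P(g,g') = e_P(g',g)$. Your explicit remark on the finiteness of $\evwrt{P}{\deg_G(o)}$ and on the orientation bookkeeping is a harmless expansion of what the paper leaves implicit.
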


\begin{proof}
Using
the definition of unimodularity, for $g,g' \in \edgemark \times \mGb_*^{h-1}$, we have
\begin{align*}
  e_P(g,g') &= \evwrt{\mu}{\sum_{v \sim_G o} \one{\etype_G^h(o,v) = (g,g')}} \\
  &= \evwrt{\mu}{\sum_{v \sim_G o} \one{\etype_G^h(v,o) = (g,g')}} = \evwrt{\mu}{\sum_{v \sim_G o} \one{\etype_G^h(o,v) = (g',g)}} = e_P(g',g).
\end{align*}
\end{proof}

%\color{red}
%\begin{rem}
%\label{rem:zero-degree-admissible}
%Note that, for each integer $h \ge 0$, the probability distributions 
%$P \in \mP(\mGb_*^h)$ with $\evwrt{P}{\deg_G(o)} = 0$ are in one to one correspondence with the probability distributions on the vertex mark set $\vermark$. For $h \ge 1$ it can be checked that each $P \in \mP(\mGb_*^h)$ with $\evwrt{P}{\deg_G(o)} = 0$ is admissible.
%\end{rem}

%{\bf Think of deleting this remark.}
%\color{black}

  For the case of rooted marked trees,
 all the above notation can be defined similarly by substituting for
  $\mGb_*$ with $\mTb_*$, since $\mTb_* \subset \mGb_*$.
%\color{red}
%Note for Payam: 
%Introduce tree versions of all the relevant graph notation in the glossary. There are many items of notation of this sort that show up all the time. In the glossary these can be identified by 
%telling the reader to look at the graph version. For instance,
%put $T(u,v)$ as a glossary item, with the text saying
%``See $G(u,v)$'', etc. Some of the items that I noticed need to put in the glossary are $T(u,v)$, $T[u,v]$, $T(u,v)_h$, $T[u,v]_h$,
%$\xi_T(u,v)$, $t[m]$, $t[s]$, $\etype^h_T(u,v)$,
%$E_h(t,t')((T,o))$, $E_h(t,t')([T,o])$,
%$e_P(t,t')$. There may be others. Please check.
%\color{black}
%\pres{added these together with a number of other notations to glossary}
  While we have defined the notion of an admissible probability distribution
  $P$ for $P \in \mP(\mGb_*^h)$, $h \ge 1$, we will soon see that it suffices to be focused on the case $P \in \mP(\mTb_*^h)$, $h \ge 1$.%
  %isomorphism classes of marked rooted trees of depth bounded by a fixed bound. 
  \glsadd{not:T-o}%
  \glsadd{not:T-o-h} %
  \glsadd{not:Eh-tt'-To} %
  \glsadd{not:Eh-tt'-To-iso} %
  \glsadd{not:T-u-v} %
  \glsadd{not:T-u-v-h} %
  \glsadd{not:T-u-v-iso} %
  \glsadd{not:T-u-v-h-iso} %
  \glsadd{not:V-T} %
  \glsadd{not:T-o-iso} %

    \glsadd{not:T-o-h-iso} %
  \glsadd{not:deg-T-o} %
  \glsadd{not:deg-T-xx'-o} %
   \glsadd{not:dist-T} %
  \glsadd{not:phi-T-u-v} %
  \glsadd{not:tau-T} %
  \glsadd{not:vD-T} %
  \glsadd{not:vm-T} %
  \glsadd{not:vtau-T} %
  \glsadd{not:vu-T} %

  \glsadd{not:xi-T-v-w} %
  \glsadd{not:e-P-tt'} %
  \glsadd{not:tm} % 
  \glsadd{not:ts} %
  \glsadd{not:tk} %
  \glsadd{not:v-simT-w} %

For $t, t' \in \edgemark \times \mTb_*$, define
$t \oplus t' \in \mTb_*$ as the isomorphism class of the rooted tree $(T, o)$
where $o$ has a subtree isomorphic to $t[s]$, and $o$ has an extra offspring $v$
where the subtree rooted at $v$ is isomorphic to $t'[s]$. Furthermore,
$\xi_T(v,o) = t[m]$ and $\xi_T(o, v) = t'[m]$. See Figure~\ref{fig:t_oplus_tprime} for an example. Note
that, in general, $t \oplus t'$ is different from 
% \color{blue}
% and in a different equivalence class from
% \color{black}
% \begin{marginpar}%
%   {\color{pedit} No need to say ``and in a different equivalence class from''
%     since $t \oplus t'$ itself in a member in $\mTb_*$ and hence is an
%     equivalence class}
% \end{marginpar}%
$t' \oplus t$. Also, note that if $t\in \edgemark \times \mTb_*^{k}$ and $t' \in \edgemark \times \mTb_*^{l}$, then we have $t \oplus t' \in \mTb_*^{\max\{k, l+1 \}}$.
  \glsadd{not:oplus}

\begin{figure}
  \centering
      \scalebox{0.8}{
\begin{tikzpicture}

  \begin{scope}[xshift=-7.5cm]
    \node[nodeB] (n1) at (0,0) {};
    \node[nodeB] (n2) at (0,-1.5) {};
    \node[nodeR] (n3) at ($(0,-1.5) + (-135:1.5)$) {};
    \node[nodeB] (n4) at ($(0,-1.5) + (-45:1.5)$) {};

    %\drawedge{n1}{n2}{O}{O}
    \draw[edgeO] (n1) -- (n2);
    \drawedge{n2}{n3}{B}{B}
    \drawedge{n2}{n4}{O}{B}
    \draw[edgeO] ($(n1)+(0,0.6)$) -- (n1);

    \node[scale=1.5] at (0,1) {$t$};

  \end{scope}

    \begin{scope}[xshift=-2.5cm]
    \node[nodeB] (n1) at (0,0) {};
    \node[nodeB] (n2) at ($(-135:1.5)$) {};
    \node[nodeR] (n3) at ($(-45:1.5)$) {};

    \drawedge{n1}{n2}{B}{B}
    \drawedge{n1}{n3}{O}{B}
    
    \draw[edgeB] ($(n1)+(0,0.6)$) -- (n1);

    \node[scale=1.5] at (0,1) {$t'$};

  \end{scope}

  \begin{scope}[xshift=2.5cm]
    \node[nodeB] (n1) at (0,0) {};
    \node[nodeB] (n2) at (-135:1.9) {};
    \node[nodeB] (n3) at (-45:1.9) {};
    \node[nodeR] (n21) at ($(n2)+(-120:1.5)$) {};
    \node[nodeB] (n22) at ($(n2)+(-60:1.5)$) {};
    \node[nodeB] (n31) at ($(n3)+(-120:1.5)$) {};
    \node[nodeR] (n32) at ($(n3)+(-60:1.5)$) {};

    \draw[edgeO] (n1) -- (n2);
    \drawedge{n1}{n3}{O}{B}
    \drawedge{n2}{n21}{B}{B}
    \drawedge{n2}{n22}{O}{B}
    \drawedge{n3}{n31}{B}{B}
    \drawedge{n3}{n32}{O}{B}

    \node[scale=1.5] at (0,1) {$t \oplus t'$};
  \end{scope}

  \begin{scope}[xshift=7.5cm]
    \node[nodeB] (n0) at (0,0) {};
    \node[nodeB] (n1) at (-150:1.5) {};
    \node[nodeR] (n2) at (-90:1.5) {};
    \node[nodeB] (n3) at (-30:1.5) {};

    \drawedge{n0}{n1}{B}{B}
    \drawedge{n0}{n2}{O}{B}
    \drawedge{n0}{n3}{B}{O}

    \node[nodeB] (n33) at ($(n3)+(0,-1.2)$) {};
    \node[nodeR] (n331) at ($(n33) + (-120:1.2)$) {};
    \node[nodeB] (n332) at ($(n33) + (-60:1.2)$) {};

    \draw[edgeO] (n3) -- (n33);
    \drawedge{n33}{n331}{B}{B}
    \drawedge{n33}{n332}{O}{B}

    \node[scale=1.5] at (0,1) {$t' \oplus t$};
  \end{scope}

\end{tikzpicture}
}

  \caption{$t \oplus t'$ and $t' \oplus t$ for $t, t' \in \edgemark \times
    \mTb_*$ depicted on the left. We have employed our general convention in
    drawing objects in $\edgemark \times \mTb_*$, which is to draw the mark
    component as a half edge towards the root.  
    In the figures for $t \oplus t'$ and $t' \oplus t$ the root is vertex at the top of the figure. Note that in this example $t \oplus t'$ is
    different from $t' \oplus t$.
    %\color{blue}
    %Also, they are in different equivalence classes because the root in each of $t \oplus t'$ and
   %$t' \oplus t$ is the top vertex in the corresponding figure. 
    %\color{black}
    }
  \label{fig:t_oplus_tprime}
\end{figure}
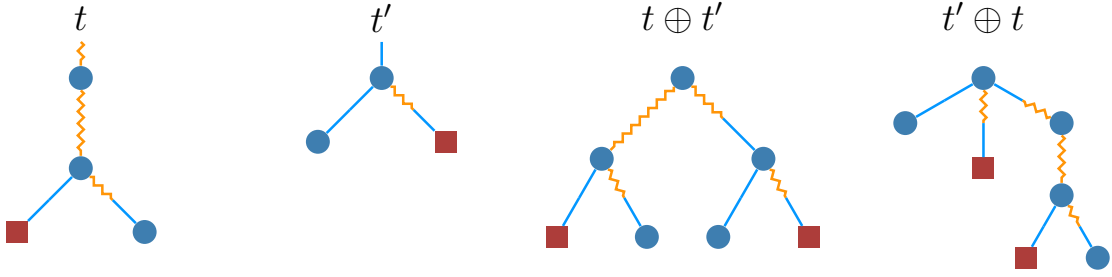
% \begin{marginpar}
%   {\color{pedit} I think the last sentence starting with ``Also, they are in
%     different equivalence classes \dots'' in the caption of
%     Figure~\ref{fig:t_oplus_tprime} is not necessary, since $t \oplus t'$ by
%     definition is a member in $\mTb_*$ and thus is an equivalence class itself.}
% \end{marginpar}

%    \normalmarginpar

The operation $\oplus$ described above helps to elucidate
the structure of marked rooted trees of fixed depth, i.e.
members of $\mTb_*^h$, $h \ge 0$.
Some of their properties are gathered in
Appendix~\ref{sec:marked-rooted-trees-some-props}.

Now, for $h \ge 1$, given an admissible $P \in \mP(\mTb_*^h)$, we define a Borel
probability measure $\ugwt_h(P) \in \mP(\mTb_*)$, which is called the
{\emph {marked unimodular Galton--Watson tree with  depth $h$ neighborhood distribution $P$}},
as follows. \glsadd{not:ugwhP}
For $t, t' \in \edgemark \times \mTb_*^{h-1}$ such that $e_P(t, t') > 0$, define $\hP_{t,t'} \in \mP(\edgemark \times \mTb_*^h)$ via:
  \begin{equation}
    \label{eq:size-biased-def}
    % \hP_{t, t'} (\tilde{t}) = P(\tilde{t} \oplus t') \left ( 1 + | \{ v \sim_{\tilde{t}_s} o : \tilde{t}_s(o,v) \equiv t', \xi_{\tilde{t}_s}(v,o) = t_m \} | \right ) \frac{\one{\tilde{t}_{h-1} = t}}{e_P(t, t')}.
    \hP_{t, t'} (\tilde{t}) := \one{\tilde{t}_{h-1} = t} \frac{P(\tilde{t} \oplus t')  E_h(t,t')(\tilde{t}\oplus t') }{e_P(t,t')}, \qquad \qquad \text{for }\tilde{t} \in \edgemark \times \mTb_*^h.
  \end{equation}
  Moreover, in case $e_P(t,t') = 0$, we define
    $\hP_{t,t'}(\tilde{t}) = \one{\tilde{t} = t}$.
    
    We first check that $\hP_{t, t'} (\tilde{t})$ defines a probability distribution over $\tilde{t}$. This is clear when
    $e_P(t,t') = 0$, so assume that $e_P(t,t') > 0$.
 By definition, we have
  \begin{equation*}
    e_P(t, t') = \sum_{t'' \in \mTb_*^h} P(t'') E_h(t,t')(t'').
  \end{equation*}
 Note that $E_h(t,t')(t'') > 0$ iff for some
  $\tilde{t} \in \edgemark \times \mTb_*^h$ with $\tilde{t}_{h-1} = t$, we have
  $t'' = \tilde{t} \oplus t'$.
Also, it is easy to see that two different $\tilde{t}^{(1)}$ and $\tilde{t}^{(2)}$ in
$\edgemark \times \mTb_*^h$ with $\tilde{t}^{(1)}_{h-1} = \tilde{t}^{(2)}_{h-1}
= t$ give rise to different objects $\tilde{t}^{(1)} \oplus t'$ and
$\tilde{t}^{(2)} \oplus t'$.
  This readily implies that summing
  $\hP_{t,t'}(\tilde{t})$ over all $\tilde{t} \in \edgemark \times \mTb_*^h$
  such that $\tilde{t}_{h-1} = t$ gives $1$, and hence $\hat{P}_{t,t'}(\tilde{t})$
  is a probability distribution over $\tilde{t}$.
  
    %As we will discuss below,
    %motivated by Corollary~\ref{cor:ugwt-gamma>0-ep>0} in Apendix~\ref{sec:UGWT-some-props},
    %this choice is arbitrary.

With this, we define $\ugwt_h(P)$ to be the law of $[T,o]$ where $(T,o)$ is the
 random rooted marked tree constructed as follows. First, we sample the $h$
 neighborhood of the root, $(T,o)_h$, according to $P$. Then,  for
 each offspring $v \sim_T o$ of the root, we sample $\tilde{t} \in \edgemark
 \times \mTb_*^h$ according to the law $\hP_{t,t'}(.)$ where $t = T[o,v]_{h-1}$ and
 $t' = T[v,o]_{h-1}$. Note that, by definition, we have $\tilde{t}_{h-1} = t$.
 This means that the subtree component of $\tilde{t}$ agrees
 with the subtree component of $t$ up to depth $h-1$. This allows us to 
add at most one layer to $T(o,v)_{h-1}$ so that $T(o,v)_h \equiv \tilde{t}$.
%extend the
%depth of the  subtree below $v$ by {\color{pedit} at most } one according to
%$\tilde{t}$. 
We carry out the same
procedure independently for each offspring of the root.
% to extend the depth of
%the tree rooted at $o$. 
At this step, the rooted tree has depth {
  at most }$h+1$. Subsequently, we follow the same procedure for
vertices at depth 2, 3, and so on inductively to construct $(T,o)$. More specifically,  for a vertex $v$ at depth
$k$ of $(T,o)$ with parent $w$, we sample $\tilde{t}$ from $\hP_{t,t'}(.)$ with
$t = T[w,v]_{h-1}$ and $t' = T[v,w]_{h-1}$.
%, and extend the subtree below $w$
%using $\tilde{t}$. 
Since by definition, we have $\tilde{t}_{h-1} = t$, we can add at
most one layer to $T(w,v)_{h-1}$ so that $T(w,v)_h \equiv \tilde{t}$. 
We do this independently for all vertices at depth $k$.
If, at the time we do the above procedure for vertices at depth
  $k$, there is no vertex at that depth, we stop the
  procedure.
%to make the depth of the graph equal to $h+k$ at the end of step $k$.
Finally, we define $\ugwt_h(P)$ to be the law of $[T,o]$.

As shown in Corollary~\ref{cor:ugwt-gamma>0-ep>0} in
Appendix~\ref{sec:UGWT-some-props}, if $[T,o]$ is outside a measure zero set with respect to
$\ugwt_h(P)$, for all vertices $v \in V(T) \setminus \{o\}$ we have $e_P(t,t')
> 0$ where $t = T[w,v]_{h-1}$ and $t' = T[v,w]_{h-1}$, with $w$ being the parent
of $v$. This means that the need to refer to the definition of $\hP_{t,t'}$ when $e_P(t,t') = 0$ 
will not arise, with probability $1$.
%is arbitrary in the definition of $\ugwt_h(P)$. 

%\color{red} {\bf I have not yet read the proof in Appendix~\ref{sec:UGWT-some-props}} \color{black}

% {\color{pedit}
%   Lemma~\ref{lem:UGWT-gamma-pos-ep-pos} in Appendix~\ref{sec:UGWT-some-props}
%   guarantees that, almost surely, throughout the construction, we have $e_P(t,t')
%   > 0$, where $t, t' \in \edgemark \times \mTb_*^{h-1}$ are those which appear in
%   $\hP_{t,t'}$ as in the above discussion. 
% }
% \begin{marginpar}
% {
% \color{blue}
% This discussion appears to be incomplete, because
% in order to define $\hP_{t,t'}(.)$ it is necessary
% to have $e_P(t, t') > 0$. How does one ensure that
% the $(t,t')$ pairs that show up in deeper stages of
% this construction always have the property
% that $e_P(t, t') > 0$?
% \color{pedit}
% \textbf{Response:} Please see Lemma~\ref{lem:UGWT-gamma-pos-ep-pos} in Appendix~\ref{sec:UGWT-some-props}.
% }
% \end{marginpar}

For each integer $h \ge 1$, the probability distribution 
$\ugwt_h(P) \in \mP(\mTb_*)$ satisfies a useful continuity property in its defining admissible probability distribution $P \in \mP(\mTb_*^h)$.
This is stated in the following Lemma~\ref{lem:Pn-conv-P--ugwt-Pn-conv-ugwt-P},
whose proof is in Appendix~\ref{sec:UGWT-convg}.

\begin{lem}
  \label{lem:Pn-conv-P--ugwt-Pn-conv-ugwt-P}
  Let $h \ge 1$.
  Assume that an admissible probability distribution $P \in \mP(\mTb_*^h)$
  together with a sequence of admissible probability distributions $\Pn \in
  \mP(\mTb_*^{h})$ are given such that $\Pn \Rightarrow P$ and, for all $t, t'
  \in \edgemark \times \mTb_*^{h-1}$, we have $e_{\Pn}(t,t') \rightarrow
  e_P(t,t')$. Then we have $\ugwt_h(\Pn) \Rightarrow \ugwt_h(P)$.
\end{lem}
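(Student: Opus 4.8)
The plan is to reduce the statement, in two steps, to an explicit computation of finite--depth marginals. First recall the standard fact that for probability measures $\mu_n,\mu$ on $\mTb_*$ one has $\mu_n \Rightarrow \mu$ if and only if the law of $[T,o]_R$ under $\mu_n$ converges weakly to its law under $\mu$ for every integer $R \ge 0$. The ``only if'' direction is immediate from continuity of $[T,o]\mapsto [T,o]_R$; for the converse, one first notes that each $\mTb_*^R$ is a countable space which is discrete in the metric $d_*$ (distinct depth $\le R$ trees are at distance at least $1/(1+R)$), so compact sets there are finite and every probability measure on it is tight. Choosing for each $R$ a finite $K_R \subseteq \mTb_*^R$ carrying all but $\varepsilon 2^{-R}$ of the $\mu_n$--mass of $[T,o]_R$ uniformly in $n$ (only finitely many $n$ need enlarging beyond the limiting choice), the set $\bigcap_R \{[T,o]: [T,o]_R \in K_R\}$ is closed and totally bounded, hence compact in the Polish space $\mTb_*$, and carries all but $\varepsilon$ of each $\mu_n$; so $(\mu_n)$ is tight, and any weak subsequential limit $\nu$ satisfies $\nu_R = \mu_R$ for all $R$, whence $\nu = \mu$ since the Borel $\sigma$--field of $\mTb_*$ is generated by the maps $[T,o]\mapsto[T,o]_R$. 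Since moreover $\mTb_*^R$ is countable and discrete, weak convergence of depth--$R$ marginals is the same as pointwise convergence of their probability mass functions. So it suffices to fix $R$ and show, for each $[T^*,o^*] \in \mTb_*^R$, that $\prwrt{\ugwt_h(\Pn)}{[T,o]_R = [T^*,o^*]} \to \prwrt{\ugwt_h(P)}{[T,o]_R = [T^*,o^*]}$.

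For $R \le h$ this is immediate, since the law of $[T,o]_R$ under $\ugwt_h(P)$ is just the image of $P$ under $[S,o]\mapsto [S,o]_R$, and pushforwards are weakly continuous. For $R > h$, unwinding the layer--by--layer construction of $\ugwt_h(P)$ gives the formula
\begin{equation*}
  \prwrt{\ugwt_h(P)}{[T,o]_R = [T^*,o^*]} = P([T^*,o^*]_h) \prod_{v} \hP_{t_v,t'_v}(\tilde{t}_v),
\end{equation*}
where the product ranges over the finitely many vertices $v$ of $T^*$ at depth between $1$ and $R-h$, $w$ denotes the parent of $v$, and $t_v := T^*[w,v]_{h-1}$, $t'_v := T^*[v,w]_{h-1}$, $\tilde{t}_v := T^*[w,v]_h$; here one must check that this data is determined by $[T^*,o^*]$ (each such neighborhood sits within depth $R$), that $(\tilde{t}_v)_{h-1} = t_v$, that distinct sampling outcomes in the construction produce distinct values of $[T,o]_R$, and that the same formula holds verbatim with $P$ replaced by any $\Pn$. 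The crucial point is that if $\prwrt{\ugwt_h(P)}{[T,o]_R = [T^*,o^*]} > 0$ then $e_P(t_v,t'_v) > 0$ for every $v$ in the product: in that case the event $\{[T,o]_R = [T^*,o^*]\}$ has positive $\ugwt_h(P)$--probability, hence meets the almost sure event of Corollary~\ref{cor:ugwt-gamma>0-ep>0}, and on this intersection the quantities $T[w,v]_{h-1}$, $T[v,w]_{h-1}$ coincide with $t_v$, $t'_v$, so the Corollary gives $e_P(t_v,t'_v) > 0$. In particular no instance of the $e_P(t,t') = 0$ convention in the definition of $\hP$ is ever invoked for such $[T^*,o^*]$.

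Granting this, fix $[T^*,o^*]$ with $p^* := \prwrt{\ugwt_h(P)}{[T,o]_R = [T^*,o^*]} > 0$. Since $e_{\Pn}(t_v,t'_v) \to e_P(t_v,t'_v) > 0$, for all large $n$ the corresponding formula for $\ugwt_h(\Pn)$ also avoids the convention, and each factor converges: $\Pn([T^*,o^*]_h) \to P([T^*,o^*]_h)$ and $\Pn(\tilde{t}_v \oplus t'_v) \to P(\tilde{t}_v \oplus t'_v)$ by $\Pn \Rightarrow P$ on the discrete space $\mTb_*^h$ (note $[T^*,o^*]_h$ and each $\tilde{t}_v \oplus t'_v$ lie in $\mTb_*^h$), the integers $E_h(t_v,t'_v)(\tilde{t}_v \oplus t'_v)$ are unchanged, and $e_{\Pn}(t_v,t'_v) \to e_P(t_v,t'_v)$; hence $\prwrt{\ugwt_h(\Pn)}{[T,o]_R = [T^*,o^*]} \to p^*$. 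This yields $\liminf_n \prwrt{\ugwt_h(\Pn)}{[T,o]_R = [T^*,o^*]} \ge \prwrt{\ugwt_h(P)}{[T,o]_R = [T^*,o^*]}$ for every $[T^*,o^*] \in \mTb_*^R$ (trivially at the points of zero mass). Since both sides, summed over $[T^*,o^*] \in \mTb_*^R$, equal $1$, Fatou's lemma for series forces equality in the $\liminf$, and running the same argument along subsequences upgrades this to pointwise convergence of the mass functions. This proves convergence of the depth--$R$ marginals for every $R$, and the reduction of the first paragraph concludes the proof.

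I expect the step demanding the most care to be the verification of the product formula together with the claim that positivity of $\prwrt{\ugwt_h(P)}{[T,o]_R = [T^*,o^*]}$ forces $e_P(t_v,t'_v)>0$ along the product: the real content there is a careful bookkeeping of exactly which combinatorial data the recursive construction consumes when it processes a vertex at depth $k$, and the observation that all of it is visible within depth $R$ of $T^*$. Once that is in place, the rest is either a routine limit of finitely many convergent terms or the soft total--mass argument, and in particular the potentially delicate $e_P(t,t')=0$ convention never has to be analyzed directly.
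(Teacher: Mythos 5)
Your proposal follows essentially the same route as the paper's proof in Appendix~\ref{sec:UGWT-convg}: reduce to convergence of depth-$R$ marginals, express the depth-$R$ probability as a product of a $P$-factor and $\hP_{t,t'}$-factors over vertices up to depth $R-h$, and use the positivity statement of Lemma~\ref{lem:UGWT-gamma-pos-ep-pos} (via Corollary~\ref{cor:ugwt-gamma>0-ep>0}) to show each factor converges when the target probability is positive. Two remarks on where you deviate.

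First, a small but real error in the product formula: the paper writes
$\mu(A_{[\hT,\ho]}) = C\prod_{v\in B}\gamma(v)$
with an explicit combinatorial constant $C$ that ``counts the number of extensions.'' Your formula omits $C$, and the item you flag for checking --- ``distinct sampling outcomes in the construction produce distinct values of $[T,o]_R$'' --- is in fact false: when two offspring of some vertex $w$ carry the same pair $(t_v,t'_v)$ at depth $h-1$ but receive different extensions $\tilde t_{v}$, permuting the assignments yields the same isomorphism class $[T,o]_R$, so the map from sampling outcomes to isomorphism classes is many-to-one. The correct multiplicity is a product of multinomial coefficients, which is precisely the paper's $C$. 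Your convergence argument survives this because $C$ depends only on the target $[T^*,o^*]$ and not on $P$ or $\Pn$, so it cancels from the factor-by-factor limit; but the verification you propose would not check out as stated, and you should replace the injectivity claim by a count.

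Second, a genuine stylistic improvement: rather than handling the case $p^* = 0$ by a direct argument (which the paper does as its ``Case 2,'' chasing the first zero-probability factor along the path from the root), you settle it softly by the total-mass/Scheffe argument from the positive-mass case. This is cleaner and avoids re-examining the $e_P(t,t')=0$ convention. Your surrounding reduction to depth-$R$ marginals is also fine, though slightly heavier machinery (tightness plus subsequence extraction) than the paper's direct truncation of the test function by a finite set $\mS$; both are standard. The final Fatou sentence is phrased loosely --- what you actually want is: for any $\varepsilon>0$ pick a finite set of positive-$\mu_R$-mass atoms carrying mass $>1-\varepsilon$; for large $n$ these carry $\ugwt_h(\Pn)$-mass $>1-2\varepsilon$, so any fixed zero-mass atom has $\ugwt_h(\Pn)$-mass $<2\varepsilon$ eventually --- but this is exactly the ``soft total-mass argument'' you describe and is correct.
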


The following Lemma~\ref{lem:ugwt-P-is-unimodular} justifies the terminology
used for the probability distribution $\ugwt_h(P) \in \mP(\mTb_*)$ constructed
from an admissible probability distribution 
$P \in \mP(\mTb_*^h)$, by establishing that $\ugwt_h(P)$ is unimodular.
The proof is given in Appendix~\ref{sec:UGWT-unimod}.

%\color{red}
%{\bf I have not yet cleaned up the proof of this claim in Appendix~\ref{sec:UGWT-unimod}.}
%\color{black}

\begin{lem}
  \label{lem:ugwt-P-is-unimodular}
  Let $h \ge 1$.
For an admissible probability distribution 
$P \in \mP(\mTb_*^h)$, let $\ugwt_h(P) \in \mP(\mTb_*)$ denote the 
marked unimodular Galton--Watson tree with  depth $h$ neighborhood distribution $P$.
Then $\ugwt_h(P)$ is a unimodular distribution.
\end{lem}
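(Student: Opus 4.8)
The plan is to verify involution invariance directly, since the excerpt notes that checking unimodularity reduces to checking invariance of \eqref{eq:unimodularity} for functions $f$ supported on pairs $[G,o,v]$ with $v \sim_G o$. So I would fix a measurable $f: \mGb_{**} \to \reals_+$ with $f([G,o,v]) = 0$ unless $v \sim_G o$, let $\mu := \ugwt_h(P)$, and aim to show $\evwrt{\mu}{\sum_{v \sim_T o} f([T,o,v])} = \evwrt{\mu}{\sum_{v \sim_T o} f([T,v,o])}$. The key structural fact I would exploit is that under $\ugwt_h(P)$, conditioned on the edge $\{o,v\}$ having type $\etype_T^h(o,v) = (t,t')$ (with $t = T[o,v]_{h-1}$, $t' = T[v,o]_{h-1}$), the two "sides" of the tree hanging off this edge — namely the rooted-tree-with-half-edge $T(o,v)$ rooted towards $o$ and $T(v,o)$ rooted towards $v$ — are conditionally \emph{independent}, each distributed as an appropriate branching extension built from $\hP$. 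This is essentially how the tree was constructed: once you know the $h$-neighborhoods on each side, the subtrees grow independently via the $\hP_{\cdot,\cdot}$ kernels.

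The core computation I would organize around the edge-type decomposition. Writing $S_o(f) := \sum_{v \sim_T o} f([T,o,v])$, I would condition on the (unordered, with orientation data) type of each edge at the root and express $\evwrt{\mu}{S_o(f)}$ as a sum over $t, t' \in \edgemark \times \mTb_*^{h-1}$ of $e_P(t,t')$ times a conditional expectation of $f$ given that the distinguished edge has $\etype^h_T(o,v) = (t,t')$ — this uses that $e_P(t,t') = \evwrt{P}{E_h(t,t')(T,o)}$ counts exactly the expected number of root-edges of that type, together with the conditional independence of the two sides so that the full law of $[T,o,v]$ given the edge type factors as a product of the two side-distributions. Then I would do the same for $\evwrt{\mu}{S_o(f\circ\text{swap})} = \evwrt{\mu}{\sum_{v\sim_T o} f([T,v,o])}$; swapping the roles of $o$ and $v$ corresponds to swapping $(t,t') \leftrightarrow (t',t)$, so this expression is the same sum but with $e_P(t',t)$ in place of $e_P(t,t')$ and with the two side-distributions interchanged. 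Admissibility of $P$, i.e.\ $e_P(t,t') = e_P(t',t)$ for all $t,t'$ (Lemma~\ref{lem:unimodular-is-admissible} shows this is the natural hypothesis), then makes the two sums term-by-term equal, because the integrand after interchanging the sides and relabeling $(t,t')\mapsto(t',t)$ is literally the same function of the two sides paired with the same weight.

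The step I expect to be the main obstacle is making the "conditional independence of the two sides and correct identification of the side-distributions" rigorous, i.e.\ proving the precise claim: under $\ugwt_h(P)$, if we pick an ordered edge $(o,v)$ at the root, then conditionally on $\etype^h_T(o,v) = (t,t')$, the pair $\big(T(v,o), T(o,v)\big)$ is distributed as a pair of independent random rooted-trees-with-a-half-edge, where the law of $T(v,o)$ is the "$\ugwt_h$-style growth of $t$ treating $t'$ as the parent side" and symmetrically for $T(o,v)$. Most cleanly, I would prove this by exhibiting a common refinement: describe a single generative process that simultaneously produces $[T,o,v]$ (a tree with a marked edge and two marked endpoints) by first sampling the edge type $(t,t')$ with probability proportional to $e_P(t,t')$ (normalized by $\deg(\mu)$ — though one must be careful since $f$ need not be integrable against a normalized measure, so it is better to keep everything unnormalized and work with the measure-with-total-mass-$\deg(\mu)$ on rooted edges), then independently growing each side by the inductive $\hP$-procedure of the $\ugwt_h$ construction. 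The bookkeeping that the $h$-neighborhood data $t, t'$ is exactly the information shared between the two sides, and that the $\hP_{t,t'}$ kernels only ever look at the $(h{-}1)$-truncation towards the parent, is what makes the factorization work; I would likely prove it by induction on the truncation depth, showing that the joint law of $[T,o,v]_k$ (both sides truncated at depth $k$) factors appropriately for every $k$, and then pass to the limit using that $\mTb_*$ is Polish and cylinder events determine the measure. A subsidiary technical point, handled by Corollary~\ref{cor:ugwt-gamma>0-ep>0}, is that the degenerate branch $e_P(t,t')=0$ of the $\hP$ definition is irrelevant $\mu$-almost surely, so it does not interfere with the symmetry argument.
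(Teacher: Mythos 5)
Your plan is essentially the paper's proof in Appendix~\ref{sec:UGWT-unimod}: the ``common refinement'' you describe is exactly the two-pointed measure $\nu$ the paper constructs there (sample $(t,t')\sim\pi_P$, i.e.\ with weight proportional to $e_P(t,t')$, then grow each side independently via the $\hP$ kernels), your edge-type decomposition is realized in the paper via the size-biased root law $\tP$ and a uniformly chosen neighbor $\hat v$, and in both arguments admissibility ($e_P(t,t')=e_P(t',t)$, hence $\pi_P$ symmetric) supplies the swap symmetry needed for involution invariance. The factorization you flag as the main obstacle is precisely what the paper verifies by computing the law of $\bigl(T[\hat v,o]_{h-1},T[o,\hat v]_{h-1}\bigr)$ to be $\pi_P$ and then matching the conditional law of each side's depth-$h$ truncation to $\hP_{t,t'}$ (using Lemma~\ref{lem:Eh-1+Nh}), after which the inductive construction of $\ugwt_h(P)$ does the rest.
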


The following 
%summarizes the properties of the measure $\ugwt_h(P)$.
proposition states a key property of the probability distribution $\ugwt_h(P)$, which should be reminiscent of a finite order Markov property.
This is an important result for understanding the structure of $\ugwt_h(P)$.
The proof, which is provided in Appendix~\ref{sec:prop-markov-galt},
is very similar to the proof of the second part of
Proposition~1.1 in \cite{bordenave2015large}.

%\color{red}
%Note to self:
%I have not yet completely checked the proof of this claim in Appendix~\ref{sec:prop-markov-galt}.
%\color{black}

\begin{prop}
  \label{prop:ugwthP-markov}
  Let $h \ge 1$
  and let $P \in
  \mP(\mTb_*^h)$, i.e. $P$ is an admissible probability distribution.
  Then, for all $k \ge h$, we have
  \begin{equation}
    \label{eq:prop-consistency}
    \ugwt_k((\ugwt_h(P))_k) = \ugwt_h(P).
  \end{equation}
\end{prop}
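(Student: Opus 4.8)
The plan is to prove \eqref{eq:prop-consistency} by showing that the two Borel probability measures on $\mTb_*$ that it equates induce the same law on $[T,o]_m$ for every $m \ge 0$; since the truncation maps $[T,o]\mapsto [T,o]_m$ generate the Borel $\sigma$-algebra of $\mTb_*$ and the cylinders they define form a $\pi$-system, this pins the two measures down to the same thing. Write $\mu := \ugwt_h(P)$. First I would make sure $\ugwt_k(\mu_k)$ is meaningful: the law of $[T,o]_h$ under $\mu$ is $P$ by construction, so $\deg(\mu) = \evwrt{P}{\deg_T(o)} < \infty$; by Lemma~\ref{lem:ugwt-P-is-unimodular} $\mu$ is unimodular; hence by Lemma~\ref{lem:unimodular-is-admissible} $\mu_k$ is admissible for every $k \ge 1$, and $\ugwt_k(\mu_k)$ is well defined.

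Next I would reduce to the case $k = h+1$. It is enough to prove the statement ``$\ugwt_{j+1}\big((\ugwt_j(Q))_{j+1}\big) = \ugwt_j(Q)$'' for every $j \ge 1$ and every admissible $Q \in \mP(\mTb_*^j)$: granting this, one runs an induction on $k \ge h$, where the base case $k = h$ holds trivially because $(\ugwt_h(P))_h = P$, and the passage from $k$ to $k+1$ uses the special case with $j = k$ and $Q = \mu_k$ together with the induction hypothesis, which gives $\mu_{k+1} = (\ugwt_k(\mu_k))_{k+1}$. So fix $h$ and admissible $P$, put $\nu := \ugwt_{h+1}(\mu_{h+1})$, and aim for $\nu = \mu$.

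For $m \le h+1$ we have $\nu_m = (\mu_{h+1})_m = \mu_m$ since $\nu_{h+1} = \mu_{h+1}$ by construction. For $m \ge h+1$ I would induct on $m$, comparing the conditional law of $[T,o]_{m+1}$ given $[T,o]_m$ under $\mu$ and under $\nu$. Unwinding the two sequential constructions (and discarding, via Corollary~\ref{cor:ugwt-gamma>0-ep>0}, the $\mu$-null event on which some needed $e_P(\cdot,\cdot)$ vanishes) gives explicit product expressions: for $\tau_0 \in \mTb_*^{m+1}$ with $\mu([T,o]_m = (\tau_0)_m) > 0$, the ratio $\mu([T,o]_{m+1}=\tau_0)/\mu([T,o]_m=(\tau_0)_m)$ is the product of $\hP_{t_v,t'_v}\big(\tau_0[w_v,v]_h\big)$ over the vertices $v$ of $\tau_0$ at level $m+1-h$ (with $w_v$ the parent of $v$ and $(t_v,t'_v)$ the depth-$(h-1)$ edge type at $(w_v,v)$), whereas the analogous ratio for $\nu$ is the product of $\widehat{\mu_{h+1}}_{\,\sigma_v,\sigma'_v}\big(\tau_0[w_v,v]_{h+1}\big)$ over the vertices at level $m-h$ (with $(\sigma_v,\sigma'_v)$ the depth-$h$ edge type at $(w_v,v)$). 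The denominators coincide by the induction hypothesis, and the level-$(m+1-h)$ vertices are exactly the children of the level-$(m-h)$ vertices, so the inductive step comes down to the single identity, for each vertex $v$ at level $m-h$ with parent $w$,
\begin{equation*}
  \widehat{\mu_{h+1}}_{\,\sigma_v,\sigma'_v}\big(\tau_0[w,v]_{h+1}\big) \;=\; \prod_{u\ \text{child of}\ v\ \text{in}\ \tau_0} \hP_{t_u,t'_u}\big(\tau_0[v,u]_h\big).
\end{equation*}

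This identity is the heart of the matter and the step I expect to cause the most trouble. The right-hand side is, by the construction of $\mu = \ugwt_h(P)$, precisely the conditional probability, under $\mu$, that the depth-$(h+1)$ subtree below an offspring $v$ equals $\tau_0[w,v]_{h+1}$ given the depth-$h$ edge type $(\sigma_v,\sigma'_v)$ at the edge above $v$: that edge type already fixes the depth-$h$ subtree of $v$, and the extra layer is produced by extending each child $u$ of $v$ by one more level through an independent draw from $\hP_{t_u,t'_u}$, whose parameters $(t_u,t'_u)$ are themselves determined by $(\sigma_v,\sigma'_v)$; the conditional independence across the children $u$ is immediate from the sequential construction, in which the draw associated with $u$ depends on the past only through $(t_u,t'_u)$. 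What remains to be checked — and this is the delicate computation — is that the size-biased object $\widehat{\mu_{h+1}}_{\sigma,\sigma'}$ given by \eqref{eq:size-biased-def} is exactly this conditional law. Establishing that requires combining the explicit cylinder-probability formula for $\mu$, the identity $e_{\mu_{h+1}}(\sigma,\sigma') = \evwrt{\mu}{E_{h+1}(\sigma,\sigma')(T,o)}$, the admissibility of $\mu_{h+1}$ (a consequence of unimodularity of $\mu$), and careful bookkeeping with the operation $\oplus$, the depth truncations, and the counting functions $E_{h+1}$ versus $E_h$; this is the point where the argument runs closely parallel to the proof of the second part of Proposition~1.1 in \cite{bordenave2015large}. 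Once the identity is in hand the induction on $m$ closes, giving $\nu = \mu$, hence \eqref{eq:prop-consistency} for $k = h+1$, and then, via the reduction above, for all $k \ge h$.
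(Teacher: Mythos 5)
Your proposal follows essentially the same route as the paper's proof: reduce to $k=h+1$ by induction (your reduction is careful and correct), set $\nu := \ugwt_{h+1}(\mu_{h+1})$, and show $\nu = \mu$ by showing the one-step conditional extension kernels agree. Your level-by-level induction on the truncation depth $m$ is a cleaner formalization of what the paper does informally, and you correctly isolate the single identity that carries all the weight, namely that the size-biased kernel $\widehat{\mu_{h+1}}_{\sigma,\sigma'}$ built from $Q := \mu_{h+1}$ via \eqref{eq:size-biased-def} equals the $\mu$-conditional one-layer extension kernel $\tP_{\sigma,\sigma'}$, which in turn factors as a product of $\hP_{t_u,t'_u}$ over children $u$. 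This is exactly the identity \eqref{eq:consistency-hQ-tP} in the paper's proof.

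The gap is that you do not actually prove this identity; you describe what it ``requires'' and point to \cite{bordenave2015large}, but this step is where essentially all of the content lies and it is not routine. Concretely, what the paper does is: (i) express $\tP_{s,s'}(\tilde s)$ as the ratio $\evwrt{\mu}{E_{h+1,h+2}(s',\tilde s)(T,o)}/\evwrt{\mu}{E_{h+1}(s',s)(T,o)}$, using the conditional independence structure of the construction to average over the distribution of the whole first shell; (ii) apply the unimodularity of $\mu$ twice --- once to convert the denominator into $e_Q(s,s')$, and once, and this is crucial, to replace the numerator $\evwrt{\mu}{E_{h+1,h+2}(s',\tilde s)(T,o)}$ by $\evwrt{\mu}{E_{h+2,h+1}(\tilde s,s')(T,o)}$, which is the step that lets one read off the factor $Q(\tilde s \oplus s')$ since $E_{h+2,h+1}(\tilde s,s')(T,o)>0$ forces $(T,o)_{h+1}\equiv\tilde s\oplus s'$; and (iii) invoke Lemma~\ref{lem:Eh-1+Nh} twice to identify $E_{h+2,h+1}(\tilde s,s')(\tilde s\oplus s')$ with $E_{h+1}(s,s')(\tilde s\oplus s')$, which is what matches the count appearing in \eqref{eq:size-biased-def}. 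Your list of ingredients mentions admissibility of $\mu_{h+1}$ (a byproduct of unimodularity) but does not record the second, more delicate, use of the mass-transport principle in step (ii), nor the role of Lemma~\ref{lem:Eh-1+Nh} in step (iii); without these the computation does not close.
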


%please note that 
The following proposition is not used in any way in the subsequent discussion. 
%\color{red} 
%{\bf This may be false, because the unimodularity of $\ugwt_h(P)$,
%which is used later, seems to have been addressed in Proposition 1.}
%\color{black}
The proof 
depends on several results to be developed during the course of this document,
and is provided in the last of the appendices, namely
Appendix~\ref{sec:prop-unim-galt}.

%aa
\begin{prop}
  \label{prop:ugwthP-properties}
  Given 
  %$h \in \nats$ 
  an integer $h \ge 1$
  and an admissible probability distribution $P \in
  \mP(\mTb_*^h)$, the probability distribution $\ugwt_h(P)$ is sofic. 
  %Moreover, for all $k > h$, we have
  %\begin{equation}
   % \label{eq:prop-consistency}
   % \ugwt_k((\ugwt_h(P))_k) = \ugwt_h(P).
  %\end{equation}
\end{prop}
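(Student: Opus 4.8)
The plan is to show that $\ugwt_h(P)$ is sofic by exhibiting an explicit sequence of finite marked graphs whose local weak limit is $\ugwt_h(P)$, and the natural candidate is a suitably randomized configuration-model-type construction on $[n]$ that matches the depth-$h$ neighborhood statistics prescribed by $P$. First I would reduce to the case where $P$ is strongly admissible, i.e.\ $P \in \mP_h$, by a truncation/approximation argument: given a general admissible $P \in \mP(\mTb_*^h)$, one approximates $P$ by strongly admissible $P_n$ (truncating the degree at the root, which controls $\evwrt{P}{\deg_T(o)\log\deg_T(o)}$ and $H(P)$) with $P_n \Rightarrow P$ and $e_{P_n}(t,t') \to e_P(t,t')$; then Lemma \ref{lem:Pn-conv-P--ugwt-Pn-conv-ugwt-P} gives $\ugwt_h(P_n) \Rightarrow \ugwt_h(P)$, and since soficity is preserved under weak limits (a diagonal argument over finite graph approximants, using that $\mP(\mGb_*)$ is Polish and the sofic measures form a closed set under the required approximation), it suffices to treat strongly admissible $P$.

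For strongly admissible $P$, the core step is to build, for each $n$, a random finite marked graph $G_n$ on a vertex set of size $\approx n$ as follows. Encode the depth-$h$ type data of $P$ using the colored-configuration-model machinery already set up in the paper: choose a finite set $\mF \subset \edgemark \times \mTb_*^{h-1}$ carrying almost all the mass of the edge-type marginal of $P$, set $\mC = \mF \times \mF$, and sample i.i.d.\ rooted trees $[T_1,o_1],\dots,[T_n,o_n]$ from $P$ (after a further truncation to $\mF$-supported types). Each vertex $i$ then receives a vertex mark $\tau_{T_i}(o_i)$ and a colored degree $D(i) = (E_h(t,t')(T_i,o_i) : (t,t') \in \mC)$; admissibility of $P$ (the identity $e_P(t,t') = e_P(t',t)$) ensures that, after a negligible correction to make the aggregate color matrix symmetric with even diagonal, the colored degree sequence lies in $\mD_n$, so one can form the colored configuration model $\CM(\vD^{(n)})$ and then apply the $\MCB$ operation to turn it into a marked graph $G_n$. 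One shows $U(G_n) \Rightarrow \ugwt_h(P)$ in probability: locally around a uniformly random vertex, the configuration model is $h$ tree-like with probability tending to $1$, the matching is asymptotically "fresh" at each step, and the conditional law of the type of a child edge given the parent edge type $(t,t')$ converges exactly to the size-biased kernel $\hP_{t,t'}$ of \eqref{eq:size-biased-def} — this is precisely how $\ugwt_h(P)$ was defined. Standard second-moment / concentration arguments then upgrade convergence in probability of $U(G_n)$ to the existence of a deterministic sequence realizing the local weak limit, which is what soficity requires.

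The main obstacle I expect is the matching-up of the size-biasing: one must verify that the empirical offspring-type distribution produced by the configuration model's pairing, conditioned on the parent half-edge having type $(t,t')$, converges to $\hP_{t,t'}$ rather than to the unbiased $P$-marginal — the point being that a half-edge of a given type is picked with probability proportional to $E_h(t,t')$, which is exactly the weighting in \eqref{eq:size-biased-def}. Making this rigorous requires care with the bookkeeping of directed colored half-edges, the even-diagonal/symmetry corrections, and controlling the effect of the finite truncation $\mF$ as well as the degree truncation; the moment condition $\evwrt{P}{\deg_T(o)\log\deg_T(o)} < \infty$ from strong admissibility is what makes these truncation errors vanish in the limit and what keeps the configuration model locally tree-like. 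A secondary technical point is ensuring the various approximation steps (general admissible $\to$ strongly admissible $\to$ $\mF$-supported) compose correctly, which is handled by the continuity Lemma \ref{lem:Pn-conv-P--ugwt-Pn-conv-ugwt-P} together with the fact that $e_{P_n}(t,t') \to e_P(t,t')$ is preserved throughout. Since the paper explicitly flags that this proposition is not used downstream and that its proof draws on results developed later in the document, I would defer the full argument to Appendix \ref{sec:prop-unim-galt} and there invoke whatever large-deviation / configuration-model lemmas (the analogues of the $\CM(\vD)$ local weak convergence statements) are proved in the interim.
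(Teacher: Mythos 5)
Your overall strategy is the same as the paper's: approximate $P$ by nicer distributions, realize each via the colored configuration model plus $\MCB$, and then conclude by a continuity/diagonal argument. However there are two places where your account is off in a way that matters.

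First, the intermediate reduction to \emph{strongly admissible} $P$ is both unnecessary and somewhat misleading. Strong admissibility plays no role in the paper's proof of this proposition, which works for general admissible $P$. The correct intermediate class is \emph{finite support}, not $\mP_h$: for admissible $P$ with finite support, Lemma~\ref{lem:P-finite-seq-converging} produces a sequence of finite marked graphs $G_n$ with $U(G_n)_h \Rightarrow P$ and $U(G_n)_h$ supported on a fixed finite $\Delta$, and then Proposition~\ref{prop:MCB-colored-unif-D-2h+1--converge} applied to the configuration model built from those degree/mark sequences shows $U(G_n) \Rightarrow \ugwt_h(P)$, giving soficity immediately. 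You attribute the role of keeping the configuration model locally tree--like to the moment condition $\evwrt{P}{\deg_T(o)\log\deg_T(o)}<\infty$; in fact that is achieved by the \emph{bounded-degree} hypothesis~\eqref{eq:H1} coming from finite support (Theorem~\ref{thm:alpha-h} needs $\delta$-bounded colored degrees), and the $\log$-moment condition is only relevant for the entropy formula $J_h(P)$, not for soficity.

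Second, your truncation step has a genuine gap: ``truncating the degree at the root'' of $P$ is underspecified and does not obviously preserve the admissibility balance $e_P(t,t')=e_P(t',t)$, nor does it yield a finitely supported approximant. The paper's truncation operates on the whole unimodular tree: starting from $[T,o]\sim \mu := \ugwt_h(P)$, remove all edges incident to \emph{every} vertex with degree $\geq k$ and take the connected component $T^{(k)}$ of the root; set $P_k := (\mu^{(k)})_h$ where $\mu^{(k)}$ is the law of $[T^{(k)},o]$. Because this operation is mass-transport equivariant, $\mu^{(k)}$ is again unimodular, so Lemma~\ref{lem:unimodular-is-admissible} gives $P_k$ admissible for free, and $P_k$ has finite support by construction. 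One then checks $P_k\Rightarrow P$ and $e_{P_k}(t,t')\to e_P(t,t')$ and invokes Lemma~\ref{lem:Pn-conv-P--ugwt-Pn-conv-ugwt-P} plus a diagonal extraction. If you were to stick with modifying $P$ directly rather than $\mu$, you would have to verify admissibility of the truncated distribution by hand, which is exactly the delicate point your proposal glosses over.
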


%\color{red}
%Note to self:
%I have not yet completely checked the proof of this claim in Appendix~\ref{sec:prop-unim-galt}.
%I will do this after the issue of whether the proof actually establishes that $\ugwt_h(P)$ is sofic is resolved.
%\color{black}

%\color{red}
%Note to Payam:
%The proof in Appendix~\ref{sec:prop-unim-galt}
%does not seem to show that $\ugwt_h(P)$ is sofic.
%See the Appendix.
%\color{black}
%\pres{sorry, there was a short step missing at the end of the Appendix, which is
%added now.}

For $h \ge 1$ and admissible $P \in \mP(\mTb_*^h)$ such that $d:= \evwrt{P}{\deg_T(o)} > 0$, let $\pi_P$ denote the probability
distribution on $(\edgemark \times \mTb_*^{h-1}) \times (\edgemark \times
\mTb_*^{h-1})$ defined as
\begin{equation*}
  \pi_P(t, t') := \frac{e_P(t, t')}{d}.
\end{equation*}
%Note that
%the assumption that $P$ is admissible implies that $e_P(t, t') \leq d < \infty$ for
%all $t, t' \in \edgemark \times \mTb_*^{h-1}$. Moreover, 
Since for each $[T,
o] \in \mTb_*$ we have 
\begin{equation*}
\deg_T(o) = \sum_{t, t' \in \edgemark \times
  \mTb_*^{h-1}} E_h(t, t')(T, o), 
\end{equation*}
we have $d = \sum_{t, t' \in \edgemark
  \times \mTb_*^{h-1}} e_P(t, t')$. Consequently, $\pi_P$ is indeed a
probability distribution. 
%\marginpar{In \cite{bordenave2014large}, $\pi_P$ is only defined for  $P \in
%  \mP_h$. However, the assumption of $P$ being admissible is enough to give the
%  definition. }

 For $h \ge 1$ and admissible $P \in \mP(\mTb_*^h)$ with $H(P) < \infty$ and $\evwrt{P}{\deg_T(o)}
 > 0$, define
    \begin{equation}
      \label{eq:Jh-def}
      J_h(P) := -s(d) + H(P) - \frac{d}{2} H(\pi_P) - \sum_{t, t' \in \edgemark \times \mTb_*^{h-1}} \evwrt{P}{\log E_h(t, t')!},
    \end{equation}
    where $d := \evwrt{P}{\deg_T(o)}$ is the average degree at the root and
    $s(d) = \frac{d}{2} - \frac{d}{2} \log d$. \glsadd{not:JhP}
% \marginpar{In the BC paper, $J_h(P)$ is defined only if $P \in \mP_h$,
%    but I have defined it here for the case that $H(P) < \infty$. It turns out
%    to be well defined as is discussed below, since we do not have a $\infty -
%    \infty$ case. The reason for defining it this way is that it makes further
%    statements more general. Specifically, in the proof of upper bound, I can
%    treat that upper bound for a wider range of $\mu$'s and use it later on to
%    show that if $\mu_h \notin \mP_h$, then the entropy is $-\infty$.}
  Note that $s(d)$ is finite, since $d < \infty$. Also, $H(P) < \infty$,
   $H(\pi_P) \geq 0$, and for each $t, t' \in \edgemark \times
  \mTb_*^{h-1}$, $\evwrt{P}{\log E_h(t, t')!} \geq 0$. Thereby, $J_h(P)$ is
  well-defined and is in the range $[-\infty, \infty)$.
  %\color{red}
  %We extend the definition of $J_h(P)$ for $P \in \mP(\mTb_*^h)$, $h \ge 1$, with $\evwrt{P}{\deg_T(o)} = 0$ by defining it to be $H(P)$ for such $P$.
  %{\bf Think of deleting this sentence.}
  %\color{black}
  
  %\color{red}  Should include a lemma here about the continuity in $P$  of $J_h(P)$. 
  %{\bf This is false, so delete this.}
  %\color{black}
%{\color{red} I
%        should decide if I want to be more informative about its upper bound. I
%        can say it as a footnote and refer to the following theorem for that} 

    % \item For $P \in \mP(\mTb_*^h)$, define $\bar{J}_h(P)$ to be $J_h(P)$ if $P$
    %   is admissible and $H(P) < \infty$; otherwise, define $\bar{J}_h(P)$ to be
    %   $-\infty$. 

%{\color{red} In the following, I have distinguished between the cases
%  $\evwrt{\mu}{\deg_T(o) \log \deg_T(o)} = \infty$ and $\evwrt{\mu}{\deg_T(o)
%    \log \deg_T(o)} < \infty$, and hence I do not need to define $\bar{J}_h$}

%{\color{red} It might be better in the statement of the following theorem not to
%mention $\vmn$ and $\vun$ because in the previous theorem we have said that for
%any sequences it works and is well defined.}

%aa
\glsadd{trm:strongadmissible}
% to change: adding the notion ''strongly admissible''
% {
% For integer $h \geq 0$, define $\mP_h$ to be the set of probability measures on
% $P \in \mP(\mTb_*^h)$  such that $P$ is admissible, $H(P) < \infty$ and
% $\evwrt{P}{\deg_T(o) \log \deg_T(o)} < \infty$. Later, in
% Corollary~\ref{cor:deg-log-deg-Ph}, we will show that $P \in \mP_h$ is
% equivalent to $P$ being admissible and  $\evwrt{P}{\deg_T(o) \log \deg_T(o)} <
% \infty$. \glsadd{not:mPh}
% For admissible $P \in \mP(\mTb_*^h)$ such that $d:= \evwrt{P}{\deg_T(o)} > 0$, define $\pi_P$ to be the probability
% distribution on $(\edgemark \times \mTb_*^{h-1}) \times (\edgemark \times
% \mTb_*^{h-1})$ defined as
% }
\begin{definition}
    \label{def:stronglyadmissible}
    %{\em [Strongly admissible probability distributions on $\mTb_*^h$
 %isomorphism classes of rooted marked trees of depth bounded by a fixed bound
    %] }
    
For integer $h \geq 1$, we say that  a  probability distribution
$P \in \mP(\mTb_*^h)$ is \emph{strongly admissible} if  $P$ is admissible, $H(P) < \infty$, and
$\evwrt{P}{\deg_T(o) \log \deg_T(o)} < \infty$. Let $\mP_h$ denote the set of
strongly admissible probability distributions $P \in \mP(\mTb_*^h)$.
\end{definition}

%In Corollary~\ref{cor:deg-log-deg-Ph}, we will show that $P \in \mP_h$ is
%equivalent to $P$ being admissible, $P \in \mP(\mTb_*^h)$ (rather than just $P \in \mP(\mGb_*^h)$), and  $\evwrt{P}{\deg_T(o) \log \deg_T(o)} <
%\infty$. \glsadd{not:mPh}

In part 2 of Corollary~\ref{cor:deg-log-deg-Ph} of Lemma~\ref{lem:PPh--PtildePh+1} below, we show that, for $h \ge 1$ and $P \in \mP(\mTb_*^h)$,
the admissibility of $P$, together with 
the condition $\evwrt{P}{\deg_T(o) \log \deg_T(o)} < \infty$ 
is necessary and sufficient for $P$ to be strongly admissible, i.e. $P \in \mP_h$. 
Namely, the requirement that $H(P) < \infty$ in the definition of 
strong admissibility of $P$ is automatic given the other requirements, and need not be explicitly imposed.

In particular, this means that, for a unimodular $\mu \in \mP_u(\mTb_*)$,
if
$\evwrt{\mu}{\deg_T(o) \log \deg_T(o)} < \infty$, then for all 
%$h \in \nats$, 
$h \ge 1$
we have $\mu_h \in \mP_h$.
This is because $\mu$ being unimodular with $\deg(\mu) <
  \infty$ implies that $\mu_h$ is admissible for
all 
%$h \in \nats${\color{pedit}, 
$h \ge 1$,
as we show in Lemma~\ref{lem:unimodular-is-admissible}.
%\color{pedit} 

The proof of Lemma~\ref{lem:PPh--PtildePh+1} below is given in Appendix~\ref{sec:proof-lem-Ptilde-Ph+1}.

%\color{red}
%Note to self:
%I have not yet completely checked the proof in Appendix~\ref{sec:proof-lem-Ptilde-Ph+1}.
%I find the way the proof is written difficult to decipher
%(see the comment in the Appendix). It would be best to discuss this in the next phase.
%\color{black}

%aa
\begin{lem}
  \label{lem:PPh--PtildePh+1}
  Given a unimodular $\mu \in \mP_u(\mTb_*)$ and 
  %$h \in \nats$,
  an integer $h \ge 1$,
  assume that with $P
  := \mu_h$, we have $P$ is strongly admissible, i.e.\ $P \in \mP_h$. Then, with $\tP := \mu_{h+1}$, we have $\tP \in \mP_{h+1}$. 
  %is such that $\sum_{t,t' \in \edgemark \times
  %  \mTb_*^{h-1}} |e_P(t,t') \log e_P(t,t')| < \infty$ and
%  let $\mu := \ugwt_h(P)$ and  $\tP :=
 % \mu_{h+1} \in \mP(\mTb_*^{h+1})$. Then,  we have $\tP \in \mP_{h+1}$.
\end{lem}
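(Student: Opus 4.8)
I want to show that, given a unimodular $\mu \in \mP_u(\mTb_*)$ with $P := \mu_h \in \mP_h$, the measure $\tP := \mu_{h+1}$ also lies in $\mP_{h+1}$. By Lemma~\ref{lem:unimodular-is-admissible}, $\tP$ is automatically admissible since $\mu$ is unimodular with $\deg(\mu) < \infty$ (note $\deg(\mu) = \evwrt{P}{\deg_T(o)} < \infty$ because $H(P) < \infty$ and $\evwrt{P}{\deg_T(o)\log\deg_T(o)} < \infty$ together force $\evwrt{P}{\deg_T(o)} < \infty$, or more simply this is part of $P$ being admissible). Moreover, since $[T,o]_{h+1}$ determines $[T,o]_h$, the degree of the root is the same random variable under $\tP$ and under $P$: $\deg_T(o)$ read off from the $(h+1)$-neighborhood equals $\deg_T(o)$ read off from the $h$-neighborhood. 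Hence $\evwrt{\tP}{\deg_T(o)\log\deg_T(o)} = \evwrt{P}{\deg_T(o)\log\deg_T(o)} < \infty$ for free. So the only real content is to show $H(\tP) < \infty$.

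**Reducing $H(\tP)$ to a sum of conditional entropies.** The natural decomposition is to write $[T,o]_{h+1}$ as $[T,o]_h$ together with, for each offspring $v \sim_T o$, the one extra layer of information needed to upgrade $T(o,v)_{h-1}$ to $T(o,v)_h$; equivalently, one extra layer of the subtree hanging below $v$. Formally, conditioned on $[T,o]_h = [T',o']_h$, the random object $[T,o]_{h+1}$ is determined by the collection of "depth-$h$ subtree truncations below each child of the root." Using the chain rule for Shannon entropy (and subadditivity of entropy over the children),
\begin{equation*}
H(\tP) \le H(P) + \evwrt{P}{\,\sum_{v \sim_T o} H\big(\text{layer $h$ below }v \,\big|\, T(o,v)_{h-1}\big)\,}.
\end{equation*}
Each inner conditional entropy term, summed appropriately, should be rewritten — using unimodularity / involution invariance to move the root to $v$ — as something controlled by $H(P)$ again (for a vertex $v$ at depth $1$, the structure of $T$ seen from $v$ out to depth $h$ is governed by $\mu_h = P$). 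The point is that the "new" information attached to each edge is, after re-rooting, just a piece of a depth-$h$ neighborhood, whose entropy is finite because $H(P) < \infty$; and the number of edges contributing is $\evwrt{P}{\deg_T(o)} < \infty$. So the sum is bounded by roughly $H(P) + \deg(\mu)\cdot(\text{something involving }H(P))$, hence finite.

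**The main obstacle.** The delicate point is handling the conditioning and the re-rooting rigorously: the layer-$h$ data below a child $v$ is correlated with the event defining that $v$ is a child, so one cannot naively factor things. The clean way is to apply involution invariance directly at the level of entropies. Concretely, I would introduce the edge-marked object $T[o,v]_{h}$ for $v \sim_T o$ and note that $[T,o]_{h+1}$ is recoverable from $[T,o]_h$ plus $\{T[o,v]_h : v \sim_T o\}$ (with the mark components already contained in $[T,o]_1 \subseteq [T,o]_h$). Then
\begin{equation*}
H(\tP) \le H(P) + \evwrt{P}{\,\sum_{v \sim_T o} H\big(T[o,v]_h \,\big|\, T[o,v]_{h-1}\big)\,},
\end{equation*}
and by involution invariance each summand corresponds, under the measure $\mu$, to looking at an edge from the root to a neighbor and reading the neighbor's depth-$h$ view; since $\mu_h = P$ has finite entropy and $\deg(\mu) < \infty$, a second application of the chain rule bounds the expectation by something like $H(P) + (\text{const})$. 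Making the "bounded by $H(P)$-type quantities" step precise — in particular verifying that the relevant marginal of $\mu$ pushed to depth $h$ has entropy dominated by $H(P)$ and that the expected-degree factor is exactly $\evwrt{P}{\deg_T(o)}$ — is where the care is needed, but no new idea beyond the chain rule, subadditivity, and involution invariance is required. The finiteness of $\evwrt{P}{\deg_T(o)\log\deg_T(o)}$ is what guarantees that even the "weighted by degree" version of these entropy bounds stays finite.
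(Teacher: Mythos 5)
Your strategy differs from the paper's. The paper does not use a chain-rule decomposition of $H(\tP)$ at all: it introduces the auxiliary measure $P' := (\ugwt_h(P))_{h+1}$, which by construction has an explicit product form $P'([T,o]) = C\, P([T,o]_h) \prod_{v\sim_T o} \hP_{T[o,v]_{h-1},T[v,o]_{h-1}}(T[o,v]_h)$, lower-bounds each $\hP_{t,t'}(\cdot)$ by $P([T,v]_h)/e_P(t,t')$, checks $\tP \ll P'$, and then applies Gibbs' inequality $H(\tP) \le \sum_s \tP(s)\log\tfrac{1}{P'(s)}$ to reduce the problem to three explicit $\mu$-integrals. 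Your route, chaining $H(\tP)\le H(P) + \mathbb{E}_P[\sum_{v\sim_T o} H(T[o,v]_h\mid T[o,v]_{h-1})]$, is a genuinely different decomposition (and would need some care about the children being an unordered multiset), but the core difficulty is identical in both: after involution invariance one faces the degree-weighted log-probability $-\sum_{[T,o]\in\mTb_*^h} \deg_T(o)\, P([T,o])\log P([T,o])$.

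That is where your argument has a real gap. You assert that this quantity is ``bounded by roughly $H(P) + \deg(\mu)\cdot(\text{something involving }H(P))$,'' and that ``since $\mu_h=P$ has finite entropy and $\deg(\mu)<\infty$, a second application of the chain rule bounds the expectation by something like $H(P)+(\text{const})$.'' Neither $H(P)<\infty$ nor $\deg(\mu)<\infty$, alone or together, implies finiteness of the degree-weighted entropy: there is no uniform bound on $\deg_T(o)$, and high degree can correlate with small probability in a way that makes $\deg_T(o)\log\tfrac{1}{P([T,o])}$ have infinite mean even when $H(P)<\infty$ and $\evwrt{P}{\deg_T(o)}<\infty$. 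The step that makes this work is precisely Lemma~\ref{lem:pllogl--pllogp} (Lemma~5.4 of Bordenave--Caputo), which shows that $H(P)<\infty$ together with $\evwrt{P}{\deg_T(o)\log\deg_T(o)}<\infty$ forces $-\sum \deg_T(o) P([T,o])\log P([T,o])<\infty$. You gesture at the relevance of $\evwrt{P}{\deg_T(o)\log\deg_T(o)}<\infty$ in your closing sentence, but you neither invoke the lemma nor supply any substitute for it, and the bound you actually state is false as written. Without this tool, your proof does not close. A secondary issue: you assume without justification that ``the structure of $T$ seen from $v$ out to depth $h$ is governed by $\mu_h=P$'' after moving the root to $v$; what involution invariance actually delivers is the identity $\int \sum_{v\sim_T o} f([T,v,o])\,d\mu = \int \sum_{v\sim_T o} f([T,o,v])\,d\mu$, and turning the resulting expression into a clean statement about the one-step conditional entropies requires some work (in the paper this is exactly step (a) of equation~\eqref{eq:int-sum-1-P-T-v}).
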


%bb

%{\color{pedit}
%aa
\begin{cor}
  \label{cor:deg-log-deg-Ph}
  The following hold:
  \begin{enumerate}
  \item Assume that for a unimodular measure $\mu \in \mP_u(\mTb_*)$, we have
  $\evwrt{\mu}{\deg_T(o) \log \deg_T(o)} < \infty$. Then, for all
  %$h \in \nats$,
  integers $h \ge 1$, we have $\mu_h \in \mP_h$,
  i.e.\
  $\mu_h$ is strongly admissible.
  \item Let $h \ge 1$ and $P \in \mP(\mTb_*^h)$.
  Then $P \in \mP_h$, i.e.\ $P$ being strongly admissible,
  is equivalent to $P$ admissible and $\evwrt{P}{\deg_T(o) \log \deg_T(o)} <
  \infty$.
  \end{enumerate}
\end{cor}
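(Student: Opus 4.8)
The plan is to prove part~1 by induction on $h$, with the inductive step being a direct application of Lemma~\ref{lem:PPh--PtildePh+1} and the base case $h=1$ carrying the only genuine work. First I would note that $\evwrt{\mu}{\deg_T(o)\log\deg_T(o)} < \infty$ forces $\deg(\mu) = \evwrt{\mu}{\deg_T(o)} < \infty$, since $\deg_T(o) \le \deg_T(o)\log\deg_T(o)$ as soon as $\deg_T(o) \ge 3$. Hence Lemma~\ref{lem:unimodular-is-admissible} applies and $\mu_1$ is admissible, and, because $\deg_T(o)$ is a function of the depth-$1$ neighborhood of the root, $\evwrt{\mu_1}{\deg_T(o)\log\deg_T(o)} = \evwrt{\mu}{\deg_T(o)\log\deg_T(o)} < \infty$. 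The remaining point for the base case is that $H(\mu_1) < \infty$.

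To establish this, recall that an element of $\mTb_*^1$ is specified by the root mark $\tau_T(o) \in \vermark$ together with, for each triple $(x,x',\theta) \in \edgemark \times \edgemark \times \vermark$, the number of offspring $v$ of the root with $\xi_T(o,v) = x$, $\xi_T(v,o) = x'$ and $\tau_T(v) = \theta$. Writing $K := |\edgemark|^2 |\vermark|$ and $D := \deg_T(o)$, this means that, conditioned on $D = n$, there are at most $|\vermark|\,(n+1)^K$ possible values of $[T,o]_1$. Using the chain rule together with the elementary bound ``conditional entropy given an event is at most the logarithm of the number of attainable values'',
\begin{equation*}
H(\mu_1) \;\le\; H(D) + \log|\vermark| + K\,\evwrt{\mu}{\log(1+D)}.
\end{equation*}
Here $H(D) < \infty$ because $D$ is $\integers_+$-valued with finite mean $\lambda := \evwrt{\mu}{D}$: comparing the entropy of $D$ with that of the geometric law of mean $\lambda$ yields $H(D) \le \log(1+\lambda) + \lambda\log\frac{1+\lambda}{\lambda} < \infty$. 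And $\evwrt{\mu}{\log(1+D)} \le \log 2 + \evwrt{\mu}{D\log D} < \infty$, using $\log(1+D) \le \log 2 + D\log D$ for all integers $D \ge 0$. Thus $\mu_1 \in \mP_1$. For the inductive step, if $\mu_h \in \mP_h$ then Lemma~\ref{lem:PPh--PtildePh+1} (whose hypothesis $\mu \in \mP_u(\mTb_*)$ is in force throughout) gives $\mu_{h+1} \in \mP_{h+1}$, which completes the induction and hence proves part~1.

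For part~2 the forward implication is immediate from the definition of $\mP_h$, so only the converse requires work: given $P \in \mP(\mTb_*^h)$ admissible with $\evwrt{P}{\deg_T(o)\log\deg_T(o)} < \infty$, I must show $H(P) < \infty$. Put $\mu := \ugwt_h(P)$; since admissibility of $P$ is all that the construction of $\ugwt_h(P)$ requires, $\mu$ is a well-defined probability measure on $\mTb_*$, and it is unimodular by Lemma~\ref{lem:ugwt-P-is-unimodular}. By construction the law of the root degree under $\mu$ equals its law under $P$, so $\evwrt{\mu}{\deg_T(o)\log\deg_T(o)} = \evwrt{P}{\deg_T(o)\log\deg_T(o)} < \infty$, and part~1 yields $\mu_h \in \mP_h$, in particular $H(\mu_h) < \infty$. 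Finally $\mu_h = (\ugwt_h(P))_h = P$, because the depth-$h$ neighborhood of the root is sampled according to $P$ and the later steps of the construction only extend the tree at depth at least $h+1$ from the root, consistently with the already-sampled depth-$h$ neighborhood (each sampled $\tilde{t}$ obeys $\tilde{t}_{h-1} = t$). Therefore $H(P) = H(\mu_h) < \infty$, i.e.\ $P \in \mP_h$.

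The step I expect to be the main obstacle is the base case $H(\mu_1) < \infty$ of part~1; everything else reduces to bookkeeping or to citing Lemma~\ref{lem:PPh--PtildePh+1}, Lemma~\ref{lem:unimodular-is-admissible} and Lemma~\ref{lem:ugwt-P-is-unimodular}. That estimate rests on three facts each needing a little care: the polynomial-in-degree bound on the number of isomorphism classes in $\mTb_*^1$ of a given degree, the finiteness of the entropy of an integer-valued random variable with finite mean, and the fact that $\evwrt{\mu}{\deg_T(o)\log\deg_T(o)} < \infty$ controls $\evwrt{\mu}{\log(1+\deg_T(o))}$.
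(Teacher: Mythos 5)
Your proof is correct and follows essentially the same route as the paper: part~1 reduces to the base case $h=1$ via Lemma~\ref{lem:PPh--PtildePh+1} together with Lemma~\ref{lem:unimodular-is-admissible}, and part~2 passes through $\mu := \ugwt_h(P)$, Lemma~\ref{lem:ugwt-P-is-unimodular}, and $\mu_h = P$ exactly as in the paper. The only cosmetic difference is in bounding $H(\mu_1)$: the paper sums the entropies of the finitely many counts $N_{x,x'}^{\theta,\theta'}$ via Gibbs' inequality against the dyadic reference $q_k = 2^{-(k+1)}$, whereas you condition on $D := \deg_T(o)$ and bound the conditional entropy by $\log|\vermark| + K\log(1+D)$ — both hinge on the same elementary fact that a nonnegative-integer-valued random variable with finite mean has finite entropy.
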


% bb

\begin{proof}
  To prove part 1, let $\mu \in \mP_u(\mTb_*)$ with 
  $\evwrt{\mu}{\deg_T(o) \log \deg_T(o)} < \infty$.
  By Lemma~\ref{lem:PPh--PtildePh+1}, to show that $\mu_h \in \mP_h$
  for all $h \ge 1$, it suffices to show that $\mu_1 \in \mP_1$.
  Let $P:= \mu_1$. From $\evwrt{\mu}{\deg_T(o) \log \deg_T(o)} < \infty$
  we have $\evwrt{\mu}{\deg_T(o)} < \infty$. Since
  $\deg(\mu) = \evwrt{\mu}{\deg_T(o)}$, 
  from Lemma~\ref{lem:unimodular-is-admissible} we see that $P$ is admissible.
  We also have 
  $\evwrt{P}{\deg_T(o) \log \deg_T(o)} = \evwrt{\mu}{\deg_T(o) \log \deg_T(o)} < \infty$. By the definition of strong admissibility in 
  Definition~\ref{def:stronglyadmissible}, all that remains to show is that
  $H(P) < \infty$.
  %Admissibility of $P$ is
  %directly implied by the unimodularity of $\mu$. Indeed, for $t, t' \in \edgemark
  %\times \mTb_*^0$, we have
  %\begin{align*}
  %  e_P(t,t') &= \evwrt{\mu}{E_1(t,t')(T,o)} = \evwrt{\mu}{\sum_{v \sim_T o} \one{\etype_T^1(o,v) = (t,t')}} \\
  %            &\stackrel{(a)}{=} \evwrt{\mu}{\sum_{v \sim_T o } \one{\etype_T^1(o,v) = (t',t)}} = e_P(t',t),
  %\end{align*}
  %where in $(a)$, we have used the unimodularity of $\mu$. Also, we have
  %$\evwrt{P}{\deg_T(o)} = \evwrt{\mu}{\deg_T(o)} < \infty$. Hence, $P$ is
  %admissible. On the other hand, we have
  %\begin{equation*}
  % \evwrt{P}{\deg_T(o) \log \deg_T(o)} =
  %\evwrt{\mu}{\deg_T(o) \log \deg_T(o)} < \infty. 
  %\end{equation*}
 %Thereby, in order to show $P
  %\in \mP_1$, it remains to show that $H(P) < \infty$. To see this, 
  For this, observe that a rooted tree $[T,o] \in \mTb_*^1$
  is uniquely determined by knowing the integers 
  \begin{equation*}
    N_{x,x'}^{\theta, \theta'}(T,o) := |\{v \sim_T o: \xi_T(v,o) = x, \tau_T(o) = \theta, \xi_T(o,v) = x', \tau_T(v) = \theta'\}|,
  \end{equation*}
  for all $x, x' \in \edgemark$ and $\theta, \theta' \in \vermark$. On the other
  hand, for $x,
  x' \in \edgemark$ and $\theta, \theta' \in \vermark$,
  $\evwrt{P}{N_{x,x'}^{\theta, \theta'}(T,o)} \leq \evwrt{P}{\deg_T(o)} <
  \infty$. Consequently, when  $[T, o] \sim P$, the entropy of the random
  variable $N_{x,x'}^{\theta, \theta'}(T,o)$ is
  finite.
  %{\color{pedit}
    To see this, for $k \geq 0$, let $p_k$ denote the probability under $P$ that
    $N_{x,x'}^{\theta, \theta'}(T,o) = k$. Furthermore, let $q_k :=
    \frac{1}{2^{k+1}}$. Then we have
    \begin{equation*}
      H(N_{x,x'}^{\theta, \theta'}(T,o)) = \sum_{k=0}^\infty p_k \log \frac{1}{p_k} \stackrel{(a)}{\leq} \sum_{k=0}^\infty p_k \log \frac{1}{q_k} = \left(1 + \evwrt{P}{N_{x,x'}^{\theta,\theta'}(T,o)}\right) \log 2 < \infty,
    \end{equation*}
    where step (a) comes from Gibbs' inequality, i.e. the 
    nonnegativity of relative entropy, $\sum_{k=0}^\infty p_k \log \frac{p_k}{q_k} \ge 0$.
  %}%
  Since $\edgemark$ and $\vermark$ are finite sets, we have $H(P) < \infty$,
  which completes the proof of part 1.
  %and
  %thus $P \in \mP_1$. As a result, from
  %Lemma~\ref{lem:PPh--PtildePh+1} and induction, $\mu_h \in \mP_h$ for all $h
  %\in \nats$. 

To see part 2, first note that, by definition, if $P \in \mP_h$ then $P$ is admissible and
$\evwrt{P}{\deg_T(o)  \log \deg_T(o)} < \infty$. To show the other direction,
define $\mu := \ugwt_h(P)$. By Lemma~\ref{lem:ugwt-P-is-unimodular} we have $\mu \in \mP_u(\mT_*)$.
%\color{red}
%{\bf Need to figure out where this comes from. Currently it seems to involve 
%Proposition~\ref{prop:ugwthP-properties} but that is too deep in the chain of arguments and may even be using this Corollary in its proof }
%\color{black}
Further, we have
\begin{equation*}
 \evwrt{\mu}{\deg_T(o) \log \deg_T(o)} = \evwrt{P}{\deg_T(o) \log \deg_T(o)} <
\infty.
\end{equation*}
 Consequently, the first part of this corollary implies that $P = \mu_h \in
\mP_h$, and this completes the proof. 
\end{proof}

% bb

%%% Local Variables: 
%%% mode: latex
%%% TeX-master: "Note-41_BC-ent-arxiv.tex"
%%% End: 

%aa
\section{Towards the Definition of the Marked BC Entropy}
\label{sec:definition-entropy}

% bb

In this section, we make the initial steps towards defining a generalization of the notion of entropy defined in
\cite{bordenave2015large} for the marked regime discussed above. Our entropy
function is going to be defined 
for probability distributions $\mu \in \mP(\mGb_*)$ with 
$0 < \deg(\mu) < \infty$.
%on the set of Borel probability measures on $\mGb_*$.
We call this notion of entropy the {\em BC entropy} after Bordenave and Caputo.

Let the finite edge and vertex mark sets $\edgemark$ and $\vermark$
respectively be given. 
%Before defining the entropy, we need to set
%up some notation. 
An edge mark
count vector is defined to be a vector of nonnegative integers $\vm := (m(x, x'): x, x' \in \edgemark)$ such that
$m(x,x') = m(x',x)$ for all $x, x' \in \edgemark$. 
A vertex mark count
vector is defined to be a vector of nonnegative integers $\vu := (u(\theta): \theta \in
\Theta)$. Since $\edgemark$ is finite, we may assume it is an ordered set.
%With this, for an edge mark count vector $\vm$, 
We define $\snorm{\vm}_1 := 
\sum_{x \leq x' \in \edgemark} 
%\sum_{x \leq x'} 
m(x, x')$
%Furthermore, for a vertex mark count vector $\vu$, we define 
and $\snorm{\vu}_1 :=
\sum_{\theta \in \vermark} u(\theta)$.

For an integer $n \in \nats$ and edge mark and vertex mark count vectors $\vm$
and $\vu$, define $\mGn_{\vm, \vu}$ to be the set of 
%\color{blue}
marked
%\color{black}
graphs on the vertex set
$[n]$ such that $\vm_G = \vm$ and $\vu_G = \vu$. \glsadd{not:Gnmu} Note that
$\mGn_{\vm, \vu}$ is empty unless $\snorm{\vm}_1 \leq \binom{n}{2}$ and
$\snorm{\vu}_1 = n$. Furthermore, if these two conditions are satisfied, it is
easy to see that 
  \begin{equation}
    \label{eq:sizegnmnun-exact}
    |\mGn_{\vm,\vu} | = \frac{n!}{\prod_{\theta \in \vermark} u(\theta)!} \times \frac{\frac{n(n-1)}{2}!}{\prod_{x\leq x' \in \edgemark} m(x,x')! \times \left ( \frac{n(n-1)}{2} - \snorm{\vm}_1 \right)!} \times 2^{\sum_{x < x' \in \edgemark} m(x,x')}.
  \end{equation}

An {\em average degree vector} is defined to be a vector of nonnegative reals $\vd = (d_{x,x'} : x, x' \in \edgemark)$ such
  that for all $x, x' \in \edgemark$,  we have  $d_{x,x'} =
  d_{x',x}$. Moreover, we require that $\sum_{x,x' \in \edgemark} d_{x,x'} > 0$.
% \marginpar{\color{red} I
%     am convinced at this point that there is nowhere that we actually use the
%     assumption that $\sum d_{x,x'} > 0$, therefore I might want to drop it later, but I should
%     be careful in reducing the $-\infty$ part for not $\mTb_*$ supported}

%aa
\begin{definition}
\label{def:deg-seq-adapt}  
%{\em [Adapted sequence of edge and vertex mark count vectors] }

  Given an average degree vector $\vd$ and a probability distribution $Q = (q_\theta: \theta \in \vermark)$, we say that a sequence 
$(\vmn,\vun)$ of edge mark count vectors and vertex mark count vectors $\vmn$ and $\vun$ is adapted to $(\vd, Q)$, if the following conditions hold:
  \begin{enumerate}
  \item For each $n$, we have $\snorm{\vmn}_1 \leq \binom{n}{2}$ and $\snorm{\vun}_1 = n$.
  \item For $x \in \edgemark$, we have $\mn(x,x) / n \rightarrow d_{x,x}/2$.
  \item For $x \neq x' \in \edgemark$, we have $\mn(x, x') / n \rightarrow d_{x,x'} = d_{x',x}$.
  \item For $\theta \in \vermark$, we have $\un(\theta) / n \rightarrow q_\theta$.
  \item For $x, x' \in \edgemark$, $d_{x,x'} = 0$ implies $\mn(x, x') = 0$ for all $n$.
  \item For $\theta \in \vermark$, $q_\theta  = 0$ implies $\un(\theta) = 0$ for all $n$.
  \end{enumerate}
\end{definition}

%bb

If $\vmn$ and $\vun$ are sequences such that $(\vmn,\vun)$ is adapted to $(\vd, Q)$ then, using Stirling's approximation in \eqref{eq:sizegnmnun-exact}, we have
  \begin{equation}
    \label{eq:log-mGnmnun-Stirling}
    \log | \mGn_{\vmn, \vun} | =  \snorm{\vmn}_1 \log n + n H(Q)  + n \sum_{x,x' \in \edgemark} s(d_{x,x'}) + o(n),
  \end{equation}
  where \glsadd{not:sofd}
  \begin{equation*}
    s(d) :=
    \begin{cases}
      \frac{d}{2} - \frac{d}{2} \log d & d > 0, \\
      0 & d = 0.
    \end{cases}
  \end{equation*}
  % \begin{marginpar}
  %   {
  % \color{blue}
  % It would be good to include a proof of this in an appendix.
  % \color{pedit} \textbf{Response:} I am working on it. 
  %   }
  %   \end{marginpar}%
  See Appendix~\ref{sec:app-Stirling} for the details on how to
    derive~\eqref{eq:log-mGnmnun-Stirling}.
To simplify the notation, we may write $s(\vd)$ for $\sum_{x, x' \in \edgemark} s(d_{x,x'})$.
%\color{red}
%Note to Payam:
%The notation $s(\vd)$ needs to be included in the glossary.
%\color{black}
\glsadd{not:sofvecd}
%\pres{added to glossary}
% for the Stirling simplification in the unmarked regime, look at the note
% 2018-08-16_log-Gnm-unmarked-simplification.pdf
% and for the above Stirling approximation details in the marked regime, look at
% 2018-08-16_log-Gnmnun-marked-simplification.pdf

%\color{red}
%{\bf I have not yet checked the proof in Appendix~\ref{sec:app-Stirling}.}
%\color{black}

To lead up to the definition of the BC entropy in 
Definition~\ref{def:BC-entropy-new}, we now give
the definitions of upper and lower BC entropy.
%A full understanding of this definition depends on results to be proved in Theorems~\ref{thm:badcases} and 
%\ref{thm:bch-properties} in the next section.

\begin{definition}
  \label{def:BC-entropy}
  %{\em [Upper and lower BC entropy] }
  
Assume $\mu \in \mP(\mGb_*)$ is given, with $0 < \deg(\mu) < \infty$. For $\epsilon>0$, and edge and vertex mark count vectors
$\vm$ and $\vu$, define
  \begin{equation*}
    \mGn_{\vm, \vu} (\mu, \epsilon) := \{ G \in \mGn_{\vm, \vu}: \dlp(U(G), \mu) < \epsilon \}.
  \end{equation*}
  Fix an average degree vector $\vd$ and a probability distribution $Q = (q_\theta:
  \theta \in \vermark)$, and also fix sequences of edge and vertex mark
  count vectors $\vmn$ and $\vun$ such that $(\vmn,\vun)$ is adapted to $(\vd, Q)$. With these, define
  \begin{equation*}
    \bchover_{\vd, Q}(\mu, \epsilon)\condmnun := \limsup_{n \rightarrow \infty} \frac{\log |\mGn_{\vmn, \vun}(\mu, \epsilon)| - \snorm{\vmn}_1 \log n}{n},
  \end{equation*}
which we call the $\epsilon$--upper BC entropy. Since this is increasing
in $\epsilon$, we can define the {\em upper BC entropy} as 
  \begin{equation*}
    \bchover_{\vd, Q}(\mu)\condmnun := \lim_{\epsilon \downarrow 0} \bchover_{\vd, Q}(\mu, \epsilon)\condmnun.
  \end{equation*}
We may define the $\epsilon$--lower BC entropy $\bchunder_{\vd, Q}(\mu,
\epsilon)\condmnun$ similarly as
\begin{equation*}
    \bchunder_{\vd, Q}(\mu, \epsilon)\condmnun := \liminf_{n \rightarrow \infty} \frac{\log |\mGn_{\vmn, \vun}(\mu, \epsilon)| - \snorm{\vmn}_1 \log n}{n}.
  \end{equation*}
Since this is increasing
in $\epsilon$, we can define the {\em lower BC entropy} $\bchunder_{\vd, Q}(\mu)\condmnun$ as
\begin{equation*}
    \bchunder_{\vd, Q}(\mu)\condmnun := \lim_{\epsilon \downarrow 0} \bchunder_{\vd, Q}(\mu, \epsilon)\condmnun.
  \end{equation*}
 
    \end{definition}

To close this section, we prove an upper semicontinuity result that 
will be superseded later by the upper semicontinuity result of 
Theorem~\ref{thm:bch-semicont}.

%aa
\begin{lem}
  \label{lem:BC-ent-upper-semicontinouous}
  Assume that a sequence $\mu_k \in \mP(\mGb_*)$ together with  $\mu \in
  \mP(\mGb_*)$ are given such that $\mu_k \Rightarrow \mu$. Let 
  $\vd = (d_{x,x'}:x,x' \in \edgemark)$
  be an average degree vector and 
  $Q = (q_\theta: \theta \in \vermark)$
  a probability
  distribution. Let
  $\vmn, \vun$ be sequences such that $(\vmn,\vun)$ is adapted to $(\vd, Q)$. Then, we have
  \begin{equation*}
    \bchunder_{\vd, Q}(\mu)\condmnun \geq \limsup_{k \rightarrow \infty} \bchunder_{\vd,Q}(\mu_k)\condmnun.
  \end{equation*}
%  where in all terms, $\bchunder_{\vd, Q}$ is defined via the same sequences
%  $\vmn$ and $\vun$.
\end{lem}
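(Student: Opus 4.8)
The goal is the inequality $\bchunder_{\vd, Q}(\mu)\condmnun \geq \limsup_{k} \bchunder_{\vd,Q}(\mu_k)\condmnun$ under the hypothesis $\mu_k \Rightarrow \mu$. The natural approach is a diagonal/$\epsilon$-room argument exploiting the fact that $\dlp(\mu_k, \mu) \to 0$ by the equivalence of weak convergence and \LP-convergence on the Polish space $\mGb_*$.

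First I would fix $\epsilon > 0$ and use $\mu_k \Rightarrow \mu$ to choose $k_0$ so that for all $k \ge k_0$, $\dlp(\mu_k, \mu) < \epsilon/2$. Then for any such $k$ and any $G \in \mGn_{\vmn, \vun}$ with $\dlp(U(G), \mu_k) < \epsilon/2$, the triangle inequality for $\dlp$ gives $\dlp(U(G), \mu) < \epsilon$. Hence for all $k \ge k_0$,
\begin{equation*}
  \mGn_{\vmn, \vun}(\mu_k, \epsilon/2) \subseteq \mGn_{\vmn, \vun}(\mu, \epsilon),
\end{equation*}
and so, taking logarithms, subtracting $\snorm{\vmn}_1 \log n$, dividing by $n$, and passing to $\liminf_{n \to \infty}$ on both sides,
\begin{equation*}
  \bchunder_{\vd, Q}(\mu_k, \epsilon/2)\condmnun \le \bchunder_{\vd, Q}(\mu, \epsilon)\condmnun
  \qquad \text{for all } k \ge k_0.
\end{equation*}
Taking $\limsup_{k \to \infty}$ of the left side and noting $\bchunder_{\vd, Q}(\mu_k)\condmnun \le \bchunder_{\vd, Q}(\mu_k, \epsilon/2)\condmnun$ (the $\epsilon$-lower BC entropy is increasing in $\epsilon$ and the lower BC entropy is its $\epsilon \downarrow 0$ limit), we get
\begin{equation*}
  \limsup_{k \to \infty} \bchunder_{\vd, Q}(\mu_k)\condmnun \le \bchunder_{\vd, Q}(\mu, \epsilon)\condmnun.
\end{equation*}
Finally, letting $\epsilon \downarrow 0$ and using the definition $\bchunder_{\vd, Q}(\mu)\condmnun = \lim_{\epsilon \downarrow 0} \bchunder_{\vd, Q}(\mu, \epsilon)\condmnun$ yields the claim.

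I do not anticipate a serious obstacle here; the argument is essentially a soft "containment of $\epsilon$-balls" monotonicity argument, and the only points requiring care are: (i) confirming that the triangle inequality for $\dlp$ applies, which is fine since $\dlp$ is a genuine metric metrizing weak convergence on the Polish space $\mGb_*$; (ii) handling the edge case where some $\mGn_{\vmn,\vun}(\mu_k,\epsilon/2)$ is empty, in which case its log is $-\infty$ and the inequality $\bchunder_{\vd, Q}(\mu_k, \epsilon/2)\condmnun \le \bchunder_{\vd, Q}(\mu, \epsilon)\condmnun$ holds trivially; and (iii) being careful that the inequality between $\liminf$'s is preserved under the set inclusion (it is, since $A \subseteq B$ implies $|A| \le |B|$ for each fixed $n$). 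Note that the average degree vector $\vd$ and distribution $Q$ and the adapted sequences $\vmn, \vun$ play no active role beyond being fixed throughout; they simply fix the ambient sets $\mGn_{\vmn, \vun}$.
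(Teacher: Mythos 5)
Your proof is correct and takes essentially the same approach as the paper: both arguments use the metrizability of weak convergence by $\dlp$ on the Polish space $\mGb_*$ to obtain, for $k$ large, the ball containment $B(\mu_k,\epsilon/2) \subseteq B(\mu,\epsilon)$, hence $\mGn_{\vmn,\vun}(\mu_k,\epsilon/2) \subseteq \mGn_{\vmn,\vun}(\mu,\epsilon)$, then pass to $\liminf_n$, take $\limsup_k$, and send $\epsilon \downarrow 0$.
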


% bb

\begin{proof}
  For $\epsilon>0$, let $B(\mu, \epsilon)$ denote the ball around
  $\mu$ of radius $\epsilon$ with respect to the \LP distance. Since $\mGb_*$ is
Polish, weak convergence in $\mP(\mGb_*)$ is equivalent to convergence with
respect to the \LP metric. Hence, for $\epsilon>0$, 
  $\mu_k \Rightarrow \mu$ implies that for $k$ large enough, we have $B(\mu, \epsilon) \supseteq
  B(\mu_k, \epsilon/2)$. Therefore, we have $|\mGn_{\vmn, \vun}(\mu_k,
  \epsilon/2)| \leq |\mGn_{\vmn, \vun}(\mu, \epsilon)|$. Consequently,
  \begin{equation*}
    \bchunder_{\vd, Q}(\mu, \epsilon)\condmnun \geq \bchunder_{\vd, Q}(\mu_k, \epsilon/2)\condmnun \geq \bchunder_{\vd, Q}(\mu_k)\condmnun.
  \end{equation*}
  Taking the limsup on the right hand side and then sending $\epsilon$ to zero on
  the left hand side, we get the desired result.
\end{proof}

%Note that although it seems at  first sight that the values $\bchover_{\vd,
%  Q}(\mu)\condmnun$ and $\bchunder_{\vd, Q}(\mu)\condmnun$ depend on the choice of sequences
%$\vmn$ and $\vun$, in fact, as we will see later {\color{pedit}in Theorem~\ref{thm:bch-properties}}, that value is invariant under
%the choice of these sequences as long as $(\vmn,\vun)$ is adapted to $(\vd, Q)$.
%\begin{marginpar}%
%  {
%\color{blue}
%Later where? Give a precise reference.
%\color{pedit}
%\textbf{Response:} okay.
%}
%\end{marginpar}%
%In the
%following sections we study the properties of BC entropy. 

%The proofs of parts \ref{thm:BC-invariant} and
%\ref{thm:BC-well} of
%Theorem~\ref{thm:bch-properties}, which were relied on in
%Definition \ref{def:BC-entropy} to define the BC entropy,
%depend on a sequence of propositions, which are stated
%in the next section and proved in Section \ref{sec:ent-properties} after the development of a supporting structure called the {\em colored configuration model} in 
%Section \ref{sec:color-conf-model}. In the process of developing this proof, we will also establish several properties of the BC entropy.

%%% Local Variables: 
%%% mode: latex
%%% TeX-master: "Note-41_BC-ent-arxiv.tex"
%%% End: 

%aa
\section{Definition of the Marked BC Entropy and Main Results}
\label{sec:main-results}

In this section, we state the main theorems proved in this document.
These theorems establish properties of the upper and lower marked BC entropy,
which enable us to 
define the marked BC entropy and establish some of its properties. 
%Theorem \ref{thm:bch-properties} states that the BC entropy is well-defined and presents some of its main properties. 
%Theorem~\ref{thm:Jh} can be thoughts of as presenting a method to compute the BC entropy. 
The main propositions that are used to prove these theorems are also 
stated in this section and we give the proofs of these theorems,
  assuming that the propositions are proved. The proofs of the propositions themselves are given later in 
the document.

% bb

The following Theorem~\ref{thm:badcases} shows that certain conditions must 
be met for the marked BC entropy to be of interest.

\begin{thm}
  \label{thm:badcases}
  Let an average degree vector $\vd = (d_{x,x'} : x,x' \in \edgemark)$ and a
  probability distribution $Q = (q_\theta: \theta \in \vermark)$ be given. Suppose $\mu \in \mP(\mGb_*)$ with
 $0 < \deg(\mu) < \infty$ satisfies any one of the following conditions:
 \begin{enumerate}
    \item $\mu$ is not unimodular.
    \item $\mu$ is not supported on $\mTb_*$.
    \item $\deg_{x,x'}(\mu) \neq d_{x,x'}$ for some $x,x' \in \edgemark$, or $\vtype_\theta(\mu) \neq q_\theta$ for some $\theta \in \vermark$.
    \end{enumerate}
    Then, for any choice of the
    sequences $\vmn$ and $\vun$ such that $(\vmn,\vun)$ is adapted to $(\vd, Q)$, we have $\bchover_{\vd, Q}(\mu)\condmnun = -\infty$. 
  \end{thm}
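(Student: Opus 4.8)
The plan is to prove the contrapositive: I will show that if $\bchover_{\vd,Q}(\mu)\condmnun > -\infty$ then $\mu$ is unimodular, supported on $\mTb_*$, and matches $\vd$ and $Q$, i.e.\ none of the three conditions holds. Since the claim is that \emph{each} condition alone forces $\bchover_{\vd,Q}(\mu)\condmnun = -\infty$, when I treat a given condition I may freely assume the other two fail. The common starting point: if $\bchover_{\vd,Q}(\mu)\condmnun > -\infty$ then, as $\bchover_{\vd,Q}(\mu,\epsilon)\condmnun$ is nondecreasing in $\epsilon$ and converges to $\bchover_{\vd,Q}(\mu)\condmnun$ (Definition~\ref{def:BC-entropy}), we have $\bchover_{\vd,Q}(\mu,\epsilon)\condmnun > -\infty$ for every $\epsilon>0$, so $\mGn_{\vmn,\vun}(\mu,\epsilon)$ is nonempty for infinitely many $n$. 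Picking for each $k$ a graph $G^{(k)} \in \mGn_{\vmn,\vun}(\mu,1/k)$ on a vertex set of size $n_k \uparrow \infty$ gives finite marked graphs with $\dlp(U(G^{(k)}),\mu) \to 0$, i.e.\ $U(G^{(k)}) \Rightarrow \mu$; hence $\mu$ is sofic, and therefore unimodular, by the necessity of unimodularity for soficity recalled in Section~\ref{sec:local-weak-convergence-framework}. This handles Condition~1.

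For Condition~3 I use that for every $G \in \mGn_{\vmn,\vun}$ one has $\evwrt{U(G)}{\deg_G^{x,x'}(o)}$ equal to $\mn(x,x')/n$ if $x \ne x'$ and to $2\mn(x,x)/n$ if $x = x'$ (summing $\deg_G^{x,x'}(\cdot)$ over vertices counts each relevant edge once, resp.\ twice), and $U(G)(\{\tau_G(o) = \theta\}) = \un(\theta)/n$; by adaptedness these converge along $G^{(k)}$ to $d_{x,x'}$ and $q_\theta$. The maps $[G,o]\mapsto \deg_G^{x,x'}(o)$ and $[G,o]\mapsto \one{\tau_G(o)=\theta}$ are locally constant on $\mGb_*$, hence continuous; applying $U(G^{(k)})\Rightarrow\mu$ to the bounded continuous function $\one{\tau_G(o)=\theta}$ gives $\vtype_\theta(\mu)=q_\theta$ for all $\theta$, and applying it to the bounded truncations $\min(\deg_G^{x,x'}(o),M)$ and then letting $M\to\infty$ (monotone convergence) gives $\deg_{x,x'}(\mu)\le d_{x,x'}$ for all $x,x'$. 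So if some $\vtype_\theta(\mu)\ne q_\theta$, or some $\deg_{x,x'}(\mu)>d_{x,x'}$, we already have a contradiction. The only remaining possibility under Condition~3 is $\deg_{x,x'}(\mu)\le d_{x,x'}$ for all pairs with strict inequality somewhere, so $\gamma := d - \deg(\mu) > 0$ with $d := \sum_{x,x'}d_{x,x'}$. In this sub-case I argue directly: given a large $M$, for $\epsilon$ small enough (depending on $M$) every $G\in\mGn_{\vmn,\vun}(\mu,\epsilon)$ satisfies $\sum_{v:\,\deg_G(v)\le M}\deg_G(v) = n\,\evwrt{U(G)}{\min(\deg_G(o),M)\one{\deg_G(o)\le M}} \le n(\deg(\mu)+\gamma/4)$, so at least $(\gamma/8-o(1))n$ of the $\snorm{\vmn}_1 \sim nd/2$ edges are incident to the set $S$ of vertices of degree exceeding $M$, where $|S| \le 2\snorm{\vmn}_1/M = O(n/M)$. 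Counting: summing over the choice of $S$ (at most $2^n$ choices, $O(n)$ in the logarithm) and over how edges split between "incident to $S$" and "free", each $S$-incident edge has at most $|S|\cdot n = O(n^2/M)$ placements instead of $\binom n2$; the standard binomial estimates then give $\log|\mGn_{\vmn,\vun}(\mu,\epsilon)| - \snorm{\vmn}_1\log n \le O(n) - \Omega_\gamma(n\log M)$, so $\bchover_{\vd,Q}(\mu,\epsilon)\condmnun \le O(1) - \Omega_\gamma(\log M)$, and letting $M\to\infty$ yields $\bchover_{\vd,Q}(\mu)\condmnun = -\infty$. (The finitely many marks contribute only a further $e^{O(n)}$ factor and do not affect these orders.)

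For Condition~2 I may now assume $\mu$ is unimodular and matches $\vd$ and $Q$. If $\mu$ is not supported on $\mTb_*$, then since $\mTb_* = \bigcap_h \{[G,o] : (G,o)_h \text{ is a tree}\}$ there is $h\ge 1$ with $\alpha := \mu(A_h) > 0$, where $A_h := \{[G,o]\in\mGb_* : (G,o)_h \text{ is not a tree}\}$; $A_h$ is clopen, being the preimage of a subset of the (uniformly discrete, hence clopen-generated) space $\mGb_*^h$ under the continuous map $[G,o]\mapsto[G,o]_h$. For $\epsilon < \min(\alpha/2,1/(h+1))$ one then checks $A_h^\epsilon = A_h$, so $\dlp(U(G),\mu)<\epsilon$ forces $|U(G)(A_h) - \mu(A_h)| \le \epsilon$, hence every $G\in\mGn_{\vmn,\vun}(\mu,\epsilon)$ has at least $(\alpha-\epsilon)n$ vertices $v$ with $(G,v)_h$ not a tree; by the girth/neighborhood equivalence quoted in Section~\ref{sec:notations}, each such $v$ witnesses a cycle of length at most $2h+1$ inside its depth-$h$ ball. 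It remains to bound the number of marked graphs on $[n]$ with $\snorm{\vmn}_1 \sim nd/2$ edges having a linear number of such vertices, by a factor $\exp(-\omega(n))$ smaller than $|\mGn_{\vmn,\vun}|$, which then gives $\bchover_{\vd,Q}(\mu,\epsilon)\condmnun = -\infty$. I expect this combinatorial estimate to be the main obstacle: with unbounded degrees the forced short cycles need not be spread out or edge-disjoint — they may cluster around a few high-degree hubs. The plan is to split into two regimes: (a) a positive fraction of the edges is incident to vertices of degree above a large threshold $M$, handled exactly as in the last sub-case of Condition~3; (b) otherwise, deleting those few high-degree vertices leaves the short cycles in a bounded-degree part, where a linear number of them can be extracted pairwise edge-disjoint, forcing $\Omega(n)$ "excess" edges that must each be placed inside a bounded-size neighborhood and hence cost $\Omega(\log n)$ fewer bits than a free edge. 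In either regime the log-count falls below $\snorm{\vmn}_1\log n$ by $\omega(n)$, which completes the proof.
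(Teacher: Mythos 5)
Your handling of Condition~1 is exactly the paper's: from $\bchover_{\vd,Q}(\mu)\condmnun > -\infty$ extract a sequence of finite marked graphs with $U(G^{(k)}) \Rightarrow \mu$, conclude $\mu$ is sofic hence unimodular. Your handling of Condition~3 is also essentially the paper's (Proposition~\ref{prop:BC-not-unim_dQnotmatch--infty}): the weak-convergence plus monotone-convergence argument gives $\vvtype(\mu) = Q$ and $\deg_{x,x'}(\mu) \le d_{x,x'}$, and in the remaining strict-inequality subcase you identify a small set $S$ of high-degree vertices carrying linearly many edge endpoints, then pay the penalty for cramming $\Theta(n)$ edges into $O(n^2/M)$ slots. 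The paper organizes the limit via coupled sequences $\Delta_n \to \infty$, $\Delta_n^2\epsilon_n \to 0$ rather than your ``fix $M$, shrink $\epsilon$, then send $M\to\infty$,'' but these are two equivalent ways to interleave the same two limits; the paper also works with the mark-specific degree $\deg_G^{\tx,\txp}$ at a single offending pair, whereas you use total degree after noting $\deg(\mu) < d$, which is a harmless simplification.

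Condition~2 is where you genuinely diverge from the paper, and where your proposal has a real gap. The paper's Proposition~\ref{prop:BC-no-tree--infty} does not attempt any cycle combinatorics at all: it observes that the mark-erasing map $F : \mGb_* \to \mG_*$ is $1$-Lipschitz, so $\dlp(U(\tilde G), \rho) \le \dlp(U(G),\mu)$ where $\rho$ is the pushforward of $\mu$ and $\tilde G$ is $G$ with marks stripped, and then bounds $|\mGnmnun(\mu,\epsilon)| \le |\mG_{n,m_n}(\rho,\epsilon)| \cdot (|\edgemark|^2)^{m_n} |\vermark|^n$ and invokes Theorem~1.2 of \cite{bordenave2015large} for the unmarked measure $\rho$ with $\rho(\mT_*)<1$. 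Since the mark factor is only $e^{O(n)}$, the unmarked $-\infty$ propagates. You instead try to reprove, essentially from scratch and in the marked setting, the combinatorial fact that a linear density of vertices with non-tree $h$-neighborhoods forces a superlinear deficit in the log-count. You explicitly flag this as the ``main obstacle,'' and the sketch you give does not resolve it. Two concrete difficulties: first, in regime (b), going from ``$\Omega(n)$ vertices $v$ have a cycle of length $\le 2h+1$ within distance $h$'' to ``$\Omega(n)$ pairwise edge-disjoint short cycles'' is not automatic -- a single short cycle running near a high-degree vertex can be witnessed by $\Theta(n)$ vertices, and after deleting the few high-degree vertices the witnessed cycles may no longer lie in the graph at all, so the split into regimes (a) and (b) does not cleanly separate. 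Second, your regime (a) cannot be ``handled exactly as in the last sub-case of Condition~3,'' because there you used $\gamma = d - \deg(\mu) > 0$ to force a linear number of $S$-incident edges \emph{for every} $G$ in the ball; here $\gamma = 0$ (Condition~3 is assumed to fail), so ``a positive fraction of edges on high-degree vertices'' is only an a priori possibility for some realizations, and you would have to bound the count of that sub-family separately rather than inherit the Condition~3 conclusion. These gaps are not small; filling them is roughly what Bordenave and Caputo's unmarked Theorem~1.2 already does, which is exactly what the paper's reduction reuses.

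So: Conditions~1 and~3 are correct and match the paper; Condition~2 takes a different and substantially harder route, and as written it is an outline with an unfilled combinatorial core. I would strongly recommend adopting the paper's reduction (strip marks via the $1$-Lipschitz forgetful map and cite the unmarked result) rather than attempting the direct count.
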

  
%Theorem~\ref{thm:badcases} shows that unless 
%$\vd = \vdeg(\mu)$, $Q = \vvtype(\mu)$, and $\mu$
%is a unimodular measure on $\mTb_*$, we have 
%$\bchover_{\vd, Q}(\mu)\condmnun = -\infty$
%(and hence $\bchunder_{\vd, Q}(\mu)\condmnun = -\infty$).
%Thus, unless $\mu \in \mP_u(\mTb_*)$, 
%$\vd = \vdeg(\mu)$, and $Q = \vvtype(\mu)$, 
%we may 
%unambiguously write $\bchover_{\vd, Q}(\mu)$ and 
%$\bchunder_{\vd, Q}(\mu)$ for 
%$\bchover_{\vd, Q}(\mu)\condmnun$ and $\bchunder_{\vd, Q}(\mu)\condmnun$ respectively for any $\vmn$ and $\vun$ such that $(\vmn,\vun)$ is adapted to $(\vd, Q)$ and these are both equal and equal to $-\infty$. 
%We may therefore also write $\bch_{\vd, Q}(\mu)$ for the common 
%value of $\bchover_{\vd, Q}(\mu)$ and 
%$\bchunder_{\vd, Q}(\mu)$, i.e. $-\infty$, in this case.

%We first show that under any of the  conditions mentioned in part~\ref{thm:BC--infty} of
%Theorem~\ref{thm:bch-properties}, the entropy is $-\infty$. This is
%done in Propositions~\ref{prop:BC-not-unim_dQnotmatch--infty} and
%\ref{prop:BC-no-tree--infty} below. The proofs of these statements are given in Section~\ref{sec:BC-conditions--infty}.
Theorem~\ref{thm:badcases} is proved by means of 
Propositions~\ref{prop:BC-not-unim_dQnotmatch--infty} and
\ref{prop:BC-no-tree--infty} below.

%aa
\begin{prop}
  \label{prop:BC-not-unim_dQnotmatch--infty}
  Assume that $\mu \in \mP(\mGb_*)$ with $0 < \deg(\mu) < \infty$ is given. Also,
assume that a degree vector
$\vd = (d_{x,x'}: x,x' \in \edgemark)$ and a probability distribution $Q =
(q_\theta: \theta \in \vermark)$ are given. Let $\vmn$ and $\vun$ be sequences such that $(\vmn,\vun)$ is adapted to
$(\vd, Q)$. 
  If $\mu$ is not unimodular, or $\vd \neq \vdeg(\mu)$, or $Q \neq
  \vvtype(\mu)$, we have $\bchover_{\vd, Q}(\mu)\condmnun = -\infty$. 
\end{prop}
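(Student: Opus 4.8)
The plan is to show that in each of the three failure cases, the set $\mGn_{\vmn,\vun}(\mu,\epsilon)$ is \emph{empty} for all sufficiently small $\epsilon$ and all sufficiently large $n$, which immediately forces $\bchover_{\vd,Q}(\mu)\condmnun = -\infty$ (the logarithm of $0$ being $-\infty$, and subtracting $\snorm{\vmn}_1\log n$ and dividing by $n$ does not change that). The key observation is that if $G\in\mGn_{\vmn,\vun}(\mu,\epsilon)$, then $U(G)$ is $\epsilon$-close to $\mu$ in the \LP metric, and two structural facts must then nearly hold: first, $U(G)$ is always a unimodular measure supported on $\mGb_*$ coming from a \emph{finite} graph, hence it is automatically involution invariant, and any weak limit of such measures is unimodular and supported on the closure of finite-graph local structures; second, the edge- and vertex-mark count vectors of $G$ are pinned down by $\vmn,\vun$, which are adapted to $(\vd,Q)$, and these in turn determine $\vdeg(U(G))$ and $\vvtype(U(G))$ up to $o(1)$ errors. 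Making these two ideas quantitative is the whole content of the proof.

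First I would handle the degree/type mismatch case, which is the most elementary. For any finite marked graph $G$ on $[n]$, a direct double-counting gives $\deg_{x,x'}(U(G)) = \frac{1}{n}\sum_{v}\deg_G^{x,x'}(v)$, and summing $\deg_G^{x,x'}(v)$ over all $v$ counts each edge of mark-type $(x,x')$ (or $(x',x)$) twice in an appropriate sense; concretely one gets $\deg_{x,x}(U(G)) = 2m_G(x,x)/n$ and $\deg_{x,x'}(U(G)) = m_G(x,x')/n$ for $x\neq x'$, and similarly $\vtype_\theta(U(G)) = u_G(\theta)/n$. Since $(\vmn,\vun)$ is adapted to $(\vd,Q)$, for $G\in\mGn_{\vmn,\vun}$ these quantities converge to $d_{x,x'}$ and $q_\theta$ respectively. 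On the other hand, the maps $\nu\mapsto \deg_{x,x'}(\nu)$ and $\nu\mapsto\vtype_\theta(\nu)$ — the latter being bounded and continuous, the former lower semicontinuous but also continuous on the relevant set once a uniform degree bound coming from $\deg(\mu)<\infty$ is arranged, or handled via a truncation argument — are controlled under \LP convergence, so $\dlp(U(G),\mu)<\epsilon$ together with the count constraints forces $\deg_{x,x'}(\mu) = d_{x,x'}$ and $\vtype_\theta(\mu) = q_\theta$. Contrapositively, if $\vd\neq\vdeg(\mu)$ or $Q\neq\vvtype(\mu)$, the set is empty for small $\epsilon$, large $n$. I would be slightly careful here because $\deg_{x,x'}$ is not bounded-continuous, so I'd phrase the argument as: pick a coordinate $(x,x')$ where things differ by $\eta>0$, truncate at a large degree $D$ so that $\mu$-mass of degree-$>D$ vertices is tiny, and then the truncated degree functional is bounded continuous; the adaptedness handles the untruncated tail on the $U(G)$ side since $\snorm{\vmn}_1/n$ converges.

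Next, the non-unimodularity case. Here the point is simply that $U(G)$ is unimodular for every finite $G$ (indeed it satisfies the mass-transport / involution invariance identity exactly, by finite summation over $V(G)$), and the set $\mP_u(\mGb_*)$ of unimodular measures is \emph{closed} under weak convergence in $\mP(\mGb_*)$ — this is standard (the defining identities for involution invariance involve integrals of bounded continuous functions against the measure, after the usual reduction to functions supported on adjacent pairs, so the identity passes to weak limits). Hence any weak limit of a sequence $U(G_n)$ with $G_n\in\mGn_{\vmn,\vun}$ is unimodular; if $\mu$ is not unimodular, it is at positive \LP distance $\delta$ from the closed set $\mP_u(\mGb_*)$, so for $\epsilon<\delta$ and every $n$, no $G\in\mGn_{\vmn,\vun}$ can have $\dlp(U(G),\mu)<\epsilon$, giving emptiness. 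I would spell out the reduction-to-adjacent-pairs step only by citing the involution invariance discussion already in Section~\ref{sec:local-weak-convergence-framework}. The main obstacle — really a bookkeeping obstacle rather than a deep one — is the degree-unboundedness issue in the mismatch case and making the "closedness of $\mP_u$" argument clean enough that weak limits inherit unimodularity without needing tightness hypotheses beyond $\deg(\mu)<\infty$; I'd either invoke the known fact that $\mP_u(\mGb_*)$ is weakly closed from \cite{aldous2007processes} or give the short self-contained argument via test functions. The non-tree case (part 2 of Theorem~\ref{thm:badcases}) is explicitly deferred to Proposition~\ref{prop:BC-no-tree--infty}, so it is not part of this proof.
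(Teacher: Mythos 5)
Your strategy for the non-unimodular case is essentially equivalent to the paper's (the paper observes that nonempty $\mGn_{\vmn,\vun}(\mu,\epsilon)$ along a subsequence forces $\mu$ to be sofic, hence unimodular; you observe that $\mP_u(\mGb_*)$ is weakly closed and $U(G)$ always lies in it, so $\mu$ lies at positive \LP distance and the set is empty for small $\epsilon$ -- both are clean and correct). Your handling of $Q \neq \vvtype(\mu)$ via continuity of the bounded functional $\one{\tau_G(o)=\theta}$ is also fine, and matches the paper.

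The gap is in the degree-mismatch case. Your plan is to show that $\mGn_{\vmn,\vun}(\mu,\epsilon)$ is \emph{empty} for small $\epsilon$ and large $n$ whenever $\vd \neq \vdeg(\mu)$, but this is false in the case that causes all the difficulty, namely $d_{\tx,\txp} > \deg_{\tx,\txp}(\mu)$ for some $\tx,\txp$. The truncation argument you sketch -- truncate at a large degree $D$, use that the truncated functional $[G,o]\mapsto \deg_G^{\tx,\txp}(o)\wedge D$ is bounded continuous -- only gives a one-sided inequality: since $\deg_{\tx,\txp}(U(G)) = m_G(\tx,\txp)/n$ (or $2m_G(\tx,\tx)/n$) is pinned to converge to $d_{\tx,\txp}$ by adaptedness, and $\deg_{\tx,\txp}(\cdot)$ is only \emph{lower} semicontinuous under weak convergence, you conclude $\deg_{\tx,\txp}(\mu) \le d_{\tx,\txp}$, not equality. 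Indeed it is entirely possible to have $G\in\mGn_{\vmn,\vun}$ with $\dlp(U(G),\mu)<\epsilon$ while $d_{\tx,\txp}$ exceeds $\deg_{\tx,\txp}(\mu)$: one simply concentrates the ``excess'' edges on a vanishing fraction of vertices with very high degree, which barely perturbs $U(G)$ in the local weak topology while keeping the global edge counts at $\vmn$. So the set is generally \emph{not} empty in this case, and your proof halts.

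The paper handles exactly this case with a genuinely different and more substantial argument: it does not try to show emptiness, but instead produces an explicit \emph{counting} bound. Using the \LP constraint it shows that the excess mass $\approx n(d_{\tx,\txp}-\deg_{\tx,\txp}(\mu))$ of $(\tx,\txp)$-degree must be carried by a small set $S_n$ of vertices ($|S_n|\le n\beta_n$ with $\beta_n\to 0$), then bounds the number of graphs in $\mGn_{\vmn,\vun}$ having at least $n\delta/2$ edges of that mark-type incident to some such small $S_n$, and shows this count is $\exp\bigl(\snorm{\vmn}_1\log n - \omega(n)\bigr)$ because of the factor $\binom{|S_n|(n-|S_n|)+\binom{|S_n|}{2}}{n\delta/2}$ which is tiny relative to $\binom{\binom{n}{2}}{n\delta/2}$. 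This extra combinatorial step, absent from your proposal, is the core of the proposition; without it the proof does not close.
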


% bb

%aa
\begin{prop}
  \label{prop:BC-no-tree--infty}
  Assume $\mu \in \mP(\mGb_*)$ with $0 < \deg(\mu) < \infty$ is given such that $\mu(\mTb_*) < 1$. Then, if
  $\vmn$ and $\vun$ are any sequences such that $(\vmn,\vun)$ is adapted to $(\vdeg(\mu), \vvtype(\mu))$,
  we have $\bchover_{\vdeg(\mu), \vvtype(\mu)}(\mu)\condmnun = -\infty$. 
\end{prop}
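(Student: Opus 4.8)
The plan is to identify a fixed depth $h$ at which non-treeness of $\mu$ is already ``visible'', deduce that any finite graph $G$ with $U(G)$ close to $\mu$ must contain a linear number of pairwise vertex-disjoint short cycles, and then show by a counting estimate that graphs of that kind are so rare inside $\mGn_{\vmn,\vun}$ that $\bchover_{\vdeg(\mu),\vvtype(\mu)}(\mu)\condmnun=-\infty$.

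First I would pick the good scale. The set of $[G,o]\in\mGb_*$ whose root-component contains a cycle is the increasing union over $h\ge 1$ of $A_h:=\{[G,o]:(G,o)_h\text{ is not a tree}\}$, so from $\mu(\mTb_*)<1$ there is $h\ge1$ with $2\delta:=\mu(A_h)>0$. The crucial feature is that $A_h$ is clopen in $(\mGb_*,d_*)$: it is determined by $[G,o]_h$, hence is a union of $d_*$-balls of radius $1/(h+1)$, so it is a $\mu$-continuity set whose $\epsilon$-neighbourhood equals itself for $\epsilon\le 1/(h+1)$. Using the \LP characterisation of weak convergence, there is $\epsilon_0=\epsilon_0(h,\delta)>0$ such that any $G$ with $\dlp(U(G),\mu)<\epsilon_0$ has $U(G)(A_h)>\delta$, i.e.\ at least $\delta n$ of its vertices $v$ have $(G,v)_h$ not a tree (equivalently, the subgraph of $G$ induced on $B_G(v,h)$ contains a cycle). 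Setting
\begin{equation*}
  N_n:=\bigl|\{G\in\mGn_{\vmn,\vun}:\#\{v\in V(G):(G,v)_h\text{ is not a tree}\}\ge\delta n\}\bigr|,
\end{equation*}
we have $\mGn_{\vmn,\vun}(\mu,\epsilon_0)\subseteq$ this set, so $\bchover_{\vdeg(\mu),\vvtype(\mu)}(\mu)\condmnun\le\limsup_n\tfrac1n(\log N_n-\snorm{\vmn}_1\log n)$, and it suffices to prove the right side is $-\infty$.

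Next I would upgrade ``many bad vertices'' to ``many disjoint short cycles'', which is the step where possibly unbounded degrees intervene. Since every $[G,o]\in\mGb_*$ has a finite $2h$-neighbourhood, the clopen sets $F_S:=\{[G,o]:|V((G,o)_{2h})|\le S\}$ satisfy $\mu(F_S)\uparrow1$; fix $S$ with $\mu(F_S)>1-\delta/4$. Shrinking $\epsilon_0$, any $G$ with $\dlp(U(G),\mu)<\epsilon_0$ then has $\#\{v:|V((G,v)_{2h})|\le S\}\ge(1-\delta/2)n$, and intersecting with the $\ge\delta n$ bad vertices leaves at least $\delta n/2$ vertices $v$ that are bad \emph{and} have $2h$-ball of size at most $S$; for such a $v$ the cycle witnessing badness lies in $B_G(v,h)$ and so has length in $\{3,\dots,S\}$. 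A greedy packing --- repeatedly choosing such a vertex and discarding its $2h$-ball (at most $S$ vertices) from the pool --- produces $\alpha n$ of them, $\alpha:=\delta/(2S)$, pairwise at distance $\ge 2h+1$; hence their $h$-balls, and the cycles they witness, are pairwise vertex-disjoint. Finally, since all lengths lie in $\{3,\dots,S\}$, by pigeonhole at least $\alpha' n$ of these cycles, with $\alpha':=\alpha/(S-2)$, share a common length $\ell^\star\le S$. Thus every $G$ counted in $N_n$ (for large $n$) contains $\ge\alpha' n$ pairwise vertex-disjoint $\ell^\star$-cycles for some $\ell^\star\in\{3,\dots,S\}$.

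For the counting I would sum over $\ell^\star\in\{3,\dots,S\}$ and over $k\ge\alpha' n$. The number of sets of $k$ vertex-disjoint $\ell^\star$-cycles on $[n]$, equipped with a mark/orientation on each of their $p:=\ell^\star k$ edges, is at most $n^{p}C_0^{p}/k!$ for a constant $C_0$ depending only on $\edgemark$ (only mark types $(x,x')$ with $d_{x,x'}>0$ can occur, by adaptedness). On the other hand, by the exact formula \eqref{eq:sizegnmnun-exact}, the number of $G\in\mGn_{\vmn,\vun}$ containing a prescribed set of $p$ marked edges is at most $|\mGn_{\vmn,\vun}|(C_1/n)^{p}$, where $C_1$ depends on $\deg(\mu)<\infty$ (each pinned edge contributes a factor $O(m_n(x,x'))/\binom n2=O(1/n)$). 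Multiplying, using $\ell^\star\le S$, $p\le Sn$ and $k!\ge(k/e)^k=e^{\alpha' n\log n+O(n)}$, and summing over the $O(n)$ choices of $(\ell^\star,k)$, gives
\begin{equation*}
  N_n\le |\mGn_{\vmn,\vun}|\cdot e^{O(n)}\cdot\frac{(C_0C_1)^{p}}{k!}\le |\mGn_{\vmn,\vun}|\cdot e^{-\alpha' n\log n+O(n)}.
\end{equation*}
Combining with the Stirling estimate \eqref{eq:log-mGnmnun-Stirling}, which gives $\log|\mGn_{\vmn,\vun}|-\snorm{\vmn}_1\log n=O(n)$, yields $\log N_n-\snorm{\vmn}_1\log n\le-\alpha' n\log n+O(n)$, so the limsup is $-\infty$, as needed. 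The main obstacle is precisely this bookkeeping: placing the $k$ disjoint $\ell^\star$-cycles costs $\approx e^{(\ell^\star-1)\log n}$ per cycle after dividing by $k!$, while pinning an $\ell^\star$-cycle's $\ell^\star$ marked edges saves only $\approx e^{\ell^\star\log n+O(1)}$, so one must verify that the genuine $e^{-\log n+O(1)}$ net gain per cycle survives the marks, the orientations and the precise form of \eqref{eq:sizegnmnun-exact}; a secondary care point is that the boundedness of local neighbourhoods in Step two must be extracted from $\mu$ being a probability measure on the locally finite space $\mGb_*$, since no degree bound is available.
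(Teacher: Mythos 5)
Your proof is correct, but it takes a genuinely different route from the paper's. The paper simply forgets the marks: the map $F:\mGb_*\to\mG_*$ that erases all vertex and edge marks is $1$--Lipschitz, so $\dlp(U(\tGn),\rho)\le\dlp(U(\Gn),\mu)$ where $\rho$ is the pushforward of $\mu$; since $\mu(\mTb_*)<1$ gives $\rho(\mT_*)<1$, the paper quotes Theorem~1.2 of \cite{bordenave2015large} (the \emph{unmarked} BC entropy result) to conclude the unmarked entropy of $\rho$ is $-\infty$, and the bound $|\mGnmnun(\mu,\epsilon)|\le|\mG_{n,m_n}(\rho,\epsilon)|\,(|\edgemark|^2)^{m_n}|\vermark|^n$ transfers this to the marked entropy. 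That argument is short but delegates all the combinatorics to a cited theorem. You instead give a direct, self-contained argument: pick $h$ and $\delta$ with $\mu(A_h)=2\delta>0$, use that $A_h$ and the truncation set $F_S$ are clopen (determined by a finite-depth neighborhood, hence $\mu$-continuity sets with $A_h^\epsilon=A_h$ for $\epsilon$ small) to force any $G$ with $\dlp(U(G),\mu)<\epsilon_0$ to have $\ge\delta n/2$ vertices $v$ with $(G,v)_h$ non-tree and $|V((G,v)_{2h})|\le S$; a greedy $2h$-ball packing and pigeonhole on the cycle length then give $\ge\alpha'n$ pairwise vertex-disjoint $\ell^\star$-cycles for a fixed $\ell^\star\le S$; finally the ratio of the number of graphs in $\mGnmnun$ containing a prescribed set of $p$ marked edges to $|\mGnmnun|$ is $(C_1/n)^p$ (this uses $m^{(n)}(x,x')=O(n)$, i.e.\ $\deg(\mu)<\infty$, from \eqref{eq:sizegnmnun-exact}), and dividing the $\approx n^p/k!$ choices of a marked cycle packing yields $N_n\le|\mGnmnun|\,e^{-\alpha' n\log n + O(n)}$, whence the conclusion via \eqref{eq:log-mGnmnun-Stirling}. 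Each step checks out — the key estimate $(C_1/n)^p$, the boundedness of local neighborhoods via $\mu(F_S)\uparrow 1$, and the $k!\ge(k/e)^k$ gain are all correctly deployed. The upshot is that your proof establishes the marked cycle-packing bound from scratch rather than reducing to the unmarked case, which is longer but makes the result self-contained and, as a byproduct, reproves the corresponding unmarked statement without invoking \cite{bordenave2015large} as a black box.
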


% bb

%Propositions~\ref{prop:BC-not-unim_dQnotmatch--infty}
%and~\ref{prop:BC-no-tree--infty} prove part \ref{thm:BC--infty} of
%Theorem~\ref{thm:bch-properties}, since $\bchover_{\vd, Q}(\mu)\condmnun = -\infty$
%implies $\bchunder_{\vd, Q}(\mu)\condmnun = -\infty$ and hence $\bch_{\vd, Q}(\mu)\condmnun = -\infty$.

The proofs of these statements are given in Section~\ref{sec:BC-conditions--infty} and it is immediate to see that they 
prove Theorem~\ref{thm:badcases}.
A consequence of Theorem~\ref{thm:badcases}
is that the only case of interest in the discussion of marked
BC entropy is when $\mu \in \mP_u(\mTb_*)$,
$\vd = \vdeg(\mu)$, $Q = \vvtype(\mu)$,
and the
sequences $\vmn$ and $\vun$ are such that $(\vmn,\vun)$ is adapted to
$(\vdeg(\mu), \vvtype(\mu))$.
Namely, the only upper and lower marked BC entropies of interest are 
$\bchover_{\vdeg(\mu), \vvtype(\mu)}(\mu)\condmnun$ and $\bchunder_{\vdeg(\mu), \vvtype(\mu)}(\mu)\condmnun$ respectively.

The following Theorem~\ref{thm:bch-properties} establishes that the upper and lower
marked BC entropies do not depend on the 
choice of the defining pair of sequences 
$(\vmn,\vun)$. Further, 
this theorem establishes that
the upper marked BC entropy 
is always equal to the lower marked BC entropy,

%aa
\begin{thm}
  \label{thm:bch-properties}
  Assume that an average degree vector $\vd = (d_{x,x'} : x,x' \in \edgemark)$ together with a
  probability distribution $Q = (q_\theta: \theta \in \vermark)$ are given. For
  any  $\mu \in \mP(\mGb_*)$ such that
 $0 < \deg(\mu) < \infty$, we have 
  \begin{enumerate}
  \item \label{thm:BC-invariant} The values of $\bchover_{\vd, Q}(\mu)\condmnun$ and
    $\bchunder_{\vd, Q}(\mu)\condmnun$ are invariant under the specific choice of the
    sequences $\vmn$ and $\vun$ such that $(\vmn,\vun)$ is adapted to $(\vd, Q)$. With this,
    we may simplify the notation and unambiguously write $\bchover_{\vd, Q}(\mu)$ and
    $\bchunder_{\vd, Q}(\mu)$. 
  \item \label{thm:BC-well} 
  %$\bch_{\vd, Q}(\mu)$ is well-defined, i.e.\
  $\bchover_{\vd, Q}(\mu) = \bchunder_{\vd, Q}(\mu)$. 
  We may therefore unambiguously write $\bch_{\vd, Q}(\mu)$ 
for this common value,
and call it the {\em marked BC entropy} of 
$\mu \in \mP(\mGb_*)$ for the 
average degree vector $\vd$ and a probability distribution $Q = (q_\theta:
  \theta \in \vermark)$.
  Moreover, $\bch_{\vd, Q}(\mu) \in [-\infty, s(\vd) + H(Q)]$.

  \end{enumerate}
\end{thm}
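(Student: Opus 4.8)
The proof has two independent pieces. For Part~1 (invariance under the choice of adapted sequences), I would take two different pairs $(\vmn,\vun)$ and $(\vmn',\vun')$ both adapted to $(\vd,Q)$, and show that the normalized log-counts $(\log|\mGn_{\vmn,\vun}(\mu,\epsilon)| - \snorm{\vmn}_1 \log n)/n$ and the primed version have the same $\limsup$ (and same $\liminf$) as $n\to\infty$, for every fixed $\epsilon>0$; taking $\epsilon\downarrow 0$ then gives the claim for both $\bchover$ and $\bchunder$. The natural mechanism is a surgery argument: given a graph $G\in\mGn_{\vmn,\vun}(\mu,\epsilon)$, modify a vanishing fraction of edges/vertex-marks to produce a graph in $\mGn'_{\vmn',\vun'}(\mu,\epsilon')$ for a slightly larger $\epsilon'$, where the number of modifications is $o(n)$ because adaptedness forces $|\mn(x,x')-m'^{(n)}(x,x')| = o(n)$ and $|\un(\theta)-u'^{(n)}(\theta)| = o(n)$. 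One must check that (i) changing $o(n)$ edges changes $U(G)$ by $o(1)$ in the \LP{} distance (a standard local-weak-convergence continuity estimate — changing one edge affects the rooted neighborhoods of only $O(1)$-growing-with-radius but ultimately a controlled number of vertices, and one truncates at a fixed depth since \LP{} on $\mGb_*$ only sees finite depth), and (ii) the entropy cost of the surgery, i.e.\ the ratio of the number of ways to do it, is $e^{o(n)}$ and the $\snorm{\cdot}_1\log n$ terms match to within $o(n)\log n$... \emph{this last point is the subtle one}: since $\snorm{\vmn}_1$ can be of order $n$, a discrepancy of $o(n)$ edges contributes $o(n)\log n$, which does \emph{not} vanish after dividing by $n$. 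So the surgery must be done more carefully, matching edge-mark counts \emph{exactly} by adding/removing a $o(n)$ number of edges whose endpoints are chosen among $o(n)$ specially reserved vertices, so that the combinatorial factor is genuinely $e^{o(n)}$; alternatively one invokes that, by Theorem~\ref{thm:badcases}, the only interesting case has $\vd=\vdeg(\mu)$ fixed by $\mu$, so the two sequences already agree to leading order and only the lower-order terms differ. I expect reconciling the $\log n$ bookkeeping to be the main obstacle in Part~1, and I would handle it by isolating the ``extra'' edges on a sublinear set of vertices so their placement contributes only $\exp(o(n))$.

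**Part~2, the hard equality.** For $\bchover_{\vd,Q}(\mu) = \bchunder_{\vd,Q}(\mu)$ I would rely on the propositions stated to be proved later in the document — this is exactly the kind of statement whose two inequalities ($\bchunder \le \bchover$ is trivial; $\bchover \le \bchunder$ is the content) are each furnished by a matching pair of propositions: a ``first-moment''/upper-bound proposition bounding $\limsup_n (\log|\mGn_{\vmn,\vun}(\mu,\epsilon)| - \snorm{\vmn}_1\log n)/n$ from above by some functional $\Sigma(\mu)$ (in the tree case, via the $J_h$ formula and taking $h\to\infty$), and a ``second-moment'' or explicit-construction proposition producing, for each $\epsilon>0$, at least $e^{(\Sigma(\mu)-o(1))n} n^{\snorm{\vmn}_1}$ graphs in $\mGn_{\vmn,\vun}(\mu,\epsilon)$, thereby lower-bounding $\liminf_n$ by the same $\Sigma(\mu)$. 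Given those two propositions, the proof of Part~2 is a three-line sandwich: $\bchover \le \limsup \le \Sigma(\mu) \le \liminf \le \bchunder \le \bchover$, forcing equality. Since the problem statement explicitly says I may assume anything stated earlier and that the supporting propositions are stated (or will be) in this section, I would in the write-up simply cite them by label and perform the sandwich.

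**The final bound $\bch_{\vd,Q}(\mu)\in[-\infty,\,s(\vd)+H(Q)]$.** This follows from the crude counting estimate~\eqref{eq:log-mGnmnun-Stirling}: since $\mGn_{\vmn,\vun}(\mu,\epsilon)\subseteq\mGn_{\vmn,\vun}$, we have $\log|\mGn_{\vmn,\vun}(\mu,\epsilon)| \le \log|\mGn_{\vmn,\vun}| = \snorm{\vmn}_1\log n + nH(Q) + n\,s(\vd) + o(n)$; subtracting $\snorm{\vmn}_1\log n$, dividing by $n$, and taking $\limsup_n$ then $\epsilon\downarrow 0$ gives $\bchover_{\vd,Q}(\mu)\le s(\vd)+H(Q)$, and by Part~2 the same bound holds for $\bch_{\vd,Q}(\mu)$. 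There is no lower bound beyond $-\infty$, which is consistent with Theorem~\ref{thm:badcases}. I would write Part~2 and this bound together, presenting the sandwich and then the one-line upper estimate; the only genuine work is Part~1's surgery with careful $\log n$ accounting, and the honest thing is to note there that after Theorem~\ref{thm:badcases} one may assume $\vd = \vdeg(\mu)$, which makes the leading-order terms of the two sequences automatically agree and confines all discrepancies to the $o(n)$ regime where the reserved-vertex trick applies cleanly.
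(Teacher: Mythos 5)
Your sandwich argument for Part~2 and the appeal to~\eqref{eq:log-mGnmnun-Stirling} for the final upper bound both match the paper. Where you deviate is Part~1, which you treat as an ``independent piece'' requiring a surgery argument to pass between two different adapted sequences — and you correctly flag the obstacle yourself: modifying $o(n)$ edges shifts $\log|\mGn_{\vmn,\vun}(\mu,\epsilon)|$ by $o(n)\log n$, which does not vanish after dividing by $n$, and the reserved-vertex trick does not automatically resolve this because the $\snorm{\vmn}_1\log n$ correction also shifts by $o(n)\log n$ and the two shifts need not cancel. The paper never performs this surgery. Instead, Part~1 falls out of the Part~2 sandwich for free: the quantity $J_\infty(\mu) := \lim_{h\to\infty}J_h(\mu_h)$ being sandwiched is a functional of $\mu$ alone, and Propositions~\ref{prop:lower-bound}, \ref{prop:upper-bound} and~\ref{prop:upper-bound-infty} (the last handling $\evwrt{\mu}{\deg_T(o)\log\deg_T(o)}=\infty$, a case your proposal does not separate out) bound $\bchunder_{\vd,Q}(\mu)\condmnun$ from below and $\bchover_{\vd,Q}(\mu)\condmnun$ from above by $J_\infty(\mu)$ for \emph{every} adapted pair of sequences. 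Each is thus pinned to the sequence-free value $J_\infty(\mu)$, and invariance is automatic. You are right that Theorem~\ref{thm:badcases} reduces to the case $\vd=\vdeg(\mu)$, $Q=\vvtype(\mu)$, $\mu\in\mP_u(\mTb_*)$, but after that restriction the paper compares each adapted sequence against $J_\infty(\mu)$, never against another sequence. If you insist on the surgery route you would need exact edge-count matching rather than an $o(n)$ one, or an explicit cancellation of the $\log n$ terms, neither of which the paper requires.
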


% bb

%aa
\iffalse
\begin{rem}
  In view of part~\ref{thm:BC--infty} of  Theorem~\ref{thm:bch-properties}, the
  entropy is $-\infty$ unless $\vd = \vdeg(\mu)$, $Q = \vvtype(\mu)$, and $\mu$
  is a unimodular measure on $\mTb_*$. Therefore, for $\mu \in \mP_u(\mTb_*)$
  with $\deg(\mu) \in (0,\infty)$, we simplify the notation and use $\bch(\mu)$
  for  $\bch_{\vdeg(\mu), \vvtype(\mu)}(\mu)$. Likewise, we may write
  $\bchunder(\mu)$ and $\bchover(\mu)$ for $\bchunder_{\vdeg(\mu),
    \vvtype(\mu)}(\mu)$ and $\bchover_{\vdeg(\mu),
    \vvtype(\mu)}(\mu)$, respectively. 
\end{rem}
\fi

From Theorem~\ref{thm:badcases} we conclude that unless 
$\vd = \vdeg(\mu)$, $Q = \vvtype(\mu)$, and $\mu$
  is a unimodular measure on $\mTb_*$, we have 
  $\bch_{\vd, Q}(\mu) = -\infty$. 
  In view of this, for $\mu \in \mP(\mGb_*)$
  with $\deg(\mu) < \infty$, we write $\bch(\mu)$
  for  $\bch_{\vdeg(\mu), \vvtype(\mu)}(\mu)$. Likewise, we may write
  $\bchunder(\mu)$ and $\bchover(\mu)$ for $\bchunder_{\vdeg(\mu),
    \vvtype(\mu)}(\mu)$ and $\bchover_{\vdeg(\mu),
    \vvtype(\mu)}(\mu)$, respectively. 
    Note that, unless $\mu \in \mP_u(\mTb_*)$, 
    we have $\bchover(\mu) = \bchunder(\mu) = \bch(\mu) = -\infty$.
    
We are now in a position to define the marked BC entropy.
%A full understanding of this definition depends on 
%properties to be proved in Theorems~\ref{thm:badcases} and 
%\ref{thm:bch-properties} in the next section.

\begin{definition}
  \label{def:BC-entropy-new}
  %{\em [BC entropy] }
  
  For $\mu \in \mP(\mGb_*)$
  with $0 < \deg(\mu) < \infty$, the marked BC entropy of $\mu$ is defined to be $\bch(\mu)$.
 
%Assume $\mu \in \mP(\mGb_*)$ is given, with $\deg(\mu) < \infty$. Fix an average degree vector $\vd$ and a probability distribution $Q = (q_\theta: \theta \in \vermark)$.
%Part \ref{thm:BC-well} of
%Theorem~\ref{thm:bch-properties} shows that
%$\bchover_{\vd, Q}(\mu) = \bchunder_{\vd, Q}(\mu)$.
%We may therefore unambiguously write $\bch_{\vd, Q}(\mu)$ 
%for this common value,
%and call it the {\em marked BC entropy} of 
%$\mu \in \mP(\mGb_*)$ for the 
%average degree vector $\vd$ and a probability distribution $Q = (q_\theta: \theta \in \vermark)$.

\end{definition}
  
Next, we give a recipe to compute the marked BC entropy for the marked unimodular
Galton--Watson trees defined in
Section~\ref{sec:markovian-ugwt}. 
%We also show that the entropy of a certain
%class of measures $\mu \in \mP_u(\mTb_*)$ can be approximated with the marked
%unimodular Galton--Watson trees with neighborhood distribution given by the
%truncation of $\mu$ up to any depth, and the approximation becomes
%asymptotically exact as we send this depth to infinity. 
We also characterize the marked BC entropy of any
$\mu \in \mP_u(\mTb_*)$ 
in terms of the marked BC entropies of
the marked
unimodular Galton--Watson trees with neighborhood distribution given by the
truncation of $\mu$ up to any depth.
%This will be shown in
%Theorem~\ref{thm:Jh} below. But before that, we need some notation. 

\glsadd{trm:strongadmissible}
% to change: adding the notion ''strongly admissible''
% {
% For integer $h \geq 0$, define $\mP_h$ to be the set of probability measures on
% $P \in \mP(\mTb_*^h)$  such that $P$ is admissible, $H(P) < \infty$ and
% $\evwrt{P}{\deg_T(o) \log \deg_T(o)} < \infty$. Later, in
% Corollary~\ref{cor:deg-log-deg-Ph}, we will show that $P \in \mP_h$ is
% equivalent to $P$ being admissible and  $\evwrt{P}{\deg_T(o) \log \deg_T(o)} <
% \infty$. \glsadd{not:mPh}
% For admissible $P \in \mP(\mTb_*^h)$ such that $d:= \evwrt{P}{\deg_T(o)} > 0$, define $\pi_P$ to be the probability
% distribution on $(\edgemark \times \mTb_*^{h-1}) \times (\edgemark \times
% \mTb_*^{h-1})$ defined as
% }

  \begin{thm}
  \label{thm:Jh}
  Let $\mu \in \mP_u(\mTb_*)$ be a unimodular probability measure with 
  $0 < \deg(\mu) < \infty$. %Moreover, assume that $\vmn$ and $\vun$ are any
%  sequences adapted to $(\vdeg(\mu), \vvtype(\mu))$.
  Then, 
  \begin{enumerate}
  \item \label{thm:Jh--infty} If $\evwrt{\mu}{\deg_T(o) \log \deg_T(o)} = \infty$, then $\bchunder(\mu)
    = \bchover(\mu) = \bch(\mu) = -\infty$.
  \item  \label{thm:Jh-lim} If $\evwrt{\mu}{\deg_T(o) \log \deg_T(o)} < \infty$, then, for each $h \ge 1$, the probability measure $\mu_h$ is admissible, and $H(\mu_h) < \infty$. Furthermore, the sequence $(J_h(\mu_h): h \geq 1)$ is
    nonincreasing, and
      \begin{equation*}
    \bchunder(\mu) = \bchover(\mu) = \bch(\mu) = \lim_{h \rightarrow \infty} J_h(\mu_h). 
  \end{equation*}
  \end{enumerate}
\end{thm}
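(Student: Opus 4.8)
The plan is to reduce the theorem to two propositions that will be stated and proved afterwards. The first, an \emph{upper bound}, asserts that for every unimodular $\mu \in \mP_u(\mTb_*)$ with $0 < \deg(\mu) < \infty$ and $\evwrt{\mu}{\deg_T(o)\log\deg_T(o)} < \infty$ and every $h \ge 1$ one has $\bchover(\mu) \le J_h(\mu_h)$; this will be proved by a first moment argument bounding the number of marked graphs on $[n]$ whose depth-$h$ neighbourhood profile is forced (by $\dlp(U(G),\mu)<\epsilon$) to be close to $\mu_h$, via Stirling's formula as in~\eqref{eq:log-mGnmnun-Stirling} and a configuration-model estimate. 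The second, a \emph{lower bound for Galton--Watson trees}, asserts that for every strongly admissible $P \in \mP_h$ with $\evwrt{P}{\deg_T(o)} > 0$ one has $\bchunder(\ugwt_h(P)) \ge J_h(P)$; its proof will produce, for each small $\epsilon$, at least $\exp(\snorm{\vmn}_1\log n + n(J_h(P)-o_\epsilon(1)))$ marked graphs whose local profile is $\epsilon$-close to $\ugwt_h(P)$, by a second moment computation over the coloured configuration model with colours encoding edge types up to depth $h-1$, the point being that the Markovian structure of $\ugwt_h(P)$ makes depth-$h$ closeness propagate to closeness at all depths. I expect this second proposition to be the main obstacle; the first is comparatively routine.

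Granting these, part~1 is handled as follows. Suppose $\evwrt{\mu}{\deg_T(o)\log\deg_T(o)} = \infty$. Since $\bchunder(\mu)\le\bchover(\mu)$ and $\bch(\mu)=\bchover(\mu)=\bchunder(\mu)$ by Theorem~\ref{thm:bch-properties}, it suffices to show $\bchover(\mu)=-\infty$. Fix sequences $(\vmn,\vun)$ adapted to $(\vdeg(\mu),\vvtype(\mu))$, so $\snorm{\vmn}_1$ grows linearly in $n$. For fixed $M$ and $G\in\mGn_{\vmn,\vun}(\mu,\epsilon)$, continuity and boundedness of $[T,o]\mapsto\min(\deg_T(o)\log\deg_T(o),M)$ give $\tfrac1n\sum_{v}\min(\deg_G(v)\log\deg_G(v),M)\ge\evwrt{\mu}{\min(\deg_T(o)\log\deg_T(o),M)}-c_M(\epsilon)$ with $c_M(\epsilon)\downarrow0$ as $\epsilon\downarrow0$. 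Bounding $|\mGn_{\vmn,\vun}(\mu,\epsilon)|$ by the number of choices of vertex marks, degree sequences and edge marks (each $e^{O(n)}$) times a configuration-model bound $e^{\snorm{\vmn}_1\log n-\sum_v\log(\deg_G(v)!)+O(n)}$ on the number of simple graphs with a given degree sequence, and using $\log k!\ge k\log k-k$, one obtains $\tfrac1n(\log|\mGn_{\vmn,\vun}(\mu,\epsilon)|-\snorm{\vmn}_1\log n)\le -\evwrt{\mu}{\min(\deg_T(o)\log\deg_T(o),M)}+c_M(\epsilon)+C$ for a constant $C$ independent of $\epsilon,M$. Taking $\limsup_n$, then $\epsilon\downarrow0$, then $M\to\infty$ (monotone convergence) gives $\bchover(\mu)=-\infty$.

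For part~2, assume $\evwrt{\mu}{\deg_T(o)\log\deg_T(o)}<\infty$. Corollary~\ref{cor:deg-log-deg-Ph}(1) gives $\mu_h\in\mP_h$ for all $h\ge1$, so $\mu_h$ is admissible, $H(\mu_h)<\infty$, and, since $\evwrt{\mu_h}{\deg_T(o)}=\deg(\mu)>0$, $J_h(\mu_h)$ is well defined in $[-\infty,\infty)$. Next comes monotonicity: fix $1\le h'\le h$ and put $\nu:=\ugwt_h(\mu_h)$, which is unimodular on $\mTb_*$ by Lemma~\ref{lem:ugwt-P-is-unimodular} and satisfies $\nu_k=\mu_k$ for all $k\le h$, hence $\deg(\nu)=\deg(\mu)\in(0,\infty)$, $\evwrt{\nu}{\deg_T(o)\log\deg_T(o)}=\evwrt{\mu}{\deg_T(o)\log\deg_T(o)}<\infty$, and $\nu_k\in\mP_k$ for all $k$ by Corollary~\ref{cor:deg-log-deg-Ph}(1). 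The upper bound proposition at depth $h$ gives $\bchover(\nu)\le J_h(\nu_h)=J_h(\mu_h)$, while the lower bound proposition with $P=\mu_h$ (so $\ugwt_h(P)=\nu$) together with Theorem~\ref{thm:bch-properties} gives $\bch(\nu)=\bchover(\nu)=\bchunder(\nu)\ge J_h(\mu_h)$; therefore $\bch(\nu)=J_h(\mu_h)$. Applying the upper bound proposition to $\nu$ at depth $h'$ and using $\nu_{h'}=(\mu_h)_{h'}=\mu_{h'}$ yields $J_h(\mu_h)=\bchover(\nu)\le J_{h'}(\nu_{h'})=J_{h'}(\mu_{h'})$, so $(J_h(\mu_h))_{h\ge1}$ is nonincreasing and $L:=\lim_hJ_h(\mu_h)=\inf_hJ_h(\mu_h)$ exists in $[-\infty,\infty)$.

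It remains to close the two bounds. The upper bound proposition gives $\bchover(\mu)\le J_h(\mu_h)$ for every $h$, hence $\bchover(\mu)\le L$. Conversely, $\ugwt_h(\mu_h)\Rightarrow\mu$ as $h\to\infty$ (since $(\ugwt_h(\mu_h))_k=\mu_k$ for all $k\le h$), so Lemma~\ref{lem:BC-ent-upper-semicontinouous} applied to the sequence $\nu^{(h)}:=\ugwt_h(\mu_h)$ (legitimate since $\vdeg(\nu^{(h)})=\vdeg(\mu)$ and $\vvtype(\nu^{(h)})=\vvtype(\mu)$, these being depth-one functionals with $\nu^{(h)}_1=\mu_1$), combined with $\bchunder(\nu^{(h)})=\bch(\nu^{(h)})=J_h(\mu_h)$ from the previous step, gives $\bchunder(\mu)\ge\limsup_hJ_h(\mu_h)=L$. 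Hence $L\le\bchunder(\mu)\le\bchover(\mu)\le L$, so $\bchunder(\mu)=\bchover(\mu)=L$, and by Theorem~\ref{thm:bch-properties} their common value $\bch(\mu)$ equals $\lim_hJ_h(\mu_h)$, which completes the proof of part~2 and of Theorem~\ref{thm:Jh}.
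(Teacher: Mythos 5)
Your proof of part~2 follows essentially the same route as the paper: reduce to the upper-bound proposition (the paper's Proposition~\ref{prop:upper-bound}) and the lower-bound proposition for $\ugwt_h(P)$ (Proposition~\ref{prop:lower-bound}), establish $J_{h+1}(\mu_{h+1})\le J_h(\mu_h)$ by applying both to $\nu^{(h)}=\ugwt_h(\mu_h)$ and using $(\nu^{(h)})_{h'}=\mu_{h'}$, and then close the gap with the upper bound (giving $\bchover(\mu)\le L$) and the upper semicontinuity Lemma~\ref{lem:BC-ent-upper-semicontinouous} applied to $\nu^{(h)}\Rightarrow\mu$ (giving $\bchunder(\mu)\ge L$). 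The only minor quibbles are cosmetic: the invocations of Theorem~\ref{thm:bch-properties} to assert $\bchunder=\bchover=\bch$ mid-argument are circular-looking and unnecessary (the trivial inequality $\bchunder\le\bchover$ pinches everything already), and the step $J_h(\mu_h)\ge\bchover(\nu^{(h)})\ge\bchunder(\nu^{(h)})\ge J_h(\mu_h)$ already delivers $\bchunder(\nu^{(h)})=\bchover(\nu^{(h)})=J_h(\mu_h)$ without any appeal to the theorem you are still in the process of proving. Also, a minor difference in scope: you state the upper-bound proposition under $\evwrt{\mu}{\deg_T(o)\log\deg_T(o)}<\infty$, whereas the paper's Proposition~\ref{prop:upper-bound} only needs $H(\mu_h)<\infty$; your stronger hypothesis is harmless in the flow of part~2 but forces you to handle part~1 separately.

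For part~1 you take a genuinely different route. The paper derives $\bchover(\mu)=-\infty$ by applying the depth-$1$ upper bound (its Proposition~\ref{prop:upper-bound} with $h=1$, valid since $H(\mu_1)<\infty$ always holds when $\deg(\mu)<\infty$), reducing the claim to showing that some $\evwrt{\mu_1}{\log E_1(t,t')!}=\infty$, which is in turn shown to follow from $\evwrt{\mu}{\deg_T(o)\log\deg_T(o)}=\infty$ by a convexity/Jensen argument; this is the content of Proposition~\ref{prop:upper-bound-infty}. You instead give a self-contained first-moment count: truncate $\deg_T(o)\log\deg_T(o)$ at level $M$, use that $[T,o]\mapsto\min(\deg_T(o)\log\deg_T(o),M)$ is locally constant hence bounded and continuous to transfer the constraint $\dlp(U(G),\mu)<\epsilon$ into a lower bound on $\tfrac1n\sum_v\min(\deg_G(v)\log\deg_G(v),M)$, then bound $|\mGn_{\vmn,\vun}(\mu,\epsilon)|$ by (number of degree sequences)$\times$(mark choices)$\times\max_{\vd}(2\snorm{\vmn}_1-1)!!/\prod_v d_v!$ and use $\log k!\ge k\log k-k$. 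This works, and is arguably more elementary since it avoids the coloured machinery of $J_1$ and $e_P(t,t')$ entirely; the paper's route has the advantage of reusing the already-proved Proposition~\ref{prop:upper-bound} rather than re-deriving a cruder variant of the same configuration-model estimate.
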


%bb

Now, we proceed to state the propositions needed to prove Theorems~\ref{thm:bch-properties} and
\ref{thm:Jh} and explain how they  
prove the two theorems. We give the proofs of these propositions in Section~\ref{sec:ent-properties}.
Our proof techniques are similar to those given in
\cite{bordenave2015large}.

In view of Propositions~\ref{prop:BC-not-unim_dQnotmatch--infty}
and~\ref{prop:BC-no-tree--infty}, in order to address
parts~\ref{thm:BC-invariant} and \ref{thm:BC-well} of
Theorem~\ref{thm:bch-properties}, we may assume that $\mu \in \mP_u(\mTb_*)$, $\vd = \vdeg(\mu)$, and $Q = \vvtype(\mu)$, since otherwise
$\bchunder_{\vd, Q}(\mu) = \bchover_{\vd, Q}(\mu) = -\infty$. 
%Hence, we may simplify
%the notation and write $\bchunder(\mu)\condmnun$ and $\bchover(\mu)\condmnun$ for
%$\bchunder_{\vdeg(\mu), \vvtype(\mu)}(\mu)\condmnun$ and $\bchover_{\vdeg(\mu), \vvtype(\mu)}(\mu)\condmnun$, respectively{\color{pedit}, where $\vmn$ and
%$\vun$ are sequences such that $(\vmn,\vun)$ is adapted to $(\vdeg(\mu), \vvtype(\mu))$.} 
%Next, we show that  $\bchunder(\mu)\condmnun =
%\bchover(\mu)\condmnun$ and the common value is invariant under the choice of $\vmn$ and
%$\vun$. 
To prove part~\ref{thm:BC-invariant} of
Theorem~\ref{thm:bch-properties},
the strategy is to find a lower bound for $\bchunder_{\vdeg(\mu), \vvtype(\mu)}(\mu)\condmnun$ and
 an upper bound for $\bchover_{\vdeg(\mu), \vvtype(\mu)}(\mu)\condmnun$, and then to show that they match. 
 %However, it
%turns out that it is more convenient to first consider the case that $\mu_h \in
%\mP_h$ {\color{pedit} is strongly admissible} for all $h \in \nats$. Recall that $\mP_h$ is the set of probability
%distributions $P \in \mP(\mTb_*^h)$ where $P$ is admissible, $H(P) < \infty$ and
%$\evwrt{P}{\deg_T(o) \log \deg_T(o)} < \infty$. 
We first prove a lower bound for $\bchunder_{\vdeg(\mu), \vvtype(\mu)}(\mu)\condmnun$ when $\mu$ is of the
form $\ugwt_h(P)$ for $P \in \mP_h$ 
%{\color{pedit} being strongly admissible}.
being strongly admissible.

%aa
\begin{prop}
  \label{prop:lower-bound}
  Let $h \ge 1$. Let $P \in \mP_h$, i.e.
  $P$ is strongly admissible. Assume that with  $\mu :=
  \ugwt_h(P)$ we have  $0 < \deg(\mu) < \infty$. 
  Then, if $\vmn$ and $\vun$ are any sequences such that $(\vmn,\vun)$ is adapted to $(\vdeg(\mu),
  \vvtype(\mu))$, we have 
  \begin{equation*}
    \bchunder_{\vdeg(\mu), \vvtype(\mu)}(\mu)\condmnun \geq J_h(P).
  \end{equation*}
%  Here, $\bchunder(\mu)$ is obtained by using the sequences $\vmn$ and $\vun$. 
\end{prop}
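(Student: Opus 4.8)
## Proof Proposal for Proposition~\ref{prop:lower-bound}

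The plan is to construct, for each large $n$, a large collection of marked graphs on $[n]$ whose local weak limit is close to $\mu = \ugwt_h(P)$, and to count them from below. The natural strategy, following \cite{bordenave2015large}, is to pass through the colored configuration model. First I would fix a finite truncation: since $H(P) < \infty$ and $\evwrt{P}{\deg_T(o)\log\deg_T(o)} < \infty$, for any $\delta > 0$ I can choose a finite set $\Delta \subset \mTb_*^h$ carrying most of the mass of $P$, with all degrees bounded by some $\delta$, and such that the restricted/renormalized distribution $P^{(\delta)}$ supported on $\Delta$ approximates $P$ well in the sense that $H(P^{(\delta)}) \to H(P)$, $e_{P^{(\delta)}}(t,t') \to e_P(t,t')$, and $J_h$ evaluated on a suitable extension converges to $J_h(P)$. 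This reduces matters to the case where $P$ is supported on finitely many depth-$h$ trees with bounded degree, where all the entropy quantities are finite sums.

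Next I would set up the colored configuration model. Let $\mF \subset \edgemark \times \mTb_*^{h-1}$ be the finite set of edge-types $t = T[o,v]_{h-1}$ arising from $[T,o] \in \Delta$ and $v \sim_T o$, and let $\mC = \mF \times \mF$. For each $n$, draw an i.i.d.\ sequence of root-neighborhoods from $P^{(\delta)}$ (rejecting/adjusting so the resulting colored degree sequence $\vDn$ is feasible, i.e.\ lies in $\mD_n$, and so that the vertex-mark counts and edge-mark counts match the prescribed $\vun$ and $\vmn$ — this is where the adaptedness hypothesis is used), then run the configuration model $\CM(\vDn)$ to pair up colored half-edges. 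Condition on the event $\mG(\vDn, h)$ that the colorblind graph has no short cycles; by a standard second-moment / Poisson-cycle argument this event has probability bounded below by a positive constant (depending on $h,\delta$ but not $n$). On this event, each half-edge pairing yields via the $\MCB$ operation an honest simple marked graph on $[n]$, and — because there are no short cycles — its depth-$h$ neighborhood statistics are exactly governed by the offspring mechanism of $\ugwt_h(P^{(\delta)})$, so by the law of large numbers $U(\Gn)$ concentrates near $\ugwt_h(P^{(\delta)})$, hence near $\mu$ once $\delta$ is small, i.e.\ $\Gn \in \mGn_{\vmn,\vun}(\mu,\epsilon)$ with high probability.

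The counting step is then: $|\mGn_{\vmn,\vun}(\mu,\epsilon)| \geq$ (number of configuration-model outcomes landing in the good set) divided by the maximum multiplicity with which a given marked graph is produced. The number of pairings is $\approx \prod_{c} (n e_{P^{(\delta)}}(c))!^{1/2}$-type Stirling expressions, the number of degree sequences contributes $\approx e^{n H(P^{(\delta)})}/n(\vDn,\vbetan)$, and the multiplicity of each marked graph is controlled by the automorphism/relabeling factor $\prod_v \prod_{t,t'} D^F_{t,t'}(v)!$ together with $n(\vDn,\vbetan)$. Assembling these via \eqref{eq:log-mGnmnun-Stirling} and the formula \eqref{eq:Jh-def}, the $\snorm{\vmn}_1\log n$ terms cancel and the surviving $O(n)$ term is exactly $n J_h(P^{(\delta)}) + o(n)$; taking $\liminf$, then $\epsilon \downarrow 0$, then $\delta \downarrow 0$ gives $\bchunder_{\vdeg(\mu),\vvtype(\mu)}(\mu)\condmnun \geq J_h(P)$.

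I expect the main obstacle to be the bookkeeping in the counting step — precisely tracking how many distinct labeled configuration-model pairings map to the same isomorphism-decorated marked graph on $[n]$, and showing that the combinatorial factors $n(\vDn,\vbetan)$, the half-edge pairing counts, and the per-vertex factorials $E_h(t,t')!$ combine to give exactly the four terms $-s(d) + H(P) - \tfrac{d}{2}H(\pi_P) - \sum \evwrt{P}{\log E_h(t,t')!}$ of $J_h(P)$, with all error terms genuinely $o(n)$ uniformly enough to survive the three nested limits. A secondary technical point is verifying that a feasible adapted degree sequence with the right mark counts can always be produced from $P^{(\delta)}$ (a small perturbation / rounding argument), and the lower bound on $\pr{\mG(\vDn,h)}$, which is routine but must be stated carefully.
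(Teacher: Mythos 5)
Your outline of the finite-support case is essentially the paper's Case~1: pass to the colored configuration model via $\mF$, $\mC$, $\vDn$ from $\Gamma_n$ with $U(\Gamma_n)_h\Rightarrow P$, use a positive-probability short-cycle-free event, the $\MCB$ map, and Stirling to extract the four terms of $J_h(P)$. Where you lose the thread is the passage to general $P$, and the gap is not bookkeeping.

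First, a plain restriction-and-renormalization $P^{(\delta)}$ of $P$ is generally \emph{not} admissible: nothing forces $e_{P^{(\delta)}}(t,t') = e_{P^{(\delta)}}(t',t)$ once mass is deleted, and without admissibility $\ugwt_h(P^{(\delta)})$ is not even defined and the configuration model cannot be set up (the colored degree totals $S_c$ will not balance with $S_{\bar c}$). The paper avoids this by truncating the unimodular measure $\mu$ itself — deleting all edges at vertices of degree $>k$ and taking the component of the root — which keeps $\mu^{(k)}$ unimodular, hence $P_k = (\mu^{(k)})_h$ automatically admissible by Lemma~\ref{lem:unimodular-is-admissible}.

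Second, and more importantly, after truncation you have $\vdeg(\mu^{(k)}) \neq \vdeg(\mu)$, so the given sequences $(\vmn,\vun)$ are adapted to $(\vdeg(\mu),\vvtype(\mu))$ but \emph{not} to $(\vdeg(\ugwt_h(P_k)), \vvtype(\ugwt_h(P_k)))$, and you cannot apply the finite-support lower bound. You mention adaptedness once, as a way to fix the marginal count vectors of a single $\Gn$, but the deeper problem is at the level of the measure: the paper handles it by explicitly mixing $P_k$ with regular-tree measures $U_{x,x'}$ to form $\tP_k$ whose degree/root-mark vectors match $\mu$ exactly, and only then invoking Lemma~\ref{lem:BC-ent-upper-semicontinouous} with the \emph{same} sequences $(\vmn,\vun)$.

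Third, the claim that $J_h(P^{(\delta)})\to J_h(P)$ is where most of the paper's Case~2 effort actually goes, and the proof does not establish convergence — only the one-sided bound $\liminf_k J_h(\tP_k)\geq J_h(P)$, built from lower semicontinuity of Shannon entropy, a dominated-convergence argument for $\evwrt{P_k}{\log E_h!}$, and a delicate $\limsup H(\pi_{P_k})\leq H(\pi_P)$ argument that pushes through Lemma~\ref{lem:PPh--PtildePh+1} (so that $\mu_{h+1}\in\mP_{h+1}$) and Lemma~\ref{lem:pllogl--pllogp}. Your proposal does not acknowledge that this step is conditional on $\evwrt{P}{\deg_T(o)\log\deg_T(o)}<\infty$ and genuinely requires the auxiliary lemmas; presenting it as ``$H(P^{(\delta)})\to H(P)$, etc.'' understates both the content and the direction of the needed inequalities.
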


% bb

%{\color{pedit} Note that, by definition, $\deg(\mu)$ is the expected degree at
 % the root in $\mu$, which is the same as the expected degree at the root in
 % $P$.}
The proof of Proposition~\ref{prop:lower-bound} is given in
Section~\ref{sec:lowerbound}. 
Now, for a unimodular probability measure $\mu \in \mP_u(\mTb_*)$ such that $0 < \deg(\mu) < \infty$ and $\evwrt{\mu}{\deg_T(o)
  \log \deg_T(o)} < \infty$,
  %Now, we validate part~\ref{thm:Jh-lim} of
%Theorem~\ref{thm:Jh} for such $\mu$. Note that
Corollary~\ref{cor:deg-log-deg-Ph} implies that,  %for all $ h \in \nats$, 
for all $h \ge 1$,
$\mu_h$ 
is strongly admissible, i.e.\ $\mu_h \in \mP_h$. In particular, $H(\mu_h) < \infty$ and $J_h(\mu_h)$ is well defined.
With this observation in mind, we
next give, for each $h \ge 1$, an upper bound for $\bchover_{\vdeg(\mu), \vvtype(\mu)}(\mu)\condmnun$, for 
$\mu \in \mP_u(\mTb_*)$
such that $0 < \deg(\mu) < \infty$ and 
$H(\mu_h) < \infty$.

%aa
\begin{prop}
  \label{prop:upper-bound}
  Let $h \ge 1$. Let $\mu \in
  \mP_u(\mTb_*)$ be a unimodular probability measure, with $0 < \deg(\mu) < \infty$ and $H(\mu_h) < \infty$. Then, if $\vmn$ and
  $\vun$ are sequences such that $(\vmn,\vun)$ is adapted to $(\vdeg(\mu), \vvtype(\mu))$, we have
  \begin{equation}
    \label{eq:upper-bound-Ph-statement}
    \bchover_{\vdeg(\mu), \vvtype(\mu)}(\mu)\condmnun \leq J_h(\mu_h).
  \end{equation}
%    Here, $\bchover(\mu)$ is obtained by using the sequences $\vmn$ and $\vun$. 
\end{prop}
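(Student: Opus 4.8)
This is an upper bound on an exponential growth rate of a cardinality, so the natural approach is a direct first--moment estimate, following the upper bound of \cite{bordenave2015large}. Write $P:=\mu_h\in\mP(\mTb_*^h)$; by Lemma~\ref{lem:unimodular-is-admissible} $P$ is admissible, by hypothesis $H(P)<\infty$, and $d:=\deg(\mu)=\evwrt{P}{\deg_T(o)}\in(0,\infty)$, so $J_h(\mu_h)\in[-\infty,\infty)$ is well defined. Fixing $\epsilon>0$, the plan is to prove
\[
\frac1n\Big(\log|\mGn_{\vmn,\vun}(\mu,\epsilon)|-\snorm{\vmn}_1\log n\Big)\ \le\ J_h(\mu_h)+\omega(\epsilon)+o(1),
\]
where $\omega(\epsilon)\downarrow0$ as $\epsilon\downarrow0$ and the inequality is read as $-\infty$ when $J_h(\mu_h)=-\infty$; taking $\limsup_n$ and then $\epsilon\downarrow0$ gives the proposition.

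\textbf{Encoding a graph by local types and a stub matching.} Given $G\in\mGn_{\vmn,\vun}(\mu,\epsilon)$, attach to each $v\in[n]$ its local type $T_v:=[G,v]_h\in\mGb_*^h$, and to each $w\sim_G v$ a stub at $v$ carrying the label $\etype_G^h(v,w)=(t,t')\in(\edgemark\times\mTb_*^{h-1})^2$; this label is a function of $T_v$ and of $w$. Let $\sigma$ be the perfect matching joining the $v$--to--$w$ stub with the $w$--to--$v$ stub; it joins a stub labelled $(t,t')$ only with one labelled $(t',t)$, and $G$ is recovered from $((T_v)_{v\in[n]},\sigma)$ --- each matched pair yields an edge $\{v,w\}$ with marks $t[m]$ towards $v$ and $t'[m]$ towards $w$, and vertex marks are read off the $T_v$. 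Since the depth--$h$ truncation map $\mGb_*\to\mGb_*^h$ is $1$--Lipschitz, $\dlp(U(G),\mu)<\epsilon$ forces $\dlp(\frac1n\sum_v\delta_{T_v},P)<\epsilon$, and $\vm_G=\vmn$ forces $\sum_v\deg_{T_v}(v)=2\snorm{\vmn}_1$. Hence $|\mGn_{\vmn,\vun}(\mu,\epsilon)|$ is at most the number of pairs $((T_v)_v,\sigma)$ obeying these two constraints, divided by $\min_{G\in\mGn_{\vmn,\vun}(\mu,\epsilon)}$ of the number of pairs mapping to $G$.

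\textbf{Counting.} First, the number of eligible local--type vectors $(T_v)_v$ is $\exp(n(H(P)+\omega(\epsilon))+o(n))$; this uses $H(P)<\infty$, and the countably infinite alphabet $\mTb_*^h$ is handled by truncating to a finite $\mF\subset\edgemark\times\mTb_*^{h-1}$ of $P$--mass at least $1-\omega(\epsilon)$. Second, for a fixed such $(T_v)_v$, set $N_{t,t'}:=\sum_v E_h(t,t')(T_v)$; admissibility of $P$, closeness to $P$, Fatou's lemma, and $\sum_v\deg_{T_v}(v)=2\snorm{\vmn}_1\sim nd$ together force $N_{t',t}=N_{t,t'}=n\,e_P(t,t')(1+o(1))$ up to $\omega(\epsilon)$ errors; the number of valid matchings is $\prod_{\{t,t'\}:t\neq t'}N_{t,t'}!\;\prod_t(N_{t,t}-1)!!$, whose logarithm, by Stirling and using $e_P(t,t')=d\,\pi_P(t,t')$, $\sum_{t,t'}e_P(t,t')=d$ and $\snorm{\vmn}_1\sim nd/2$, equals $\snorm{\vmn}_1\log n+n\big(-s(d)-\tfrac d2 H(\pi_P)\big)+o(n)+n\,\omega(\epsilon)$. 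Third, each $G$ is the image of exactly $\prod_v\prod_{t,t'}E_h(t,t')(T_v)!$ pairs (permute, at each vertex, the edges of a given label among the stubs of that label). Combining the three,
\[
\frac1n\Big(\log|\mGn_{\vmn,\vun}(\mu,\epsilon)|-\snorm{\vmn}_1\log n\Big)\le H(P)-s(d)-\tfrac d2 H(\pi_P)-\frac1n\sum_v\log\prod_{t,t'}E_h(t,t')(T_v)!+\omega(\epsilon)+o(1).
\]

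\textbf{Passing to the limit, and the main obstacle.} It remains to lower bound $\frac1n\sum_v\log\prod_{t,t'}E_h(t,t')(T_v)!$ uniformly over $G\in\mGn_{\vmn,\vun}(\mu,\epsilon)$. Because $\mTb_*^h$ is a discrete metric space (distinct elements are at $d_*$--distance $\ge1/h$), the truncation $g_M\colon\mGb_*^h\to[0,M]$, $g_M(x):=\min\{\log\prod_{t,t'}E_h(t,t')(x)!,\,M\}$, is continuous, so $\frac1n\sum_v\log\prod_{t,t'}E_h(t,t')(T_v)!\ge\frac1n\sum_v g_M(T_v)=\evwrt{U(G)_h}{g_M}\ge\evwrt{P}{g_M}-\omega_M(\epsilon)$. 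Substituting, letting $\epsilon\downarrow0$ and then $M\to\infty$ (monotone convergence, the integrand being nonnegative), we get $\bchover_{\vdeg(\mu),\vvtype(\mu)}(\mu)\condmnun\le H(P)-s(d)-\tfrac d2 H(\pi_P)-\sum_{t,t'}\evwrt{P}{\log E_h(t,t')!}=J_h(\mu_h)$, and when $J_h(\mu_h)=-\infty$ the same truncations drive the left side to $-\infty$. The combinatorial skeleton above is routine; all the real work lies in the $\omega$'s, i.e.\ in making the three counts uniform over the countably infinite alphabet $\mTb_*^h$ with its unbounded functionals ($\deg$, $E_h$, $\log\prod_{t,t'}E_h(t,t')!$) given only weak convergence. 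The two most delicate points are the asymptotics $\frac1nN_{t,t'}\to e_P(t,t')$, which is not implied by weak convergence alone --- it uses adaptedness of $(\vmn,\vun)$ to pin $\frac1n\sum_v\deg_G(v)\to d$ and then a Fatou / lower--semicontinuity argument to upgrade this to convergence of each coordinate --- and the final lower bound on $\frac1n\sum_v\log\prod_{t,t'}E_h(t,t')(T_v)!$, where unboundedness forces the truncations $g_M$ and a limit taken only after $\epsilon\downarrow0$. These are precisely the steps that consume the hypotheses $H(\mu_h)<\infty$, unimodularity of $\mu$ (giving admissibility of $\mu_h$ via Lemma~\ref{lem:unimodular-is-admissible}), and adaptedness of $(\vmn,\vun)$.
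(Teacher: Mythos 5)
Your decomposition is the same as the paper's at a high level --- encode $G$ by its vector of depth--$h$ local types $(T_v)_v$ plus a stub matching $\sigma$, count type vectors, count matchings, divide by the stub-permutation overcount, and pass to the limit. The accounting that leads to $H(P)$, $-s(d)-\tfrac d2 H(\pi_P)$ and $-\sum_{t,t'}\evwrt{P}{\log E_h(t,t')!}$ is also the same.

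However, there is a genuine gap in the Stirling step for the matchings count. You write
\[
\log\Bigl(\prod_{\{t,t'\}:t\neq t'}N_{t,t'}!\ \prod_t(N_{t,t}-1)!!\Bigr)=\snorm{\vmn}_1\log n+n\bigl(-s(d)-\tfrac d2 H(\pi_P)\bigr)+o(n)+n\,\omega(\epsilon),
\]
where the product runs over \emph{all} pairs $(t,t')$ in $(\edgemark\times\mTb_*^{h-1})^2$. The cumulative Stirling correction is $\sum_c\bigl(\log N_c!-(N_c\log N_c-N_c)\bigr)$, which is $\Theta\bigl(\sum_{c:N_c\ge1}(\log N_c+1)\bigr)$. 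The number of active colors $\{c:N_c\geq1\}$ is bounded only by the number of edges, i.e.\ it can be $\Theta(n)$ (every edge can carry a distinct type, and LP/TV--closeness to $P$ does not preclude this because degrees of the exceptional vertices are not controlled). Thus the error can be $\Theta(n)$ or worse, not $o(n)$, so the line above is false as stated, and the final inequality is off by a nonvanishing constant. (Note that the error goes in the wrong direction: for $N\geq 1$, $\log N!\geq N\log N-N$.)

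The paper circumvents precisely this by never applying Stirling over a growing alphabet. It fixes a finite $\Delta\subset\mTb_*^h$, forms the finite set $\mF(\Delta)$ of half-edge types arising from $\Delta$, and then \emph{augments} it with the catch-all symbols $\star_x$, $x\in\edgemark$, obtaining a finite color set $\bar{\mC}(\Delta)=\bigl(\mF(\Delta)\cup\{\star_x\}\bigr)^2$ whose size is independent of $n$. Every edge whose true type lies outside $\mF(\Delta)^2$ is relabelled with a $\star$ color that records only the edge marks. Stirling is applied to this fixed finite color set (so the error is $O(|\bar{\mC}(\Delta)|\log n)=o(n)$), giving a bound in terms of $\Delta$-truncated sums, and only \emph{then} is $\Delta$ sent to $\mTb_*^h$ using monotone convergence (after checking the relevant terms have the correct signs, so that dropping the $\star$-colors only improves the bound). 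You do mention a truncation to a finite $\mF$ for the type-vector count, but the matchings count is written over the full countable set, and this is where the argument fails. A secondary issue: the claim $N_{t,t'}/n\to e_P(t,t')$ uniformly over $G$ in the LP-ball, and the upper semicontinuity of $\sum_c a(c)\log a(c)$ that you need at the next step, are nontrivial and, in the paper, occupy a substantial sandwich/Fatou argument over the finite $\bar{\mC}(\Delta)$; your ``$\omega(\epsilon)$'' bookkeeping hides exactly this work, though your sketch of the lower-semicontinuity plus adaptedness argument is directionally correct.
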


% bb

The proof of Proposition~\ref{prop:upper-bound} is given in Section~\ref{sec:upperbound}.
Now, we consider the case $\evwrt{\mu}{\deg_T(o) \log \deg_T(o)} = \infty$
and show that the marked BC entropy is $-\infty$ in this case. 

%aa
\begin{prop}
  \label{prop:upper-bound-infty}
  %Assume that $h \in \nats$ and a unimodular probability measure 
  Let $\mu \in
  \mP_u(\mTb_*)$  be a unimodular probability measure such that $0 < \deg(\mu) < \infty$ and
  $\evwrt{\mu}{\deg_T(o) \log \deg_T(o)} = \infty$. Then, if $\vmn$ and
  $\vun$ are sequences such that $(\vmn,\vun)$ is adapted to $(\vdeg(\mu), \vvtype(\mu))$, we have
  \begin{equation}
    \label{eq:upper-bound-infty-statement}
    \bchover_{\vdeg(\mu), \vvtype(\mu)}(\mu)\condmnun = -\infty.
  \end{equation}
  % Here, $\bchover(\mu)$ is obtained by using the sequences $\vmn$ and $\vun$.
  %{\color{pedit} Consequently, we may unambiguously write the notation
   % $\bchover(\mu)$, and we have $\bchover(\mu) = -\infty$.}
\end{prop}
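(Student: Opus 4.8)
The plan is to reduce the claim to a consequence of the finite-depth upper bound of Proposition~\ref{prop:upper-bound} by a truncation argument. The key idea is that even though $\evwrt{\mu}{\deg_T(o) \log \deg_T(o)} = \infty$, for each finite $h$ the truncated measure $\mu_h$ may well satisfy $H(\mu_h) < \infty$, so Proposition~\ref{prop:upper-bound} still gives $\bchover_{\vdeg(\mu), \vvtype(\mu)}(\mu)\condmnun \le J_h(\mu_h)$, and I would then show $J_h(\mu_h) \to -\infty$ as $h \to \infty$. So first I would address the subtlety that $H(\mu_h)$ might itself be infinite: the quantity $\evwrt{\mu}{\deg_T(o)}= \deg(\mu)$ is finite by hypothesis, so by an argument like the one in the proof of Corollary~\ref{cor:deg-log-deg-Ph}, part~1 (bounding the entropy of the offspring-count random variables by comparison against a geometric distribution, using finiteness of $\edgemark$ and $\vermark$), one gets $H(\mu_1) < \infty$. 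For $h \ge 2$, however, $H(\mu_h)$ could be infinite; in that case the bound from Proposition~\ref{prop:upper-bound} is vacuous at that $h$ but still correct. The cleanest route is therefore: if $H(\mu_h) = \infty$ for all $h \ge 1$ — but we just saw $H(\mu_1)<\infty$ — so $H(\mu_h)<\infty$ holds for at least $h=1$. Actually the honest statement is that there is no monotonicity forcing $H(\mu_h)$ finite for all $h$; so I would split on whether $H(\mu_h) < \infty$ for all $h$ or not.

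In the case $H(\mu_h) < \infty$ for all $h \ge 1$: by Proposition~\ref{prop:upper-bound}, $\bchover_{\vdeg(\mu),\vvtype(\mu)}(\mu)\condmnun \le \inf_{h \ge 1} J_h(\mu_h)$, so it suffices to show $\inf_h J_h(\mu_h) = -\infty$, equivalently (since the sequence $(J_h(\mu_h))$ will be nonincreasing by the monotonicity already asserted in Theorem~\ref{thm:Jh}, or can be shown directly) that $\lim_{h\to\infty} J_h(\mu_h) = -\infty$. Recall
\[
J_h(\mu_h) = -s(d) + H(\mu_h) - \frac{d}{2} H(\pi_{\mu_h}) - \sum_{t,t'} \evwrt{\mu_h}{\log E_h(t,t')!},
\]
with $d = \deg(\mu)$ fixed. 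The term $-s(d)$ is a fixed finite constant, and $-\frac{d}{2}H(\pi_{\mu_h}) \le 0$, so the whole expression is at most $-s(d) + H(\mu_h) - \sum_{t,t'}\evwrt{\mu_h}{\log E_h(t,t')!}$. The mechanism driving $J_h(\mu_h) \to -\infty$ must come from the interplay of $H(\mu_h)$ growing and the factorial correction $\sum_{t,t'}\evwrt{\mu_h}{\log E_h(t,t')!}$ growing even faster. Heuristically $\sum_{t,t'} \evwrt{\mu_h}{\log E_h(t,t')!} \approx \evwrt{\mu}{\deg_T(o)\log\deg_T(o)}$ in the limit (up to the splitting into types and lower-order corrections), which is $+\infty$, while $H(\mu_h)$ — once one subtracts the part already accounted for at depth $h-1$ — behaves like a conditional entropy of the $h$-th generation that is dominated by the factorial term. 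The rigorous way to capture this is to express $J_h(\mu_h)$ using a telescoping / chain-rule decomposition relating $\mu_{h}$ to $\mu_{h-1}$ and the conditional law of generation $h$ given the first $h-1$ generations; this is exactly the kind of computation underlying the monotonicity of $(J_h(\mu_h))$ and one should be able to show each increment $J_h(\mu_h) - J_{h-1}(\mu_{h-1}) \le 0$ and that the partial sums diverge to $-\infty$ precisely because $\evwrt{\mu}{\deg_T(o)\log\deg_T(o)} = \infty$. In the remaining case — where $H(\mu_h) = \infty$ for some $h$ — I would instead observe (again mirroring Corollary~\ref{cor:deg-log-deg-Ph}'s argument) that if $\mu_{h_0}$ were admissible with $\evwrt{\mu}{\deg_T(o)\log\deg_T(o)}<\infty$ then $H(\mu_{h_0})<\infty$; since $\deg(\mu)<\infty$ already gives admissibility of $\mu_{h_0}$ (Lemma~\ref{lem:unimodular-is-admissible}), the only way $H(\mu_{h_0})=\infty$ is forced by $\evwrt{\mu}{\deg_T(o)\log\deg_T(o)}=\infty$, and one can use a direct entropy-counting argument at scale $h_0$ to show $\bchover \le -\infty$; alternatively, and more cleanly, I would just apply Proposition~\ref{prop:upper-bound} at the largest $h$ for which $H(\mu_h)<\infty$ and then realize that a genuinely separate argument is needed — so I would prefer to prove directly that the factorial correction in the upper bound forces divergence.

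The main obstacle I anticipate is the second case and, more generally, making precise the claim that $J_h(\mu_h) \to -\infty$. The heart of it is a quantitative statement: the factorial term $\sum_{t,t'}\evwrt{\mu_h}{\log E_h(t,t')!}$ increases, as $h \to \infty$, without bound at a rate tied to $\evwrt{\mu}{\deg_T(o)\log\deg_T(o)}$, and it is not dominated by $H(\mu_h)$. One convenient inequality is $\log k! \ge k\log k - k$, so $\sum_{t,t'}\evwrt{\mu_h}{\log E_h(t,t')!} \ge \evwrt{\mu_h}{\sum_{t,t'}(E_h(t,t')\log E_h(t,t') - E_h(t,t'))}$; combined with superadditivity/convexity one can lower-bound this by something comparable to $\evwrt{\mu}{\phi_h(\deg_T(o))}$ for a sequence of functions $\phi_h \uparrow (k \mapsto k\log k)$ pointwise, and monotone convergence then yields divergence. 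Simultaneously one must upper bound $H(\mu_h)$ — this is where the chain rule is essential, since $H(\mu_h)$ itself also diverges, but the relevant combination $H(\mu_h) - \sum_{t,t'}\evwrt{\mu_h}{\log E_h(t,t')!}$ stays bounded above and in fact $\to -\infty$. I expect the paper handles this via the same algebraic identity that proves the monotonicity of $(J_h(\mu_h))$ in Theorem~\ref{thm:Jh}, applied in the regime where the limit is $-\infty$; so the cleanest writeup would be to establish that identity/monotonicity once and then note that, since each increment is $\le 0$ and the unconstrained sum $\sum_h (\text{increment})$ equals $\lim_h J_h(\mu_h) - J_1(\mu_1)$, divergence of the factorial corrections forces the limit to be $-\infty$.
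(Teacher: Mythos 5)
The proposal ultimately circles around the right ingredients but misses the short argument and so does not close. You correctly observe that $H(\mu_1) < \infty$ holds unconditionally (by the geometric-comparison entropy bound, as in part~1 of Corollary~\ref{cor:deg-log-deg-Ph}), and that Proposition~\ref{prop:upper-bound} then applies at $h=1$. But having noted this, you immediately pivot to trying to drive $J_h(\mu_h)\to-\infty$ as $h\to\infty$, which is unnecessary and also not supportable: the monotonicity of $(J_h(\mu_h))_h$ asserted in Theorem~\ref{thm:Jh} is established in the paper only under the hypothesis $\evwrt{\mu}{\deg_T(o)\log\deg_T(o)} < \infty$ (the argument routes through Propositions~\ref{prop:lower-bound} and~\ref{prop:upper-bound} and needs $\mu_h$ to be strongly admissible at every $h$), which is precisely the hypothesis that fails here. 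You cannot appeal to that monotonicity, and a telescoping ``chain rule'' decomposition would have to be developed from scratch.

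The gap is that you never check whether $J_1(\mu_1)$ is itself $-\infty$, which it is, and that one observation finishes the proof. Concretely, set $P:=\mu_1$. Then $H(P)<\infty$, $s(d)$ is finite with $d=\deg(\mu)$, and $H(\pi_P)$ is the entropy of a distribution on the finite set $(\edgemark\times\vermark)^2$, hence finite. So $J_1(P)=-\infty$ if and only if $\sum_{t,t'}\evwrt{P}{\log E_1(t,t')!} = \infty$. Identifying $\edgemark\times\mTb_*^0$ with $\edgemark\times\vermark$, each $E_1(t,t')$ is one of the counts $N_{x,x'}^{\theta,\theta'}$, and there are only $|\edgemark|^2|\vermark|^2$ of them. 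If every $\evwrt{P}{\log N_{x,x'}^{\theta,\theta'}!}$ were finite, then $\log k!\ge k\log k-k$ together with $\evwrt{P}{N_{x,x'}^{\theta,\theta'}}\le\deg(\mu)<\infty$ gives $\evwrt{P}{N\log N}<\infty$ for each type, and convexity of $x\mapsto x\log x$ over the \emph{fixed finite} family of types yields $\evwrt{P}{\deg_T(o)\log\deg_T(o)}<\infty$, a contradiction. So some $\evwrt{P}{\log E_1(t,t')!}=\infty$, hence $J_1(\mu_1)=-\infty$, and Proposition~\ref{prop:upper-bound} at $h=1$ immediately gives $\bchover_{\vdeg(\mu),\vvtype(\mu)}(\mu)\condmnun = -\infty$. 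Note that the convexity step you sketch for general $h$ does not work the same way, since the number of depth-$h$ types is infinite; the finiteness of the type set is exactly what makes $h=1$ the right place to run the argument.
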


% bb

% Proposition~\ref{prop:upper-bound-infty} is proved in Section~\ref{sec:upperbound}.
% Now, we put the above statements together and show how they complete the proof
% of Theorems~\ref{thm:bch-properties} and \ref{thm:Jh}.
% We have already shown part \ref{thm:BC--infty} of
% Theorem~\ref{thm:bch-properties} in Propositions~\ref{prop:BC-not-unim_dQnotmatch--infty}
% and \ref{prop:BC-no-tree--infty}. In particular, if one of the conditions in part
% \ref{thm:BC--infty} hold, for any  sequences $\vmn$ and
% $\vun$, we have $\bchunder_{\vd, Q}(\mu) = \bchover_{\vd,Q}(\mu) = -\infty$.
% On the other hand, Proposition~\ref{prop:upper-bound-infty} implies that if
% $\evwrt{\mu}{\deg_T(o) \log \deg_T(o)} = \infty$, $\bchover(\mu) =
% \bchunder(\mu) = \bch(\mu) = -\infty$. Therefore, if $\mu$ satisfies one of the conditions of
% part \ref{thm:BC--infty} of Theorem~\ref{thm:bch-properties}, or 
% $\evwrt{\mu}{\deg_T(o) \log \deg_T(o)} = \infty$,  part~\ref{thm:Jh--infty} of
% Theorem~\ref{thm:Jh} and also parts \ref{thm:BC-invariant} and  \ref{thm:BC-well}
% of Theorem~\ref{thm:bch-properties} hold for such $\mu$. 

Proposition~\ref{prop:upper-bound-infty} is proved in
Section~\ref{sec:upperbound-new}. 

We now demonstrate how the propositions in this section can be used to prove 
Theorems~\ref{thm:bch-properties} and
\ref{thm:Jh}. We have already observed that 
Propositions~\ref{prop:BC-not-unim_dQnotmatch--infty}
and~\ref{prop:BC-no-tree--infty} imply that, in order to address
parts~\ref{thm:BC-invariant} and \ref{thm:BC-well} of
Theorem~\ref{thm:bch-properties}, we may assume that $\mu \in \mP_u(\mTb_*)$, $\vd = \vdeg(\mu)$, and $Q = \vvtype(\mu)$.
Proposition~\ref{prop:upper-bound-infty} then immediately implies parts \ref{thm:BC-invariant}
and  \ref{thm:BC-well} of Theorem~\ref{thm:bch-properties} and part~\ref{thm:Jh--infty} of
Theorem~\ref{thm:Jh}, for every $\mu \in \mP(\mGb_*)$ for which $0 < \deg(\mu) <
\infty$ and $\evwrt{\mu}{\deg_G(o) \log \deg_G(o)} = \infty$.

%{\color{pedit}
%Proposition~\ref{prop:upper-bound-infty} implies parts \ref{thm:BC-invariant}
%and  \ref{thm:BC-well} of Theorem~\ref{thm:bch-properties} and part~\ref{thm:Jh--infty} of
%Theorem~\ref{thm:Jh}, for every $\mu \in \mP(\mGb_*)$ for which $0 < \deg(\mu) <
%\infty$ and $\evwrt{\mu}{\deg_G(o) \log \deg_G(o)} = \infty$.

%To see this, note that part~\ref{thm:BC-invariant} of
%Theorem~\ref{thm:bch-properties} claims that for every $\mu \in \mP(\mGb_*)$ for
%which $0 < \deg(\mu) < \infty$,
%the quantities $\bchover_{\vd, Q}(\mu)\condmnun$ and $\bchunder_{\vd,
 % Q}(\mu)\condmnun$  do not depend on the choice of $(\vmn, \vun)$, 
%while part~\ref{thm:BC-well} of Theorem~\ref{thm:bch-properties} claims that these two quantities are the same and lie in the specified
%range. We have already proved part~\ref{thm:BC--infty} of
%Theorem~\ref{thm:bch-properties}, so it suffices to consider the case where
%$\mu$  is  
%unimodular, supported on $\mTb_*$, and $(\vd, Q) = (\vdeg(\mu), \vvtype(\mu))$. Under the additional
%assumption that  $\evwrt{\mu}{\deg_G(o) \log \deg_G(o)} = \infty$, Proposition~\ref{prop:upper-bound-infty} is seen to
%establish both claims. This then allows us to unambiguously use the notation
%$\bchover(\mu)$, $\bchunder(\mu)$ and $\bch(\mu)$ for all $\mu \in \mP(\mGb_*)$
%for which $0 < \deg(\mu) < \infty$ and  $\evwrt{\mu}{\deg_G(o) \log \deg_G(o)} =
%\infty$, and to conclude that $\bchover(\mu) = \bchunder(\mu) = \bch(\mu)$ for
%such $\mu$, which then establishes the claim of part~\ref{thm:BC-invariant} of Theorem~\ref{thm:Jh}.
%}

Thus it remains to consider the case of
unimodular $\mu \in \mP_u(\mTb_*)$ with 
$0 < \deg(\mu) < \infty$
and $\evwrt{\mu}{\deg_T(o)
  \log \deg_T(o)} < \infty$. 
  %Now, we validate part~\ref{thm:Jh-lim} of
%Theorem~\ref{thm:Jh} for such $\mu$. 
We have already observed that Corollary~\ref{cor:deg-log-deg-Ph} implies that  %for all $ h \in \nats$, 
for such $\mu$,
for all $ h \ge 1$, 
$\mu_h$ is strongly admissible, i.e.\ $\mu_h \in \mP_h$ and that this implies, in particular, that $H(\mu_h) < \infty$ and $J_h(\mu_h)$ is well
  defined.
  
We first show that, in this case, the sequence $J_h(\mu_h)$ is nonincreasing in $h$.
For $h \ge 1$, let $\nu^{(h)} := \ugwt_h(\mu_h)$. Observe that 
$\vdeg(\mu) = \vdeg(\nu^{(h)})$ and $\vvtype(\mu) =
  \vvtype(\nu^{(h)})$.
From Propositions~\ref{prop:lower-bound} and \ref{prop:upper-bound}, we
  have
  \begin{align*}
    J_{h+1}(\mu_{h+1}) &\leq \bchunder_{\vdeg(\mu), \vvtype(\mu)}(\nu^{(h+1)})\condmnun  \\
                       &\leq \bchover_{\vdeg(\mu), \vvtype(\mu)}(\nu^{(h+1)})\condmnun \\
                       &\leq J_h((\nu^{(h+1)})_h) \\
                       &= J_h(\mu_h),
  \end{align*}
  where the last equality uses the fact that $(\nu^{(h+1)})_h =
  (\ugwt_{h+1}(\mu_{h+1}))_h = \mu_h$,
  which is proved in Proposition~\ref{prop:ugwthP-markov}.
Hence, $J_\infty(\mu):= \lim_{h \rightarrow \infty} J_h(\mu_h)$ exists. 
Further, since Proposition~\ref{prop:upper-bound}  proves that $\bchover_{\vdeg(\mu), \vvtype(\mu)}(\mu)\condmnun \leq
    J_h(\mu_h)$ holds for all $h \ge 1$, we get $\bchover_{\vdeg(\mu), \vvtype(\mu)}(\mu)\condmnun
    \leq J_\infty(\mu)$.
Now, we show that $\bchunder_{\vdeg(\mu), \vvtype(\mu)}(\mu)\condmnun \geq J_\infty(\mu)$. 
Note that, since $\mu_h \in \mP_h$ is strongly admissible, Proposition~\ref{prop:lower-bound} implies
that $\bchunder_{\vdeg(\nu^{(h)}), \vvtype(\nu^{(h)})}(\nu^{(h)})\condmnun \geq J_h(\mu_h) \geq J_\infty(\mu)$,
where we have noted that, since 
$\vdeg(\mu) = \vdeg(\nu^{(h)})$ and $\vvtype(\mu) =
  \vvtype(\nu^{(h)})$, the pair of sequences 
  $(\vmn,\vun)$ is adapted to $(\vdeg(\nu^{(h)}), \vvtype(\nu^{(h)}))$.
On the other  hand,  $\nu^{(h)} \Rightarrow \mu$. 
  Therefore, using Lemma~\ref{lem:BC-ent-upper-semicontinouous}, we have
  \begin{align*}
    \bchunder_{\vdeg(\mu), \vvtype(\mu)}(\mu)|_{\vmn, \vun} &\geq \limsup_{h \rightarrow \infty} \bchunder_{\vdeg(\mu), \vvtype(\mu)}(\nu^{(h)})|_{\vmn, \vun} \\
                                                            &= \limsup_{h \rightarrow \infty} \bchunder_{\vdeg(\nu^{(h)}), \vvtype(\nu^{(h)})}(\nu^{(h)})|_{\vmn, \vun} \\
    &\geq J_\infty(\mu)
  \end{align*}
%
%{\color{pedit2}[comment by Payam: I realized that the following part of the
%  argument, which is colored in green, is a repetition of
%  Lemma~\ref{lem:BC-ent-upper-semicontinouous}. Therefore, I substituted it with
 % the above text in blue, and suggest removing the following green part.]}
%{\color{pedit}
%Therefore, given any $\epsilon > 0$,
%there exists $h \ge 1$ such that $\dlp(\mu, \nu^{(h)}) < \epsilon/2$.
%Using the triangle inequality for $\dlp$, this implies that for any sequences $\vmn$
%and $\vun$ such that $(\vmn,\vun)$ is adapted to $(\vdeg(\mu), \vvtype(\mu))$,  we have $\mGnmnun(\nu^{(h)},
%\epsilon/2) \subseteq \mGnmnun(\mu,\epsilon)$ for all $n$.
%Therefore, 
%\begin{align*}
%  \liminf_{n \rightarrow \infty} \frac{\log |\mGnmnun(\mu,\epsilon)| - \snorm{\vmn}_1 \log n}{n} &\geq  \liminf_{n \rightarrow \infty} \frac{\log |\mGnmnun(\nu^{(h)},\epsilon/2)| - \snorm{\vmn}_1 \log n}{n} \\
                                                                                %                 &\geq %\lim_{\delta \rightarrow 0} \liminf_{n \rightarrow \infty} \frac{\log |\mGnmnun(\nu^{(h)},\delta)| - \snorm{\vmn}_1 \log n}{n} \\
%  &= \bchunder_{\vdeg(\nu^{(h)}), \vvtype(\nu^{(h)})}(\nu^{(h)})\condmnun \\
                                                                                %                 &\geq J_\infty(\mu). 
%\end{align*}
%Sending $\epsilon$ to zero, we conclude that $\bchunder_{\vdeg(\mu), \vvtype(\mu)}(\mu)\condmnun \geq
%J_\infty(\mu)$.
%}

We have established that $J_\infty(\mu) \leq \bchunder_{\vdeg(\mu), \vvtype(\mu)}(\mu)\condmnun \leq \bchover_{\vdeg(\mu), \vvtype(\mu)}(\mu)\condmnun \leq J_\infty(\mu)$. 
This, in particular, implies that $\bchunder_{\vdeg(\mu), \vvtype(\mu)}(\mu)\condmnun = \bchover_{\vdeg(\mu), \vvtype(\mu)}(\mu)\condmnun$. 
To complete the proof of part~\ref{thm:BC-invariant} of
Theorem~\ref{thm:bch-properties} 
and the proof of part~\ref{thm:Jh-lim} of Theorem~\ref{thm:Jh} note that
$J_\infty(\mu)$ does not depend on the choice of the sequences $\vmn$ and
$\vun$. 
%Furthermore, we have already shown that if $(\vd, Q) \neq (\vdeg(\mu),
%\vvtype(\mu))$, we have $\bchunder_{\vd, Q}(\mu) = \bchover_{\vd, Q}(\mu) =
%\bch_{\vd, Q} = -\infty$. 

To complete the proof of part~\ref{thm:BC-well}
of
Theorem~\ref{thm:bch-properties},
note that the inequality
$\bch_{\vd, Q}(\mu) \leq s(\vd)+ H(Q)$ is a
direct consequence of \eqref{eq:log-mGnmnun-Stirling}.

%The proof of part~\ref{thm:BC-semic} of Theorem~\ref{thm:bch-properties} is
%evident from Lemma~\ref{lem:BC-ent-upper-semicontinouous} and the fact that
%$\bchover_{\vd, Q}(\mu_k) = \bchunder_{\vd, Q}(\mu_k)$ and also $\bchover_{\vd, Q}(\mu) = \bchunder_{\vd, Q}(\mu)$.

The proof of the propositions stated in this section rely on a generalization of the classical
graph configuration model called a {\emph{colored configuration model}}, which was 
 introduced in \cite{bordenave2015large}. In
Section~\ref{sec:color-conf-model} below, we review this framework and
generalize its properties to the marked regime. 
Using the tools developed in Section~\ref{sec:color-conf-model}, we give the
proof of these propositions in Section~\ref{sec:ent-properties}.

To close this section, assuming the truth of 
all the preceding propositions (which are proved in the subsequent sections),
we prove an upper semicontinuity result of marked BC entropy, which 
supersedes the result of Lemma~\ref{lem:BC-ent-upper-semicontinouous}.

% bb
\begin{thm}
  \label{thm:bch-semicont}
  Let an average degree vector $\vd = (d_{x,x'} : x,x' \in \edgemark)$ and a
  probability distribution $Q = (q_\theta: \theta \in \vermark)$ be given. For
  any  $\mu \in \mP(\mGb_*)$ with 
  $0 < \deg(\mu) < \infty$, the BC entropy
     $\bch_{\vd, Q}(.)$ is upper semicontinuous at $\mu$, i.e.\ if
    $\mu_k$ is a sequence in $\mP(\mGb_*)$ converging weakly to $\mu \in
    \mP(\mGb_*)$ such that $0 < \deg(\mu_k) < \infty$ for all $k$, then we have $\bch_{\vd, Q}(\mu) \geq \limsup_{k \rightarrow \infty} \bch_{\vd, Q}(\mu_k)$.
  
  \end{thm}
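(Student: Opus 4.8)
The plan is to obtain this theorem as an essentially immediate consequence of two results already available at this point in the development: the upper semicontinuity of the \emph{lower} BC entropy established in Lemma~\ref{lem:BC-ent-upper-semicontinouous}, and the identity $\bchunder_{\vd,Q}(\cdot) = \bchover_{\vd,Q}(\cdot) = \bch_{\vd,Q}(\cdot)$ furnished by Theorem~\ref{thm:bch-properties}. Since we are allowed to assume all the preceding propositions, and Theorem~\ref{thm:bch-properties} was derived from exactly those propositions, both ingredients may be used freely here; the theorem is then just a repackaging of Lemma~\ref{lem:BC-ent-upper-semicontinouous} in terms of the canonical quantity $\bch_{\vd,Q}$, which justifies the phrase that it ``supersedes'' that lemma.

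First I would fix one pair of sequences $\vmn, \vun$ with $(\vmn,\vun)$ adapted to $(\vd, Q)$; such a pair exists (it is already used in Definition~\ref{def:BC-entropy} to define the $\epsilon$--BC entropies). By the standing hypotheses $0 < \deg(\mu) < \infty$ and $0 < \deg(\mu_k) < \infty$ for every $k$, so part~\ref{thm:BC-well} of Theorem~\ref{thm:bch-properties} applies to $\mu$ and to each $\mu_k$, giving
\[
  \bchunder_{\vd,Q}(\mu)\condmnun = \bch_{\vd,Q}(\mu), \qquad \bchunder_{\vd,Q}(\mu_k)\condmnun = \bch_{\vd,Q}(\mu_k) \ \text{ for all } k,
\]
where part~\ref{thm:BC-invariant} of the same theorem guarantees that the right-hand sides do not depend on the choice of the adapted sequences.

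Next, since $\mu_k \Rightarrow \mu$ in $\mP(\mGb_*)$, I would apply Lemma~\ref{lem:BC-ent-upper-semicontinouous} with this very pair $(\vmn,\vun)$, obtaining
\[
  \bchunder_{\vd,Q}(\mu)\condmnun \ \geq\ \limsup_{k\to\infty} \bchunder_{\vd,Q}(\mu_k)\condmnun .
\]
Substituting the two identities from the previous step into both sides yields $\bch_{\vd,Q}(\mu) \geq \limsup_{k\to\infty}\bch_{\vd,Q}(\mu_k)$, which is precisely the asserted upper semicontinuity of $\bch_{\vd,Q}(\cdot)$ at $\mu$. The argument is valid verbatim if some or all of these quantities equal $-\infty$ (for instance when no $\mu_k$ has $\vdeg(\mu_k) = \vd$), so no case distinction is needed.

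There is no genuine obstacle here: both inputs are already in hand, and the content is purely the bookkeeping needed to move from the $\condmnun$-decorated lower BC entropy to the sequence-free canonical quantity. The only points meriting care are (i) to invoke part~\ref{thm:BC-invariant} of Theorem~\ref{thm:bch-properties} so that the right-hand sides above are legitimately sequence-independent, and (ii) to use the \emph{same} adapted pair $(\vmn,\vun)$ on both sides of the inequality coming from Lemma~\ref{lem:BC-ent-upper-semicontinouous} before performing the identification with $\bch_{\vd,Q}(\cdot)$.
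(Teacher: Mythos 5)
Your proposal is correct and follows essentially the same route as the paper: fix an adapted pair $(\vmn,\vun)$, invoke Theorem~\ref{thm:bch-properties} to replace $\bch_{\vd,Q}(\cdot)$ by $\bchunder_{\vd,Q}(\cdot)\condmnun$, and then apply Lemma~\ref{lem:BC-ent-upper-semicontinouous}. The bookkeeping you spell out (sequence-independence via part~\ref{thm:BC-invariant}, using the same adapted pair on both sides) is exactly what the paper leaves implicit.
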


\begin{proof}
%\color{red}
Let
  $\vmn, \vun$ be sequences such that $(\vmn,\vun)$ is adapted to $(\vd, Q)$.
  Then, as established in
  part~\ref{thm:BC-invariant} of
Theorem~\ref{thm:bch-properties},
$\bch_{\vd, Q}(\mu)$ equals
  $\bchunder_{\vd, Q}(\mu)\condmnun$
  and $\bch_{\vd, Q}(\mu_k)$ 
  equals $\bchunder_{\vd,Q}(\mu_k)\condmnun$ for all $k$.
%This theorem needs to be proved.
%The proof is embedded in the preceding text. Move the relevant parts here.
The claim is therefore an immediate consequence of Lemma~\ref{lem:BC-ent-upper-semicontinouous}.
%\color{black}
\end{proof}

%%% Local Variables: 
%%% mode: latex
%%% TeX-master: "Note-41_BC-ent-arxiv.tex"
%%% End: 

%aa
\section{Colored Configuration Model}
\label{sec:color-conf-model}

% bb

In this section, we review and generalize results from \cite[Section~4]{bordenave2015large}. 
First, in Section~\ref{sec:directed-colored-multipgraph}, we review the notion
of \emph{directed colored multigraphs} from \cite[Section~4.1]{bordenave2015large}. 
Then, in Section~\ref{sec:color-conf-model-1}, we review the \emph{colored
  configuration model} from \cite[Section~4.2]{bordenave2015large}.
In Sections~\ref{sec:color-unim-galt} we review the notion of \emph{colored unimodular
Galton--Watson trees} and a local weak convergence result related to such trees, from
\cite[Sections~4.4, 4.5]{bordenave2015large}. In
Sections~\ref{sec:dct-to-mark-and-mark-to-dcg} and \ref{sec:graphs-with-given}, we draw a connection  between directed colored
multigraphs
%\color{blue}
%(multigraphs?)
%\color{black}
and marked graphs,
%by treating the type of an edge (defined in
%\eqref{eq:depth-h-type}) as its color. This is done by 
generalizing the results
in \cite[Sections~4.6]{bordenave2015large}. 
%In Section~\ref{sec:marked-to-colored-to-CM} 
We also discuss the colored configuration 
model arising from the colored degree sequences associated to the directed colored graphs arising from a marked graph.
This discussion is used in Section~\ref{sec:weak-convg-to-admissible}
to prove a weak convergence result for any admissible probability 
distribution $P \in \mP(\mTb_*^h)$ with finite support, for any $h \ge 1$.
Finally, in Section~\ref{sec:connection-color-ugwt}, we use the tools developed
in this section to prove a local weak convergence result for marked graphs
obtained from a colored configuration model, which will be useful in our
analysis in Section~\ref{sec:ent-properties}.
Note that the terms \emph{color} (defined in this section) and \emph{mark}
(defined in Section \ref{sec:marked-graphs}) refer to two different concepts and should not be confused with each other. 

%aa
\subsection{Directed Colored Multigraphs}
\label{sec:directed-colored-multipgraph}

%bb

Let $L \ge 1$ be a fixed integer, and define $\mC := \{ (i,j) : 1 \leq i , j \leq L \}$. \glsadd{not:colorset}
Each element $(i,j) \in \mC$ is interpreted as a \emph{color}.
Note that the terms \emph{color} and \emph{mark} refer to different
concepts and should not be confused with each other. 
Let $\mC_= := \{(i,i) \in \mC\}$, $\mC_< := \{(i,j) \in \mC: i<j\}$ and
$\mC_{\neq} := \{ (i,j) \in \mC: i \neq j\}$. We define $\mC_{\leq}$, $\mC_>$,
and $\mC_{\geq}$
similarly. For $c := (i,j) \in \mC$, we use the notation $\bar{c} := (j,i)$.
%A {\em directed multigraph} is a multigraph, as 
%defined in Section~\ref{sec:multigraph-lwc}, where in addition each edge
%is given a direction from one of its vertices to the 
%other vertex (this is also the case for self loops).

We now define a set $\hmG(\mC)$ of {\em directed colored multigraphs} with colors in $\mC$, comprised of multigraphs (as defined in Section~\ref{sec:multigraph-lwc}) where the edges are colored with
elements in $\mC$ in a \emph{directionally consistent way}.
%i.e. this set is determined by $L \ge 1$.
%\color{blue}
%Why are you calling them ``directed multigraphs"?
%Earlier the concept of a ``multigraph" was discussed at length. Here also, you appear to be imposing the condition 1 below on the union (over colors) of the underlying structure for each color, which makes the overall underlying structure (i.e. what you call the associated colorblind multigraph) a ``multigraph'', in the sense defined earlier. It is true that there appears to some kind of connection between the substructure of the multigraph associated with color $c$ and that associated with $\bar{c}$, which suggests some kind of directionality, 
%but a different terminology from ``directed multigraph'' would be preferable, since the term suggests something else and does not bring to mind the underlying lack of directionality.
%\color{black}
\glsadd{not:mhGmC}
More precisely, each $G \in \hmG(\mC)$ is of the
form  $G = (V, \omega)$ where $V$ is a finite or a countable vertex set, and $\omega = (\omega_c: c \in \mC)$ where for each $c \in \mC$, $\omega_c: V^2 \rightarrow \integers_+$ with the following properties:
\begin{enumerate}
\item For $c \in \mC_=$, $\omega_c(v,v)$ is even for all $v \in V$, and $\omega_c(u,v) = \omega_c(v,u)$ for all $u, v \in V$.
\item For $c \in \mC_{\neq}$, we have $\omega_c(u,v) = \omega_{\bar{c}}(v,u)$ for all $u,v \in V$.
\item For all $u \in V$ and $c \in \mC$, $\sum_{v \in V} \omega_c(u,v) < \infty$.
\end{enumerate}
%The interpretation is as follows:
%  \begin{enumerate}
%  \item for $v \neq u$, $\omega_c(v,u)$ is the number of directed edges with color $c$ going out of $v$ towards $u$.
%  \item For $c \in \mC_=$, $\frac{1}{2} \omega_c(v,v)$ is the number of loops
%    with color $c$ at vertex $v$.
%  \item For $v \in V$ and $c \in \mC_<$, $\omega_c(v,v) = \omega_{\bar{c}}(v,v)$
%    is the number of loops with color $c$ at vertex $v$.
    %By convention, we
    %assume that there is no loop at a vertex with colors in $\mC_>$.
    % I removed the sentence starting with by convention, since I found it
    % confusing. In fact, for such a loop, following the rule in $\omega$, we
    % also have a loop with color $\bar{c}$. Also, I found this sentence
    % possibly confusing when later on we discuss configuration model for \mC_<
    % and specially when the two vertices associated with half edges being the
    % same vertex. 
    % this is motivated by the configuration model, because half edges are matched to form edges, so a loop of color in \mC_< is formed by a c half edge and a \bar{c} half edge, forming one single loop not two but both appear in \omega_c and \omega_{\bar{c}}. The same reasoning motivates the definition for \mC_=
 % \end{enumerate}
See Figure~3 in \cite{bordenave2015large} for an example of an element of 
$\hmG(\mC)$.

  For a directed colored multigraph
  $G = (V, \omega) \in \hmG(\mC)$ 
  %on a finite or countable vertex set,  
  the associated {\em colorblind multigraph} 
 % associated to $G$, which is denoted by
 % $\CB(G)$, is defined to be the 
  is the multigraph $\CB(G) := (V, \bar{\omega})$ on the same vertex set $V$, where $\bar{\omega}: V^2 \rightarrow \integers_+$ is defined via
  \begin{equation*}
    \bar{\omega}(u,v) := \sum_{c \in \mC} \omega_c(u,v).
  \end{equation*}
\glsadd{not:colorblind}
It can be checked that $\CB(G)$ is a multigraph, as defined in Section~\ref{sec:multigraph-lwc}.
Distinct directed colored multigraphs can give rise to the same multigraph as their associated colorblind multigraph, and we can think of each of them as arising from this multigraph by coloring it in a directionally consistent way as expressed in properties 1 and 2.
%Note that, since for all $u,v
%\in V$ with $u \neq v$ we have $\bar{\omega}(u,v) = \bar{\omega}(v,u)$,  and since $\bar{\omega}(v,v)$ is even for all $v \in V$, we can treat $\bar{G}$ 
%as a multigraph, as defined in Section~\ref{sec:multigraph-lwc}. 
%Motivated by this, we may treat $\CB(G)$ as an undirected multigraph.

Given $G \in \hmG(\mC)$,
if $\CB(G)$ has no multiple edges and
no self--loops, i.e.
%\ for $u \neq v \in V$, we have $\bar{\omega}(u,v) \in \{0,1\}$,
%and we have $\bar{\omega}(v,v) = 0$ for all $v \in V$,
%then 
it is a graph,
%In this case, 
then we call 
$G$ a {\em directed colored graph}.
%(where we have dropped the term ``simple'' since all graphs we consider are assumed to be simple).
%With this, 
We let $\mG(\mC)$ denote the subset of 
$\hmG(\mC)$ comprised of 
directed colored graphs.
 \glsadd{not:mGmC}

We introduce the notation $\mM_L$ for the set of $L \times L $ matrices with nonnegative integer valued entries.  \glsadd{not:mML}

  Let $G = (V, \omega) \in \hmG(\mC)$, where $V$ is a finite set. For $u \in V$ and $c \in \mC$, define
  \begin{equation*}
    D^G_c(u) := \sum_{v \in V} \omega_c(u,v).
  \end{equation*}
  $D^G_c(u)$ is the number of color $c$ edges going out of the vertex $u$. %Note that, by definition, this quantity is finite.
  Let $D^G(v) := (D^G_c(v): c \in \mC)$. 
  Note that $D^G(v) \in \mM_L$. %can be identified with an 
  $D^G(v)$
  is called the \emph{colored degree matrix} of the vertex $v$. 
 %When $G \in \hmG(\mC)$, 
 Let $\vD^G := (D^G(v): v \in
  V)$.
  We call $\vD^G$ the \emph{colored degree sequence} corresponding to $G$.
  \glsadd{not:colored-degree}%
  \glsadd{not:DGc}%
  \glsadd{not:DG}%
  %\color{red}
  %%Note to Payam:
  %$D^G_c(u)$ and $D^G(v)$ should also be included in the glossary (in addition to 
  %$\vD^G$, which is already there).
  %\color{black}
  %\pres{added to glossary}

  % I think this is what Bordenave and Caputo meant for $\mG(\mC)$, because they say the set of "graphs", but they say it in the following form:
  % , where
  % \begin{itemize}
  % \item For all $c \in \mC$ and $u,v \in V$, $\omega_c(u,v) \in \{ 0,1 \}$.
  % \item For all $c \in \mC$ and $u \in V$, $\omega_c(u,u) = 0$.
  % \end{itemize}
  % Essentially, $\mG(\mC)$ consists of objects in $\hmG(\mC)$ where no loops and no  multiple edges of the same color between vertices
  % are  present.
% Further evidence to support this: 
% on their page 23 after Definition 4.1, they say that $\mG(\mathbb{D})$
% coincides with $\mG(\mathbb{D}, 2)$. 

%aa
\subsection{Colored Configuration Model}
%for given Colored Degree Sequence}
\label{sec:color-conf-model-1}

% bb

Fix an integer $L \ge 1$, and 
let $\mC := \{ (i,j) : 1 \leq i , j \leq L \}$
be the associated set of colors.
%$\mC$, as defined above. 
\glsadd{not:mDn}For $n \in \nats$, let $\mD_n$ be the set of vectors $(D(1),
\dots, D(n))$ where, for each $1 \le i \le n$, we have $D(i) = (D_c(i): c \in \mC) \in \mM_L$ and, further, $S :=
\sum_{i=1}^n D(i)$ is a symmetric matrix with even coefficients on the diagonal.
%i.e. $S = (S_c: c \in \mC)$ where $S_c = S_{\bar{c}}$ for all $c \in \mC$ and
%$S_c$ is even for all $c \in \mC_=$.
Note that  for  $G \in \hmG(\mC)$ we have  $\vD^G \in \mD_n$.
 \glsadd{not:mhGmD}
  Given $\vD = (D(1), \dots, D(n)) \in \mD_n$, define $\hmG(\vD)$ to be  the set
  of directed colored multigraphs $G \in \hmG(\mC)$ with the vertex set $V =
  [n]$ such that, for all $i \in [n]$, we have $D^G(i) = D(i)$.
  \glsadd{not:mGmDh}
  Further, given $n \in \nats$, $\vD \in \mD_n$, and $h \ge 1$, let $\mG(\vD, h)$ be
  the set of directed colored multigraphs 
  %$G \in \hmG(\mC)$ 
  $G \in \hmG(\vD)$
  such that $\CB(G)$
  has no cycles of length $l \leq h$.
  %\color{red}
  %Note to Payam: Replaced $G \in \hmG(\mC)$ by
  %$G \in \hmG(\vD)$ in the preceding sentence. Check.
  %\color{black}
  %\pres{correct}
  %\begin{marginpar}
    %{
  %\color{blue}
  %The notation $\mG(\vD, h)$ is not appropriate
  %for $h=1$. It is unclear to me at this point whether
  %you need to consider $h=1$. If not, it is best to stick with making this definition only for $h \ge 2$.
  %\color{pedit}
  %\textbf{Response:} A cycle of length $1$ is basically a self loop. Hence,
  %$\mG(\vD, 1)$ consists of $G$ such that $\CB(G)$ has no self loops. I think
  %one can restrict it to $h \geq 2$, but maybe it is better to remain
  %consistent with the notation in \cite{bordenave2014large}.
  %  }
  %\end{marginpar}
  %Recall that we treat $\CB(G)$ as an
  %undirected multigraph.
  Note that $\mG(\vD, h+1) \subseteq \mG(\vD, h)$
  for all $h \ge 1$, and that
  $\mG(\vD, 2) \subset \mG(\mC)$. 

 Now, given $\vD = (D(1), \dots, D(n)) \in \mD_n$, we give a recipe to generate
 a random directed colored multigraph $G \in \hmG(\mC)$ such that $\vD^G = \vD$, i.e. a random directed colored multigraph in $\hmG(\vD)$.
The procedure is similar to that in the classical configuration model.
%in the absence of colors.
%There, to each vertex we attach a number of ``half edges'' where the count of
%these objects are prescribed by a given degree sequence, and then these half
%edges are connected to each other randomly to construct a multigraph. In the
%presence of colors, this can be generalized so that colored  half edges are
%connected to vertices, and they get connected to each other following a certain
%rule. More precisely, 
For each $c \in \mC$, let $W_c := \cup_{i=1}^n W_c(i)$ be a
set of distinct 
{\em half edges} of color $c$ where $|W_c(i)| = D_c(i)$. 
%Here,
%$W_c(i)$ is the set of half edges of color $c$ attached to vertex $i$. 
We think of the half edges in $W_c(i)$ as attached to the vertex $i$.
%With this, 
We require a half edge with color $c$ to get connected to another half
edge with color $\bar{c}$. 
For this, for $c \in \mC_<$, let $\Sigma_c$ be the
set of bijections $\sigma_c : W_c \rightarrow W_{\bar{c}}$. 
Since $\vD
\in \mD_n$, $|W_c| = |W_{\bar{c}}|$ and such bijections exist. 
Likewise, for $c \in \mC_=$, let $\Sigma_c$ be the set of perfect matchings
on the set $W_c$. Since $\vD \in \mD_n$, $|W_c|$ is even and such
matchings exist.

Given a choice of $\sigma_c \in \Sigma_c$
for each $c \in \mC_\leq$,
we write $\sigma$ for $(\sigma_c: c \in \mC_\leq)$.
Let $\Sigma$ denote the product of $\Sigma_c$ for $ c\in \mC_{\leq}$. 
Given $\sigma \in \Sigma$, we construct a directed colored multigraph,
denoted $\Gamma(\sigma)$, as follows.
%and , let $\Gamma_c(\sigma_c)$ be a directed colored
%multigraph on the vertex set $[n]$ defined as follows: 
For $c \in \mC_<$, if $\sigma_c$ maps a half
edge of color $c$ at vertex $u$ to another half edge of color $\bar{c}$ at
vertex $v$,  then 
%in $\Gamma_c(\sigma_c)$, 
we place an edge 
directed from $u$ towards $v$ 
having color $c$ 
and an edge directed from 
 $v$ towards $u$, having
color $\bar{c}$.
Here it is allowed that $u = v$.
%Moreover, for $\sigma_c \in \Sigma_c$, let
%$\Gamma_c(\sigma_c)$ be a directed colored multigraph on the vertex set $[n]$
%defined as follows. 
For $c \in \mC_=$,
if $\sigma_c$ matches a half edge of color $c$ at vertex $u$
to another half edge of the same color at vertex $v$, then we place two directed edges,
one directed from $u$ towards $v$, and one directed from $v$ towards $u$, both
with color $c$. Here also it is allowed that $u = v$. 
  % to avoid confusion regarding loops, note that if we have $c \in \mC_=$, and $\omega_c(u,u) = 2$ for some vertex $u$, then there are two edges directed from $u$ to $u$ (we say that there is a single loop with color $c$ at vertex $u$). Then, $D_c(u) = 2$ and there are two half edges at vertex $u$. If we match these two half edges, by the above recipe, we put two directed edges from $u$ towards $u$ which is consistent with the original picture.
  %For
%$\sigma = (\sigma_c: c \in \mC_\leq)$, define $\Gamma(\sigma)$ to be the colored
%directed multigraph defined as the superposition of $(\Gamma_c(\sigma_c): c \in
%\mC_\leq)$. 

Note that, for $\sigma \in \Sigma$, the construction above gives $\Gamma(\sigma) \in \hmG(\vD)$.
For $\vD \in \mD_n$, let $\CM(\vD)$ be the law of $\Gamma(\sigma)$ where $\sigma$ is chosen uniformly at random in $\Sigma$.
\glsadd{not:CMD}

Theorem~\ref{thm:alpha-h} below is from \cite{bordenave2015large},
and states a key property of the configuration model defined above. To state that theorem, given a positive integer $\delta$,
let $\mM_L^{(\delta)}$ denote the set of $L \times L$ matrices with nonnegative
integer entries bounded by $\delta$.
\glsadd{not:mMLdelta}
Assume that $R \in \mP(\mM_L^{(\delta)})$ is given.
Let $\vDn = ( \Dn(1), \dots, \Dn(n)) \in \mD_n$ be a sequence satisfying the following two conditions:
%aa
\begin{subequations}
  \begin{gather}
    \Dn(i) \in \mM_L^{(\delta)} \qquad \forall i \in [n], \label{eq:H1}\\
    \frac{1}{n} \sum_{i=1}^n \delta_{\Dn(i)} \Rightarrow R. \label{eq:H2}
  \end{gather}
\end{subequations}
%bb
% using R instead of P or Q here, since P is usually used for distributions on
% \mG_*, and Q was used for the vertex mark distribution
Theorem~\ref{thm:alpha-h} states that, given the above conditions, for every $h \ge 1$, a positive fraction of
random directed colored multigraphs generated from the above configuration
model do not have and cycles of length $h$ or less. 

%aa
\begin{thm}[Theorem 4.5 in \cite{bordenave2015large}]
  \label{thm:alpha-h}
  Fix $\delta \in \nats$, $R \in \mP(\mM_L^{(\delta)})$, and a sequence $\vDn$
  satisfying~\eqref{eq:H1} and \eqref{eq:H2}. Let $G_n$ have distribution
  $\CM(\vDn)$ on $\hmG(\vDn)$. Then, for every $h \ge 1$, there exists $\alpha_h > 0$ such that
  \begin{equation*}
    \lim_{n \rightarrow \infty} \pr{G_n \in \mG(\vDn, h)} = \alpha_h.
  \end{equation*}
\end{thm}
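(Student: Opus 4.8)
The plan is to prove the stronger statement that the vector of short--cycle counts of $\CB(G_n)$ converges jointly in distribution to a vector of independent Poisson random variables, and then to read off $\alpha_h$ as the probability that all of them vanish. For $1 \le \ell \le h$ let $X^{(n)}_\ell$ be the number of cycles of length $\ell$ in $\CB(G_n)$, using the multigraph convention that a self--loop is a cycle of length $1$, a pair of parallel edges is a cycle of length $2$, and for $\ell \ge 3$ an $\ell$--cycle is counted up to rotation and reflection. Then $\{G_n \in \mG(\vDn, h)\} = \{X^{(n)}_1 = \dots = X^{(n)}_h = 0\}$, so it suffices to show that $(X^{(n)}_1, \dots, X^{(n)}_h) \Rightarrow (Z_1, \dots, Z_h)$ with the $Z_\ell$ independent and $Z_\ell \sim \mathrm{Poisson}(\lambda_\ell)$ for finite constants $\lambda_\ell \ge 0$. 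Granting this, since $\{0\}^h$ is a continuity set of the limiting law,
\[
\lim_{n \to \infty} \pr{G_n \in \mG(\vDn, h)} = \prod_{\ell=1}^h \pr{Z_\ell = 0} = e^{-\sum_{\ell=1}^h \lambda_\ell} =: \alpha_h ,
\]
and $\alpha_h > 0$ precisely because each $\lambda_\ell < \infty$; this finiteness is exactly where the bounded--degree hypothesis \eqref{eq:H1} is used.

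\textbf{The limiting means.} A cycle of length $\ell \ge 3$ in $\CB(G_n)$ is described by a tuple of distinct vertices $(v_1, \dots, v_\ell)$, taken modulo the $2\ell$ rotations and reflections, together with, for each $i$, the color $c_i \in \mC$ of the directed edge $v_i \to v_{i+1}$ in $G_n$ (indices mod $\ell$); the reverse edge then carries color $\bar c_i$. Such a cycle is realized iff the relevant matching --- $\sigma_{c_i}$ when $c_i \in \mC_<$, $\sigma_{\bar c_i}$ when $c_i \in \mC_>$, and the perfect matching on $W_{c_i}$ when $c_i \in \mC_=$ --- pairs a half--edge of color $c_i$ at $v_i$ with a half--edge of color $\bar c_i$ at $v_{i+1}$. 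By the standard estimate for uniform bijections and perfect matchings, the probability that $\ell$ prescribed pairings of distinct half--edges hold simultaneously is $(1+o(1))\prod_{i=1}^\ell |W_{c_i}|^{-1}$, while configurations that reuse a half--edge, create a chord, or involve a color class of size $o(n)$ contribute $o(1)$ in aggregate. Summing over the $(1+o(1))n^\ell$ vertex tuples and over the finitely many color tuples, and using $|W_c| = \sum_i \Dn_c(i)$ together with \eqref{eq:H1} and \eqref{eq:H2} --- which give that $|W_c|/n$ converges to the mean $\bar d_c$ of the color-$c$ coordinate of $R$ and that all mixed empirical falling--factorial moments of the $\Dn(i)$ converge to those of $R$ --- we obtain $\ev{X^{(n)}_\ell} \to \lambda_\ell$ for an explicit finite constant $\lambda_\ell$ depending only on $R$, $\ell$, and $L$. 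The cases $\ell \in \{1,2\}$ are handled identically with the obvious combinatorial modifications, and $\lambda_\ell < \infty$ because $R$ has bounded support.

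\textbf{Higher moments and conclusion.} To upgrade mean convergence to joint Poisson convergence I would use the factorial--moment method: for all integers $j_1, \dots, j_h \ge 0$ one shows that the joint falling--factorial moment $\ev{\prod_{\ell=1}^h (X^{(n)}_\ell)_{j_\ell}}$ converges to $\prod_{\ell=1}^h \lambda_\ell^{\,j_\ell}$. Exactly as for the classical configuration model, the dominant contribution comes from families of pairwise vertex--disjoint cycles of the prescribed lengths, which contribute $(1+o(1))\prod_{\ell} \lambda_\ell^{\,j_\ell}$ by the computation of the previous paragraph carried out on disjoint vertex sets, whereas any family of cycles that shares a vertex or a half--edge occupies fewer than the generic number of vertices and so contributes $o(1)$ --- boundedness of the degrees being again what makes the number of these degenerate overlapping configurations negligible. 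The moment convergence theorem for Poisson limits then yields $(X^{(n)}_1, \dots, X^{(n)}_h) \Rightarrow (Z_1, \dots, Z_h)$ with the $Z_\ell$ independent $\mathrm{Poisson}(\lambda_\ell)$, which completes the proof.

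\textbf{Main obstacle.} The probabilistic inputs are the standard sparse--configuration--model estimates and transfer essentially verbatim; the real work is the bookkeeping forced by the directed and colored structure --- decomposing a colorblind cycle consistently across the separate matchings ($\sigma_c : W_c \to W_{\bar c}$ for $c \in \mC_<$, perfect matchings on $W_c$ for $c \in \mC_=$), pinning down the normalizations $|W_c|^{-1}$ and the limiting parameters $\lambda_\ell$ in terms of $R$, and quarantining the degenerate colors with $\bar d_c = 0$, which carry only $o(n)$ half--edges and hence cannot appear in the limit. None of this affects the positivity of $\alpha_h$, which hinges only on the finiteness of each $\lambda_\ell$, i.e.\ on \eqref{eq:H1}.
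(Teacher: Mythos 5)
The paper does not supply a proof of Theorem~\ref{thm:alpha-h}; it is stated verbatim as Theorem~4.5 of \cite{bordenave2015large} and used as a black box. Your proposal is therefore a reconstruction of the argument from that reference rather than a comparison with anything in this paper, and it is the right reconstruction: the standard route --- joint Poisson convergence of short cycle counts via the factorial--moment method, then $\alpha_h = e^{-\sum_{\ell \le h}\lambda_\ell}$ --- is exactly how Bordenave and Caputo (following the classical configuration--model literature of Bollob\'as and Wormald) establish this.

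The sketch is essentially correct, with one place worth tightening. The claim that configurations involving a color $c$ with $|W_c| = o(n)$ contribute $o(1)$ is true, but not because such pairings are individually unlikely --- when $|W_c|$ is small the factor $|W_c|^{-1}$ blows up, not down. The right accounting is that the numerator (the number of vertex tuples supporting a color-$c$ edge, weighted by half--edge choices) is itself $O(|W_c|)$: $\sum_v D_c(v) D_{\bar c'}(v) \le \delta \sum_v D_c(v) = \delta |W_c|$ using \eqref{eq:H1}, so the ratio $\sum_v D_c(v)D_{\bar c'}(v) / |W_c|$ stays bounded and the degenerate colors contribute a term that is $O(1)$, not obviously $o(1)$. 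To get $o(1)$ for the colors with $\bar d_c := \ev{D_c}_R = 0$ one further uses \eqref{eq:H2}: by bounded convergence $\frac1n\sum_v D_c(v)D_{\bar c'}(v) \to \ev{D_c D_{\bar c'}}_R$, and $\ev{D_c D_{\bar c'}}_R \le \delta\,\ev{D_c}_R = 0$, so the numerator is itself $o(n)$ while the other $\ell-1$ factors are $O(1)$, making the whole term vanish; one should also handle separately the case $|W_c| = 0$ for all large $n$, where the term is exactly zero. This is a genuine (if minor) gap in the reasoning as written, but it is patched by exactly the kind of bookkeeping you flag in your last paragraph, and it does not affect the structure of the argument or the conclusion that each $\lambda_\ell < \infty$, hence $\alpha_h > 0$.
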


% bb

To close this section, we give an asymptotic counting for the set $\mG(\vDn, h)$. This calculation is also from \cite{bordenave2015large}. For two
sequences $a_n$ and $b_n$ we write $a_n \sim b_n$ if $a_n / b_n \rightarrow 1$ as
$n \rightarrow \infty$. Moreover, for an even integer $N$, $(N-1)!!$ is defined
as $\frac{N!}{(N/2)! 2^{N/2}}$, or equivalently $(N-1)\times (N-3) \times \dots
3 \times 1$. Note that $(N-1)!!$ is the number of perfect matchings on a set of size $N$.

% aa
\begin{cor}[Corollary 4.6 in \cite{bordenave2015large}]
  \label{cor:GDnh-counting}
  In the setting of Theorem~\ref{thm:alpha-h}, write $\Sn_c := \sum_{i \in [n]} \Dn_c(i)$, which, we recall, form the entries of a symmetric matrix. For all $h \geq 2$ we have
  \begin{equation*}
    |\mG(\vDn, h)| \sim \alpha_h \frac{\prod_{c \in \mC<} \Sn_c! \prod_{c \in \mC_=} (\Sn_c - 1)!!}{\prod_{c \in \mC} \prod_{i = 1}^n \Dn_c(i)!}.
  \end{equation*}
\end{cor}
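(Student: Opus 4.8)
The plan is to count the pairings $\sigma\in\Sigma$ in two ways, the ``global'' count $|\Sigma|$ and a ``per-graph fibre'' count, and then feed the ratio into Theorem~\ref{thm:alpha-h}. First I would record $|\Sigma|$. Since $\Sigma=\prod_{c\in\mC_{\leq}}\Sigma_c$, with $\Sigma_c$ the set of bijections $W_c\to W_{\bar c}$ when $c\in\mC_<$ (both sets of size $\Sn_c$) and the set of perfect matchings of $W_c$ when $c\in\mC_=$ (a set of even size $\Sn_c$), we get $|\Sigma|=\prod_{c\in\mC_<}\Sn_c!\,\prod_{c\in\mC_=}(\Sn_c-1)!!$, which is exactly the numerator in the asserted formula.

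The heart of the argument is to show that $\Gamma$ is onto $\mG(\vDn,h)$ and is exactly $N$-to-one there, where $N:=\prod_{c\in\mC}\prod_{i=1}^n\Dn_c(i)!$ does not depend on the target graph. Fix $G\in\mG(\vDn,h)$; since $h\ge 2$ we have $\mG(\vDn,h)\subseteq\mG(\vDn,2)\subseteq\mG(\mC)$, so $\CB(G)$ is a simple graph: all multiplicities $\omega_c(u,v)\in\{0,1\}$ and no self-loops. For $c\in\mC_<$, a pairing $\sigma_c$ produces $G$ iff for every directed $c$-edge $u\to v$ of $G$ it matches one half-edge of $W_c(u)$ to one half-edge of $W_{\bar c}(v)$; because the $c$-out-edges at each $u$ are $\Dn_c(u)$ in number and the $c$-in-edges at each $v$ (equivalently the $\bar c$-out-edges at $v$) are $\Dn_{\bar c}(v)$ in number, and all multiplicities are $\le 1$, such $\sigma_c$ are in bijection with the data of, for each vertex, a bijection between its incident edges (of the appropriate orientation and color) and the corresponding half-edge set; this gives $\prod_i\Dn_c(i)!\cdot\prod_i\Dn_{\bar c}(i)!$ choices. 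For $c\in\mC_=$ the same reasoning over perfect matchings gives $\prod_i\Dn_c(i)!$, exactly as in the classical configuration model. Taking the product over $c\in\mC_{\leq}$ and reindexing the $\bar c$-factors ($\{\bar c:c\in\mC_<\}=\mC_>$), these multiply to $\prod_{c\in\mC}\prod_i\Dn_c(i)!=N$, independent of $G$.

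With these two facts in hand, $\pr{G_n\in\mG(\vDn,h)}=|\{\sigma\in\Sigma:\Gamma(\sigma)\in\mG(\vDn,h)\}|/|\Sigma|=N\,|\mG(\vDn,h)|/|\Sigma|$, so $|\mG(\vDn,h)|=\pr{G_n\in\mG(\vDn,h)}\cdot|\Sigma|/N$. Theorem~\ref{thm:alpha-h} gives $\pr{G_n\in\mG(\vDn,h)}\to\alpha_h$, and substituting the expressions for $|\Sigma|$ and $N$ yields $|\mG(\vDn,h)|\sim\alpha_h\,\prod_{c\in\mC_<}\Sn_c!\,\prod_{c\in\mC_=}(\Sn_c-1)!!\big/\prod_{c\in\mC}\prod_i\Dn_c(i)!$, as claimed.

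I expect the only real obstacle to be the bookkeeping in the $N$-to-one claim: one must separate the color classes $\mC_<$ and $\mC_=$, keep careful track of which half-edge sits at which endpoint and in which direction, and verify that the fibre size is genuinely constant over all of $\mG(\vDn,h)$. This constancy is precisely the statement that $\CM(\vDn)$ conditioned on landing in $\mG(\vDn,h)$ is uniform, and it is exactly where simplicity of $\CB(G)$ (guaranteed by $h\ge 2$) is used; everything else is routine.
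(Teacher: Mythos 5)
Your proof is correct and takes exactly the approach the paper sketches: count $|\Sigma|$ directly, show the fibre of $\Gamma$ over any $G \in \mG(\vDn, h)$ has constant size $N = \prod_{c \in \mC} \prod_i \Dn_c(i)!$ (using that $h \geq 2$ forces $\CB(G)$ to be simple), and combine with $\pr{\Gamma(\sigma) \in \mG(\vDn, h)} \to \alpha_h$ from Theorem~\ref{thm:alpha-h}. You have simply filled in the bookkeeping for the $N$-to-one claim that the paper states without proof, and that bookkeeping (per-vertex bijections at each endpoint, with $\bar{c}$-factors for $c \in \mC_<$ reindexed over $\mC_>$) is right.
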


% bb

We give a brief sketch of how this counting statement results from Theorem~\ref{thm:alpha-h} and refer the reader to \cite{bordenave2015large}
for the proof. 
%Theorem~\ref{thm:alpha-h} implies that for each $h \in \nats$, with a nonzero
%asymptotic probability, the colored configuration model satisfying
%\eqref{eq:H1} and \eqref{eq:H2} generates a directed multigraph whose
%colorblind version has no cycles with length smaller than $h$. On the other
%hand, note that 
By construction, $|\Sigma|$, which is the total number of
configurations, is equal to $\prod_{c \in \mC_<} \Sn_c! \prod_{c \in \mC_=}
(\Sn_c-1)!!$. Each directed colored multigraph can be constructed via different
configurations. However, every $G \in \mG(\vD, h)$ for $h \geq 2$ is a colored
graph, i.e.\ is in $\mG(\mC)$. It is easy to see that, for such $G$, there are precisely
$\prod_{c \in \mC} \prod_{i=1}^n \Dn_c(i)!$ many configurations $\sigma \in
\Sigma$ for which $\Gamma(\sigma) = G$. Also, from Theorem~\ref{thm:alpha-h},
the asymptotic probability of $\Gamma(\sigma)$ being in $\mG(\vDn, h)$ is
$\alpha_h$. This provides a rough explanation of where  Corollary~\ref{cor:GDnh-counting} comes from.

%aa
\subsection{Colored Unimodular Galton--Watson trees}
\label{sec:color-unim-galt}
% bb

In this section we review the definition of colored unimodular Galton--Watson
trees from \cite[Section~4.4]{bordenave2015large}. This should not be confused with the
notion of marked unimodular Galton--Watson trees defined in
Section~\ref{sec:markovian-ugwt}. Later, in
Section~\ref{sec:connection-color-ugwt}, we explain the connection
between the two notions.
To reduce the chance of confusion, we employ the notation $\cugwt$ to denote the object
constructed here, which is slightly different from the notation used in
\cite{bordenave2015large}.

Given $L \in \nats$ and the set of colors
$\mC := \{ (i,j) : 1 \leq i , j \leq L \}$,
%a color set $\mC$ with $|\mC| = L^2$ as above, 
we first define a set of equivalence classes of %connected
rooted 
%colored directed locally finite multigraphs. 
directed colored multigraphs, denoted by
$\hmG_*(\mC)$.
\glsadd{not:hmG-star-mC}
%\color{red}
%Note to Payam:
%Need to put $\hmG_*(\mC)$ in the glossary.
%\color{black}
%\pres{added to glossary}
%\glsadd{not:mhGmCstar}
%More precisely, 
Each member 
of $\hmG_*(\mC)$ is of the form $[G, o]$ where $G \in \hmG(\mC)$ is
connected and $o$ is a distinguished vertex in $G$.
%Here, $[G,o]$ 
$[G, o]$ denotes the
equivalence class corresponding to $(G, o)$ where the equivalence relation is
defined through relabeling of the vertices,
while preserving the root and the edge structure together
with the directed colors.
As is discussed in \cite{bordenave2015large},
the framework of local weak convergence introduced in
Section~\ref{sec:multigraph-lwc} for multigraphs can be naturally extended to directed
colored multigraphs.
%\color{red}
%Note to Payam:
%Need to flesh this out by discussing why 
%$\hmG(\mC)$ is Polish, or at least pointing out where this is proved.
%\color{black}
%\pres{This is from \cite{bordenave2014large}, so added the reference.}
An element  $[G, o] \in
\hmG_*(\mC)$ is called a rooted directed colored tree if its associated colorblind multigraph $\CB(G)$ has no
cycles. 

Recall that $\mM_L$ denotes the set of $L \times L$ matrices with
nonnegative integer valued entries.
Let $P \in \mP(\mM_L)$ be a probability distribution such that for all $c \in \mC$, we have
%\begin{equation*}
  $\ev{D_c} = \ev{D_{\bar{c}}}$,
%\end{equation*}
where $D \in \mM_L$ has law $P$.
For $c \in \mC$ such that $\ev{D_c} > 0$, define $\hP^c \in \mP(\mM_L)$ as follows:
\begin{equation}
  \label{eq:cugwt-phat}
  \hP^c(M) := \frac{(M_{\bar{c}} + 1) P(M + E^{\bar{c}})}{\ev{D_c}},
\end{equation}
where 
%$D$ is distributed according to $P$ and  
$E^{\bar{c}} \in \mM_L$ denotes the
matrix with the entry at coordinate $\bar{c}$ being 1 and all the other
entries being zero. If $\ev{D_c} = 0$, we set $\hP^c(M) = 1$ if $M = 0$ and zero otherwise. It is straightforward to check that 
$\sum_{M \in \mM_L} \hP^c(M) =1$ for all $c \in \mC$.

With this setup, we define the \emph{colored unimodular Galton--Watson tree} $\cugwt(P) \in \mP(\hmG_*(\mC))$ to be the law of
$[T,o]$ where $(T,o)$ is a rooted directed colored multigraph defined as
follows. We start from the root $o$ and generate $D(o)$ with law $P$. Then,
for each $c \in \mC$, we attach $D_c(o)$ many vertex offspring of type $c$ to
the root. For each offspring $v$ of type $c$, we add a directed edge from $o$
to $v$ with color $c$ and another directed edge from $v$ to $o$ with color
$\bar{c}$. Subsequently, for an offspring $v$ of type $c$, we generate $D(v)$
with law $\hP^c$, independent from all other offspring. Then, we continue the
process. Namely, for each $c \in \mC$, we add $D_c(v)$ many vertex offspring of type
$c$ to $v$ where, for each offspring $w$ of type $c$, there is an edge directed
from $v$ towards $w$ with color $c$ and another edge directed from $w$ towards
$v$ with color $\bar{c}$. This process is continued inductively to define
$\cugwt(P)$. \glsadd{not:cugwt}

%The local topology of $\mG_*$ could be naturally extended to a local topology on
%$\hmG_*(\mC)$. This leads to a framework of local weak convergence for colored
%directed multigraphs. 
Let $\mG_*(\mC)$ denote the subset of $\hmG_*(\mC)$ consisting of
equivalence classes of
rooted directed colored graphs,
%which are simple, 
i.e.\ 
%$[G,o] \in \hmG_*(\mC)$ where
%$G$ is simple. Recall that this means 
for which the associated colorblind multigraph
$\CB(G)$ 
%has no multiple edges or loops.
is a graph,
see the end of Section \ref{sec:directed-colored-multipgraph}.
\glsadd{not:mG-star-mC}%
%\color{red}
%Note to Payam:
%Need to put $\mG_*(\mC)$ in the glossary.
%\color{black}
%\pres{added to glossary}
Note that 
%by definition, 
$\cugwt(P)$
is supported on $\mG_*(\mC)$.
The following result from \cite{bordenave2015large} will be useful for our future analysis.

%aa
\begin{thm}[Theorem 4.8 in \cite{bordenave2015large}]
    \label{thm:BC-CUGWT-LWC}
    Let $R \in \mP(\mM_L^{(\delta)})$ be given.
Let $\vDn \in \mD_n$ be a sequence 
     satisfying \eqref{eq:H1} and \eqref{eq:H2}.
     %\color{red} Note to Payam: added the condition that $\vDn \in \mD_n$, which
     %was not there in the earlier version. Check. \color{black}
     %\pres{correct}
     Moreover, assume that $G_n \in \hmG(\vDn)$ has law $\CM(\vDn)$, and that $G_n$
    are jointly defined to be independent on a single probability space. Then, with probability one, $U(G_n) \Rightarrow \cugwt(R)$. Also, the same result holds when $G_n$ is uniformly sampled from $\mG(\vDn, h)$, for any $h \geq 2$.
  \end{thm}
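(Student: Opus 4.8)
The plan is to reduce the assertion $U(G_n)\Rightarrow\cugwt(R)$ to an almost‑sure law of large numbers for the empirical frequency of each fixed finite depth‑$h$ rooted directed colored neighborhood, to compute the mean of that frequency by a neighborhood‑exploration (``configuration model converges to a branching process'') argument, and to upgrade mean convergence to almost‑sure convergence via a concentration inequality; the conditioned version then follows from Theorem~\ref{thm:alpha-h}. In detail:

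\textbf{Step 1 (reduction to finitely many neighborhood counts).} By \eqref{eq:H1} every vertex of $G_n$ has all colored degrees at most $\delta$, and by \eqref{eq:cugwt-phat} the same bound holds under $\cugwt(R)$ (also $\evwrt{R}{D_c}=\evwrt{R}{D_{\bar c}}$ since $\vDn\in\mD_n$, so $\cugwt(R)$ is well defined); hence for each fixed $h$ there are only finitely many isomorphism classes of depth‑$h$ rooted directed colored graphs that can occur, and both $U(G_n)$ and $\cugwt(R)$ are supported on the (compact) set of rooted directed colored graphs with uniformly bounded colored degrees. On such a space, $U(G_n)\Rightarrow\cugwt(R)$ is equivalent to: for every $h\ge 0$ and every finite rooted directed colored graph $t$ of depth at most $h$,
\begin{equation*}
  X_n(t) := \frac{1}{n}\sum_{v \in [n]} \one{(G_n,v)_h \equiv t} \longrightarrow p(t) := \cugwt(R)\big(\{[G,o]: [G,o]_h \equiv t\}\big).
\end{equation*}
Since the pairs $(h,t)$ form a countable family, it suffices to prove $X_n(t)\to p(t)$ almost surely for each one and intersect the resulting almost‑sure events.

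\textbf{Step 2 (first moment — the main obstacle).} Here I would show $\ev{X_n(t)}=\tfrac1n\sum_v \pr{(G_n,v)_h\equiv t}\to p(t)$. Grouping vertices by the value $D_0\in\mM_L^{(\delta)}$ of the root colored degree and using \eqref{eq:H2}, it is enough to show that, for each fixed $D_0$, the law of $(G_n,v)_h$ for a vertex $v$ with $\Dn(v)=D_0$ converges to the law of the depth‑$h$ neighborhood of the colored unimodular Galton--Watson tree whose root has degree $D_0$. One explores the neighborhood of $v$ breadth‑first, revealing at each step the $\sigma$‑partner of one still‑unused half‑edge: conditioned on the past, a color‑$c$ half‑edge is matched to a uniformly chosen still‑unused color‑$\bar c$ half‑edge, of which there are $\Sn_{\bar c}-O(1)=n(\evwrt{R}{D_{\bar c}}+o(1))$ with at most $\delta$ sitting at any one vertex, so the probability of closing a cycle within depth $h$ is $O(1/n)$; and conditioned on no such cycle, the colored degree of each freshly discovered vertex reached along a color‑$\bar c$ half‑edge converges to the size‑biased law $\hP^c$ of \eqref{eq:cugwt-phat}, the incoming half‑edge being subtracted. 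This is precisely the recursion defining $\cugwt(R)$, and the identity $\evwrt{R}{D_c}=\evwrt{R}{D_{\bar c}}$ is exactly what makes $\hP^c$ a probability measure. The careful bookkeeping of this size‑biasing along the exploration is where I expect essentially all of the real work to be; the rest is soft.

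\textbf{Step 3 (concentration and almost‑sure convergence).} View $X_n(t)$ as a function of $\sigma$ and form the edge‑exposure martingale obtained by revealing its pairings one at a time. Since all colored degrees are at most $\delta$, the $k$‑th revelation can affect $(G_n,v)_h$ for at most $O(\delta^h)$ vertices $v$, so the martingale increments are $O(\delta^h/n)$ and there are $O(n)$ of them; the Azuma--Hoeffding inequality gives $\pr{|X_n(t)-\ev{X_n(t)}|>\varepsilon}\le 2\exp(-c_{\delta,h}\,\varepsilon^2 n)$, which is summable in $n$. Combined with Step 2 and the assumed independence of the $G_n$ across $n$, Borel--Cantelli yields $X_n(t)\to p(t)$ almost surely; intersecting over the countably many $(h,t)$ proves $U(G_n)\Rightarrow\cugwt(R)$ almost surely. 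For the conditioned statement, fix $h_0\ge 2$ and put $A_n:=\{G_n\in\mG(\vDn,h_0)\}$; by Theorem~\ref{thm:alpha-h}, $\pr{A_n}\to\alpha_{h_0}>0$, so $\pr{A_n}\ge\alpha_{h_0}/2$ for $n$ large, and for fixed $(h,t)$ and $\varepsilon>0$, once $n$ is large enough that $|\ev{X_n(t)}-p(t)|<\varepsilon$,
\begin{equation*}
  \pr{|X_n(t) - p(t)| > 2\varepsilon \mid A_n} \le \frac{\pr{|X_n(t) - \ev{X_n(t)}| > \varepsilon}}{\pr{A_n}} \le \frac{4}{\alpha_{h_0}}\exp(-c_{\delta,h}\,\varepsilon^2 n),
\end{equation*}
which is again summable; Borel--Cantelli and intersection over $(h,t)$ give $U(G_n)\Rightarrow\cugwt(R)$ almost surely when $G_n$ is uniform on $\mG(\vDn,h_0)$. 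This is the scheme carried out in detail in \cite{bordenave2015large}.
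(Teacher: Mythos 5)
The paper does not supply its own proof of this statement: the theorem is imported verbatim, labeled ``Theorem 4.8 in \cite{bordenave2015large},'' so there is no internal proof to compare against. Your outline is nonetheless a correct account of the standard argument for such local weak convergence results and, in particular, of the strategy used in the cited source: (i) reduce, via the uniform bound $\delta$ on colored degrees (which, as you note, propagates to $\cugwt(R)$ through \eqref{eq:cugwt-phat}), to convergence of the empirical frequency of each fixed finite depth-$h$ rooted neighborhood; (ii) compute its mean by a breadth-first exploration, where the admissibility identity $\evwrt{R}{D_c}=\evwrt{R}{D_{\bar c}}$ guarantees that the size-biased laws $\hP^c$ of \eqref{eq:cugwt-phat} are the correct limits and that cycles within depth $h$ have probability $O(1/n)$; (iii) upgrade to almost-sure convergence by Azuma applied to the edge-exposure martingale; and (iv) transfer to uniform sampling from $\mG(\vDn,h_0)$ via Theorem~\ref{thm:alpha-h}. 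Two minor remarks. First, in Step~3 the independence of the $G_n$ is not actually used in the Borel--Cantelli step you invoke --- the first Borel--Cantelli lemma needs only summability of the tail probabilities, and the Azuma bound already gives this; the joint construction is needed only so that ``with probability one'' across $n$ is meaningful, and any coupling would do. Second, your Step~2 is the genuine mathematical content and you have, fairly, only sketched it; the bookkeeping of the size-biasing along the exploration and the control of collisions is indeed where all of the work in \cite{bordenave2015large} lives.
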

%bb

%\color{red} 
%Note to Payam:
%Replaced $G_n \in \hmG_n(\vDn)$ with $G_n \in \hmG(\vDn)$
%in the statement of the theorem. Check. \pres{correct}\\ 
%Note to Payam: What role does the $\delta$ appearing in \eqref{eq:H1} and
%\eqref{eq:H2} play in this result?
%\pres{this ensures that the entries in the matrices are bounded by $\delta$, and
%should be important for technical reasons.}
%Note to Payam: Also,
%I changed what was written as $\cugwt(P)$ for the weak 
%limit to $\cugwt(R)$. Check. \color{black}
%\pres{correct}

%aa
%\subsection{Graphs with given neighborhood structure}
\subsection{From a Marked Graph to a Directed Colored Graph and Back}
\label{sec:dct-to-mark-and-mark-to-dcg}
% bb

%In the previous sections, we studied colored multigraphs, where the notion of
%color was defined based on pairs of integers in a color set $\mC$ of size $L
%\times L$. Here, we try to connect this to marked graphs. In order to do so, we
%treat the types of edges, as defined in \eqref{eq:depth-h-type}, as colors. 

In this section we first associate, for any fixed $h \ge 1$, a specific
directed colored graph to a given marked graph, by treating 
the types of edges, as defined in \eqref{eq:depth-h-type}, as colors. 
We also discuss a procedure that, starting with 
a directed colored graph whose colors can be interpreted in terms of the types for a given $h \ge 1$, returns a marked graph.

%\glsadd{not:psihG}
%For a marked graph $G$ on the vertex set $[n]$ and  $h \ge 1$, we define
%\begin{equation*}
%  \psi_h(G) := ([G,1]_h, [G,2]_h, \dots, [G,n]_h),
%\end{equation*}
%to be the sequence of equivalence classes of depth $h$ neighborhoods of the vertices in $G$. 
%Moreover, we define the colored
%  version of $G$ denoted by $\colored(G)$ as follows.

For a marked graph $G$ on the vertex set $[n]$ and  an integer $h \ge 1$,
we define a directed colored graph 
denoted by $\colored(G)$.
 Let $\mF \subset \edgemark \times \mGb_*^{h-1}$ be the set of
 all distinct $G[u,v]_{h-1}$ for adjacent vertices $u$ and $v$ in
 $G$. Since $G$ is finite, $\mF$ is a finite subset of $\edgemark \times
 \mGb_*^{h-1}$. Therefore, with $L := |\mF|$, we can enumerate the elements in $\mF$ in some order,
 with integers $1, \dots, L$. 
 Recall from
 \eqref{eq:depth-h-type} that $\etype_G^h(u,v) = (G[v,u]_{h-1},
 G[u,v]_{h-1})$ is the depth $h$ type of the edge $(u,v)$. 
 Now, we define $\colored(G)$ to be a directed
 colored graph with colors in  $\mC = \mF \times \mF$ on the vertex set $[n]$ as
 follows.
 For two adjacent vertices $u$ and $v$ in $G$, in $\colored(G)$ we put an edge
 directed from $u$ towards $v$ with color $\etype_G^h(u,v)$ and another directed
 edge from $v$ towards $u$ with color $\etype_G^h(v,u)$.
 Since $G$
 is simple, 
 $\colored(G)$ is a directed colored graph, i.e. $\colored(G) \in \mG(\mC)$
\glsadd{not:coloredG}.
In fact, $\CB(\colored(G))$ is just the graph which results from $G$
by erasing its marks.

%{\bf Check with Payam.}

%So far, we discussed how to construct a directed colored graph from a simple
%marked graph. Now, we discuss the other direction. 
We can also go in the other direction.
Fix $h \ge 1$.
Let $\mF \subset \edgemark \times \mGb_*^{h-1}$ be a finite set with cardinality $L$,
%Therefore, elements in $\mF$ could be identified with integers $1, \dots, L$.
whose elements are 
identified with the integers $1, \dots, L$.
Let $\mC := \mF \times \mF$.
%let $H$ be a simple directed colored graph in
%$\mG(\mC)$ defined on a finite or countable vertex set $V$ 
Given a directed colored graph $H \in\mG(\mC)$,
defined on a finite or countable vertex set $V$,
%(recall that $H$ being simple means
%that $\CB(H)$ has no loops or multiple edges). 
and a sequence $\vbeta = (\beta(v): v \in V)$ with elements in $\vermark$, we define 
a marked graph on $V$,
called the \emph{marked color blind} version of $(\vbeta,H)$, denoted by $\MCB_{\vbeta}(H)$, as follows.
%to be
%a simple marked graph defined on the same vertex set $V$ constructed as follows.
For any pair of adjacent vertices $u$ and $v$ in $H$ where the color of the edge
directed from $u$ to $v$ is $(g,g')$ (and hence the color of the edge directed
from $v$ to $u$ is $(g',g)$), we put an edge in $\MCB_{\vbeta}(H)$ between $u$
and $v$ with the mark towards $u$ and $v$ being $g[m]$ and $g'[m]$, respectively.
%Recall that $g[m]$ and $g'[m]$ are the marked components of $g$ and $g'$, respectively. 
Moreover, the mark of a vertex $v \in V$ in $\MCB_{\vbeta}(H)$ is
defined to be $\beta(v)$. \glsadd{not:MCB}

Note that
%, although the colors in $H$ are of the form $(g,g')$ where $g,g' \in \edgemark \times \mGb_*^{h-1}$, 
it is not necessarily the case that 
%they 
the colors of $H$
are consistent with 
%the edge types of 
those in the directed colored graph $\colored(\MCB_{\vbeta}(H))$.
%$G:= \MCB_{\vbeta}(H)$. 
Namely,
$\etype_{\MCB_{\vbeta}}^h(u,v)$ for adjacent vertices $u, v$ can be different from the
color of the edge between $u$ and $v$ in $H$. See Figure~\ref{fig:triangle_MCB} for an example.  Proposition~\ref{prop:MCB-consistent-general} below
gives conditions under which this consistency holds. 
To be able to state this result, we
first need some definitions and tools, which are gathered in the next section.

%aa
%\subsection{Graphs with given neighborhood structure}
\subsection{Consistency in going from a directed colored graph to a marked graph and back}
\label{sec:graphs-with-given}
% bb

In this section we first give conditions under which the edge colors of a
directed colored graph are related to the edge colors 
of the directed colored graph derived from
its marked colorblind
version. This is done in Proposition~\ref{prop:MCB-consistent-general}.
Next, building on this result, we
study the configuration model given by the colored
degree sequence of the directed colored graph
associated to a given marked graph, and relate the marked color blind versions of the directed colored graphs arising as realizations from this configuration model to the original marked graph we started with.

\begin{figure}
  \centering
    \begin{tikzpicture}
    \begin{scope}[xshift=-5cm]
      \node[nodeB] (n0) at (0,0)  {};
      \node[nodeB] (n1) at (0,-1) {};
      \node[nodeB] (n2) at (0,-2) {};

      \drawedge{n0}{n1}{B}{B}
      \drawedge{n1}{n2}{B}{B}
      \draw[edgeO] (n0) -- (0,0.6);
      \node at (0,1) {$g$};
      \node at (0,-3) {$(i)$};
    \end{scope}
    \begin{scope}[xshift=0cm]
      \node at (0,1) {$H$};
      \node at (0,-3) {$(ii)$};
      \begin{scope}[yshift=-1cm]
      \node[fill,circle, inner sep=2pt] (n1) at (90:1.5) {};
      \node[fill,circle, inner sep=2pt] (n2) at (210:1.5) {};
      \node[fill,circle, inner sep=2pt] (n3) at (-30:1.5) {};

      \path[->,>=stealth']  (n1) edge[bend right=20] node[above,sloped] {$(g,g)$} (n2);
      \path[->,>=stealth']  (n2) edge[bend right=20] node[above,sloped] {$(g,g)$} (n1);

      \path[->,>=stealth']  (n1) edge[bend right=20] node[above,sloped] {$(g,g)$} (n3);
      \path[->,>=stealth']  (n3) edge[bend right=20] node[above,sloped] {$(g,g)$} (n1);

      \path[->,>=stealth']  (n2) edge[bend right=20] node[below,sloped] {$(g,g)$} (n3);
      \path[->,>=stealth']  (n3) edge[bend right=20] node[below,sloped] {$(g,g)$} (n2);

      \end{scope}
    \end{scope}

    \begin{scope}[xshift=5cm]
      \node at (0,1) {$\mathsf{MCB}_{\vec{\beta}}(H)$};
      \node at (0,-3) {$(iii)$};
      \begin{scope}[yshift=-1cm]
      \node[nodeB] (n1) at (90:1.2) {};
      \node[nodeB] (n2) at (210:1.2) {};
      \node[nodeB] (n3) at (-30:1.2) {};

      \draw[edgeO] (n1) -- (n2) -- (n3) -- (n1);

      \end{scope}
    \end{scope}
  \end{tikzpicture}
  \caption[MCB counterexample]{$(i)$: $g \in \edgemark \times \mGb_*^2$ where $\vermark = \{\tikz{\node[nodeB,scale=0.5] at (0,0) {};}, \tikz{\node[nodeR,scale=0.5] at (0,0)
  {};}\}$ and $\edgemark =
  \{\text{\color{blueedgecolor} Blue (solid)}, \text{\color{orangeedgecolor}
    Orange (wavy)} \}$.  a simple
  directed colored graph $H \in \mG(\mC)$ where $\mC = \{(g,g)\}$, $(iii)$:
  $G = \MCB_{\vbeta}(H)$ where $\vbeta = \{\tikz{\node[nodeB,scale=0.5] at (0,0)
    {};}, \tikz{\node[nodeB,scale=0.5] at (0,0) {};}, \tikz{\node[nodeB,scale=0.5] at (0,0) {};}\}$. Note
  that none of $\etype_G^2(1,2)$, $\etype_G^2(1,3)$ and $\etype_G^2(2,3)$ is
  equal to $(g,g)$. }
  \label{fig:triangle_MCB}
\end{figure}

%aa
\begin{definition}
  \label{def:theta-D-graphical}
  Fix $h \in \nats$ and assume $\mF \subset \edgemark \times \mTb_*^{h-1}$ is a
  finite set with cardinality  $L$. Define $\mC := \mF \times \mF$. Given a
  matrix $D = (D_{t,t'}: t,t' \in \mF) \in \mM_L$ and $\theta \in \vermark$, we
  say that the pair $(\theta, D)$ is ``graphical'' if there exists %a rooted marked tree 
  $[T, o]\in \mTb_*^h$ such that $\tau_T(o) = \theta$ and, for all
  $t, t' \in \mF$, we have $E_h(t, t')(T, o) = D_{t,t'}$. Moreover, for
  $\tilde{t}, \tilde{t}' \in \edgemark \times \mTb_*^{h-1}$ such that either
  $\tilde{t} \notin \mF$ or $\tilde{t}' \notin \mF$, we require $E_h(\tilde{t},
  \tilde{t}')(T, o)$ to be zero. \glsadd{trm:graphical}
\end{definition}

%bb
%\color{red}
%Note for Payam:
%Deleted ``finite rooted marked tree". Check.
%\color{black}
%\pres{correct}

%It is easy to see that the marked rooted tree isomorphism class
From Lemma~\ref{lem:r-r'-same-degree_are-the-same}
in Appendix~\ref{sec:marked-rooted-trees-some-props},
$[T, o]$  in the above definition,
if it exists, is unique. 
%In fact, this is a result of the following general fact:

Fix an integer $h \ge 1$. For $t \in \edgemark \times \mTb_*^{h-1}$, $x \in \edgemark$,
and $\theta \in \vermark$, define $(\theta, x) \otimes t$ to be the  element in
$\mTb_*^h$ where the root $o$ has mark $\theta$,  and attached to it is one offspring $v$, where the
subtree of $v$ is isomorphic to $t[s]$ and the edge connecting $o$ to $v$ has
mark $x$ towards $o$ and $t[m]$ towards $v$. See Figure~\ref{fig:otimes} for an
example. \glsadd{not:otimes}
For $s \in \mTb_*$
and $x \in \edgemark$, let $x \times s$ be  $t \in \edgemark
\times \mTb_*$ where $t[m] = x$ and $t[s] = s$. \glsadd{not:times}
For two rooted trees $s, s'
\in \mTb_*$ which have the same vertex mark at the root, define $s \odot s'$ to be the
element in $\mTb_*$ obtained by joining $s$ and $s'$ at a common root, see
Figure~\ref{fig:odot} for an example. \glsadd{not:odot}
Note that $\odot$ is commutative and associative. Therefore, we may write $\bigodot_{i=1}^k s_k$ for a collection $s_i, 1 \leq i \leq k$, of elements in $\mTb_*$, which all have the same mark at the root.

\begin{figure}
  \centering
    \begin{tikzpicture}
      \begin{scope}[xshift=-3.5cm]
        \node[nodeB] (n1) at (0,0) {};
        \node[nodeB] (n2) at (0,-1.5) {};
        \node[nodeR] (n3) at ($(0,-1.5) + (-135:1.5)$) {};
        \node[nodeB] (n4) at ($(0,-1.5) + (-45:1.5)$) {};
        
        % \drawedge{n1}{n2}{O}{O}
        \draw[edgeO] (n1) -- (n2);
        \drawedge{n2}{n3}{B}{B}
        \drawedge{n2}{n4}{O}{B}
        \draw[edgeO] ($(n1)+(0,0.6)$) -- (n1);
        \node at (0,-3.5) {$(i)$};
      \end{scope}
      
      \begin{scope}[xshift=3.5cm]
        \node[nodeR] (n0) at (0,1.5) {};
        \node[nodeB] (n1) at (0,0) {};
        \node[nodeB] (n2) at (0,-1.5) {};
        \node[nodeR] (n3) at ($(0,-1.5) + (-135:1.5)$) {};
        \node[nodeB] (n4) at ($(0,-1.5) + (-45:1.5)$) {};
        
        % \drawedge{n1}{n2}{O}{O}
        \draw[edgeO] (n1) -- (n2);
        \drawedge{n2}{n3}{B}{B}
        \drawedge{n2}{n4}{O}{B}
        % \draw[edgeO] ($(n1)+(0,0.6)$) -- (n1);
        \drawedge{n1}{n0}{O}{B}
        \node at (0,-3.5) {$(ii)$};
      \end{scope}

    \end{tikzpicture}
  \caption[$(\theta,x) \otimes t$ for $t \in \mTb_*^2$]{$(ii)$ depicts $(\theta, x) \otimes t$ for $t \in \edgemark \times
    \mTb_*^{2}$ as shown in $(i)$, $\theta = \tikz{\node[nodeR,scale=0.5] at (0,0)
  {};}$ and $x = \text{\color{blueedgecolor} Blue (solid)}$. We have used
    our convention of Figure~\ref{fig:Guv} for showing $t$, i.e.\ the half edge
    towards the root is the mark component.}
  \label{fig:otimes}
\end{figure}

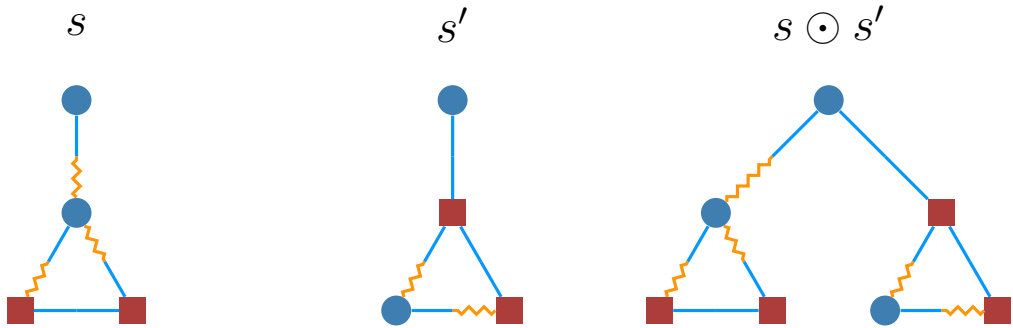
\begin{figure}
  \centering
  \begin{tikzpicture}
  \begin{scope}[xshift=-5cm]
    \node[nodeB] (n1) at (0,0) {};
    \node[nodeB] (n2) at (0,-1.5) {};
    \node[nodeR] (n3) at ($(0,-1.5) + (-120:1.5)$) {};
    \node[nodeR] (n4) at ($(0,-1.5) + (-60:1.5)$) {};

    \drawedge{n1}{n2}{B}{O}
    \drawedge{n2}{n3}{B}{O}
    \drawedge{n2}{n4}{O}{B}
    \drawedge{n3}{n4}{B}{B}
    \node[scale=1.5] at (0,1) {$s$};

  \end{scope}

  \begin{scope}[xshift=0cm]
    \node[nodeB] (n1) at (0,0) {};
    \node[nodeR] (n2) at (0,-1.5) {};
    \node[nodeB] (n3) at ($(0,-1.5) + (-120:1.5)$) {};
    \node[nodeR] (n4) at ($(0,-1.5) + (-60:1.5)$) {};

    \drawedge{n1}{n2}{B}{B}
    \drawedge{n2}{n3}{B}{O}
    \drawedge{n2}{n4}{B}{B}
    \drawedge{n3}{n4}{B}{O}

    \node[scale=1.5] at (0,1) {$s'$};
  \end{scope}

  \begin{scope}[xshift=5cm]
    \node[nodeB] (n1) at (0,0) {};
    \begin{scope}[xshift=-1.5cm, yshift=-1.5cm]
      \node[nodeB] (l2) at (0,0) {};
      \node[nodeR] (l3) at (-120:1.5) {};
      \node[nodeR] (l4) at (-60:1.5) {};
      
      \drawedge{n1}{l2}{B}{O}
      \drawedge{l2}{l3}{B}{O}
      \drawedge{l2}{l4}{O}{B}
      \drawedge{l3}{l4}{B}{B}

    \end{scope}

    \begin{scope}[xshift=1.5cm, yshift=-1.5cm]
      \node[nodeR] (r2) at (0,0) {};
      \node[nodeB] (r3) at (-120:1.5) {};
      \node[nodeR] (r4) at (-60:1.5) {};
      
      \drawedge{n1}{r2}{B}{B}
      \drawedge{r2}{r3}{B}{O}
      \drawedge{r2}{r4}{B}{B}
      \drawedge{r3}{r4}{B}{O}

    \end{scope}
    \node[scale=1.5] at (0,1) {$s \odot s'$};
  \end{scope}

  \end{tikzpicture}
  \caption{$s \odot s'$ for two rooted marked trees $s, s' \in \mTb_*$ that have the same vertex mark at the root.}
  \label{fig:odot}
\end{figure}

Let $G$ be a locally finite marked graph on a finite or countable vertex set $V$. Let $v$ and $w$ be adjacent vertices in $G$ such that $\deg_G(v) \geq 2$.
For $h \ge 1$, if $(G, v)_h$ is a rooted tree, then it is easy
  to see that we have
  \begin{equation}
    \label{eq:G-w-v-h--odot-otimes}
    G[w,v]_h = \xi_G(w, v) \times \left [  \bigodot_{\stackrel{w' \sim_G v}{w' \neq w}} \left( (\tau_G(v), \xi_G(w',v)) \otimes G[v, w']_{h-1} \right) \right ].
  \end{equation}
  Also, if $v$ is a vertex in $G$ with $\deg_G(v) \geq 1$, it is easy to see that if $(G, v)_h$ is a rooted tree, we have
  \begin{equation}
    \label{eq:G-v-h--odot-otimes}
    [G, v]_h = \bigodot_{w \sim_G v} \left( (\tau_G(v), \xi_G(w, v)) \otimes G[v,w]_{h-1} \right).
  \end{equation}

With this, we are ready to state conditions under which the edge colors of a
directed colored graph are related to the edge colors 
of the directed colored graph derived from
its marked colorblind
version. The following proposition can be considered to be a generalization of
Lemma 4.9 in \cite{bordenave2015large}.

%aa
\begin{prop}
  \label{prop:MCB-consistent-general}
  Fix an integer $h \ge 1$. Let $\mF \subset \edgemark \times \mTb_*^{h-1}$ be a finite set with cardinality $L$ and set $\mC = \mF \times \mF$. Let $H \in \mG(\mC)$ be a simple directed colored graph on a finite or countable vertex set $V$, and let $\vbeta = (\beta(v): v \in V)$
  have elements in $\vermark$.
  Define $A_h$ to be the set of vertices $v \in V$ such that the $h$--neighborhood of $v$ in $\CB(H)$ is a rooted tree and also, for all vertices $w$ with distance no more than $h$ from $v$ in $\CB(H)$, $(\beta(w), D^H(w))$ is graphical. Then, if $G = \MCB_{\vbeta}(H)$, it holds that
  \begin{enumerate}
  \item For each vertex $v \in A_h$, we have $(G, v)_h \equiv [T_v, o_v]_h$ where $[T_v, o_v]$ is the rooted tree corresponding to the graphical pair $(\beta(v), D^H(v))$.
  \item If $ v,w \in A_h$ are adjacent vertices in $H$ and the edge directed
    from $v$ towards $w$ has color $(t, t')$ in $H$, we have $\etype_G^h(v,w) =
    (t,t')$, i.e.\ $G(w, v)_{h-1} \equiv t$ and $G(v,w)_{h-1} \equiv t'$.
  \end{enumerate}
\end{prop}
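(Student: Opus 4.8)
The plan is to prove the two assertions by induction on $h$, exploiting the recursive formulas \eqref{eq:G-w-v-h--odot-otimes} and \eqref{eq:G-v-h--odot-otimes}, which express the $h$-neighborhood of a vertex in a marked graph in terms of the $(h-1)$-neighborhoods of its neighbors and the $\otimes$/$\odot$ operations. First I would set up notation: for $v \in A_h$, write $[T_v, o_v]$ for the unique rooted tree corresponding to the graphical pair $(\beta(v), D^H(v))$, uniqueness being guaranteed by Lemma~\ref{lem:r-r'-same-degree_are-the-same}. The key structural observation is that since $(\beta(v), D^H(v))$ is graphical, by Definition~\ref{def:theta-D-graphical} together with \eqref{eq:G-v-h--odot-otimes} applied to $T_v$, we have $[T_v,o_v]_h = \bigodot_{t,t' \in \mF} \bigodot_{\text{multiplicity } D_{t,t'}^H(v)} \left( (\beta(v), t[m]) \otimes t' \right)$; that is, the graphical condition says precisely that the "local picture" of $v$ in $H$, read through the colors, assembles via $\odot$ and $\otimes$ into a genuine rooted tree of depth $h$.

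For the base case $h = 1$: if $v \in A_1$, then for each neighbor $w$ of $v$ in $H$ (equivalently in $\CB(H)$, which near $v$ is a tree, hence $H$ restricted to the neighborhood of $v$ has no multiple edges), the color of the edge directed $v \to w$ is some $(t, t') \in \mC$ with $t, t' \in \mF \subset \edgemark \times \mTb_*^{0}$. By the definition of $\MCB_{\vbeta}$, the edge between $v$ and $w$ in $G$ carries mark $t[m]$ towards $v$ and $t'[m]$ towards $w$, and $v$ carries mark $\beta(v)$. Matching this against the graphical description of $(\beta(v), D^H(v))$ gives part~1 for $h=1$ directly; part~2 for $h=1$ is then the statement that $G[w,v]_0$ recovers $t$ and $G[v,w]_0$ recovers $t'$, which is immediate since a depth-$0$ neighborhood records only the incident edge mark and the root vertex mark, and the graphical condition for $w$ forces $\tau_G(w) = \beta(w)$ to match the root mark of $t'$.

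For the inductive step, assume the proposition holds for $h-1$ (with the set $A_{h-1} \supseteq A_h$). Fix $v \in A_h$ and a neighbor $w$ of $v$; let $(t,t')$ be the color of the edge $v \to w$ in $H$. Since $v \in A_h$, every neighbor $w'$ of $v$ lies in $A_{h-1}$, so by the induction hypothesis $(G, w')_{h-1}$ is the tree attached to the graphical pair $(\beta(w'), D^H(w'))$, and the edge $v \to w'$ with color $(s,s')$ satisfies $G[v,w']_{h-2} \equiv s'$; moreover, $(G,v)_h$ being built from a tree structure, I would verify (using that the $h$-neighborhood of $v$ in $\CB(H)$ is a tree, so no two neighbors of $v$ are joined or share further common neighbors within range) that $(G,v)_h$ is itself a rooted tree. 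Then I apply \eqref{eq:G-v-h--odot-otimes} to $G$ at $v$: $[G,v]_h = \bigodot_{w' \sim_G v} \big( (\tau_G(v), \xi_G(w',v)) \otimes G[v,w']_{h-1} \big)$. Here $\tau_G(v) = \beta(v)$ and, for each $w'$ with edge color $(s,s')$, we have $\xi_G(w',v) = s[m]$, and I need $G[v,w']_{h-1} \equiv s'$; this is exactly part~2 of the induction hypothesis at level $h-1$ for the pair $v, w'$ — but note the hypothesis gives the depth $h-2$ truncation, so I must bootstrap one level using part~1 at level $h-1$ to pin down $G[v,w']_{h-1}$ fully, since $(G,w')_{h-1}$ is determined and the edge mark is determined. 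Assembling, $[G,v]_h$ matches the $\odot/\otimes$ description of the tree associated to $(\beta(v), D^H(v))$, giving part~1 at level $h$. Part~2 at level $h$ for the edge $v \to w$ then follows: by \eqref{eq:G-w-v-h--odot-otimes}, $G[w,v]_h$ is $\xi_G(w,v) \times \big[ \bigodot_{w' \sim_G v,\, w' \neq w} (\ldots) \big]$, and the same matching of colors-to-marks identifies this (in its depth $h-1$ truncation) with $t$; symmetrically, running the argument at $w$ (also in $A_h$) identifies $G[v,w]_{h-1}$ with $t'$.

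**Main obstacle.** I expect the principal difficulty to be the careful bookkeeping in the inductive step where the depth indices must be tracked precisely: part~2 of the hypothesis at level $h-1$ delivers only the $(h-2)$-truncated neighborhoods, whereas \eqref{eq:G-v-h--odot-otimes} at level $h$ demands the $(h-1)$-truncated neighborhoods $G[v,w']_{h-1}$. Resolving this requires combining part~1 at level $h-1$ (which gives the full isomorphism type of $(G,w')_{h-1}$, hence of the rooted subtree hanging off $v$ through $w'$) with the known incident edge mark, and checking these are consistent — essentially verifying that the graphical data propagates coherently one layer at a time. A secondary, more routine, point is justifying that $(G,v)_h$ is genuinely a rooted tree (so that \eqref{eq:G-v-h--odot-otimes} is even applicable): this needs the hypothesis that the $h$-neighborhood of $v$ in $\CB(H)$ is a tree, together with the observation that $\MCB_{\vbeta}$ does not create or destroy adjacencies relative to $\CB(H)$, so cycles in $G$ near $v$ would correspond to cycles in $\CB(H)$ near $v$.
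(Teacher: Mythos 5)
The core idea — assembling $(G,v)_h$ and $G[w,v]_h$ via the $\odot/\otimes$ recursions \eqref{eq:G-w-v-h--odot-otimes} and \eqref{eq:G-v-h--odot-otimes} — matches the paper's. Where you diverge is in the induction scheme: you propose an outer induction on $h$, whereas the paper fixes $h$ and inducts on an inner depth $r$, introducing the message-passing quantity $M_r(v,w)$ of \eqref{eq:Mt-def} and proving the dual identity $M_r(v,w) = c(v,w)_r = G[w,v]_r$ simultaneously for $r = 0, 1, \dots, h-1$. That formulation delivers the $r = h-1$ statement directly — which is exactly what part~2 needs — so the truncation mismatch you flag never arises.

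In your framing the mismatch is real, and your sketch of the fix does not close the gap. Two things are missing. First, a structural point you leave implicit: the level-$(h-1)$ proposition speaks of a color set $\mF' \subset \edgemark \times \mTb_*^{h-2}$, so to invoke it you must first replace $H$ by the graph $H'$ whose colors are the depth-$(h-2)$ truncations of those of $H$, and verify that $(\beta(w), D^{H'}(w))$ is graphical at depth $h-1$ whenever $(\beta(w), D^H(w))$ is graphical at depth $h$. Second, and more substantively, knowing the isomorphism class $(G,w')_{h-1}$ from part~1 does \emph{not} by itself determine $G[v,w']_{h-1}$: you must identify which offspring of $w'$ in that tree plays the role of $v$, using the branch data (the edge mark $\xi_G(v,w')$ and the $(h-2)$-subtree $G(w',v)_{h-2}$, the latter supplied by the level-$(h-1)$ part~2), then appeal to Lemma~\ref{lem:ov=ov'--vo-v'o} to conclude the complement of that branch is well-defined, and finally compare against the corresponding offspring in the graphical tree $[T_{w'},o_{w'}]_{h-1}$ across the isomorphism $(G,w')_{h-1} \equiv [T_{w'},o_{w'}]_{h-1}$ to conclude the complement equals the intended color $s' = c(w',v)$. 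This chain can be made rigorous, but it is a substantive argument — more than "careful bookkeeping" — and none of it appears in your proposal beyond the phrase "bootstrap one level." The paper's $M_r$ device is designed precisely to sidestep this identification-and-matching step: $M_r(v,w)$ is indexed by the ordered pair $(v,w)$ from the start, so the question "which branch of $(G,w')_{h-1}$ is $v$?" never has to be asked.
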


% bb
% summary of proof can be found in the note 2018-08-24_PS_marked-BC_prop-MCB-consistent-general.pdf

  \begin{proof}
    For adjacent vertices $u$ and $v$ in $H$ (which are, by definition, also
    adjacent in $G$), let $c(u,v) \in \mF$ be the first component of the color
    of the edge directed from $u$ towards $v$.  Note that $H$ is simple, meaning
    that there is only one edge directed from $u$ towards $v$, so $c(u,v)$ is well-defined.
Also, recall from the definition of $\mG(\mC)$ that the color of the edge directed
from $v$ towards $u$ is $\bar{c}$, with $c$ being the color of the edge directed
from $u$ towards $v$. 
    Therefore, the color of the edge directed from $u$ towards $v$ is $(c(u,v), c(v,u))$.
    Define $A_0$ to be the set of vertices $v \in V$ such that $(\beta(v),
    D^H(v))$ is graphical. Moreover, for $1 \leq l \leq h$, define $A_l$ to be
    the set of vertices $v \in V$ such that $(G, v)_l$ is a rooted tree and, for
    all $w \in V$ with distance at most $l$ from $v$ in $G$, $(\beta(w),
    D^H(w))$ is graphical.
    Note that we have $A_0 \supseteq A_1 \supseteq
    \dots \supseteq A_h$.
    On the other hand, note that removing the marks in $G$ yields $\CB(H)$, hence $A_h$ defined above coincides with that in the statement of Proposition~\ref{prop:MCB-consistent-general}.
    For each $v \in A_0$, let $[T_v, o_v] \in \mTb_*^h$ be the rooted tree corresponding to the graphical pair $(\beta(v), D^H(v))$, and let $(T_v, o_v)$ be an arbitrary member of the isomorphism class $[T_v, o_v]$.
    Observe that, for each vertex $v \in A_0$ with $\deg_G(v) \geq 1$, there exists a  bijection $f_v$ that maps the set of vertices adjacent to $v$ in $G$ to the set of vertices adjacent to $o_v$ in $T_v$ such that for all $w \sim_G v$, we have
    \begin{equation}
      \label{eq:c-v-w--T-v-f-v}
      \begin{aligned}
        c(v,w) &= T_v[f_v(w), o_v]_{h-1}, \\
        c(w, v) &= T_v[o_v, f_v(w)]_{h-1}.
      \end{aligned}
    \end{equation}
    This is because applying to $(\beta(v),
    D^H(v))$ the  definition of what it means to be a graphical pair implies that, for each $t, t' \in \edgemark \times \mTb_*^{h-1}$,
    we have that $E_h(t, t')(T_v, o_v)$, which is the number of vertices $\tilde{w}
    \sim_{T_v} o_v$ such that $T_v(\tilde{w}, o_v) \equiv t$ and $T_v(o_v,
    \tilde{w}) \equiv t'$, is equal to the number of vertices $w \sim_G v$ such that $c(v,w) = t$ and $c(w, v) = t'$.

    Now, for each pair of adjacent vertices $(v,w)$ in $G$, and $0 \leq r \leq
    h-1$, we inductively define $M_r(v,w) \in \edgemark \times \mTb_*^r$ as
    follows, We first define $M_0(v,w) \in \edgemark \times \mTb_*^0$ to have its mark
    component equal to $\xi_G(w,v) = c(v,w)[m]$ and its subtree component
    a single vertex with mark $\beta(v)$. In fact, $M_0(v,w) = G[w,v]_0$. For $v \sim_G w$ and $1 \leq r \leq h-1$, if $\deg_G(v) = 1$, i.e. $w$ is the only vertex adjacent to $v$, we define $M_r(v,w)$ to be equal to $M_0(v,w)$. Otherwise, we define
    \begin{equation}
      \label{eq:Mt-def}
      M_r(v,w) := c(v,w)[m] \times \left [ \bigodot_{\stackrel{w' \sim_G v}{w' \neq w}} (\beta(v), c(v, w')[m]) \otimes M_{r-1}(w',v) \right ].
    \end{equation}
    % {\color{red} the message passing interpretation}
    % This relation is motivated by \eqref{eq:G-w-v-h--odot-otimes}.
    See Remark~\ref{rem:message-passing} below for a  message passing interpretation for $M_r(v, w)$ motivated by \eqref{eq:G-w-v-h--odot-otimes}.
    By induction on $r$, we show the following
    \begin{subequations}
      \begin{align}
        v \in A_r,  w \sim_G v \quad \Rightarrow \quad M_r(v, w) &= c(v, w)_r, \qquad \forall 0 \leq r \leq h-1, \label{eq:induction-t-M-c}\\
        v \in A_r,  w \sim_G v \quad \Rightarrow \quad M_r(v, w) &= G[w, v]_r, \qquad \forall 0 \leq r \leq h-1. \label{eq:induction-t-M-G}
      \end{align}
    \end{subequations}
    Recall that $c(v,w)_r = (x,t_r)$, where $x$ and $t$ are the mark and the
    subgraph components of $c(v,w) \in \mF$, respectively. 
    Then, we use \eqref{eq:induction-t-M-c}
    and \eqref{eq:induction-t-M-G}
    to show that
    \begin{equation}
      \label{eq:induction-t-G-v-t}
      v \in A_r \quad \Rightarrow \quad (G, v)_r \equiv (T_v, o_v)_r, \qquad \forall 0 \leq r \leq h.
    \end{equation}
    Combining \eqref{eq:induction-t-M-c} and \eqref{eq:induction-t-M-G}, we
    realize that for adjacent vertices $v, w \in A_h$, we have $G(v, w)_{h-1}
    \equiv c(w,v)_{h-1} = c(w,v)$ and $G(w, v)_{h-1} \equiv c(v, w)_{h-1} = c(v,w)$, which is the second part
    of the  statement in Proposition~\ref{prop:MCB-consistent-general}. The
    first part is a result of \eqref{eq:induction-t-G-v-t} for $r = h$.
    Therefore, it suffices to show \eqref{eq:induction-t-M-c},
    \eqref{eq:induction-t-M-G} and \eqref{eq:induction-t-G-v-t} to complete the
    proof.

    To start the proof, note that, for $r = 0$,  $v \in A_0$, and $w \sim_G v$, the mark component of
    $M_0(v, w)$ is $\xi_G(w, v)$ and its subtree component is a single root with
    mark $\beta(v)$. On the other hand, the mark component of $c(v, w)_0$ is
    $c(v, w)[m] = \xi_G(w, v)$ and its subtree component, using
    \eqref{eq:c-v-w--T-v-f-v}, is the subtree component of $T_v[f_v(w), o_v]_0$.
    But since the pair $(\beta(v), D^H(v))$ is graphical, $T_v[f_v(w), o_v]_0$
    is a single root with mark $\beta(v)$. This establishes
    \eqref{eq:induction-t-M-c} for $r=0$. Moreover, \eqref{eq:induction-t-M-G}
    follows from the facts that, by the definition of $G = \MCB_{\vbeta}(H)$,
    the mark component of $G[w, v]_0$ is $\xi_G(w, v)=c(v,w)[m]$ and its subtree component is a single root with mark $\beta(v)$.

    Now, we use induction to show \eqref{eq:induction-t-M-c} and \eqref{eq:induction-t-M-G}.
    First, we directly show \eqref{eq:induction-t-M-c} and
    \eqref{eq:induction-t-M-G} for a vertex $v$ with $\deg_G(v) = 1$. If $w$ is the
    only vertex adjacent to such $v$, we have $M_r(v, w) = M_0(v, w) \in
    \edgemark \times \mTb_*^0$, by definition. Recall that the mark component of $M_0(v, w)$  is
    $\xi_G(w, v) = c(v, w)[m]$ and its subtree component is a single root with mark
    $\beta(v)$. But this is precisely $G[w, v]_0$, which shows
    \eqref{eq:induction-t-M-G}. To show \eqref{eq:induction-t-M-c}, from
    \eqref{eq:c-v-w--T-v-f-v}, we have $c(v,w) = T_v[f_v(w), o_v]_{h-1}$. But since
    $f_v$ is a bijection, and the pair $(\beta(v), D^H(v))$ is graphical, we have $\deg_{T_v}(o_v) = 1$ and hence the subtree
    component of $T_v[f_v(w), o_v]_{h-1}$ is a single root with mark $\beta(v)$,
    which is precisely the subtree component of $M_0(v, w)$. The mark components of
    $M_r(v, w) = M_0(v,w)$ and $c(v, w)$ are both equal to $\xi_G(w, v)$. This establishes
    \eqref{eq:induction-t-M-c} in case $\deg_G(v) = 1$.

    Now, we  show \eqref{eq:induction-t-M-c} and
    \eqref{eq:induction-t-M-G} for $v \in A_r$ such that $\deg_G(v) \geq 2$. If $v \in A_r$ then all the vertices adjacent to $v$ are in $A_{r-1}$. Therefore,
    using the induction hypothesis \eqref{eq:induction-t-M-c}
    for $r-1$ on the right hand side of \eqref{eq:Mt-def}, we
    realize that for such $v$ and $w \sim_G v$ we have
    \begin{equation*}
      M_r(v, w) = c(v, w)[m] \times \left [ \bigodot_{\stackrel{w' \sim_G v}{w' \neq w}} (\beta(v), c(v, w')[m]) \otimes c(w', v)_{r-1} \right ].
    \end{equation*}
    Using \eqref{eq:c-v-w--T-v-f-v} and the fact that $\beta(v) = \tau_{T_v}(o_v)$, we get
    \begin{equation*}
      M_r(v, w) = \xi_{T_v}(f_v(w), o_v) \times \left [ \bigodot_{\stackrel{w' \sim_G v}{w' \neq w}} (\tau_{T_v}(o_v), \xi_{T_v}(f_v(w'), o_v)) \otimes T_v[o_v, f_v(w')]_{r-1} \right ].
    \end{equation*}
    Observe that $f_v$ is a bijection, hence  the set of vertices $w'$ in $G$ such with $w' \sim_G
    v$ and  $w' \neq w$ is mapped by $f_v$ to the set of vertices $\tilde{w}$ in $T_v$
    such that  $\tilde{w} \sim_{T_v}o_v$ and $\tilde{w} \neq f_v(w)$. With this, we can rewrite the above relation
    as
    \begin{equation*}
      M_r(v, w) = \xi_{T_v}(f_v(w), o_v) \times \left [ \bigodot_{\stackrel{\tilde{w} \sim_{T_v} v}{\tilde{w} \neq f_v(w)}} (\tau_{T_v}(o_v), \xi_{T_v}(\tilde{w}, o_v)) \otimes T_v[o_v, \tilde{w}]_{r-1} \right ].
    \end{equation*}
    Using \eqref{eq:G-w-v-h--odot-otimes}, since $(T_v,o_v)$ is a rooted tree, the right hand side is precisely $T_v[f_v(w), v]_r$. Another usage of \eqref{eq:c-v-w--T-v-f-v} implies \eqref{eq:induction-t-M-c}.

    To show \eqref{eq:induction-t-M-G} for $v \in A_r$ with $\deg_G(v) \geq 2$
    and $w \sim_G v$, again using the fact that $w' \in A_{r-1}$ for all $w'
    \sim_G v$, we realize that by first using \eqref{eq:induction-t-M-G} for $r-1$ and substituting in the right hand side of \eqref{eq:Mt-def}, then using $c(v,w')[m] = \xi_G(w', v)$ for all $w' \sim_G v$,  and finally using $\beta(v) = \tau_G(v)$, we get
    \begin{equation*}
      M_r(v,w) = \xi_G(w, v)\times \left [ \bigodot_{\stackrel{w' \sim_G v}{w' \neq w}} (\tau_G(v), \xi_G(w', v)) \otimes G[v, w']_{r-1} \right ].
    \end{equation*}
    Since $v \in A_r$, $(G, v)_r$ is a rooted tree. Thereby, \eqref{eq:G-w-v-h--odot-otimes} implies that the right hand side of the preceding equation is precisely $G[w, v]_r$ which completes the proof of  \eqref{eq:induction-t-M-G}.

    Now, it remains to show \eqref{eq:induction-t-G-v-t}. We first do this for $v
    \in A_r$ such that $\deg_G(v) \geq 1$. Observe that, since  $v \in A_r$, $(G,
    v)_r$ is a rooted tree. Consequently, using \eqref{eq:G-v-h--odot-otimes}, we
    have
    \begin{equation*}
      [G, v]_r = \bigodot_{w \sim_G v} (\tau_G(v), \xi_G(w, v)) \otimes G[v,w]_{r-1}.
    \end{equation*}
    Since $w \in A_{r-1}$ for all $w \sim_G v$, using \eqref{eq:induction-t-M-c} and \eqref{eq:induction-t-M-G} for $r-1$, we
    realize that, for each $w \sim_G v$ on the right hand side, we have $G[v,
    w]_{r-1} = c(w, v)_{r-1}$. Moreover, we have $\tau_G(v) = \beta(v) =
    \tau_{T_v}(o_v)$ for $w \sim_G v$. Furthermore, 
    by \eqref{eq:c-v-w--T-v-f-v},  $\xi_G(w, v) = c(v, w)[m] = \xi_{T_v}(f_v(w), o_v)$ for all $w \sim_G v$. Substituting these into the
    above relation and using  \eqref{eq:c-v-w--T-v-f-v}, we get
    \begin{equation*}
      [G, v]_r = \bigodot_{w \sim_G v} (\tau_{T_v}(o_v), \xi_{T_v}(f_v(w), o_v)) \otimes T_v[o_v, f_v(w)]_{r-1}.
    \end{equation*}
    % in the above, we have used $\xi_G(w, v) = c(v, w)_m = (T_v[f_v(w),
    % o_v]_{h-1})_m = \xi_{T_v}(f_v(w), o_v)$$.
    %
    Since $f_v$ induces a one to one correspondence between the neighbors $w$ of
    $v$ in $G$ and the neighbors $\tilde{w}$ of $o_v$ in $T_v$,  we may rewrite
    the above as
    \begin{equation*}
      [G, v]_r = \bigodot_{\tilde{w} \sim_{T_v} o_v} (\tau_{T_v}(o_v), \xi_{T_v}(\tilde{w}, o_v)) \otimes T_v[o_v, \tilde{w}]_{r-1}.
    \end{equation*}
    Since $(T_v, o_v)$ is a rooted tree,  \eqref{eq:G-v-h--odot-otimes} implies that
    the right hand side is precisely $[T_v, o_v]_r$. This means that $[G, v]_r =
    [T_v, o_v]_r$ or equivalently $(G, v)_r \equiv (T_v, o_v)_r$, which is precisely
    \eqref{eq:induction-t-G-v-t}. 

To show \eqref{eq:induction-t-G-v-t} for $v \in
    A_r$ such that $\deg_G(v) = 0$, note that, for such $v$, $(G, v)_r$ is a single
    root with mark $\beta(v)$. Moreover, since $\deg_G(v) = \sum_{t, t' \in \mF}
    D^H_{t, t'}(v)$, we must have $D^H_{t,t'}(v) = 0$ for all $t, t' \in \mF$.
    Therefore, it must be the case that, for all $t, t' \in \edgemark \times
    \mTb_{h-1}^*$, $E_h(t, t')(T_v, o_v) = 0$. This means that $\deg_{T_v}(o_v) = 0$,
    and hence $(T_v, o_v)$ is a single root with mark $\beta(v)$. Therefore, $(G,
    v)_r \equiv (T_v, o_v)_r$ and the proof is complete.
  \end{proof}

%aa
\begin{rem}
  \label{rem:message-passing}
  Motivated by the definition of $M_r(v,w)$ in \eqref{eq:Mt-def},
  we can interpret $M_r(v,w)$ as the message the vertex $v$ sends to the vertex
  $w$ at time $r$, which is obtained by aggregating the messages sent by the
  neighbors of $v$, except for $w$, at time $r-1$. The proof of
  Proposition~\ref{prop:MCB-consistent-general} above implies that, if $v \in
  A_r$, $M_r(v,w)$ is in fact the local $r$--neighborhood of $v$ in $G$ after removing
  the edge between $v$ and $w$, i.e.\ $G[w,v]_r$. In fact, motivated by
  \eqref{eq:G-w-v-h--odot-otimes}, the message $M_r(v,w)$ is inductively constructed in a
  way so that this holds. 
\end{rem}

% bb

%aa
%\subsection{Graphs with given neighborhood structure}
%\subsection{Configuration model from a directed colored graph of a marked graph}
%\label{sec:marked-to-colored-to-CM}
% bb

In the second part of Section \ref{sec:dct-to-mark-and-mark-to-dcg},
we started with a directed colored graph
$H \in\mG(\mC)$,
defined on a finite or countable vertex set $V$,
%(recall that $H$ being simple means
%that $\CB(H)$ has no loops or multiple edges). 
and a sequence $\vbeta = (\beta(v): v \in V)$ with elements in $\vermark$,
and studied the corresponding
marked color blind version,
denoted by $\MCB_{\vbeta}(H)$. 
We will now start with a marked graph, consider the associated directed colored graph,
for a given $h \ge 1$,
and study the configuration model given by the colored
degree sequence of this graph. The purpose is to relate the marked color blind versions of the directed colored graphs arising as realizations from this configuration model to the original marked graph we started with.
%This is done below using the above tools.
The results we prove next are corollaries of 
Proposition \ref{prop:MCB-consistent-general}.

%\color{red}

\begin{definition}
  \label{def:h-tree-like}
  
A marked or unmarked graph $G$ is said to be $h$ tree--like if, for all vertices $v$ in
  $G$, the depth $h$ local neighborhood of $v$ in $G$, i.e.\ $(G, v)_h$, is a
  rooted tree. This condition is equivalent to requiring that there is no cycle
  of length $2h+1$ or less in $G$. \glsadd{trm:h-treelike}
  %\glsadd{not:Gbarn}
  
  \end{definition}
  
  %\color{black}

%aa
\begin{cor}
  \label{cor:h-tree-like-conf-same}
  Let $n \in \nats$. 
  Recall that $\mGb_n$ denotes the set of marked graphs on the vertex set $[n]$. 
  For $h \ge 1$, assume that a marked $h$ tree--like graph $G \in \mGb_n$ is
  given.  Let $\vD = \vD^{\colored(G)}$ be the colored degree sequence
  associated to the directed colored version, $\colored(G)$, of $G$. 
  Let $\vbeta :=
  (\beta(v): 1 \leq v \leq n)$ denote the vertex mark vector of $G$.
  %i.e. $\beta(v)$ is the mark of the vertex $v$ in $G$. 
  Then, for any directed colored
  graph $H \in \mG(\vD, 2h+1)$, 
  %we have 
  %$\psi_h(\MCB_{\vbeta}(H)) = \psi_h(G)$.
  we have $(\MCB_{\vbeta}(H), v)_h \equiv (G, v)_h$
  for all $v \in [n]$.
\end{cor}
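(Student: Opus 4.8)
The plan is to deduce this from Proposition~\ref{prop:MCB-consistent-general}, applied to the directed colored graph $H$ together with the vertex mark vector $\vbeta$ of $G$. First I would dispatch the routine hypotheses. Since $G$ is $h$ tree--like it is $(h-1)$ tree--like, so for each edge $\{u,v\}$ of $G$ the neighborhood $(G',v)_{h-1}$ appearing in $G[u,v]_{h-1}$ is a tree; hence $\mF \subset \edgemark \times \mTb_*^{h-1}$ and Proposition~\ref{prop:MCB-consistent-general} applies with $\mC = \mF \times \mF$. Also $H \in \mG(\vD,2h+1) \subseteq \mG(\vD,2) \subseteq \mG(\mC)$, so $H$ is a simple directed colored graph on $[n]$, and $\vbeta$ has entries in $\vermark$ because $\beta(v) = \tau_G(v)$.

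The crux is to verify that the set $A_h$ of Proposition~\ref{prop:MCB-consistent-general} is all of $[n]$. One requirement is that $(\CB(H),v)_h$ be a rooted tree for every $v$; this holds because $H \in \mG(\vD,2h+1)$ means $\CB(H)$ has no cycle of length $\le 2h+1$, i.e.\ $\CB(H)$ is $h$ tree--like in the sense of Definition~\ref{def:h-tree-like}. The other requirement is that $(\beta(w), D^H(w))$ be graphical for every $w$ within distance $h$ of $v$ in $\CB(H)$; I would establish this for every $w \in [n]$ at once. Since $H$ and $\colored(G)$ have the same colored degree sequence $\vD$, we have $D^H(w) = D^{\colored(G)}(w)$. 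By construction of $\colored(G)$ the edge directed out of $w$ towards a neighbor $v \sim_G w$ carries color $\etype_G^h(w,v) = (G[v,w]_{h-1}, G[w,v]_{h-1})$, so for $t,t' \in \mF$ we get $D^{\colored(G)}_{(t,t')}(w) = E_h(t,t')(G,w)$, while $D^{\colored(G)}_{(\tilde t,\tilde t')}(w) = 0$ whenever $\tilde t \notin \mF$ or $\tilde t' \notin \mF$, since every edge color of $\colored(G)$ lies in $\mF \times \mF$. Because $G$ is $h$ tree--like, $(G,w)_h$ is a rooted marked tree in $\mTb_*^h$ with root mark $\tau_G(w) = \beta(w)$, and $E_h(t,t')(G,w) = E_h(t,t')((G,w)_h)$ since replacing $G$ by its depth-$h$ truncation about $w$ does not change this count. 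Hence $(G,w)_h$ witnesses that $(\beta(w), D^H(w))$ is graphical, and therefore $A_h = [n]$.

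Finally I would invoke Proposition~\ref{prop:MCB-consistent-general}(1): for each $v \in A_h = [n]$ we have $(\MCB_{\vbeta}(H),v)_h \equiv [T_v,o_v]_h$, where $[T_v,o_v] \in \mTb_*^h$ is the rooted tree associated to the graphical pair $(\beta(v), D^H(v))$. By the uniqueness statement following Definition~\ref{def:theta-D-graphical} (coming from Lemma~\ref{lem:r-r'-same-degree_are-the-same}) this witness is unique, and in the previous paragraph I exhibited $(G,v)_h$ as a witness; since $(G,v)_h$ already has depth at most $h$, this gives $[T_v,o_v]_h = [T_v,o_v] = [G,v]_h$. Combining, $(\MCB_{\vbeta}(H),v)_h \equiv (G,v)_h$ for all $v \in [n]$.

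The main obstacle I anticipate is the bookkeeping in the middle paragraph: correctly reading off $D^{\colored(G)}_{(t,t')}(w) = E_h(t,t')(G,w)$ from the definitions of $\colored(G)$ and $\etype_G^h$, and checking the locality of $E_h$ (that it depends only on the depth-$h$ neighborhood of the root, which is transparent here because $G$ is $h$ tree--like) so that $(G,w)_h$ genuinely serves as a graphical witness. Everything else is a formal application of Proposition~\ref{prop:MCB-consistent-general} together with the uniqueness of the witness tree.
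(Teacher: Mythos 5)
Your proposal is correct and follows essentially the same path as the paper's proof: identify $D^H(v)$ with $D^{\colored(G)}(v)$, use $(G,v)_h$ as the graphical witness for each $(\beta(v), D^H(v))$, note that $\CB(H)$ is $h$ tree--like so $A_h = [n]$, and then apply part 1 of Proposition~\ref{prop:MCB-consistent-general}. The only difference is that you spell out a few details the paper leaves implicit (that $\mF \subset \edgemark \times \mTb_*^{h-1}$, the vanishing of $E_h(\tilde t,\tilde t')$ outside $\mC$, and the appeal to Lemma~\ref{lem:r-r'-same-degree_are-the-same} for uniqueness of the witness tree so that $[T_v,o_v] = [G,v]_h$), which is harmless and arguably clarifying.
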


% bb

  \begin{proof}
    By definition, since $H \in \mG(\vD, 2h+1) \subset \hmG(\vD)$, we have $D^H(v) = D(v) = D^{\colored(G)}(v)$
    for every vertex $v \in
    [n]$.
    Moreover, since $(G,
    v)_h$ is a rooted tree, using the rooted tree $(G, v)_h$ in
    Definition~\ref{def:theta-D-graphical}, 
    we realize that  the pair $(\beta(v), D^{\colored(G)}(v))$ is
    graphical. On
    the other hand, since $H \in \mG(\vD, 2h+1)$, the colorblind
    graph $\CB(H)$ is $h$ tree--like. Consequently, the set $A_h$ in
    Proposition~\ref{prop:MCB-consistent-general} coincides with $[n]$. Thus,
    the first part of Proposition~\ref{prop:MCB-consistent-general} implies that
    for all $v \in [n]$, we have $(\MCB_{\vbeta}(H), v)_h \equiv (G, v)_h$ which
    completes the proof.
  \end{proof}

%aa
  \begin{cor}
    \label{cor:NhG-count-nDb}
    Let $n \in \nats$ and $h \ge 1$. Let $G \in \mGb_n$ be an $h$ tree--like graph. Define
    \begin{equation}
      \label{eq:NhG}
      N_h(G) := |\{ G' \in \mGb_n: U(G')_h = U(G)_h\}|.
    \end{equation}
    Then, we have
    \begin{equation*}
      N_h(G) = n(\vD, \vbeta) | \mG(\vD, 2h+1) |,
    \end{equation*}
    where $\vD := \vD^{\colored(G)}$ and $\vbeta = (\beta(i): 1 \leq i \leq n)$
    with $\beta(i) := \tau_G(i)$. Here $n(\vD, \vbeta)$ denotes the number
    of distinct pairs $(\vD^\pi, \vbeta^\pi)$ where $\pi$ ranges over the set of
    permutations $\pi: [n] \rightarrow [n]$ and where, for $1\leq i \leq n$,
    $D^\pi(i) := D(\pi(i))$ and $\beta^\pi(i) := \beta(\pi(i))$. \glsadd{not:nDbeta}
  \end{cor}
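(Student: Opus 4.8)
The plan is to exhibit a bijection between $\mathcal{S} := \{G' \in \mGb_n : U(G')_h = U(G)_h\}$ and a family of configuration‑model sets indexed by the reindexings of the pair $(\vbeta, \vD)$. Concretely, I will partition $\mathcal{S}$ according to the value of the pair $\bigl((\tau_{G'}(i))_{i}, (D^{\colored(G')}(i))_i\bigr)$ attached to each $G' \in \mathcal{S}$, show that this pair always equals one of the $n(\vD,\vbeta)$ reindexings $(\vbeta^\pi, \vD^\pi)$, and establish that each block of the partition is in bijection with $\mG(\vD^\pi, 2h+1)$. Since $\colored(G) \in \mG(\vD, 2h+1)$ and $|\mG(\vD^\pi, 2h+1)| = |\mG(\vD, 2h+1)|$ for every $\pi$ (relabeling vertices is a bijection preserving the colored degrees, up to reindexing, and the ``no cycle of length $\le 2h+1$'' condition), summing over blocks will give $N_h(G) = n(\vD,\vbeta)\,|\mG(\vD, 2h+1)|$.

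There are two preliminary reductions. First, every $G' \in \mathcal{S}$ is automatically $h$ tree--like: since $G$ is $h$ tree--like, $U(G)_h$ is supported on isomorphism classes of rooted trees, hence so is $U(G')_h = U(G)_h$, which means $(G',v)_h$ is a tree for every $v$; this in particular forces every edge type of $G'$ to lie in $\mF$, so $\colored(G')$ and its colored degree sequence can be formed relative to the fixed enumeration of $\mF$ used for $G$. Second, for an $h$ tree--like $F \in \mGb_n$ the datum $U(F)_h$ is equivalent to the multiset $\{(\tau_F(i), D^{\colored(F)}(i)) : i \in [n]\}$: on one hand $\tau_F(i)$ and $E_h(t,t')(F,i) = D^{\colored(F)}_{(t,t')}(i)$ are functions of the isomorphism class $[F,i]_h$ (here the tree structure is used, so that deleting the edge $iv$ within the depth-$h$ ball genuinely separates it and $\etype_F^h(i,v)$ is read off from $(F,i)_h$); on the other hand, since $(F,i)_h$ is a tree, the pair $(\tau_F(i), D^{\colored(F)}(i))$ is graphical in the sense of Definition~\ref{def:theta-D-graphical} with witness $(F,i)_h$, and by the uniqueness in Lemma~\ref{lem:r-r'-same-degree_are-the-same} this witness is \emph{the} rooted tree realizing that pair, so the multiset of pairs recovers $U(F)_h$. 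Consequently $G' \in \mathcal{S}$ if and only if $G'$ is $h$ tree--like and $\bigl((\tau_{G'}(i))_i, (D^{\colored(G')}(i))_i\bigr)$ is a reindexing $(\vbeta^\pi, \vD^\pi)$ of $(\vbeta, \vD)$.

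For a fixed $\pi$, I will prove
\[
  \bigl|\{G' \in \mGb_n : G' \text{ is } h \text{ tree--like},\ (\tau_{G'}(\cdot), D^{\colored(G')}(\cdot)) = (\vbeta^\pi, \vD^\pi)\}\bigr| = |\mG(\vD^\pi, 2h+1)|
\]
via the maps $G' \mapsto \colored(G')$ and $H \mapsto \MCB_{\vbeta^\pi}(H)$. For the forward map, $h$ tree--likeness of $G'$ makes $\CB(\colored(G'))$ have no cycle of length $\le 2h+1$, so $\colored(G') \in \mG(\vD^{\colored(G')}, 2h+1) = \mG(\vD^\pi, 2h+1)$. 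For the backward map, let $G''$ be the relabeling of $G$ with $(\tau_{G''}(\cdot), D^{\colored(G'')}(\cdot)) = (\vbeta^\pi, \vD^\pi)$; applying Corollary~\ref{cor:h-tree-like-conf-same} with $G''$ in place of $G$ shows $(\MCB_{\vbeta^\pi}(H), v)_h \equiv (G'',v)_h$ for all $v$, so $\MCB_{\vbeta^\pi}(H)$ is $h$ tree--like with the desired vertex marks, and part~2 of Proposition~\ref{prop:MCB-consistent-general} (whose hypothesis $A_h = [n]$ holds here) gives $\colored(\MCB_{\vbeta^\pi}(H)) = H$, hence also the correct colored degree sequence $\vD^\pi$. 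Finally, unwinding the definitions of $\colored$ and $\MCB_{\vbeta^\pi}$ shows $\MCB_{\vbeta^\pi}(\colored(G')) = G'$ (the edge marks of $\MCB_{\vbeta^\pi}(\colored(G'))$ on $\{v,w\}$ are the mark components of $\etype_{G'}^h(v,w)$, i.e.\ $\xi_{G'}(w,v)$ and $\xi_{G'}(v,w)$, and the vertex marks are $\vbeta^\pi(\cdot) = \tau_{G'}(\cdot)$). Summing this identity over the $n(\vD,\vbeta)$ distinct reindexings and using the relabeling invariance of $|\mG(\cdot, 2h+1)|$ yields the claim.

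The main obstacle is the equivalence in the second reduction — that, for an $h$ tree--like graph, the empirical depth-$h$ neighborhood profile carries exactly the same information as the multiset of (vertex mark, colored degree matrix) pairs. This is what lets the count pass from $U(\cdot)_h$ to colored degree sequences, and it hinges on two points: that $E_h(t,t')(F,i)$ depends only on $[F,i]_h$ when $(F,i)_h$ is a tree, and the uniqueness of the graphical realization from Lemma~\ref{lem:r-r'-same-degree_are-the-same}. Once this is in place, everything else is a direct application of Corollary~\ref{cor:h-tree-like-conf-same} and Proposition~\ref{prop:MCB-consistent-general} together with routine bookkeeping about vertex relabelings.
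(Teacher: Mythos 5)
Your proposal is correct and uses essentially the same ingredients as the paper's own proof: Corollary~\ref{cor:h-tree-like-conf-same}, part~2 of Proposition~\ref{prop:MCB-consistent-general}, and the passage between a marked graph $G'$ and the pair $(\colored(G'), \vtau_{G'})$. The only organizational difference is that you set up a direct partition of $\{G' : U(G')_h = U(G)_h\}$ into blocks indexed by the distinct reindexings $(\vbeta^\pi, \vD^\pi)$ and exhibit an explicit bijection on each block, whereas the paper argues two matching inequalities ($\ge$ by injecting each $\MCB_{\vbeta^\pi}(\cdot)$ image and $\le$ by showing every $G'$ arises that way); these are two faces of the same count. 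One thing you make more explicit than the paper is the relabeling invariance $|\mG(\vD^\pi, 2h+1)| = |\mG(\vD, 2h+1)|$, which the paper uses tacitly, and you also spell out the role of Lemma~\ref{lem:r-r'-same-degree_are-the-same} in recovering $U(F)_h$ from the multiset of $(\tau_F(i), D^{\colored(F)}(i))$ pairs, which the paper leaves implicit.
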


% bb

    \begin{proof}
    For a permutation $\pi: [n] \rightarrow [n]$, define $G^\pi \in \mGb_n$ to be
    the marked graph obtained from $G$ by relabeling vertices using $\pi$. More
    precisely, for $v \in [n]$, we have  $\tau_{G^\pi}(v) := \tau_G(\pi(v))$.
    Also, we place an edge between the vertices $v$ and $w$ in $G^\pi$ if $\pi(v)$
    and $\pi(w)$ are adjacent in $G$. In this case, we set $\xi_{G^\pi}(v, w) =
    \xi_G(\pi(v), \pi(w))$. With this, for any permutation $\pi: [n] \rightarrow
    [n]$ and $H \in \mG(\vD^\pi, 2h+1)$,
    Corollary~\ref{cor:h-tree-like-conf-same} implies that $U(\MCB_{\vbeta^\pi}(H))_h =
    U(G^\pi)_h = U(G)_h$. On the other hand, if the permutations $\pi$ and $\pi'$
    are such that $(\vD^\pi, \vbeta^\pi)$ and $(\vD^{\pi'}, \vbeta^{\pi'})$ are
    distinct, the sets $\{ \MCB_{\vbeta^\pi}(H): H \in \mG(\vD^\pi, 2h+1)\}$ and
    $\{ \MCB_{\vbeta^{\pi'}}(H'): H' \in \mG(\vD^{\pi'}, 2h+1)\}$ are disjoint.
    Moreover,   part 2 of Proposition~\ref{prop:MCB-consistent-general}
    implies that for any permutation $\pi$ and any $H \in \mG(\vD^\pi, 2h+1)$,
    $\colored(\MCB_{\vbeta^\pi}(H)) = H$.
    % this is because the edges carry correct types which in turn will transfer
    % into colors in the colorization process.
    Thereby, distinct elements $H_1, H_2
    \in \mG(\vD^\pi, 2h+1)$ yield distinct marked colorblind graphs
    $\MCB_{\vbeta^\pi}(H_1)$ and $\MCB_{\vbeta^\pi}(H_2)$.
    This establishes the inequality $N_h(G) \geq n(\vD, \vbeta) | \mG(\vD, 2h+1)
    |$.
    The other direction can be seen by observing that if $G' \in \mGb_n$ is such that
    $U(G')_h = U(G)_h$, then there exists a permutation $\pi: [n] \rightarrow [n]$
    such that, for each vertex $v \in [n]$, we have $(G', v)_h \equiv
    (G,\pi(v))_h$. Consequently, for all vertices $v \in [n]$, we have $D^{\colored(G')}(v) = D^{\colored(G)}(\pi(v))
    = D^\pi(v)$ and $\tau_{G'}(v) = \tau_G(\pi(v)) = \beta^\pi(v)$. Also, since
    $G$ is $h$ tree--like, so is $G'$. Hence, with $H := \colored(G')$, we have
    $H \in \mG(\vD^\pi, 2h+1)$ and $G' = \MCB_{\vbeta^\pi}(H)$. This shows that
    $N_h(G) \leq n(\vD, \vbeta)|\mG(\vD, 2h+1)|$ and completes the proof.
  \end{proof}
  
%aa
%\subsection{Graphs with given neighborhood structure}
\subsection{Realizing Admissible Probability Distributions with Finite Support}
\label{sec:weak-convg-to-admissible}
% bb

Next, using the tools developed above, we 
show that, for all $h \ge 1$ and any
admissible  probability distribution $P \in \mP(\mTb_*^h)$ having finite support, there exists a sequence of marked graphs which converges to $P$ in the sense of local weak convergence. This result can
be considered a generalization of Lemma~4.11 in \cite{bordenave2015large}.

%aa  
\begin{lem}
    \label{lem:P-finite-seq-converging}
    Let $h \ge 1$ and $P \in \mP(\mTb_*^h)$. %, i.e. $P$ is admissible.
    Assume that
    $P$ is admissible and has finite support. For $x, x'
    \in \edgemark$, let $d_{x,x'} := \evwrt{P}{\deg_T^{x,x'}(o)}$ and $\vd :=
    (d_{x,x'}: x, x' \in \edgemark)$. Moreover, for $\theta \in \vermark$, let
    $q_\theta$ be the probability of the mark at the root in $P$ being $\theta$
    and define $Q = (q_{\theta}: \theta \in \vermark)$. If $\vmn$ and $\vun$ are
    sequences of edge and vertex mark count vectors such that $(\vmn,\vun)$ is adapted to $(\vd, Q)$, then there exists a finite set
    $\Delta \subset \mTb_*^h$ and a sequence of marked graphs $G_n \in
    \mGn_{\vmn, \vun}$ such that the support of $U(G_n)_h$ is
    contained in $\Delta$ for each $n$, and $U(G_n)_h \Rightarrow P$.
  \end{lem}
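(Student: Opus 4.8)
The plan is to build the graphs $G_n$ explicitly from the colored configuration model, using the correspondence between marked graphs and directed colored graphs developed in Sections~\ref{sec:dct-to-mark-and-mark-to-dcg}--\ref{sec:graphs-with-given}, and then invoke the local weak convergence result of Theorem~\ref{thm:BC-CUGWT-LWC} together with the consistency result of Corollary~\ref{cor:h-tree-like-conf-same}. First I would fix $\Delta := \text{supp}(P) \subset \mTb_*^h$, which is finite by hypothesis, and let $\mF \subset \edgemark \times \mTb_*^{h-1}$ be the (finite) set of all $T[o,v]_{h-1}$ and $T[v,o]_{h-1}$ arising from $[T,o] \in \Delta$ and $v \sim_T o$. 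Set $L := |\mF|$, identify $\mF$ with $\{1,\dots,L\}$, and let $\mC := \mF \times \mF$. For each $r = [T,o] \in \Delta$ define the colored degree matrix $D(r) = (D_{t,t'}(r): t,t' \in \mF) \in \mM_L^{(\delta)}$ by $D_{t,t'}(r) := E_h(t,t')(T,o)$, where $\delta$ is large enough to bound all these (finite, since $\Delta$ is finite); this is exactly the $\Delta$--graphical matrix notion. Also record the root mark $\theta(r) := \tau_T(o)$.

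Next I would define a target distribution $R \in \mP(\mM_L^{(\delta)})$ as the pushforward of $P$ under $r \mapsto D(r)$. The key admissibility computation is that $\mathbb{E}_R[D_c] = \mathbb{E}_R[D_{\bar c}]$ for all $c \in \mC$: writing $c = (t,t')$, this is precisely $e_P(t,t') = e_P(t',t)$, which holds because $P$ is admissible. Now I choose, for each $n$, a colored degree sequence $\vDn = (D^{(n)}(1),\dots,D^{(n)}(n))$ and a vertex mark sequence $\vbetan = (\beta^{(n)}(1),\dots,\beta^{(n)}(n))$ by assigning roughly $n P(r)$ of the vertices the pair $(D(r),\theta(r))$, rounding so that (i) $\frac1n\sum_i \delta_{D^{(n)}(i)} \Rightarrow R$, so \eqref{eq:H1},\eqref{eq:H2} hold, (ii) $\sum_i D^{(n)}(i)$ is symmetric with even diagonal so that $\vDn \in \mD_n$ — this last point needs a small correction argument, adjusting $o(n)$ vertices, and uses the admissibility of $P$ to guarantee the row/column sums can be matched; (iii) the resulting edge and vertex mark count vectors agree with $\vmn,\vun$ up to $o(n)$ — and here one has to reconcile the prescribed $(\vmn,\vun)$, which is only required to be \emph{adapted} to $(\vd,Q)$, with the counts produced by the construction. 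The cleanest way to handle (iii) is to first build $G_n$ with its own mark count vectors $(\widetilde{\vm}^{(n)}, \widetilde{\vu}^{(n)})$, observe these are also adapted to $(\vd,Q)$, and then modify $G_n$ on $o(n)$ edges/vertices to land exactly in $\mGn_{\vmn,\vun}$ while changing $U(G_n)_h$ by $o(1)$ in $\dlp$; a cleaner alternative is to note that the $d_{x,x'}$ and $q_\theta$ determined by $P$ are exactly $\deg_{x,x'}(P)$-type and root-mark quantities, so any two sequences adapted to $(\vd,Q)$ differ only by $o(n)$ and one can absorb the discrepancy.

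With $\vDn$ in hand, let $H_n$ be uniformly sampled from $\mG(\vDn, 2h+1)$ (nonempty for large $n$ with positive probability by Theorem~\ref{thm:alpha-h}), and set $G_n := \MCB_{\vbetan}(H_n)$. Since $\CB(H_n)$ has no cycle of length $\le 2h+1$, it is $h$ tree--like, and since each $(\beta^{(n)}(v), D^{(n)}(v)) = (\theta(r), D(r))$ is graphical with witness $r \in \Delta \subset \mTb_*^h$, Proposition~\ref{prop:MCB-consistent-general}(1) (or Corollary~\ref{cor:h-tree-like-conf-same} applied in the reverse direction) gives $(G_n, v)_h \equiv (T_{r(v)}, o)_h$ where $r(v)$ is the type assigned to $v$; in particular $[G_n,v]_h \in \Delta$, so $\text{supp}(U(G_n)_h) \subseteq \Delta$. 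Finally, Theorem~\ref{thm:BC-CUGWT-LWC} gives $U(H_n) \Rightarrow \cugwt(R)$ almost surely (using the $\mG(\vDn,h)$ version of the statement with $h$ replaced by $2h+1$), and since the map $H \mapsto \MCB_{\vbeta}(H)$ restricted to $(2h+1)$-tree-like graphs is continuous for the depth-$h$ neighborhood structure — because on the tree-like part the depth-$h$ neighborhood of $v$ in $\MCB_{\vbeta}(H)$ is determined by the depth-$h$ colored neighborhood of $v$ in $H$ via Proposition~\ref{prop:MCB-consistent-general} — one deduces that $U(G_n)_h$ converges weakly to the depth-$h$ marginal of the law obtained by applying this map to $\cugwt(R)$, which one identifies as $P$ by a direct computation: the empirical fraction of vertices of type $r$ tends to $P(r)$, and by the consistency above each such vertex has depth-$h$ neighborhood $r$. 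Since all quantities concentrate, one can extract a deterministic sequence $G_n$ with the stated properties.

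The main obstacle I anticipate is bookkeeping at the interface between the prescribed adapted sequences $(\vmn,\vun)$ and the counts naturally produced by assigning types from $\Delta$: one must verify that the divisibility/parity constraints defining $\mD_n$ can be met (requiring the admissibility of $P$ to be used quantitatively, not just qualitatively), and that the inevitable $O(1)$ or $o(n)$ rounding adjustments can be made without spoiling either the membership $G_n \in \mGn_{\vmn,\vun}$ or the convergence $U(G_n)_h \Rightarrow P$. A secondary subtlety is that Theorem~\ref{thm:BC-CUGWT-LWC} is stated for $U(\cdot)$ of the colored graphs, so one needs the continuity of $\MCB_{\vbeta}$ on the tree-like regime — this is where Proposition~\ref{prop:MCB-consistent-general} does the real work — together with the fact that the non-tree-like vertices form a vanishing fraction, to transfer the convergence from the colored side to the marked side at depth $h$.
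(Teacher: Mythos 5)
Your overall architecture — discretize $P$ to a per-vertex type assignment, pass to colored degree sequences, build a member of $\mG(\vDn,2h+1)$, apply $\MCB$, and invoke Proposition~\ref{prop:MCB-consistent-general} — matches the paper's. But there is a concrete gap: the choice $\Delta := \mathrm{supp}(P)$ does not support the claim $\mathrm{supp}(U(G_n)_h) \subseteq \Delta$ \emph{for every $n$}. You correctly flag that $o(n)$ vertices must have their degree matrices perturbed to make $\vDn \in \mD_n$ (parity of the diagonal, symmetry of the row sums) and that $o(n)$ further edges/vertex marks must be altered to land exactly in $\mGn_{\vmn,\vun}$. After these perturbations the pairs $(\betan(v), \Dn(v))$ at those $o(n)$ vertices are generically no longer graphical, and even graphical vertices within distance $h$ of a perturbed vertex lose the hypothesis of Proposition~\ref{prop:MCB-consistent-general}. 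Those vertices then have depth-$h$ neighborhoods that need not lie in $\mathrm{supp}(P)$, yet the lemma asserts containment for every vertex of every $G_n$, not merely a $1-o(1)$ fraction. The paper handles this by taking $\Delta$ to be the (still finite) set of all $[T,o]\in\mTb_*^h$ whose vertex degrees are bounded by $\delta+1$, where $\delta$ bounds the degrees in $\mathrm{supp}(P)$; this $\Delta$ absorbs both the perturbed vertices and the vertices that acquire one extra incident edge during the final correction, which is why the paper only ever adds at most one new edge per vertex.

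A secondary point, not fatal but worth noting: invoking Theorem~\ref{thm:BC-CUGWT-LWC} together with a continuity argument for $\MCB$ is heavier machinery than needed. The lemma only demands depth-$h$ convergence, and that follows deterministically for \emph{any} choice of $H_n \in \mG(\vDpn,2h+1)$: Proposition~\ref{prop:MCB-consistent-general} shows that $(\MCB_{\vbetan}(H_n), v)_h$ equals the planned type $\gn(v)$ for every vertex $v$ whose $h$-neighborhood in $\CB(H_n)$ is tree-like and whose entire $h$-ball carries graphical pairs, which covers $n-o(n)$ vertices; since the empirical law $\tfrac1n\sum_v\delta_{\gn(v)}$ was chosen to converge to $P$, this already gives $\dtv(U(\tGn)_h, P)\to 0$. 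There is no need to introduce $\cugwt(R)$, to sample $H_n$ uniformly at random, or to extract a deterministic subsequence; the convergence for every $n$ comes for free from the degree-sequence-level control of depth-$h$ neighborhoods.
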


  % bb

% summary of proof can be found in the note
%  2018-08-29_PS_lem-P-finite-seq-converging.pdf

  %%%
  % \begin{proof}[Proof Highlights]
  % \begin{itemize}
  % \item Take a sequence $\gn = (\gn(i): i \in [n])$ with empirical converging to $P$.
  % \item  Obtain $(\vbetan, \vDn)$ from this sequence.
  % \item Convert $\vDn$ to $\vDpn \in \mD_n$.
  % \item Get $\Hn \in \mG(\vDpn, 2h+1)$ and $\tGn = \MCB_{\vbetan}(\Hn)$.
  % \item Modify $\tGn$ to obtain $\Gn \in \mGn_{\vmn, \vun}$.
  % \end{itemize}
  % \end{proof}
  %%%

\begin{proof}
  Let $S = \{r_1, \dots, r_k\} \subset \mTb_*^h$ be the finite support
  of $P$. Since $S$ is finite, we can construct,
  for each $n \in \nats$, a
  sequence $(\gn(i): 1 \leq i \leq n)$ where $\gn(i) \in S$, $1 \leq i \leq
  n $, and
  \begin{equation}
    \label{eq:deltagn--P}
    \frac{1}{n} \sum_{i=1}^n \delta_{\gn(i)} \Rightarrow P.
  \end{equation}
  Let $\delta$ be the maximum degree at the root over all the elements of $S$.
  Moreover, let $\mF \subset \edgemark \times \mTb_*^{h-1}$ be 
  the set
  %{\color{pedit2} the} set
  %\color{red}
  %Note to Payam: replace ``the set" by ``a set".
  %This may make a difference. Check.
  %\color{black}
  %containing
  comprised of 
  %{\color{pedit2}all the objects} 
  $T[o,v]_{h-1}$ and $T[v,o]_{h-1}$ for each  $[T,o]
  \in S$ and $v \sim_T o$.
 % \pres{It should be ``the'' set, since I want $\mF$ to conclude all the
  %possible such objects. Added the term ``all the objects'' to emphasize this
  %and hopefully avoid confusion.}
  Since $S$ is finite, 
  $\mF$ is finite, 
  %$\mF$ can be taken to be finite, 
  %\color{red}
  %Note to Payam: replaced ``$\mF$ is finite" by
  %``$\mF$ can be taken to be finite" in view of the
  %preceding change in pinning down $\mF$. If the preceding change is reversed,
  %this change will also need to be reversed. Check.
  %color{black}
   % {\color{pedit2} reversed the change}
  hence can be identified with $\{1,
  \ldots, L\}$ with $L := |\mF|$. With this, define the color set $\mC := \mF
  \times \mF$.
  
  For each $n \in \nats$, define the sequences $\vbetan = (\betan(i): 1
  \leq i \leq n)$ and $\vDn = (\Dn(i): 1 \leq i \leq n)$ as follows. For $1
  \leq i \leq n$, let $\betan(i) \in \vermark$ be the mark at the root in
  $\gn(i)$. Further, let $\Dn(i) \in \mM_L^{(\delta)}$ be such that, for $c
  \in \mC$, $\Dn_c(i) = E_h(c)(\gn(i))$. Here, $E_h(c)(\gn(i)) =
  E_h(t,t')(\gn(i))$ with $c = (t,t')$, as was defined in \eqref{eq:Eh-g-g'}.

Now, we try to construct directed colored graphs given $\vDn$. However, it might
be the case that $\vDn \notin \mD_n$. Therefore, we modify $\vDn$ slightly to get
a sequence in $\mD_n$. In order to do this, for $c \in \mC$,
let $\Sn_c := \sum_{i=1}^n \Dn_c(i)$. Moreover, for $c \in \mC_=$, let $\tSn_c := 2 \lfloor \Sn_c/2 \rfloor$, and for
$c \in \mC_{\neq}$, let $\tSn_c := \Sn_c \wedge \Sn_{\bar{c}}$. Note that, because of \eqref{eq:deltagn--P}, for all $c \in \mC$, $\Sn_c/n \rightarrow e_P(c)$
as $n \rightarrow \infty$. 
%\color{red}
%Note to Payam:
%The notation $e_P(c)$ needs to be included in the glossary.
%This should be done in addition to notation like
%$e_P(g,g')$ and $e_P(t,t')$.
%\color{black}
%\pres{added to glossary}
\glsadd{not:eP-c}%
On the other hand, as  $P$ is admissible, we have  $e_P(c) =
e_P(\bar{c})$. Hence, $|\tSn_c - \Sn_c| = o(n)$ for all $c \in \mC$.  Therefore, we can find a sequence $\vDpn = (\Dpn(i): 1 \leq i \leq n)$
such that for all $1 \leq i \leq n$ we have $\Dpn(i) \in \mM_L^{(\delta)}$, and for all
$c \in \mC$ we have $\Dpn_c(i) \leq \Dn_c(i)$, and we have $\sum_{i=1}^n \Dpn_c(i) =
\tSn_c$. Moreover, since $\sum_{c \in \mC} |\tSn_c - \Sn_c| = o(n)$, we
may construct $\vDpn$ such that, for all but $o(n)$ vertices, we have $\Dpn(i) =
\Dn(i)$.
In particular, if $\tP \in \mP(\mM_L^{(\delta)})$ is defined to be
the law of $D = (D_c: c \in \mC)$, where $D_c = E_h(c)(r)$ with $r$
having law $P$, we have
\begin{equation}
  \label{eq:delta-tDn--tP}
  \frac{1}{n} \sum_{i=1}^n \delta_{\Dpn(i)} \Rightarrow \tP.
\end{equation}
Indeed, due to \eqref{eq:deltagn--P}, we have $(\sum_{i=1}^n \delta_{\Dn(i)}) /
n \Rightarrow \tilde{P}$, which implies  \eqref{eq:delta-tDn--tP} since $\Dn(i)
= \Dpn(i)$  for all but $o(n)$ many $1 \leq i \leq n$.
Note that, by definition, $\tSn_c$ is even for $c \in \mC_=$ and, for
$c \in \mC_{\neq}$, $\Sn_c = \Sn_{\bar{c}}$. Therefore, $\vDpn \in
\mD_n$.

Furthermore, since conditions \eqref{eq:H1} and \eqref{eq:H2} are
both satisfied for $\vDpn$ and $\tP$, Theorem \ref{thm:alpha-h} then implies that $\mG(\vDpn,2h+1)$ is non empty for $n$ large enough. For such $n$, let $\Hn$ be a
member of $\mG(\vDpn, 2h+1)$ and let $\tGn = \MCB_{\vbetan}(\Hn)$.
Since for each $1 \leq i \leq n$, $\betan(i)$ and $\Dn(i)$
are defined based on $\gn(i) \in \mTb_*^h$, they form a graphical pair
in the sense of Definition~\ref{def:theta-D-graphical}. Also, $\Dpn(i) =
\Dn(i)$ for all but $o(n)$ vertices. On the other hand, all the degrees
in $\tGn$ are bounded by $\delta$. Therefore, the number of vertices $v$
in $\tGn$ such that $(\betan(w), \Dpn(w))$ is graphical for all vertices
$w$ in the $h$--neighborhood of $v$ is $n - o(n)$. Moreover, since $\Hn \in
\mG(\vDpn, 2h+1)$, $\tGn$ has no cycle of length $2h+1$ or less, which means
that $\tGn$ is $h$ tree--like. Thereby,
Proposition~\ref{prop:MCB-consistent-general} implies that the number of
vertices $v$ in $\tGn$ such that $(\tGn, v)_h \equiv \gn(v)$ is $n - o(n)$. This means that $U(\tGn)_h \Rightarrow P$.

Now, the only remaining step is to modify $\tGn$ to obtain a simple marked graph
in $\mGn_{\vmn, \vun}$. To do this, note that if $(\tmn(x, x'): x, x' \in \edgemark)$ is the edge mark count
vector of $\tGn$, we have
\begin{equation*}
  \tmn(x, x') =
  \begin{cases}
    \sum_{v=1}^n \vDpn_{x,x'}(v) & x \neq x', \\
    \frac{1}{2} \sum_{v=1}^n \vDpn_{x,x}(v) & x = x',
  \end{cases}
\end{equation*}
where
\begin{equation*}
  \Dpn_{x,x'}(v) := \sum_{\stackrel{t,t' \in \mF}{t[m] = x,t'[m] = x'}} \Dpn_{t,t'}(v).
\end{equation*}
This together with  condition \eqref{eq:delta-tDn--tP}, implies that for $x \neq x'
\in \edgemark$ we have $\tmn(x,x') / n \rightarrow d_{x,x'}$,  and for $x \in
\edgemark$ we have $\tmn(x,x) /
n \rightarrow d_{x,x}/2$. Consequently, $|\tmn(x,x') - \mn(x,x')| =
o(n)$. On the other hand, if $(\tun(\theta): \theta \in \vermark)$ is the
vertex mark count vector of $\tGn$, since $\vbetan$ is the vertex mark
vector of $\tGn$,~\eqref{eq:deltagn--P} implies that for $\theta \in
\vermark$, $\tun(\theta) / n \rightarrow q_\theta$ and hence
$\sum_{\theta \in \vermark} | \tun(\theta) - \un(\theta)| = o(n)$.
Now, we modify $\tGn$ to obtain $\Gn$. In order to do this, for each $x \leq x' \in \edgemark$ such that
$\tmn(x,x') < \mn(x,x')$, we add $\mn(x,x') - \tmn(x,x')$ many
edges with mark $x,x'$. We can do this for all such $x,x'$ so that all vertices 
in the graph are connected to at most one of the newly added edges. This is possible
for $n$ large enough since $\edgemark$ is finite, $\sum_{x \leq x' \in
  \edgemark} |\mn(x,x') - \tmn(x,x')| = o(n)$, and the total number of
edges in $\tGn$ is $O(n)$.
Next, for $x \leq x' \in \edgemark$ such that $\tmn(x,x') >
\mn(x,x')$, we arbitrarily remove $\tmn(x,x') - \mn(x,x')$ many edges with mark $x,x'$.
Moreover, since $\sum_{\theta \in \vermark} |\un(\theta) -
\tun(\theta)| = o(n)$, we may change the vertex mark of all but $o(n)$
many vertices so that for all $\theta \in \Theta$, the number of vertices
with mark $\theta$ becomes precisely equal to $\un(\theta)$. 
Let $\Gn$ be the resulting simple marked graph, which is indeed a member of 
$\mGnmnun$.

Note that, by construction, all the degrees in $\Gn$ are
bounded by $\delta+1$. Hence, the support of $U(\Gn)_h$ is contained
in the set $\Delta$, defined as the set of $[T, o] \in \mTb_*^h$ such that the
degrees of all vertices in $T$ are bounded by $\delta+1$. Note that $\Delta$ is
finite. Also, adding or removing each edge affects the $h$--neighborhood
of at most $2(\delta+1)^{h+1}$ many vertices. Likewise, changing the
mark of a vertex can affect the $h$--neighborhood of at most
$(\delta+1)^{h+1}$ many vertices. 
Hence, $(\Gn,v)_h = (\tGn,v)_h$ for all but $o(n)$ vertices $v \in
[n]$.
Consequently. $U(\Gn)_h \Rightarrow P$ and the proof is complete. 
  \end{proof}

%aa
  % \subsection{Connection between colored UGWT and UGWT }
  \subsection{Local Weak Convergence of a Sequence of Graphs obtained from a
    Colored Configuration Model}
\label{sec:connection-color-ugwt}

% bb

In Lemma~\ref{lem:P-finite-seq-converging} in the previous section,
given $h \ge 1$ and an admissible $P \in \mP(\mTb_*^h)$ with finite support, we
constructed a sequence of marked graphs $\Gn$ such that $U(\Gn)_h
\Rightarrow P$. In this section, we show how to use
a colored
configuration model based on this sequence to generate marked graphs which converge to
$\ugwt_h(P)$ in the local weak sense. In the process of doing this, we also draw a
connection between the 
%$\ugwt_h(P)$ 
marked unimodular Galton--Watson trees
introduced in
Section~\ref{sec:markovian-ugwt} and the colored unimodular Galton--Watson trees
introduced in Section~\ref{sec:color-unim-galt}.

Fix $h \ge 1$. Let $\Delta  \subset \mTb_*^h$  be a fixed finite set. Let $P \in \mP(\mTb_*^h)$ be
admissible with support contained in $\Delta$.
We write $\mF$ for the set of $T[o, v]_{h-1}$ and $T[v,o]_{h-1}$ arising from $[T,o] \in \Delta$ and vertices $v \sim_T o$. Since $\Delta$ is finite, $\mF$
is also finite. We use the notation $L := |\mF|$.
Define the color set $\mC := \mF
\times \mF$. 
Let $\delta$ be an upper bound for the degree of each vertex of
each $[T,o] \in \Delta$. For $r \in \Delta$, define $D(r) \in \mM_L^{(\delta)}$ to be the
matrix such that, for $t,
t' \in \mF$, $D_{t,t'}(r)= E_h(t, t')(r)$. Furthermore, define $\theta(r)
\in \Theta$ to be the mark at the root in $r$.

%aa
\begin{prop}
  \label{prop:MCB-colored-unif-D-2h+1--converge}
  With the above setup, let $(\Gamma_n: n \in \nats)$
  be a sequence of marked graphs, with $\Gamma_n$ having the vertex set $[n]$ and the support of $U(\Gamma_n)_h$
  contained in $\Delta$ for each $n$, and such that
  $U(\Gamma_n)_h \Rightarrow P$.
     Define $\vDn = (\Dn(v): v \in [n])$ where  for $v \in
  [n]$,  $\Dn(v) \in
  \mM_L^{(\delta)}$ is defined such that for $t, t' \in \mF$, $\Dn_{t,t'}(v) :=
  E_h(t, t')(\Gamma_n, v)$. Moreover, define  $\vbetan =
  (\betan(v): v \in [n])$  such that $\betan(v) :=
  \tau_{\Gamma_n}(v)$ for $v \in [n]$. 
  For $n \ge 1$, let $H_n$ be a random directed colored graph uniformly
  distributed in $\mG(\vDn, 2h+1)$, and assume that
  $(H_n: n \in \nats)$ are independent on a joint
  probability space. Let $G_n := \MCB_{\vbetan}(H_n)$. Then,
  with probability one, we have $U(G_n) \Rightarrow \ugwt_h(P)$.
\end{prop}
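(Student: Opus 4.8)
The plan is to deduce $U(G_n) \Rightarrow \ugwt_h(P)$ by first establishing local weak convergence at the level of the directed colored graphs $H_n$, and then transferring this through the marked colorblind map $\MCB_{\vbetan}(\cdot)$ using Proposition~\ref{prop:MCB-consistent-general}. First I would record that $\vDn \in \mD_n$ for each $n$: since $P$ is admissible, $e_P(t,t') = e_P(t',t)$ for all $t,t' \in \mF$, and $\tfrac{1}{n}\sum_{v}\delta_{\Dn(v)} \Rightarrow \tP$ where $\tP \in \mP(\mM_L^{(\delta)})$ is the law of $(E_h(t,t')(r))_{t,t'\in\mF}$ with $r \sim P$; this follows from $U(\Gamma_n)_h \Rightarrow P$ and the fact that $\Dn(v)$ is a bounded continuous (indeed locally constant) functional of $(\Gamma_n,v)_h$, all degrees being bounded by $\delta$. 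One mild technical point: $\vDn$ as literally defined from $\Gamma_n$ need not have the exact symmetry forcing it into $\mD_n$; but exactly as in the proof of Lemma~\ref{lem:P-finite-seq-converging}, one modifies $o(n)$ entries to land in $\mD_n$ while preserving \eqref{eq:H1} and \eqref{eq:H2}, and this $o(n)$ modification changes $U(G_n)$ by $o(1)$ in $\dlp$, so I may assume $\vDn \in \mD_n$ outright.

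Next, since $\vDn$ satisfies \eqref{eq:H1} (bound $\delta$) and \eqref{eq:H2} (convergence to $\tP$), Theorem~\ref{thm:BC-CUGWT-LWC} applies to $H_n$ uniformly sampled from $\mG(\vDn, 2h+1)$ (note $2h+1 \ge 2$), giving, almost surely, $U(H_n) \Rightarrow \cugwt(\tP)$ as rooted directed colored multigraphs. Then I would exploit Corollary~\ref{cor:h-tree-like-conf-same} / Proposition~\ref{prop:MCB-consistent-general}: because $H_n \in \mG(\vDn, 2h+1)$, the colorblind graph $\CB(H_n)$ is $h$ tree--like, and because each $\Dn(v)$ together with $\betan(v)$ arises from $(\Gamma_n, v)_h \in \Delta$ (a rooted tree of depth $h$), each pair $(\betan(v), \Dn(v))$ is graphical; hence the set $A_h$ in Proposition~\ref{prop:MCB-consistent-general} is all of $[n]$, and part~1 gives $(G_n, v)_h \equiv [T_v, o_v]_h$ where $[T_v,o_v]$ is the graphical tree for $(\betan(v), \Dn(v))$. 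In particular the depth-$h$ local structure of $G_n$ at $v$ is a deterministic function of the colored degree matrix and mark at $v$; but more is true, and this is the crux: for any fixed radius $k$, the depth-$k$ structure $(G_n,v)_k$ is a deterministic continuous function $\Psi_k$ of the depth-$k$ colored-graph structure $(H_n,v)_k$ together with the vertex marks in that ball, provided that ball is tree--like — which it is for $k \le h$ automatically, and for general $k$ it will be tree--like with asymptotic probability governed by the cycle count, which vanishes locally in the limit $\cugwt(\tP)$.

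The substantive work, and I expect the main obstacle, is making the ``transfer'' precise for all radii, not just $k \le h$: I need to show that the pushforward under $\MCB$ of the local weak limit of $H_n$ (namely $\cugwt(\tP)$, but decorated with i.i.d.-in-the-appropriate-sense vertex marks $\betan$) is exactly $\ugwt_h(P)$. For this I would (i) upgrade Theorem~\ref{thm:BC-CUGWT-LWC} to track the vertex marks $\betan(v)$ jointly with the colored-graph structure — since $\betan(v)$ is a function of $\Dn(v)$-plus-the-root-mark-of-$\gn(v)$, and more carefully of $(\Gamma_n,v)_h$, the joint empirical distribution of $(\betan(v), (H_n,v)_r)$ converges a.s.\ to the law of a $\cugwt(\tP)$-tree with each vertex $w$ independently assigned a mark drawn from the conditional law of the root mark of $P$ given the colored degree matrix at $w$ equals $D^{T}(w)$; (ii) identify this marked-decorated colored tree, after applying $\MCB$ vertexwise, with $\ugwt_h(P)$ — this is the analogue of Lemma~4.13/Proposition~4.14 of \cite{bordenave2015large}, and reduces to checking the offspring distributions match: the size-biasing $\hP^c$ of \eqref{eq:cugwt-phat} on colored degree matrices, pushed through $\MCB$ and the $\otimes,\odot$ operations via \eqref{eq:G-w-v-h--odot-otimes}--\eqref{eq:G-v-h--odot-otimes}, must reproduce exactly the size-biased kernel $\hP_{t,t'}$ of \eqref{eq:size-biased-def}. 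I would verify this by a direct computation at the root and one generation, using that $\etype_{G_n}^h(v,w)$ equals the edge color of $H_n$ by part~2 of Proposition~\ref{prop:MCB-consistent-general} (so colors and depth-$h$ types are genuinely identified on the relevant almost-sure event), and then invoke the Markov-type consistency of Proposition~\ref{prop:ugwthP-markov} to extend from depth $h+1$ to all depths. Finally, since convergence in $\dlp$ on $\mP(\mGb_*)$ is metrizable and the modification argument contributes only $o(1)$, assembling (i) and (ii) yields $U(G_n) \Rightarrow \ugwt_h(P)$ almost surely.
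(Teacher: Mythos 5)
Your overall plan is essentially the paper's: apply Theorem~\ref{thm:BC-CUGWT-LWC} to the colored configuration model, push through $\MCB$ using Proposition~\ref{prop:MCB-consistent-general}, and identify the pushforward limit as $\ugwt_h(P)$ by matching the size-biased kernels. However, your sketch leaves the hardest step essentially asserted rather than proved, and contains a couple of misdirections.

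First, a minor point: the concern that ``$\vDn$ as literally defined from $\Gamma_n$ need not lie in $\mD_n$'' is unfounded. Since $\vDn$ is read off an actual finite simple marked graph $\Gamma_n$, the matrix $S = \sum_v \Dn(v)$ is automatically symmetric with even diagonal (each undirected edge of $\Gamma_n$ contributes a conjugate pair of directed type counts, and a type-$(t,t)$ edge contributes $2$ to $S_{t,t}$). The $o(n)$ repair of the degree sequence is needed in Lemma~\ref{lem:P-finite-seq-converging} precisely because there the $\gn(i)$ are abstract elements of $\mTb_*^h$ rather than neighborhoods of a genuine graph; here the repair is unnecessary and the digression obscures the argument.

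The substantive gap is in your item (i), which you correctly flag as ``the main obstacle'' but then do not actually resolve. You want to ``upgrade Theorem~\ref{thm:BC-CUGWT-LWC} to track the vertex marks $\betan(v)$ jointly,'' and you assert that the joint empirical distribution of $(\betan(v), [H_n,v])$ converges to the appropriate law, but you give no mechanism. The paper's proof (Lemma~\ref{lem:MCB-colored-converge}) does this with a concrete device: any bounded continuous $f$ on $\vermark \times \mG_*(\mC)$ decomposes, on the relevant a.s.\ event, as $f(\theta,[F,o]) = f_1([F,o]) + f_2(\theta, D^F(o))$ with $f_1$ depending only on the colored graph and $f_2$ only on the root mark and root degree matrix. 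This works precisely because $\theta$ is a \emph{deterministic} function $\theta(D^F(o))$ of the degree matrix whenever $o$ is not isolated (uniqueness of the $\Delta$-graphical tree, Lemma~\ref{lem:r-r'-same-degree_are-the-same}), so the only ``extra'' mark information is concentrated on the isolated-root case. Convergence of $\int f_1\, dU(H_n)$ then follows from Theorem~\ref{thm:BC-CUGWT-LWC}, and convergence of the $f_2$ average is immediate from the empirical law of $(\betan(v),\Dn(v))$. Your framing in terms of ``independently sampling each vertex's mark from the conditional law given the degree matrix'' is not wrong, but you have not noticed that these conditional laws are point masses away from $D=0$, and this observation is exactly what makes the joint convergence tractable. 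Without this (or an equivalent device), item (i) remains an assertion, and the proof does not close.

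Finally, the suggestion to invoke Proposition~\ref{prop:ugwthP-markov} to extend from depth $h+1$ to all depths is a detour. The paper's Lemma~\ref{lem:cugwt--ugwt-P} identifies the pushforward of the limit with $\ugwt_h(P)$ by a direct computation that matches the colored offspring kernel $\hP^c$ with the marked kernel $\hP_{t,t'}$ at each generation and then proceeds by the same induction as in the definition of $\ugwt_h(P)$; no consistency theorem is needed. Invoking Proposition~\ref{prop:ugwthP-markov} here adds machinery (and a dependency on results developed only later in the paper) that the identification does not require.
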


% bb

We prove this proposition in two steps. First, in Lemma~\ref{lem:cugwt--ugwt-P}
below, we draw a connection between $\ugwt_h(P)$ and a colored unimodular
Galton--Watson tree. Then, we use this to state
Lemma~\ref{lem:MCB-colored-converge}, which will then complete the proof of the
above statement. Before this, we need to set up some notation. 
%\color{red}
%Note to self:
%Need to rephrase the preceding paragraph to indicate what Lemma~\ref{lem:MCB-colored-converge} is proving.
%\color{black}

%With the distribution $P$ described above, we define a  probability measure on
%$\vermark \times \mG_*(\mC)$ as follows. 

Let $\tP \in \mP(\mM_L^{(\delta)})$ be the law of $D(r)$ where $r
\sim P$. Since $P$ is admissible, we have $\evwrt{\tP}{D_c} =
\evwrt{\tP}{D_{\bar{c}}}$  for all $c \in \mC$. Now, we generate a random rooted directed colored tree $(F,o)$ using the procedure described in
Section~\ref{sec:color-unim-galt} by starting with $D^{(0)} = D(r^{(0)})$ with
$r^{(0)} \sim P$ at the root, and then adding further layers as in the colored
unimodular Galton--Watson tree. Let $Q \in \mP(\vermark
  \times \mG_*(\mC))$ be the law of the pair $(\theta(r^{(0)}), [F, o])$. Furthermore, let $Q_1$
  and $Q_2$ be the law of $\theta(r)$ and $[F, o]$, respectively. Note that $Q_2
  = \cugwt(\tP)$. 
For vertices $v, w$ in $F$, let $c(v, w) \in \mF$ be the first component of the
color of the edge going from $v$ towards $w$. For a vertex $v$ in $F$ other than
the root, let
$p(v)$ be the parent of $v$, and let $c(v)$ be the shorthand for $(c(v,
  p(v)), c(p(v), v))$. Moreover, for a vertex $v$, let $M(v) \in \mM_L^{(\delta)}$ be such
that for $c \in \mC$,
\begin{equation*}
  M_c(v) := |\{w: p(w) = v, c(v,w) = c\}|.
\end{equation*}
%\color{red}
%Note to Payam:
%Introduced $|\cdot|$ which was missing earlier, since
%$M_c(v)$ is supposed to be an integer and not a set. Check.
%\color{black}
%\pres{correct, thanks}
In fact, $M(v)$ is the part of the colored degree matrix of $v$ corresponding to
its offspring, so that if $v \neq o$, $D^F(v) = M(v) + E^{c(v)}$ and $D^F(o) = M(o)$. Recall that $E^{c(v)} \in
\mM_L$ is the matrix with value 1 in entry $c(v)$ and zero elsewhere. 

A matrix $D \in \mM_L^{(\delta)}$ is said to be $\Delta$--graphical if there
exists $r \in \Delta$ such that $D = D(r)$.
\glsadd{trm:Delta-graphical-matrix}%
%\color{red}
%Note to Payam:
%The terminology ``$\Delta$--graphical" needs to be put in the list of terminology at the end.
%\color{black}
%\pres{both ``$\Delta$''--graphical matrix'' and ``$\Delta$--graphical rooted
%  directed colored graph'' defined in the next paragraph added to the list of terminology}
If $D \in \mM_L^{(\delta)}$ is $\Delta$--graphical and nonzero, define
$\theta(D)$ to be the mark at the root for some $r \in \mTb_*^h$ for which
we have $D = D(r)$. To see why $\theta(D)$ is well-defined for $D \neq 0$, take $r, r'
\in \Delta$ so that $D = D(r) = D(r')$. Since $D$ is nonzero, there
exist $t, t' \in \edgemark \times \mF$ such that $D_{t,t'} = D_{t,t'}(r) =
D_{t,t'}(r') > 0$. Hence, the marks at the root in both $r$ and $r'$ are the same
as the mark at the root in the subgraph part of $t$, i.e.\ $t[s]$. This shows that $\theta(D)$ is well
defined. In fact this together with
Lemma~\ref{lem:r-r'-same-degree_are-the-same} 
in Appendix~\ref{sec:marked-rooted-trees-some-props} implies that if $D \neq 0$ is
$\Delta$--graphical there is only one $r \in \Delta$ such that $D = D(r)$. 

We say that  a rooted directed colored graph $[F, o] \in \mG_*(\mC)$ is 
$\Delta$--graphical if for each vertex $v$ in $F$, $D^F(v)$ is
$\Delta$--graphical.
\glsadd{trm:Delta-graphical-rooted-graph}
Let $\mH$ be the subset of $\vermark \times \mG_*(\mC)$ which consists
of the pairs $(\theta, [F, o])$ such that $[F, o]$ is a $\Delta$--graphical
rooted directed colored graph, and
if $o$ is not isolated in $F$ we have $\theta = \theta(D^F(o))$.
For $(\theta, [F, o]) \in \mH$, by an abuse of notation, we define
  $\MCB_\theta(F)$ to be the simple marked graph defined as follows. Let $\beta(o) :=
  \theta$, and for 
  $v \neq o$ in $F$, define $\beta(v) := \theta(D^F(v))$. 
  Note that if $v$ is a vertex other than the root,
  since $F$ is connected by definition, $v$ is not isolated and hence
  $D^F(v)$ is not the zero matrix. Thereby, $\theta(D^F(v))$ is well-defined.  With
  this, let $\vbeta $ be the vector
  consisting of  $\beta(v)$ for vertices $v$ in $F$, and define $\MCB_\theta(F)
  := \MCB_{\vbeta}(F)$.
  Note that if $(\theta, [F, o]) \sim Q$ then, with probability one,
 we have $(\theta, [F, o]) \in \mH$. The reason is that $D^F(o) = D(r^{(0)})$ and
 $\theta = \theta(r^{(0)})$, where $r^{(0)}$ is in the support of $P$ and hence
 in $\Delta$. Moreover, by the construction of $\cugwt(\tP)$ and
 \eqref{eq:cugwt-phat}, with probability one, for all vertices $v \neq o$ in $F$, $D^F(v) = M(v) +
 E^{c(v)}$ is in the support of $\tP$, and hence $D^F(v)$ is
 $\Delta$--graphical. 

 Now we are ready to state two lemmas. 
 Lemma~\ref{lem:MCB-colored-converge} will prove
 Proposition~\ref{prop:MCB-colored-unif-D-2h+1--converge},
 and itself depends on Lemma~\ref{lem:cugwt--ugwt-P}.
 The proposition will be proved assuming the truth of 
 the lemmas, and then the lemmas will be proved.

 %\color{red}
 %Note to self:
 %In the following lemma, need to set up what $Q$ is, starting with $P$.
 %\color{black}

  %aa
 \begin{lem}
   \label{lem:cugwt--ugwt-P}
   If $(\theta, [F,o])$ has the law $Q$ described above, then $[\MCB_\theta(F), o]$ has the law $\ugwt_h(P)$.
 \end{lem}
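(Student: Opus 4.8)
The plan is to match the two constructions layer by layer, using the explicit recipes for $\ugwt_h(P)$ (from Section~\ref{sec:markovian-ugwt}) and for $\cugwt(\tP)$ (from Section~\ref{sec:color-unim-galt}), and then to show that applying $\MCB_\theta$ to a realization of $Q$ produces exactly the random rooted marked tree that $\ugwt_h(P)$ is the law of. First I would set up the bijective correspondence at the root: if $(\theta, [F,o]) \sim Q$, then by construction $D^F(o) = D(r^{(0)})$ with $r^{(0)} \sim P$ and $\theta = \theta(r^{(0)})$, so $(\theta, D^F(o))$ is a graphical pair whose associated rooted tree in $\mTb_*^h$ is precisely $r^{(0)}$; since $r^{(0)} \sim P$, the $h$-neighborhood of the root of $\MCB_\theta(F)$ has law $P$, matching the first step of the $\ugwt_h(P)$ recipe. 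The key identity here is that for a $\Delta$-graphical matrix $D$ the pair $(\theta(D), D)$ determines a unique $r \in \Delta$ with $D(r) = D$ (as noted in the excerpt, via Lemma~\ref{lem:r-r'-same-degree_are-the-same}), and the $h$-neighborhood structure built by $\MCB_{\vbeta}$ around a vertex whose colored degree matrix is $D$ and whose mark is $\theta(D)$ reconstructs exactly this $r$ — this is essentially part~1 of Proposition~\ref{prop:MCB-consistent-general} applied in the tree setting, where $A_h$ is everything because $F$ is a tree and every $D^F(v)$ is $\Delta$-graphical.

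Next I would analyze the offspring generation. In the $\cugwt(\tP)$ construction, each offspring $v$ of a vertex $u$ arrives with a color $c = c(u,v)$ on the edge $u \to v$ (and $\bar c$ on $v \to u$), and its colored degree matrix $D^F(v) = M(v) + E^{c(v)}$ is drawn from $\hat{\tP}^{c}$ as in \eqref{eq:cugwt-phat}. In the $\ugwt_h(P)$ construction, each offspring is generated by drawing $\tilde t \in \edgemark \times \mTb_*^h$ from $\hP_{t,t'}$ as in \eqref{eq:size-biased-def}, with $t = T[u,v]_{h-1}$, $t' = T[v,u]_{h-1}$. The heart of the argument is to show these two size-biasing rules are intertwined by the correspondence $r \leftrightarrow D(r)$: namely that under the bijection sending a rooted tree $r \in \Delta$ to its matrix $D(r)$, and under the identification of the edge-type pair $(t,t')$ with the color $c = (t,t') \in \mC$ (restricting to $t, t' \in \mF$), the measure $\hP_{t,t'}$ pushes forward to $\hat{\tP}^{\bar c}$ (up to the bookkeeping of which component of the color sits towards the root versus the offspring — I would be careful with the $c$ versus $\bar c$ convention here, following how colors are assigned in $\colored(\cdot)$ and in $\MCB_{\vbeta}$). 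Concretely: $P(\tilde t \oplus t') E_h(t,t')(\tilde t \oplus t') / e_P(t,t')$ should correspond to $(M_{\bar c} + 1) P(M + E^{\bar c}) / \evwrt{\tP}{D_c}$ under $M = M(v)$, because $E_h(t,t')(\tilde t \oplus t')$ counts the offspring of the root of $\tilde t \oplus t'$ of edge-type $(t,t')$, which is exactly $M_{\bar c}(\text{root of } \tilde t) + 1$ when we glue on the extra offspring, and $e_P(t,t') = \evwrt{\tP}{D_c}$ by definition of $\tP$ and $e_P$. The constraint $\one{\tilde t_{h-1} = t}$ on the $\ugwt$ side corresponds on the $\cugwt$ side to the fact that the incoming edge color $c(v)$ is fixed, which pins down the "truncation at depth $h-1$" of the subtree hanging below $v$.

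Having established the one-step correspondence, I would argue the full trees agree in distribution by induction on depth, invoking the independence-across-offspring structure that both constructions share, and then apply $\MCB_\theta$ / Proposition~\ref{prop:MCB-consistent-general} to translate the directed-colored tree back into the marked tree: part~2 of that proposition guarantees that $\etype^h_{\MCB_\theta(F)}(v,w)$ agrees with the color of the edge $v \to w$ in $F$ for all vertices, so the marked-tree structure of $\MCB_\theta(F)$ is determined consistently by the colors, and equations \eqref{eq:G-w-v-h--odot-otimes} and \eqref{eq:G-v-h--odot-otimes} show this is exactly how $\ugwt_h(P)$ assembles its output from the local data. The main obstacle I anticipate is the bookkeeping: keeping straight the three slightly different "decorations" — the edge-type pair $(t,t')$, the color $c=(t,t') \in \mF \times \mF$, and its reverse $\bar c$ — and verifying that the directional conventions in $\colored(\cdot)$, in $\MCB_{\vbeta}$, in \eqref{eq:size-biased-def}, and in \eqref{eq:cugwt-phat} all line up so that the size-biasing weights match on the nose rather than with the roles of $c$ and $\bar c$ swapped. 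A secondary technical point is handling the degenerate cases ($e_P(t,t') = 0$, isolated root, $\deg = 1$ vertices) so that the correspondence is genuinely measure-preserving and not merely almost-everywhere; but Corollary~\ref{cor:ugwt-gamma>0-ep>0} (cited in the excerpt) tells us the $e_P(t,t') = 0$ branch is hit with probability zero, so these cases can be dispatched quickly.
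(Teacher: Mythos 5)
Your proposal is correct and follows essentially the same path as the paper's proof: match the root $h$-neighborhood with $P$ via Proposition~\ref{prop:MCB-consistent-general} (part 1), then show the conditional offspring distributions $\hP_{t,t'}$ and $\hat{\tP}^{\overline{c(v)}}$ agree by the identity $E_h(t,t')(\tilde t \oplus t') = M_{c(v)}(v)+1$ and $\tP(D(\tilde t \oplus t')) = P(\tilde t \oplus t')$, and finish by induction using the independence of offspring. The one lemma you implicitly need but did not name is Lemma~\ref{lem:t1-oplus-t'--t2-oplus-t'}, which the paper invokes to equate $\{T(o,v)_h \equiv \tilde t\}$ with $\{(T,v)_h \equiv \tilde t \oplus t'\}$; and the $c$ versus $\bar c$ concern you flag is indeed real, with the resolution being precisely $c(v) = (t,t')$ so that $M(v) \sim \hat{\tP}^{\overline{c(v)}}$, as the paper works out.
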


 % bb

 %aa
\begin{lem}
  \label{lem:MCB-colored-converge}
  With the above setup, assume that a sequence $\vDn
  \in \mD_n$ together with a sequence $\vbetan = (\betan(v): v \in [n])$
  are given such that $\betan(v) \in \vermark$ for all $v \in V$. Moreover, assume that for
  each $n \in \nats$ and $v \in [n]$, we have $\Dn(v) \in \mM_L^{(\delta)}$ and $(\betan(v),
  \Dn(v)) = (\theta(r), D(r))$ for some $r \in \Delta$.  Also, with $\tQ$ being the law
  of $(\theta(r), D(r))$ when $r \sim P$, assume that
  \begin{equation}
    \label{eq:betan-v-Dn-v---tilde-Q}
    \frac{1}{n} \sum_{v=1}^n \delta_{(\betan(v), \Dn(v))} \Rightarrow \tQ.
  \end{equation}
  With $H_n$ uniformly distributed in $\mG(\vDn, 2h+1)$ and
  independently for each $n$, define $G_n := \MCB_{\vbetan}(H_n)$. Then, with
  probability one, we have $U(G_n) \Rightarrow \ugwt_h(P)$.
\end{lem}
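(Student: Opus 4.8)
The plan is to combine Theorem~\ref{thm:BC-CUGWT-LWC} (local weak convergence of the colored configuration model to $\cugwt(\tP)$) with the deterministic ``dictionary'' Proposition~\ref{prop:MCB-consistent-general} relating a directed colored graph to its marked colorblind version, and finally with Lemma~\ref{lem:cugwt--ugwt-P} which identifies the image of $\cugwt(\tP)$ under the $\MCB$ operation with $\ugwt_h(P)$. First I would set $\tP \in \mP(\mM_L^{(\delta)})$ to be the law of $\Dn(v)$-type matrices, i.e.\ the pushforward of $\tQ$ onto the second coordinate; equivalently the law of $D(r)$ when $r \sim P$. Because $(\betan(v), \Dn(v)) = (\theta(r), D(r))$ for some $r \in \Delta$ and \eqref{eq:betan-v-Dn-v---tilde-Q} holds, the marginal sequence $\vDn$ satisfies \eqref{eq:H1} (all entries bounded by $\delta$) and \eqref{eq:H2} with limit $\tP$. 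Theorem~\ref{thm:BC-CUGWT-LWC} then gives, with probability one, $U(H_n) \Rightarrow \cugwt(\tP)$ when $H_n$ is uniformly distributed in $\mG(\vDn, 2h+1)$, independently across $n$.

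Next I would transfer this convergence through the map $\MCB_{\vbetan}$. The key point is that $U(G_n)$ is obtained by rooting $\CB(H_n)$-neighborhoods and decorating them with marks determined locally: the mark of a vertex $v$ is $\betan(v)$, and since $(\betan(v), \Dn(v))$ is $\Delta$--graphical with $\betan(v) = \theta(\Dn(v))$ whenever $\Dn(v) \ne 0$, the mark $\betan(v)$ is in fact recoverable from $\Dn(v)$ alone on the event $\Dn(v) \ne 0$ --- so the pair $(\CB(H_n)\text{-rooted neighborhood of }v,\ \text{colored degrees along it})$ determines the marked neighborhood $(G_n, v)_h$ up to the finitely many vertices of degree $0$. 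Concretely, I would argue that $U(G_n)_h$ is a continuous image, in the weak topology, of the ``enriched'' empirical neighborhood measure of $H_n$ that records colored degrees; since $H_n$ is $h$ tree--like on $\mG(\vDn, 2h+1)$ and every $(\betan(w), \Dn(w))$ is graphical, Corollary~\ref{cor:h-tree-like-conf-same} (via Proposition~\ref{prop:MCB-consistent-general}) says $(G_n, v)_h \equiv$ the rooted tree attached to the graphical pair at $v$, built from the $h$-ball of $v$ in $H_n$. Passing this through the local weak limit, $U(G_n)_h$ converges to the law of the depth-$h$ neighborhood under $[\MCB_\theta(F), o]$ where $(\theta, [F,o]) \sim Q$, and then Lemma~\ref{lem:cugwt--ugwt-P} identifies that law as $(\ugwt_h(P))_h$. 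Since this holds for every $h \ge 1$ and $\mGb_*$ is Polish with its topology generated by the maps $[G,o] \mapsto [G,o]_h$, we conclude $U(G_n) \Rightarrow \ugwt_h(P)$ with probability one.

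To make the ``continuous image'' step rigorous I would work with the colored configuration model on $\mG(\vDn, 2h+1)$ rather than on all of $\hmG(\vDn)$ --- Theorem~\ref{thm:BC-CUGWT-LWC} explicitly covers the $\mG(\vDn,h')$ version for any $h' \ge 2$, here $h' = 2h+1$ --- which guarantees $H_n$ is $h$ tree--like and hence puts us squarely in the hypotheses of Proposition~\ref{prop:MCB-consistent-general} with $A_h = [n]$ up to an $o(n)$ set (the exceptional set being vertices near a vertex of degree $0$, which is negligible since $(\sum_v \delta_{\Dn(v)})/n \Rightarrow \tP$ and only a bounded-degree correction is involved). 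On the limit side, I noted above that $(\theta, [F,o]) \sim Q$ lands in $\mH$ almost surely, so $\MCB_\theta(F)$ is well defined a.s. The bounded-degree ($\le \delta$) constraint throughout makes all the relevant neighborhood-counting functions bounded and continuous on the support, so the interchange of limits and the map $\MCB$ is clean.

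The main obstacle I expect is the bookkeeping around vertices of colored degree zero (isolated vertices of $F$, or vertices where $\betan(v)$ is not determined by $\Dn(v)$): one must check these form a vanishing fraction and that they do not spoil the weak convergence $U(G_n) \Rightarrow \ugwt_h(P)$. This is handled by the assumption \eqref{eq:betan-v-Dn-v---tilde-Q}, which controls the joint empirical distribution of $(\betan(v), \Dn(v))$ and in particular pins down the asymptotic fraction of each degree type, including degree zero, whose mark is then simply prescribed by $\tQ$. A secondary technical point is verifying that the map from a rooted directed colored graph (with the configuration-model topology) to its marked rooted neighborhood is genuinely continuous at the relevant points --- but because we restrict to $h$ tree--like graphs and bounded degrees, the $h$-ball in $H_n$ around a root, together with the finitely many colored degree matrices it carries, is a locally constant function of $[H_n, o]$ in a neighborhood, so continuity is immediate.
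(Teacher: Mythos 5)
Your high-level strategy matches the paper's: pass $U(H_n)\Rightarrow\cugwt(\tP)$ from Theorem~\ref{thm:BC-CUGWT-LWC} through the $\MCB$ operation, and identify the limit via Lemma~\ref{lem:cugwt--ugwt-P}. The obstacles you flag (degree-zero vertices, continuity of the decoration map) are the right ones. But there are two genuine gaps in the execution.

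First, the closing step is logically flawed. You establish $U(G_n)_h\Rightarrow(\ugwt_h(P))_h$ for the particular $h$ that appears in the lemma --- the one baked into $\mG(\vDn,2h+1)$ and $\ugwt_h(P)$ --- and then write ``since this holds for every $h\ge 1$\dots we conclude $U(G_n)\Rightarrow\ugwt_h(P)$.'' That $h$ is fixed; it is not a free truncation depth you can send to infinity. What you would actually need is $U(G_n)_k\Rightarrow(\ugwt_h(P))_k$ for every $k\ge 0$ with $h$ held fixed, and for $k>h$ the dictionary argument you invoke does not directly apply: $H_n\in\mG(\vDn,2h+1)$ is only $h$ tree--like, not $k$ tree--like, so Corollary~\ref{cor:h-tree-like-conf-same} / Proposition~\ref{prop:MCB-consistent-general} give no depth-$k$ identification of $(G_n,v)_k$. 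The paper avoids the depth-by-depth reduction entirely: it shows the \emph{untruncated} joint empirical measure $\frac{1}{n}\sum_{v}\delta_{(\beta^{(n)}(v),[H_n,v])}$ converges to $Q$, observes that $U(G_n)$ is the pushforward of this under the map $J:(\theta,[F,o])\mapsto[\MCB_\theta(F),o]$, that $J$ is continuous, and that $J_*Q=\ugwt_h(P)$ by Lemma~\ref{lem:cugwt--ugwt-P}. You would either need to adopt that pushforward argument, or prove the $k>h$ truncation convergence by a separate argument (e.g.\ using that $\cugwt(\tP)$ is supported on trees, so the typical $k$-ball in $H_n$ is eventually tree-like even though $H_n$ itself is not $k$ tree--like).

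Second, the ``enriched empirical measure'' convergence is stated but not proved, and this is not a formality. Theorem~\ref{thm:BC-CUGWT-LWC} gives only the marginal $U(H_n)\Rightarrow\cugwt(\tP)$; it says nothing about the joint law of the rooted colored graph and the attached vertex mark. The paper handles this by decomposing a bounded continuous test function $f$ on $\vermark\times\mG_*(\mC)$ as $f=f_1+f_2$, where $f_1$ depends only on $[F,o]$ (because the vertex mark is recoverable from $D^F(o)$ whenever $o$ is not isolated and $D^F(o)$ is $\Delta$-graphical) and $f_2$ depends only on $(\theta,D^F(o))$ and is nonzero only on the isolated-root event. The $f_1$ part is controlled by Theorem~\ref{thm:BC-CUGWT-LWC}, the $f_2$ part by the hypothesis~\eqref{eq:betan-v-Dn-v---tilde-Q}. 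You gesture at this decomposition when you mention that the mark is recoverable from the degree except at degree-zero vertices, but it needs to be carried out: without it you only have marginal convergence of $U(H_n)$, and the convergence of the decorated object $U(G_n)$ does not follow.
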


% bb

   \begin{proof}[Proof of Proposition~\ref{prop:MCB-colored-unif-D-2h+1--converge}]
    Note that since $U(\Gamma_n)_h \Rightarrow P$, the sequences $\vbetan$ and
    $\vDn$ obtained from $\Gamma_n$ as in the statement of the proposition
    satisfy \eqref{eq:betan-v-Dn-v---tilde-Q}. Therefore,
    Lemma~\ref{lem:MCB-colored-converge} completes the proof.
  \end{proof}

   \begin{proof}[Proof of Lemma~\ref{lem:cugwt--ugwt-P}]
    Note that, with probability one, $(\theta, [F,o]) \in \mH$. Let $T
    := \MCB_\theta(F)$. Since $[F,o]$ is almost surely $\Delta$--graphical and $T$
    is a simple marked tree, Proposition~\ref{prop:MCB-consistent-general}
    implies that, for all vertices $v$ in $T$, we have $[T,v]_h \in \Delta$
    almost surely.
    Therefore, given $r \in \Delta$, using
    Proposition~\ref{prop:MCB-consistent-general}, we have $(T, o)_h \equiv r$
    iff $M(o) = D(r)$ and $\theta$ is the mark at
      the root in $r$, i.e.\ $(\theta, M(o)) = (\theta(r),D(r))$. By the definition
      of $Q$, this has probability $P(r)$. To sum up, we have $\pr{(T,o)_h \equiv r} = P(r)$.

Now, assume that $v \sim_T o$ is an offspring
of the root in $T$ such that $T(o,v)_{h-1} \equiv t$ and
$T(v,o)_{h-1} \equiv t'$. If $\tilde{t} \in \edgemark \times \mTb_*^h$ is such
that $\tilde{t}_{h-1} = t$, Lemma~\ref{lem:t1-oplus-t'--t2-oplus-t'} in
Appendix~\ref{sec:marked-rooted-trees-some-props} implies that $T(o,v)_h \equiv \tilde{t}$
iff $(T,v)_h \equiv \tilde{t} \oplus t'$. Since $[T,v]_h \in \Delta$ almost
surely, $T(o,v)_h \equiv \tilde{t}$
has probability zero unless $\tilde{t} \oplus t' \in \Delta$. Assuming
that $\tilde{t} \oplus t' \in \Delta$ is satisfied, by the construction of
$\MCB_\theta(F)$ and Proposition~\ref{prop:MCB-consistent-general}, we
know that $(T,v)_h \equiv \tilde{t} \oplus t'$ iff $D^F(v) = D(\tilde{t}
\oplus t')$, or equivalently,  $M(v) = D(\tilde{t} \oplus t')
- E^{(t,t')}$. From \eqref{eq:cugwt-phat}, the probability of this is precisely 
\begin{equation*}
  \widehat{\tP}^{\overline{c(v)}}(M(v)) = \frac{(M_{c(v)}(v)+1) \tP(M(v) + E^{c(v)})}{e_P(\overline{c(v)})}.
\end{equation*}
Since $c(v) = (t,t')$, we have $M_{c(v)}(v)+1 = D_{(t,t')}(\tilde{t} \oplus t') =
E_h(t,t')(\tilde{t} \oplus t')$. On the other hand, $\tP(M(v) + E^{c(v)}) = \tP(D(\tilde{t} \oplus t')) =
P(\tilde{t} \oplus t')$. Comparing this with \eqref{eq:size-biased-def},
we realize that 
\begin{equation*}
  \pr{T(o,v)_h \equiv \tilde{t} | T(o,v)_{h-1} \equiv t, T(v,o)_{h-1} \equiv t'} = \one{\tilde{t} \oplus t' \in \Delta} \hP_{t,t'}(\tilde{t}) = \hP_{t,t'}(\tilde{t}),
\end{equation*}
where the last equality uses the fact that the support of $P$ is contained in
$\Delta$. Comparing these with the definition of $\ugwt_h(P)$, the proof is complete by repeating this argument inductively for
further depths in $T$ and noting that the choice of $M(v)$ in $F$ is done
conditionally independently for vertices with the same parent.
%\color{red}
%Note for Payam:
%Added the phrase ``with the same parent" at the end of the last line. Check.
%\pres{correct}
%\color{black}
  \end{proof}

    \begin{proof}[Proof of Lemma~\ref{lem:MCB-colored-converge}]
      From Theorem~\ref{thm:BC-CUGWT-LWC} we know that, with probability
      one, we have  $U(H_n)
      \Rightarrow Q_2 = \cugwt(\tP)$. Moreover, we claim that, with probability one,
      \begin{equation}
        \label{eq:betan-v-Hn-v-converge-Q}
        \frac{1}{n} \sum_{v=1}^n \delta_{(\betan(v), [H_n, v])} \Rightarrow Q.
      \end{equation}
      Recall that $[H_n,v] \in \mG_*(\mC)$ is the isomorphism class of the
      connected component of $v$ in $H_n$ rooted at $v$.
      Since $H_n \in \mG(\vDn, 2h+1)$, for each $v \in [n]$,
      it holds that $[H_n(v), v] \in \mG_*(\mC)$ is a simple colored directed rooted
      graph.
      Here,
      to make sense of the weak convergence, we turn $\Theta \times
      \mG_*(\mC)$ into a metric space with the metric
      \begin{equation*}
        d((\theta, [H, o]), (\theta', [H', o'])) = d_{\vermark}(\theta, \theta') + d_{\mG_*(\mC)}([H, o], [H', o']),
      \end{equation*}
      where $d_{\vermark}$ in the first term on the right hand side is an arbitrary metric
      on the finite set $\vermark$, e.g.\ the discrete metric,  and
      $d_{\mG_*(\mC)}$ denotes the the local metric of $\mG_*(\mC)$ from
      Section~\ref{sec:color-unim-galt}.
      To show \eqref{eq:betan-v-Hn-v-converge-Q}, we take a
      bounded continuous function $f : \Theta \times \mG_*(\mC)
      \rightarrow \reals$ and show that
      \begin{equation}
        \label{eq:f-betan-v-Hn-v-converge-int-Q}
        \frac{1}{n} \sum_{v=1}^n f(\betan(v), [H_n, v]) \rightarrow \int f d Q \qquad \text{a.s.}.
      \end{equation}
      With such a function $f$, define $f_1 : \mG_*(\mC) \rightarrow \reals$ as follows: for
      $[F, o] \in \mG_*(\mC)$, if $o$ is not isolated in $F$ and $D^F(o)$ is $\Delta$--graphical, define $f_1([F,o]) :=
      f(\theta(D^F(o)), [F,o])$. Recall that, since $D^F(o)$ is nonzero and $\Delta$--graphical,
      $\theta(D^F(o))$ is well-defined. Otherwise, if $o$ is isolated in $F$ or
      if $D^F(o)$ is not $\Delta$--graphical,
      define $f_1([F,o])$ to be zero.
      Moreover, define $f_2: \vermark \times \mM_L \rightarrow \reals$ as
      follows: for $\theta \in \vermark$ and $D \in \mM_L$, if $D$ is
      not the zero matrix, define $f_2(\theta, D)$ to be zero.
      Otherwise, define $f_2(\theta, D) := f(\theta, [F,o])$ where $[F,
      o] \in \mG_*(\mC)$ is an isolated root.
      Now, take $(\theta, [F, o]) \in \vermark \times \mG_*(\mC)$
      such that $(\theta, D^F(o)) = (\theta(r) , D(r))$ for some $r \in \Delta$. If $o$ is isolated in $F$, $f_1([F, o]) = 0$ and
      $f_2(\theta, D^F(o))  = f(\theta, [F, o])$. Otherwise, $f_1([F,
      o]) = f(\theta, [F, o]) $ and $f_2(\theta, D^F(o))=0$. In both cases,
      we have
      \begin{equation}
        \label{eq:f-theta-Fo--f1+f2}
        f(\theta, [F, o]) = f_1([F,o]) + f_2(\theta, D^F(o)).
      \end{equation}
      On the other hand, if $(\theta, [F, o]) \sim Q$, with probability one,
      we have 
      $(\theta, D^F(o)) = (\theta(r) , D(r))$ for some $r \in \Delta$. Thereby,
      \begin{equation}
        \label{eq:f-f1+f2--Q-as}
        f(\theta, [F, o]) = f_1([F, o]) + f_2(\theta, D^F(o)) \qquad Q\text{--a.s.}.
      \end{equation}

 Note that, by assumption, for all $n \in \nats$ and $v \in [n]$, we have $(\betan(v),
 \Dn(v)) = (\theta(r), D(r))$ for some $r \in \Delta$. Also, we have $D^{H_n}(v)
 = \Dn(v)$. Consequently, from \eqref{eq:f-theta-Fo--f1+f2}, for all $n \in \nats$ and $v
 \in [n]$, we have
 \begin{equation}
   \label{eq:f-f1+f2--Hn}
   f(\betan(v), [H_n, v]) = f_1([H_n, v]) + f_2(\betan(v), \Dn(v)) \qquad \text{a.s.}.
 \end{equation}
 Moreover, if $(\theta, [F, o]) \sim Q$, $[F, o]$ is distributed
      according to $Q_2$ and $(\theta, D^F(o))$ is distributed
      according to $\tQ$. Thereby, using
      \eqref{eq:f-f1+f2--Q-as}, we have
      \begin{equation}
        \label{eq:int-f-Q-f1+f2}
        \int f dQ = \int f_1 d Q_2 + \int f_2 d \tQ.
      \end{equation}
 It is easy to see that if $f$ is continuous, both $f_1$
      and $f_2$ are continuous. Therefore, using the fact that, with
      probability one, $U(H_n)\Rightarrow Q_2$, we realize that,
      \begin{equation}
        \label{eq:int-f1-conv-Q2}
        \frac{1}{n} \sum_{v=1}^n f_1([H_n, v]) = \int f_1 dU(H_n) \rightarrow \int f_1 dQ_2 \qquad \text{a.s.}.
      \end{equation}
 Also, due to \eqref{eq:betan-v-Dn-v---tilde-Q}, we have
      \begin{equation}
        \label{eq:sum-f2-int-tilde-Q}
        \frac{1}{n} \sum_{v=1}^n f_2(\betan(v), \Dn(v)) \rightarrow \int f_2 d \tQ.
      \end{equation}
Substituting \eqref{eq:int-f1-conv-Q2} and
\eqref{eq:sum-f2-int-tilde-Q} into \eqref{eq:f-f1+f2--Hn}
and comparing with \eqref{eq:int-f-Q-f1+f2}, we arrive at
\eqref{eq:f-betan-v-Hn-v-converge-int-Q} which shows
\eqref{eq:betan-v-Hn-v-converge-Q}.

Now, define the function $J$ that maps $(\theta, [F,o])
\in \mH$  to $[\MCB_\theta(F), o] \in \mGb_*$.
It is easy to see that $J$ is continuous.
Moreover, Lemma~\ref{lem:cugwt--ugwt-P} asserts that the pushforward
of $Q$ under the mapping $J$ is precisely $\ugwt_h(P)$.
On the other hand, since for all $n \in \nats$ and $v \in [n]$ we have $(\betan(v), \Dn(v)) =
(\theta(r), D(r))$ for some $r \in \Delta$, we realize that, with probability
one, $(\betan(v),
[H_n,v]) \in \mH$ and $J(\betan(v), [H_n, v]) =
[\MCB_{\vbetan}(H_n), v] = [G_n, v]$.
Consequently, the pushforward of the
left hand side in \eqref{eq:betan-v-Hn-v-converge-Q} under the map $J$ is
precisely $U(G_n)$, while the pushforward of its right hand side
under the map $J$ is $\ugwt(P)$. This means that, with probability one,
$U(G_n) \Rightarrow \ugwt(P)$ and this completes the proof.
  \end{proof}

%%% Local Variables: 
%%% mode: latex
%%% TeX-master: "Note-41_BC-ent-arxiv.tex"
%%% End: 

\section{Properties of the Entropy}
\label{sec:ent-properties}

In this section, we give the proof of steps taken in
Section~\ref{sec:main-results} in order to prove 
Theorems~\ref{thm:badcases}, \ref{thm:bch-properties} and \ref{thm:Jh}.
First, in Section~\ref{sec:BC-conditions--infty}, we prove Propositions~\ref{prop:BC-not-unim_dQnotmatch--infty} and
\ref{prop:BC-no-tree--infty}, which specify conditions under which the entropy
is $-\infty$. 
%Then, in Section~\ref{sec:proof-lem-Ptilde-Ph+1}, we prove
%Lemma~\ref{lem:PPh--PtildePh+1}, which will be useful for the rest of the section. 
Afterwards, in Section~\ref{sec:lowerbound}, we prove the lower bound
result of Proposition~\ref{prop:lower-bound}. In
Section~\ref{sec:upperbound}, we prove the upper bound result of 
Propositions~\ref{prop:upper-bound}.
Finally, in Section~\ref{sec:upperbound-new}, 
we prove the upper bound result of Proposition~\ref{prop:upper-bound-infty}.

\subsection{Conditions under which the entropy is $-\infty$}
\label{sec:BC-conditions--infty}

In this section, we prove Propositions~\ref{prop:BC-not-unim_dQnotmatch--infty} and
\ref{prop:BC-no-tree--infty}. Before that, we state and prove the following
useful lemma:

\begin{lem}
  \label{lem:upper-bound-on-G_n-m_mlogn-n}
If $\mG_{n, m}$ denotes the set of simple unmarked graphs on the vertex set $[n]$
having exactly $m$  edges, we have, 
  \begin{equation*}
    \log |\mG_{n,m}| = \log \left | \binom{\binom{n}{2}}{m} \right | \leq m \log n + ns\left ( \frac{2m}{n} \right ),
  \end{equation*}
where $s(x) := \frac{x}{2} - \frac{x}{2} \log x$ for $x > 0$ and $s(0) := 0$. Moreover, since $s(x) \leq 1/2$ for all $x \geq 0$, we have in particular
\begin{equation*}
  \log |\mG_{n,m}| \leq m \log n + \frac{n}{2}.
\end{equation*}
\end{lem}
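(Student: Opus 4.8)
The statement is elementary: I need to bound $\binom{\binom{n}{2}}{m}$, where $\mG_{n,m}$ is in bijection with the set of $m$-element subsets of the $\binom{n}{2}$-element edge set of $K_n$, which gives the stated equality $|\mG_{n,m}| = \binom{\binom{n}{2}}{m}$ immediately. The plan is to use the crude bound $\binom{N}{m} \le N^m/m!$ with $N = \binom{n}{2} = n(n-1)/2 \le n^2/2$, and then apply Stirling's inequality $m! \ge (m/e)^m$ (equivalently $\log m! \ge m\log m - m$, valid for $m \ge 1$, and trivially for $m = 0$).

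First I would write, for $m \ge 1$,
\begin{equation*}
  \log \binom{\binom{n}{2}}{m} \le m \log \binom{n}{2} - \log m! \le m \log \frac{n^2}{2} - (m \log m - m) = 2m\log n - m \log 2 - m\log m + m.
\end{equation*}
Then I would reorganize the right-hand side to isolate $m\log n$: writing $2m\log n - m\log m = m\log n + m\log(n/m)$, and noting $n = 2\cdot (n/2)$ so $m\log(n/m) = m\log 2 + m\log(n/(2m))\cdot$—hmm, more cleanly, I want to produce the term $n\, s(2m/n) = \frac{n}{2}\cdot\frac{2m}{n} - \frac{n}{2}\cdot\frac{2m}{n}\log\frac{2m}{n} = m - m\log\frac{2m}{n} = m + m\log n - m\log(2m) = m + m\log n - m\log 2 - m\log m$. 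Comparing, the bound above is exactly $m\log n + \bigl(m + m\log n - m\log 2 - m\log m\bigr) = m\log n + n\,s(2m/n)$, so the two sides match term-for-term. The case $m = 0$ is trivial since both sides are $0$ (with the convention $s(0) = 0$). For the ``in particular'' clause, I would invoke the elementary calculus fact that $s(x) = \frac{x}{2} - \frac{x}{2}\log x$ attains its maximum over $x \ge 0$ at $x = 1$, where $s(1) = 1/2$, hence $s(2m/n) \le 1/2$ and $n\,s(2m/n) \le n/2$.

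There is essentially no obstacle here; the only things to be careful about are: (i) handling the degenerate case $m = 0$ (and implicitly ensuring $2m/n$ is in the domain of $s$, which it is), (ii) correctly tracking the algebra so that the Stirling slack lines up exactly with $n\,s(2m/n)$ rather than producing an $o(m)$ error term — this works because I use the clean inequality $\log m! \ge m\log m - m$ rather than a sharper asymptotic, and (iii) justifying $s(x)\le 1/2$, which follows from $s'(x) = -\tfrac12\log x$, so $s$ increases on $(0,1)$ and decreases on $(1,\infty)$ with $s(1)=\tfrac12$, together with $s(0)=0\le\tfrac12$. I would present these three points briefly and the computation above as the body of the proof.
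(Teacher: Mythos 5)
Your proof is correct and takes essentially the same approach as the paper. The paper simply cites the classical bound $\binom{r}{s} \le (re/s)^s$ directly, whereas you rederive that same bound from $\binom{N}{m} \le N^m/m!$ and $m! \ge (m/e)^m$, so the two arguments coincide after that first line; the treatment of $m=0$ and the calculus for $s(x)\le 1/2$ also match what the paper does.
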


\begin{proof}%[Proof of Lemma~\ref{lem:upper-bound-on-G_n-m_mlogn-n}]
  Using the classical upper bound $\binom{r}{s} \leq (re/s)^s$, we have 
  \begin{equation*}
    \log \left | \binom{\binom{n}{2}}{m} \right | \leq m \log \frac{n^2 e}{2m} = m \log n + m \log \frac{ne}{2m} = m \log n + n s(2m/n),
  \end{equation*}
which completes the first part. Also, it is easy to see that $s(x)$ is
increasing for $x\leq 1$, decreasing for $x> 1$ and attains its maximum value $1/2$ at $x=1$. Therefore, $s(x) \leq 1/2$. This completes the proof of the second statement. 
\end{proof}

\begin{proof}[Proof of Proposition~\ref{prop:BC-not-unim_dQnotmatch--infty}]
Suppose $\bchover_{\vd, Q}(\mu)\condmnun > -\infty$. Then, for all $\epsilon>0$,
$\mGnmnun(\mu, \epsilon)$ is non empty for infinitely many $n$. Therefore, there exists a sequence of integers $n_i$ going to infinity
together with simple marked graphs $\Gni \in \mG^{(n_i)}_{\vec{m}^{(n_i)}, \vec{u}^{(n_i)}}$ such that $U(\Gni)
\Rightarrow \mu$.
This already implies that if $\bchover_{\vd, Q}(\mu)\condmnun > -\infty$, $\mu$
must be sofic and hence unimodular. In other words, if $\mu$ is not
unimodular, $\bchover_{\vd, Q}(\mu)\condmnun = -\infty$.

Consequently, it remains to show
that if either $\vd \neq \vdeg(\mu)$ or $Q \neq \vvtype(\mu)$ we have $\bchover_{\vd, Q}(\mu)\condmnun = -\infty$.
Similar to the above, assume  $\bchover_{\vd, Q}(\mu)\condmnun > -\infty$ and take the above
sequence of simple marked graphs $\Gni$. 
First note that, for any $\alpha > 0$, and $x, x' \in
\edgemark$, the function $[G,o] \mapsto \deg_G^{x,x'}(o) \wedge \alpha$ is
continuous and bounded on $\mGb_*$. Thereby,
\begin{equation*}
  \int \deg_G^{x,x'}(o) dU(\Gni)([G,o]) \geq  \int (\deg_G^{x,x'}(o) \wedge \alpha) dU(\Gni)([G,o]) \rightarrow   \int (\deg_G^{x,x'}(o) \wedge \alpha) d \mu([G,o]).
\end{equation*}
Sending $\alpha$ to infinity on the right hand side and using the monotone
convergence theorem, we realize that
\begin{equation}
  \label{eq:evwrtUgn-deg-liminf--bigger}
  \liminf_{i \rightarrow \infty} \int \deg_G^{x,x'}(o) dU(\Gni)([G,o]) \geq \deg_{x,x'}(\mu).
\end{equation}
On the other hand, we have
\begin{equation*}
  \int \deg_G^{x,x'}(o) dU(\Gni)([G,o]) =
  \begin{cases}
    m^{(n_i)}(x,x') / n & x \neq x', \\
    2m^{(n_i)}(x,x') / n & x = x'.
  \end{cases}
\end{equation*}
We know that if $x \neq x'$ we have $\mn(x,x') / n \rightarrow d_{x,x'}$, and 
%if $x = x'$, 
we also have
$\mn(x,x) / n \rightarrow d_{x,x}/2$
for all $x$.
Comparing this with~\eqref{eq:evwrtUgn-deg-liminf--bigger}, we realize
that if $\bchover_{\vd, Q}(\mu)\condmnun > -\infty$ then, for all $x,x' \in
\edgemark$, we have $d_{x,x'} \geq \deg_{x,x'}(\mu)$.
Similarly, using the fact that for all $\theta \in \vermark$ the mapping $[G,
o] \mapsto \one{\tau_G(o) = \theta}$ is continuous and bounded on $\mGb_*$, we
realize that, if $\bchover_{\vd, Q}(\mu)\condmnun> -\infty$, with the sequence $\Gni$ as
above we have $u^{(n_i)}(\theta) \rightarrow \vtype_\theta(\mu)$. But
$u^{(n)}(\theta) / n \rightarrow q_\theta$. This means that $Q = \vvtype(\mu)$.
As a result, to complete the proof, we assume that  for some
$\tx,  \txp \in \edgemark$, we have $d_{\tx,
  \txp} > \deg_{\tx, \txp}(\mu)$ and then we show that $\bchover_{\vd, Q}(\mu)\condmnun =
-\infty$.
In order to do this it suffices to prove that for any sequence $\epsilon_n \rightarrow
0$ we have
\begin{equation}
  \label{eq:dneq-limsup-epsilonn--infty-claim}
  \limsup_{n \rightarrow \infty} \frac{1}{n} \left ( \log |\mGn_{\vmn, \vun}(\mu, \epsilon_n)| - \snorm{\mn}_1 \log n \right) = - \infty.
\end{equation}
% note that if $\bchover > -\infty$, as it is $\lim_{\epsilon \rightarrow 0}
% \bchover(\mu, \epsilon)$, a diagonalization argument would imply that there
% exists a sequence $\epsilon_n \rightarrow 0$ such that the LHS above
% converge to $\bchover$
For an integer $\Delta > 0$ define $A_\Delta:= \{ [G,o] \in \mGb_*:
\deg_G^{\tx, \txp}(o) > \Delta\}$.
Recall that, by definition of the \LP distance, if $\Gn \in \mGnmnun(\mu, \epsilon_n)$ then
\begin{equation}
  \label{eq:UGn-ADelta}
  U(\Gn)(A_\Delta) \leq \mu(A_\Delta^{\epsilon_n}) + \epsilon_n,
\end{equation}
where $A_\Delta^{\epsilon_n}$ is the $\epsilon_n$--extension of
the set $A_\Delta$.
Note that if we have $d_*([G,o], [G',o']) < 1/2$ for $[G,o]$ and $[G',o']$ in
$\mGb_*$ then we have $[G,o]_1 \equiv [G',o']_1$ and hence $\deg_G^{\tx, \tx'}(o) =
\deg_{G'}^{\tx, \tx'}(o')$. This implies that if $\epsilon_n < 1/2$, which
indeed holds for $n$ large
enough, then $A_\Delta^{\epsilon_n} = A_\Delta$.
Therefore, using~\eqref{eq:UGn-ADelta}, we realize that if  $n$ is large enough
so that $\epsilon_n < 1/2$,  for any $\Delta > 0$ we have 
\begin{equation}
  \label{eq:sizeof-v-deg-tx-txp-Delta--nmuADelta}
  | \{ v \in [n]: \deg_{\Gn}^{\tx, \txp}(v) > \Delta\}|  \leq n (\mu(A_\Delta) + \epsilon_n). 
\end{equation}
A similar argument shows that, for $n$ large enough
such that $\epsilon_n < 1/2$, for any integer $k$ and any $\Gn \in
\mGnmnun(\mu, \epsilon_n)$, we have
\begin{equation}
  \label{eq:v-deg-tx-txp-=k-Pmu}
  | \{ v \in [n]: \deg_{\Gn}^{\tx, \txp}(v) = k\}|  \leq n \left (\mu\left( \left\{ [G,o]: \deg_G^{\tx, \txp}(o) = k\right\} \right) + \epsilon_n \right ).
\end{equation}
Now, fix a sequence of integers $\Delta_n$ such that as
$n\rightarrow \infty$, $\Delta_n \rightarrow \infty$, but $\Delta_n^2
\epsilon_n \rightarrow 0$. For instance, one could make the choice $\Delta_n = \lceil  \epsilon_n^{-1/3}
\rceil$.
Using~\eqref{eq:v-deg-tx-txp-=k-Pmu} for $k = 0, \dots, \Delta_n$,
we realize that, for $n$ large enough and for any $\Gn \in
\mGnmnun(\mu, \epsilon_n)$, we have
\begin{equation}
  \label{eq:sum-c-Deltan-deg--deg-mu}
  \begin{aligned}
    \sum_{v \in [n]: \deg_{\Gn}^{\tx, \txp}(v) \leq \Delta_n} \deg_{\Gn}^{\tx, \txp}(v) &\leq \sum_{k = 0}^{\Delta_n} k n \left(  \mu\left(\left\{[G,o]:\deg_G^{\tx, \txp}(o) = k\right\}\right) + \epsilon_n \right) \\
    &\leq n \left(  \evwrt{\mu}{\deg_G^{\tx, \txp}(o) \one{\deg_G^{\tx, \txp}(o) \leq \Delta_n}}  + \Delta_n^2 \epsilon_n \right) \\
    &\leq n \deg_{\tx, \txp}(\mu) + n \Delta_n^2 \epsilon_n.
  \end{aligned}
\end{equation}
On the other hand, for $\Gn \in \mGnmnun(\mu, \epsilon_n)$ we have 
\begin{equation*}
  \sum_{v \in [n]} \deg_G^{\tx, \txp}(v) =
  \begin{cases}
    \mn(\tx, \txp) & \tx \neq \txp, \\
    2 \mn(\tx, \txp) & \tx = \txp.
  \end{cases}
\end{equation*}
Moreover, as $n \rightarrow \infty$, 
for $\tx \neq \txp$ we have $\mn(\tx,
\txp) / n  \rightarrow d_{\tx, \txp}$ and we also have
$\mn(\tx, \tx) /
n \rightarrow d_{\tx,\tx} / 2$ for all $\tilde{x}$.
Consequently, we have 
\begin{equation*}
  \sum_{v \in [n]} \deg_{\Gn}^{\tx,\txp}(v) = n (d_{\tx, \txp} + \alpha_n),
\end{equation*}
where $\alpha_n$ is a sequence such that $\alpha_n \rightarrow 0$
as $n\rightarrow \infty$. Comparing this with
\eqref{eq:sum-c-Deltan-deg--deg-mu}, we realize that for $n$ large
enough and $\Gn \in \mGnmnun(\mu, \epsilon_n)$ we have
\begin{equation*}
  \sum_{v \in [n]: \deg_{\Gn}^{\tx, \txp}(v) > \Delta_n} \deg_{\Gn}^{\tx, \txp}(v) \geq n ( d_{\tx,\txp} - \deg_{\tx, \txp}(\mu) + \alpha_n - \Delta_n^2 \epsilon_n).
\end{equation*}
Recall that, by assumption,  $d_{\tx,\txp} > \deg_{\tx, \txp}(\mu)$,  $\alpha_n
\rightarrow 0$ and $\Delta_n^2 \epsilon_n \rightarrow 0$. Hence,
there exists $\delta> 0$ such that for $n$ large enough and $\Gn
\in \mGnmnun(\mu, \epsilon_n)$ we have
\begin{equation}
  \label{eq:sum-v-n-Deltan-positive }
  \sum_{v \in [n]: \deg_{\Gn}^{\tx, \txp}(v) > \Delta_n} \deg_{\Gn}^{\tx, \txp}(v) \geq n \delta.
\end{equation}
Comparing this to
\eqref{eq:sizeof-v-deg-tx-txp-Delta--nmuADelta}, we realize that,
for $n$ large enough, 
$\Gn \in \mGnmnun(\mu, \epsilon_n)$ implies that for the subset $S_n \subset [n]$
defined as
$S_n:= \{v \in [n]: \deg_{\Gn}^{\tx, \tx'}(v) > \Delta_n\}$ we have 
$\sum_{v \in S} \deg_{\Gn}^{\tx,
  \txp}(v) \geq n \delta$ and $|S_n| \leq n\beta_n$, where $\beta_n
:= \mu(A_{\Delta_n}) + \epsilon_n$. Note that, since $\Delta_n \rightarrow
\infty$ and $\epsilon_n \rightarrow 0$, we have $\beta_n \rightarrow 0$.
Observe that  $\sum_{v \in S} \deg_{\Gn}^{\tx,\txp}(v) \geq n \delta$
implies that there are at least $n \delta/2$ many edges in $\Gn$ with mark
$\tx, \txp$ with at least one endpoint in the set $S_n$.
Let
 $\mS_n$ denote the family of subsets $S_n \subset [n]$
with $|S_n| \leq n \beta_n$. For $S_n \in \mS_n$, let $B_n(S_n)$
denote the set of  simple marked graphs $\Gn \in
\mGnmnun$ such that  there are at least $n \delta/2$ many edges with mark
$\tx, \txp$ with at least one endpoint in $S_n$. The above
discussion implies that, for $n$ large enough, we have 
\begin{equation}
  \label{eq:mGnmnun--BnS}
  \mGnmnun(\mu, \epsilon_n) \subset \bigcup_{S \in \mS_n} B_n(S_n).
\end{equation}
Now, in order to find an upper bound for the size of the set on the right hand
side, note that for $S_n \in \mS_n$ there are $\binom{|S_n|}{2} + |S_n|(n - |S_n|)$
many slots to choose for the edges with at least one endpoint in $S_n$ and with
marks $\tx, \tx'$. Since
there are at least $n \delta / 2$ many such edges, the number of ways to pick
the $\tx, \tx'$ edges of a graph in $B_n(S_n)$ is at most 
\begin{equation*}
  \binom{\binom{|S_n|}{2} + |S_n|(n-|S_n|)}{n \delta/2} \binom{\binom{n}{2}}{\mn(\tx, \txp) - n \delta/2} 2^{\mn(\tx, \txp)} =: C_n(S_n). 
\end{equation*}
Here, the term $2^{\mn(\tx, \txp)}$ is an upper bound for the number of ways we
can apply the marks $\tx$ and $\tx'$ for the chosen edges (if $\tx = \tx'$, this
number is in fact 1).
Now, since $|S_n| \leq n \beta_n$ for $S_n \in \mS_n$, using the standard bound $\binom{r}{s} \leq (re / s)^s$ and
Lemma~\ref{lem:upper-bound-on-G_n-m_mlogn-n}, if $n$ is large enough so that
$\beta_n \leq 1/2$, we get
\begin{align*}
  \max_{S_n \in \mS_n} \log C_n(S_n) &\leq \frac{n \delta}{2} \log \left ( n e \frac{\frac{\beta_n^2}{2} +\beta_n (1-\beta_n)}{\delta/2} \right ) + (\mn(\tx, \txp) - n \delta/2) \log n + \frac{n}{2} + \mn(\tx, \txp) \log 2 \\
                                     &= \mn(\tx, \txp) \log n + n \Bigg( \frac{1}{2} + \frac{\delta}{2} - \frac{\delta}{2} \log \frac{\delta}{2} + \frac{\mn(\tx, \txp)}{n} \log 2 \\
  &\quad \qquad + \frac{\delta}{2} \log \left(\frac{\beta_n^2}{2} + \beta_n (1-\beta_n)\right) \Bigg).
\end{align*}
Note that $\delta> 0$ is fixed. On the other hand, as $n
\rightarrow \infty$, $\mn(\tx,
\txp) / n $ either converges to $d_{\tx, \txp}$ or $d_{\tx,
  \txp} / 2$, depending on whether $\tx \neq \txp$
  or $\tx = \txp$ respectively.  
  %\color{red}
  %Note for Payam:
  %Changed the order of $\tx \neq \txp$ and $\tx = \txp$. Check.
  %\color{black}
  %\pres{correct}
  But, in any case, 
it remains bounded. However, $\beta_n \rightarrow
0$, hence $\delta \log (\beta_n^2/2 + \beta_n(1-\beta_n))
\rightarrow -\infty $. Consequently, we have 
\begin{equation}
  \label{eq:CS--infty}
  \lim_{n \rightarrow \infty} \frac{1}{n} \left( \max_{S \in \mS_n} \log C_n(S_n) - \mn(\tx, \txp) \log n \right) = - \infty.
\end{equation}
Now, in order to find an upper bound for $|B_n(S_n)|$ given $S_n \in \mS_n$, we
multiply the term $C_n(S_n)$ defined above by the number of ways we can
add vertex marks to the graph and also add edges with marks different from $\tx,
\tx'$, to get
\begin{equation*}
  |B_n(S_n)| \leq C_n(S_n) |\vermark|^n \prod_{\stackrel{x \leq x' \in \edgemark}{(x, x') \neq (\tx, \txp)}} \binom{\binom{n}{2}}{\mn(x, x')} 2^{\mn(x, x')}.
\end{equation*}
Using \eqref{eq:CS--infty} and Lemma~\ref{lem:upper-bound-on-G_n-m_mlogn-n} for
each term, we
realize that
\begin{equation}
  \label{eq:log-BnS--infty}
  \lim_{n \rightarrow \infty} \frac{1}{n} \left( \max_{S \in \mS_n} \log |B_n(S_n)| - \snorm{\vmn}_1\log n \right) = - \infty.
\end{equation}
Moreover, if $n$ is large enough so that $\beta_n < 1/2$, we have 
\begin{equation*}
  |\mS_n| \leq \sum_{k=0}^{n \beta_n} \binom{n}{k} \leq (1 + n \beta_n) \binom{n}{n \beta_n}.
\end{equation*}
Observe that, since $\beta_n \rightarrow 0$, we have
$\frac{1}{n} \log |\mS_n| \rightarrow 0$ as $n \rightarrow
\infty$.
Putting this together with \eqref{eq:log-BnS--infty}
and comparing with \eqref{eq:mGnmnun--BnS}, we arrive at
\eqref{eq:dneq-limsup-epsilonn--infty-claim}, which completes the
proof.
\end{proof}

\begin{proof}[Proof of Proposition~\ref{prop:BC-no-tree--infty}]
Let $\mG_*$ denote the space of isomorphism classes of rooted simple unmarked
connected graphs, which is  defined in a similar way as $\mGb_*$, with the
difference that  vertices and
edges do not carry marks.
We can equip $\mG_*$ with a local metric similar to that of $\mGb_*$.
With this, let  $F: \mGb_* \rightarrow \mG_*$ be such that
$[G, o]$ is mapped to $[\tG, o]$ under $F$, where $\tG$ is the unmarked graph
obtained from $G$ by removing all vertex and edge marks.
For $[G,o]$ and $[G',o']$ in $\mGb_*$, let $\tG$ and
$\tG'$ be obtained from $G$ and $G'$ by removing vertex and edge marks,
respectively. Observe that if $[G, o]_h \equiv [G', o']_h$ for $h \geq 0$, then $[\tG,o]_h
    \equiv [\tG',o']_h$.
This means that $F$ is $1$--Lipschitz, and in particular continuous.

Now, let $\vmn, \vun$ be any sequences such that $(\vmn,\vun)$ is adapted to $(\vdeg(\mu),
    \vvtype(\mu))$ and define $m_n = \snorm{\vmn}_1$.
Moreover, for integer $n$, let $\mG_{n, m_n}$ be the set of simple unmarked graphs on the vertex set $[n]$
    having $m_n$ edges.
Observe that if $\Gn \in \mGnmnun(\mu, \epsilon)$ for some $\epsilon>0$ and $n \in
\nats$, and $\tGn \in \mG_{n, m_n}$ is the unmarked graph obtained from
$\Gn$ by removing all vertex and edge marks, then $U(\tGn)$ is the pushforward of
$U(\Gn)$ under the mapping $F$.
Let $\rho \in \mP(\mG_*)$ be the pushforward of $\mu$ under $F$.
Since $F$ is $1$--Lipschitz, it is easy to see that for $\Gn \in \mGnmnun$, we
have  $\dlp(U(\tGn), \rho)
\leq \dlp(U(\Gn), \mu) < \epsilon$.
% see proof note 2018-05-09_Levy-Prokhorov-metric_1-Lipschitz-map.pdf
Therefore, if $\mG_{n, m_n}(\rho, \epsilon)$ denotes the set of unmarked
graphs $H \in \mG_{n, m_n}$ such that $\dlp(U(H), \rho) < \epsilon$, the
above discussion implies that for $\Gn \in \mGnmnun$, we have $\tGn \in
\mG_{n, m_n}(\rho, \epsilon)$. Moreover, for a simple unmarked  graph $H \in
\mG_{n, m_n}$, there are at most $(|\edgemark|^2)^{m_n} |\vermark|^n$ many
ways of adding marks to vertices and edges. Thereby, 
\begin{equation*}
  |\mGnmnun(\mu, \epsilon)| \leq |\mG_{n, m_n}(\rho, \epsilon)| (|\edgemark|^2)^{m_n} |\vermark|^n.
\end{equation*}
Note that as $n \rightarrow \infty$, $m_n / n \rightarrow d/2$ where $d = \deg(\mu) = \deg(\rho)$.
Consequently, 
    \begin{equation}
      \label{eq:mu-T*-rho}
      \limsup_{n \rightarrow \infty} \frac{\log |\mGnmnun(\mu, \epsilon)| - \snorm{\vmn}_1 \log n}{n} \leq \limsup_{n \rightarrow \infty} \frac{\log |\mG_{n, m_n}(\rho, \epsilon)| - m_n \log n }{n} + d \log |\edgemark| + \log |\vermark|.
    \end{equation}
Now, the assumption   $\mu(\mTb_*) < 1$ implies that $\rho(\mT_*) < 1$. Hence, Theorem~1.2 in
\cite{bordenave2015large} implies that the unmarked BC entropy of $\rho$ is
$-\infty$, i.e.
\begin{equation*}
  \lim_{\epsilon\rightarrow 0} \limsup_{n \rightarrow \infty} \frac{\log |\mG_{n, m_n}(\rho, \epsilon)| - m_n \log n }{n} = - \infty.
\end{equation*}
  Comparing this with \eqref{eq:mu-T*-rho}, we realize that
  $\bchover_{\vdeg(\mu), \vvtype(\mu)}(\mu)\condmnun = - \infty$ which completes the proof.
\end{proof}

\subsection{Lower bound}
\label{sec:lowerbound}

In this section, we prove the lower bound result of
Proposition~\ref{prop:lower-bound}.

\begin{proof}[Proof of Proposition~\ref{prop:lower-bound}]
For $x, x' \in \edgemark$, let $d_{x,x'} := \deg_{x,x'}(\mu)$ and $\vd :=
(d_{x,x'}: x,x' \in \edgemark)$. Furthermore, for $\theta \in \vermark$, let
$q_\theta := \vtype_{\theta}(\mu)$ and $Q := (q_\theta: \theta \in \vermark)$. 
We prove the result in two steps: first we assume that $P$ has a finite
support, and then relax this assumption.

\underline{Case 1: $P$ has a finite support:} Using
Lemma~\ref{lem:P-finite-seq-converging} from %Section~\ref{sec:graphs-with-given},
Section~\ref{sec:weak-convg-to-admissible}
we realize that there exists a finite set
    $\Delta \subset \mTb_*^h$ containing the support of $P$, and a sequence of
    simple marked graphs $\Gamma_n \in \mGnmnun$ such that $U(\Gamma_n)_h
    \Rightarrow P$ and, for all $n$, the
    support of $U(\Gamma_n)_h$ is contained in $\Delta$.
     To find a lower bound for $\mGn_{\vmn, \vun}(\mu, \epsilon)$, we may
    restrict ourselves to the graphs $G \in \mGn_{\vmn, \vun}$ such that
    $U(G)_h = U(\Gamma_n)_h$, since
    \begin{equation}
      \label{eq:P-finite-Gn-mu-epsilon-Gamma-h}
      |\mGn_{\vmn, \vun}(\mu, \epsilon)| \geq |\{ G \in \mGn_{\vmn, \vun}: U(G)_h = U(\Gamma_n)_h, \dlp(U(G), \mu) < \epsilon\}|.
    \end{equation}
    In order to find a lower bound for the right hand side of \eqref{eq:P-finite-Gn-mu-epsilon-Gamma-h}, we employ the tools from 
    Section~\ref{sec:color-conf-model}.

More precisely, define $\mF \subset \edgemark \times \mTb_*^{h-1}$ to be the set comprised of
$T[o,v]_{h-1}$ and $T[v,o]_{h-1}$ for all $[T, o] \in
    \Delta$ and $v \sim_T o$. Since $\Delta$ is finite, $\mF$ is also finite and
    hence can be identified with the set of integers $\{1, \ldots, L\}$ where $L
    = |\mF|$. Moreover, define the color set $\mC := \mF \times \mF$. Also, let
    $\delta$ be the maximum degree at the root among the members of $\Delta$. 
Since the support of $U(\Gamma_n)_h$ lies in $\Delta$, the colored
version of $\Gamma_n$, $\colored(\Gamma_n)$,  is a member of $\mG(\mC)$. 
Let $\vDn := \vD^{\colored(\Gamma_n)}$ be the colored degree sequence of
$\colored(\Gamma_n)$. Recall that, for $t, t' \in \mF$ and $v \in [n]$, we have $\Dn_{t,t'}(v) = E_h(t,
t')(\Gamma_n, v)$.
Moreover, since the support of $U(\Gamma_n)_h$ lies in $\Delta$, we have $\vDn(v) \in
\mM_L^{(\delta)}$ for all $n$ and $v \in [n]$. Furthermore, define $\vbetan =
(\betan(v): v \in [n])$ such that for $v \in [n]$, $\betan(v) := \tau_{\Gamma_n}(v)$.

From Corollary~\ref{cor:NhG-count-nDb}, we know that $N_h(\Gamma_n)$,
which is the number of simple marked graphs $G$ in $\mGb_n$ such that
$U(G)_h = U(\Gamma_n)_h$, is precisely $n(\vDn, \vbetan) |\mG(\vDn, 2h+1)|$.
Note that if $U(G)_h = U(\Gamma_n)_h$, then $\vm_{G} = \vm_{\Gamma_n} = \vmn$
and $\vu_{G} = \vu_{\Gamma_n} = \vun$, thus $G \in \mGnmnun$.
Moreover, from the proof of Corollary~\ref{cor:NhG-count-nDb}, we know that for
two permutations $\pi$ and $\pi'$, if
$((\vDn)^\pi, (\vbetan)^\pi) \neq ((\vDn)^{\pi'}, (\vbetan)^{\pi'})$, the sets
$\{\MCB_{(\vbetan)^\pi}(H): H \in \mG((\vDn)^\pi, 2h+1)\}$ and
$\{\MCB_{(\vbetan)^{\pi'}}(H): H \in \mG((\vDn)^{\pi'}, 2h+1)\}$ are disjoint.
On the other hand, for $H_n \neq H'_n \in \mG(\vDn, 2h+1)$, we have 
%$\MCB_{\vbeta}(H_n) \neq \MCB_{\vbetan}(H'_n)$. 
$\MCB_{\vbetan}(H_n) \neq \MCB_{\vbetan}(H'_n)$. 
%\color{red}
%Note for Payam: Changed $\MCB_{\vbeta}(H_n)$
%to $\MCB_{\vbetan}(H_n)$ in the preceding equation. Check.
%\color{black}
%\pres{correct}
    These observations, together with \eqref{eq:P-finite-Gn-mu-epsilon-Gamma-h},
imply that with $\tilde{H}_n$ being uniformly
    distributed in $\mG(\vDn, 2h+1)$ and $\tilde{G}_n :=
    \MCB_{\vbetan}(\tilde{H}_n)$,
    %[Payam's comment: just changed the wording], 
    we have 
    \begin{align*}
      |\mGn_{\vmn, \vun}(\mu, \epsilon)| & \geq n(\vDn, \vbetan) | \{ H_n \in \mG(\vDn, 2h+1): \dlp(U(\MCB_{\vbetan}(H_n)), \mu) < \epsilon\}| \\
                                         &= n(\vDn, \vbetan) \left|\mGn(\vDn, 2h+1)\right| \pr{\dlp(U(\tilde{G}_n), \mu) < \epsilon} \\
                                         &= N_h(\Gamma_n) \pr{\dlp(U(\tilde{G}_n), \mu) < \epsilon}.
    \end{align*}
From Proposition~\ref{prop:MCB-colored-unif-D-2h+1--converge},
    we know that, for any $\epsilon> 0$, $\pr{\dlp(U(\tilde{G}_n), \mu) < \epsilon}
    \rightarrow 1$ as $n\rightarrow \infty$. Therefore, we have 
    \begin{equation*}
      \liminf_{n \rightarrow \infty} \frac{\log |\mGnmnun(\mu, \epsilon)| - \snorm{\vmn}_1 \log n}{n} \geq \liminf_{n \rightarrow \infty} \frac{\log N_h(\Gamma_n) - \snorm{\vmn}_1 \log n}{n}.
    \end{equation*}
Consequently, if we show that 
\begin{equation}
  \label{eq:LB-f-log-Nh--Jh}
  \lim_{n \rightarrow \infty}\frac{1}{n} \left ( \log N_h(\Gamma_n) - \snorm{\vmn}_1 \log n \right ) = J_h(P),
\end{equation}
then we can conclude that for $\epsilon>0$, $\bchunder_{\vd, Q}(\mu, \epsilon)\condmnun \geq J_h(P)$ and
    hence $\bchunder_{\vd, Q}(\mu)\condmnun \geq J_h(P)$, which completes
    the proof for this case. Thereby, it suffices to show~\eqref{eq:LB-f-log-Nh--Jh}.

In order to do this, first note that, as a result of
Lemma~\ref{lem:r-r'-same-degree_are-the-same}
in Appendix~\ref{sec:marked-rooted-trees-some-props},
%from Section~\ref{sec:graphs-with-given}, 
for $v, w \in [n]$ we have $(\Gamma_n,v)_h
\equiv (\Gamma_n, w)_h$ iff $(\betan(v), \Dn(v)) = (\betan(w), \Dn(w))$.
% see the proof note
% 2018-04-04_Proof-note_BC-lower-bound-finite-support_betaD.pdf for a
% proof. In fact, I realized that this proof note contains the lemma which was
% referenced above!
Thereby, since $U(\Gamma_n)_h \Rightarrow P$ and $\Delta$ is finite, we have
\begin{equation}
  \label{eq:LB-f-nDb}
  \lim_{n \rightarrow \infty} \frac{1}{n} \log n(\vDn, \vbetan) = H(P).
\end{equation}
Moreover, from Corollary~\ref{cor:GDnh-counting} and Stirling's approximation,
if, for $c \in \mC$, $\Sn_c$ denotes $\sum_{v=1}^n \Dn_c(v)$, we have
\begin{equation}
  \label{eq:LB-f-log-GDn-2h+1}
  \begin{aligned}
    \log |\mG(\vDn, 2h+1)| &= \sum_{c \in \mC_<} \left( \Sn_c \log \Sn_c - \Sn_c \right) + \sum_{c \in \mC_=} \left( \frac{\Sn_c}{2} \log \Sn_c - \frac{\Sn_c}{2} \right) \\
    &\qquad - \sum_{c \in \mC} \sum_{v=1}^n \log \Dn_c(v)! + o(n)\\
    &= \frac{1}{2} \sum_{c \in \mC} \left( \Sn_c \log \Sn_c - \Sn_c  \right) - \sum_{c \in \mC} \sum_{v=1}^n \log \Dn_c(v)! + o(n).
  \end{aligned}
\end{equation}
Here, we have used the following facts: $(i)$ $\log k! = k \log k - k + o(k)$,
$(ii)$ $\log (k-1)!! =
\frac{k}{2} \log k - \frac{k}{2} + o(k)$, $(iii)$ for all $c \in \mC$, $\limsup_{n \rightarrow \infty} \Sn_c / n < \infty$
or equivalently $\Sn_c = O(n)$, and $(iv)$ for $c \in \mC$, $\Sn_c =
\Sn_{\bar{c}}$.
% for calculation details, see the note 
% 2018-09-11_PN_LB-Stirling-Sc-log-Sc.pdf
Note that, since $U(\Gamma_n)_h \Rightarrow P$ and $\Delta$ is finite, for each
$c = (t, t') \in \mC$ we have 
\begin{equation*}
  \frac{1}{n} \Sn_c = \frac{1}{n} \sum_{v=1}^n \Dn_c(v) \underset{n \rightarrow \infty}{\longrightarrow} \evwrt{P}{E_h(t,t')(T,o)} = e_P(t,t').
\end{equation*}
Likewise, for $c = (t, t') \in \mC$, we have
\begin{equation*}
  \frac{1}{n} \sum_{v=1}^n \log \Dn_c(v)! \underset{n \rightarrow \infty}{\longrightarrow} \evwrt{P}{\log E_h(t,t')(T,o)!}.
\end{equation*}
Using these in \eqref{eq:LB-f-log-GDn-2h+1} and simplifying, we get
\begin{equation}
  \begin{aligned}
  \label{eq:LB-f-log-GDn-2h+1--simplified}
  \log |\mG(\vDn, 2h+1)| &= \frac{n}{2} \sum_{c \in \mC} \left( \frac{\Sn_c}{n} \log \frac{\Sn_c}{n} + \frac{\Sn_c}{n} \log n - \frac{\Sn_c}{n} \right)- n \sum_{c \in \mC} \frac{1}{n} \sum_{v=1}^n \log \Dn_c(v)! + o(n) \\
  &=\snorm{\vmn}_1 \log n - \snorm{\vmn}_1 + \frac{n}{2} \sum_{t, t' \in \mF} e_P(t, t') \log e_P(t, t') \\
  &\quad - n \sum_{t,t' \in \mF} \evwrt{P}{\log E_h(t, t')(T,o)!} + o(n),
  \end{aligned}
\end{equation}
where in the second line we have used $\sum_{c \in \mC} \Sn_c= 2\snorm{\vmn}_1$.
Note that, since $\vmn$ and $\vun$ are such that $(\vmn,\vun)$ is adapted to $(\vdeg(\mu), \vvtype(\mu))$, as $n \rightarrow \infty$ we have $\snorm{\vmn}_1 / n \rightarrow \deg(\mu) /
2$.
From \eqref{eq:LB-f-nDb} and \eqref{eq:LB-f-log-GDn-2h+1--simplified}, with $d
:= \deg(\mu)$, we get 
%and
%simplifying,  we arrive at \eqref{eq:LB-f-log-Nh--Jh} which completes the
%proof for this case.
% for algebraic manipulations in the above simplification, see the note
% 2018-09-12_PN_LB-finite-Jh-simplification.pdf
\begin{align*}
  \frac{\log N_h(\Gamma_n) - \snorm{\vmn}_1 \log n}{n} &= H(P) - \frac{d}{2} + \frac{1}{2} \sum_{t,t' \in \mF} e_P(t,t') \log e_P(t,t') \\
                                                       &\qquad \qquad - \sum_{t,t' \in \mF} \evwrt{P}{\log E_h(t,t')(T,o)!} + o(1) \\
                                                       &= H(P) - \frac{d}{2} + \frac{d}{2} \sum_{t,t' \in \mF} \frac{e_P(t,t')}{d} \left( \log d + \log \frac{e_P(t,t')}{d} \right) \\
                                                       &\qquad \qquad - \sum_{t,t' \in \mF} \evwrt{P}{\log E_h(t,t')(T,o)!} + o(1) \\
                                                       &\stackrel{(a)}{=} - s(d) + H(P) + \frac{d}{2} \sum_{t,t' \in \edgemark \times \mTb_*^{h-1}} \pi_P(t,t') \log \pi_P(t,t') \\
                                                       &\qquad \qquad - \sum_{t,t' \in \edgemark \times \mTb_*^{h-1}} \evwrt{P}{\log E_h(t,t')(T,o)!} + o(1) \\
  &= J_h(P) + o(1),
\end{align*}
where in $(a)$ we have used the facts that the support of $P$ is contained in
$\Delta$ and  $\sum_{t,t' \in \edgemark \times
  \mTb_*^{h-1}} e_P(t,t') = d$. This shows 
 \eqref{eq:LB-f-log-Nh--Jh} and thus completes the
proof for the finite support case.

\underline{Case 2: For general $P$:} We use a truncation procedure together with the proof in 
the above finite support case. More precisely, for an integer $k > 1$, we start
from a random rooted marked tree $(T,o)$ with law $\mu$ and, for all vertices $v$ in
$T$ with degree more than $k$, we remove all the edges
connected to $v$. Let $T^{(k)}$ denote the connected
component of the root in the resulting forest. With this, define $\muk$ to be
the law of $[T^{(k)}, o]$. It is easy to see that $\muk$ is unimodular.
% for a proof, see the note
% 2018-10-10_PN_truncated-muk-is-unimodular.pdf
Furthermore, let $P_k
:= (\muk)_h \in \mP(\mTb_*^h)$  be the law of the depth $h$ neighborhood of the root
in $\muk$. Since $\muk$ is
unimodular, $P_k$ is admissible. On the other hand, $P_k$ has a finite support,
and hence $P_k$ is strongly admissible, i.e.\ $P_k \in \mP_h$. 

With the above construction, we have $\deg_{x,x'}(\muk) \leq \deg_{x,x'}(\mu)$
for all $x, x' \in \edgemark$ and $\vvtype(\muk) =
  \vvtype(\mu)$. We do not directly apply the result of the previous case to
  $P_k$, since the sequences $\vmn$ and $\vun$ are such that $(\vmn,\vun)$ is adapted to
  $(\vdeg(\mu), \vvtype(\mu))$ which might be different from $(\vdeg(\muk),
  \vvtype(\muk))$.  Instead, we modify $\muk$ to obtain a measure $\widetilde{\mu}^{(k)}$
  such that $(\vdeg(\widetilde{\mu}^{(k)}), \vvtype(\widetilde{\mu}^{(k)})) =
  (\vdeg(\mu), \vvtype(\mu))$. In order to do this, for each pair of edge marks $x \leq  x' \in
  \edgemark$, we choose an integer $\tilde{d}_{x,x'} > 2(|\edgemark|^2 d_{x,x'} \vee 1 )$.
Moreover, define $\nu_{x,x'}$ to be the law of $[T,o]
  \in \mTb_*$ where $(T, o) $ is the random rooted marked 
  $\tilde{d}_{x,x'}$--regular tree defined as follows. With
  probability $1/2$, we have 
  \begin{equation*}
    \xi_T(v,w) =
    \begin{cases}
      x & \dist_T(o,w) \text{ is even}, \\
      x' & \dist_T(o,w) \text{ is odd},
    \end{cases}
    \qquad \forall v,w \in V(T),
  \end{equation*}
  and with probability $1/2$, we have
  \begin{equation*}
    \xi_T(v,w) =
    \begin{cases}
      x' & \dist_T(o,w) \text{ is even}, \\
      x & \dist_T(o,w) \text{ is odd},
    \end{cases}
    \qquad \forall v,w \in V(T).
  \end{equation*}
  Additionally,  each vertex in $T$ is independently given a mark with distribution
  $\vvtype(\mu)$.
It is easy to check that $\nu_{x,x'}$ is unimodular, $\vvtype(\nu_{x,x'}) = \vvtype(\mu)$, and
$\deg_{x,x'}(\nu_{x,x'}) = \deg_{x',x}(\nu_{x,x'}) =
\tilde{d}_{x,x'}/2$. Let $U_{x,x'} := (\nu_{x,x'})_h \in
\mP(\mTb_*^h) $  be the law of the depth $h$ neighborhood of the root in
$\nu_{x,x'}$.
Due to the way we chose $\tilde{d}_{x,x'}$ for $x \leq x' \in \edgemark$, 
%Since for all $x \leq x' \in \edgemark$, we have $\tilde{d}_{x,x'} / 2 >
%d_{x,x'}$,
%for
%$k$ large enough,
we can choose $p_k \in [0,1]$ together with nonnegative numbers $(\alpha^k_{x,x'}:
x \leq x' \in \edgemark)$
 so that $p_k + \sum_{x \leq x' \in \edgemark} \alpha^k_{x,x'} =
1$ and such that with
\begin{equation}
  \label{eq:Ptilde-k-def}
  \tP_k := p_k P_k + \sum_{x\leq x' \in \edgemark} \alpha^k_{x,x'}
  U_{x,x'},
\end{equation}
 we have $\evwrt{\tP_k}{\deg_T^{x,x'}(o)} = d_{x,x'}$ for all $x, x'
 \in \edgemark$.
 More precisely, with $d^k_{x,x'} := \deg_{x,x'} (\muk)$, we may set 
 \begin{equation*}
   p_k := \frac{1 - \sum_{x\leq x' \in \edgemark} 2d_{x,x'} / \tilde{d}_{x,x'}}{1 - \sum_{x\leq x' \in \edgemark} 2d^k_{x,x'} / \tilde{d}_{x,x'}},
 \end{equation*}
 and, for $x \leq x' \in \edgemark$,
 \begin{equation*}
   \alpha^k_{x,x'} := \frac{2(d_{x,x'} - p_k d^k_{x,x'})}{\tilde{d}_{x,x'}}.
 \end{equation*}
Then, using $\tilde{d}_{x,x'} > 2(|\edgemark|^2 d_{x,x'} \vee 1)$ and $d^k_{x,x'} <
d_{x,x'}$, all the desired properties mentioned above would follow.
% for more details, see the note
% 2018-10-10_PN_pk-alphak.pdf
On the other hand, since $\deg_{x,x'}(\muk) \uparrow \deg_{x,x'}(\mu)$ as $k \rightarrow \infty$,
we have  $p_k \rightarrow 1$ as $k \rightarrow \infty$.
Furthermore, since $P_k$ is admissible and $\nu_{x,x'}$ is unimodular, $\tP_k$ is
admissible, and in addition has a finite support. This implies that
$\tP_k$ is strongly admissible, i.e.\ $\tP_k \in \mP_h$.
Thus, with $\widetilde{\mu}^{(k)}:= \ugwt_h(\tP_k)$, we have
$\vvtype(\widetilde{\mu}^{(k)}) = \vvtype(\mu)$ and
$\vdeg(\widetilde{\mu}^{(k)}) = \vdeg(\mu)$. Now, we  claim that 
\begin{equation}
  \label{eq:e-tPk->e-P}
  \lim_{k \rightarrow \infty} e_{\tP_k}(t,t') = e_P(t,t') \qquad  \forall t,t' \in \edgemark \times \mTb_*^{h-1}.
\end{equation}
In order to show this, note that from \eqref{eq:Ptilde-k-def} we have
\begin{equation}
  \label{eq:etPk-ePk-eUxx}
  e_{\tP_k}(t,t') = p_k e_{P_k}(t,t') + \sum_{x \leq x' \in \edgemark} \alpha^k_{x,x'} e_{U_{x,x'}}(t,t'), \qquad   \forall t,t' \in \edgemark \times \mTb_{*}^{h-1}.
\end{equation}
But $U_{x,x'}$ are fixed, $\edgemark$ is finite, and $\alpha^k_{x,x'} \rightarrow
0$. Hence, to show~\eqref{eq:e-tPk->e-P}, it suffices to show that 
\begin{equation}
  \label{eq:e-Pk->e-P}
  \lim_{k \rightarrow \infty} e_{P_k}(t,t') = e_P(t,t') \qquad  \forall t,t' \in \edgemark \times \mTb_*^{h-1}.
\end{equation}
Observe that for $t, t' \in \edgemark \times \mTb_*^{h-1}$ we have $e_{P_k}(t,t') =
  \evwrt{\mu}{E_h(t,t')([T^{(k)}, o])}$. But, for $[T, o] \in \mTb_*$, if $k$
  is large enough, $[T^{(k)}, o]_h = [T, o]_h$. Thereby, $E_h(t,t')([T^{(k)},
  o]) \rightarrow E_h(t,t')([T, o])$ as $k \rightarrow \infty$. On the
  other hand, $E_h(t,t')([T^{(k)},o]) \leq \deg_{T^{(k)}}(o) \leq
  \deg_T(o)$. Hence,
  \begin{equation*}
\evwrt{\mu}{E_h(t,t')([T^{(k)}, o])} \leq \evwrt{\mu}{\deg_T(o)} < \infty.
  \end{equation*}
This together with the dominated convergence theorem implies
\eqref{eq:e-Pk->e-P}. Thus, we arrive at~\eqref{eq:e-tPk->e-P}.
On the other hand, we have $P_k \Rightarrow P$, and from
\eqref{eq:Ptilde-k-def} we have $\tP_k \Rightarrow P$. Therefore 
Lemma~\ref{lem:Pn-conv-P--ugwt-Pn-conv-ugwt-P} in
Appendix~\ref{sec:UGWT-convg} implies that 
$\widetilde{\mu}^{(k)}
\Rightarrow \mu$ as $k \rightarrow \infty$. 
Therefore, from Lemma~\ref{lem:BC-ent-upper-semicontinouous} and the lower
  bound for the finite support case, we have
  \begin{equation}
    \label{eq:LB-bchunder--liminf-Jh-tPk}
    \bchunder_{\vd, Q}(\mu)\condmnun \geq \limsup_{k \rightarrow \infty} \bchunder_{\vd, Q}(\ugwt_h(\tP_k))\condmnun \geq \limsup_{k \rightarrow \infty} J_h(\tP_k) \geq \liminf_{k \rightarrow \infty} J_h(\tP_k).
  \end{equation}
  Here, all the entropy terms are obtained via the same sequences $\vmn$ and $\vun$. 
  Therefore, it suffices to show that $\liminf_{k \rightarrow \infty}
  J_h(\tP_k) \geq J_h(P)$.
  Note that, by definition, we have
  \begin{equation*}
    J_h(\tP_k) = -s(d) + H(\tP_k) - \frac{d}{2} H(\pi_{\tP_k}) - \sum_{t,t' \in \edgemark \times \mTb_*^{h-1}} \evwrt{\tP_k}{\log E_h(t,t')!},
  \end{equation*}
  where $d = \deg(\mu)$.
  We claim that
  \begin{equation}
    \label{eq:claim-liminf-Jh--tPk--Jh--Pk}
    \liminf_{k \rightarrow \infty} J_h(\tP_k) \geq \liminf_{k \rightarrow \infty} J_h(P_k).
  \end{equation}
Note that $P_k$ is
    admissible and is finitely supported, and hence $H(P_k) < \infty$. Furthermore,  since $d>0$, for $k$
    large enough $P_k$ has positive expected degree at the root. Hence
    $J_h(P_k)$ is well defined for $k$ large enough.  
In order to show~\eqref{eq:claim-liminf-Jh--tPk--Jh--Pk}, first note that if $d_k$ is the average degree at the
root in $P_k$ then we have $d_k \rightarrow d$ as $k \rightarrow \infty$. Hence we have
\begin{equation}
  \label{eq:s-dk--sd}
  \lim_{k \rightarrow \infty} s(d_k) = s(d).
\end{equation}
On the other hand, using~\eqref{eq:Ptilde-k-def}, we
have
\begin{equation*}
  H(\tP_k) = p_k \log \frac{1}{p_k} + \sum_{x \leq x' \in \edgemark} \alpha^k_{x,x'} \log \frac{1}{\alpha^k_{x,x'}} + p_k H(P_k) + \sum_{x \leq x' \in \edgemark} \alpha^k_{x,x'} H(U_{x,x'}).
\end{equation*}
Here, $U_{x,x'}$ are fixed distributions and have no dependence on $k$. Also,
$p_k \rightarrow 1$ and $\alpha^k_{x,x'} \rightarrow 0$ for all $x \leq x' \in
\edgemark$. Hence, if we show that the sequence $H(P_k)$ is bounded, we can
conclude that $\liminf_{k \rightarrow \infty} H(\tP_k) \geq \liminf_{k
  \rightarrow \infty} H(P_k)$. In order to show that the sequence $H(P_k)$ is
bounded, recall that $P_k$ is the distribution of $[T^{(k)},o]_h$. Observe that
$[T^{(k)}, o]_h$ is a function of $[T, o]_{h+1}$. The reason is that, by
definition, $T^{(k)}$ is obtained from $T$ by removing all the edges connected
to vertices with degree more than  $k$, and the degree of a vertex with distance
at most $h$ from the root is completely determined by $[T,o]_{h+1}$. This means
that $H(P_k) \leq H(R)$ where $R := \mu_{h+1} \in \mP(\mTb_*^{h+1})$ is the law
of the $h+1$ neighborhood of the root in $\mu$. From
Lemma~\ref{lem:PPh--PtildePh+1}, we have $R \in \mP_{h+1}$ and hence $H(R) <
\infty$. This shows that $H(P_k)$ is a bounded sequence and
\begin{equation}
  \label{eq:liminf-H-tPk--liminf-H-Pk}
  \liminf_{k \rightarrow \infty} H(\tP_k) \geq \liminf_{k
  \rightarrow \infty} H(P_k).
\end{equation}
%Furthermore, from \eqref{eq:Ptilde-k-def}, we have
%\begin{equation*}
%  e_{\tP_k}(t,t') = p_k e_{P_k}(t,t') + \sum_{x \leq x' \in \edgemark} \alpha^k_{x,x'} e_{U_{x,x'}}(t,t') \qquad \forall t,t' \in \edgemark \times \mTb_{*}^{h-1}.
%\end{equation*}
%In particular
On the other hand, from \eqref{eq:etPk-ePk-eUxx}, we have 
\begin{equation*}
\pi_{\tP_k} = \frac{d_k}{d} p_k \pi_{P_k} + \sum_{x \leq x' \in \edgemark} \frac{\tilde{d}_{x,x'}}{d} \alpha^k_{x,x'} \pi_{U_{x,x}}.
\end{equation*}
But, as $k \rightarrow \infty$, we have $d_k \rightarrow d$, $p_k \rightarrow
1$, and $\alpha^k_{x,x'} \rightarrow 0$ for all $x \leq x' \in \edgemark$. Also,
$U_{x,x'}$ for $x \leq x' \in \edgemark$ are fixed and do not  depend on $k$.
Thereby, we
conclude that
\begin{equation}
  \label{eq:limsup-H-pi_tPk--limsup-H-pi_Pk}
 \limsup_{k \rightarrow \infty} H(\pi_{\tP_k}) \leq \limsup_{k
  \rightarrow \infty} H(\pi_{P_k}). 
\end{equation}
Moreover, from \eqref{eq:Ptilde-k-def}, we have
\begin{align*}
  \evwrt{\tP_k}{\sum_{t,t' \in \edgemark \times \mTb_*^{h-1}}\log E_h(t,t')!} &= p_k  \evwrt{P_k}{\sum_{t,t' \in \edgemark \times \mTb_*^{h-1}}\log E_h(t,t')!} \\
  &\quad + \sum_{x \leq x' \in \edgemark}  \alpha^k_{x,x'} \evwrt{U_{x,x'}}{\sum_{t,t' \in \edgemark \times \mTb_*^{h-1}}\log E_h(t,t')!}.
\end{align*}
Again, as $k \rightarrow \infty$, we have $p_k \rightarrow 1$ and $\alpha^k_{x,x'}
\rightarrow 0$ for all $x \leq x' \in \edgemark$. Hence
\begin{equation}
  \label{eq:ilmsup-sum-t-t'--tPk-Pk}
  \limsup_{k \rightarrow \infty} \evwrt{\tP_k}{\sum_{t,t' \in \edgemark \times \mTb_*^{h-1}}\log E_h(t,t')!} \leq \limsup_{k \rightarrow \infty}  \evwrt{P_k}{\sum_{t,t' \in \edgemark \times \mTb_*^{h-1}}\log E_h(t,t')!}.
\end{equation}
Putting together \eqref{eq:s-dk--sd}, \eqref{eq:liminf-H-tPk--liminf-H-Pk},
\eqref{eq:limsup-H-pi_tPk--limsup-H-pi_Pk} and
\eqref{eq:ilmsup-sum-t-t'--tPk-Pk}, we arrive at
\eqref{eq:claim-liminf-Jh--tPk--Jh--Pk}. Comparing this
with~\eqref{eq:LB-bchunder--liminf-Jh-tPk}, in order to complete the proof, it suffices to show that 
  \begin{equation}
    \label{eq:liminf-Jh-Pk--Jh-P}
    \liminf_{k \rightarrow \infty} J_h(P_k) \geq J_h(P).
  \end{equation}
Without loss of generality, for the rest of the proof, we may assume that $J_h(P) > - \infty$, otherwise nothing remains to be proved. 
In order to show~\eqref{eq:liminf-Jh-Pk--Jh-P}, it suffices to show the following
  \begin{subequations}
    \begin{align}
      \liminf_{k \rightarrow \infty} H(P_k) &\geq H(P), \label{eq:LB-g-HPk-HP}\\
      \lim_{k \rightarrow \infty} \sum_{t, t' \in \edgemark \times \mTb_*^{h-1}} \evwrt{P_k}{\log E_h(t,t')!} &= \sum_{t, t' \in \edgemark \times \mTb_*^{h-1}} \evwrt{P}{\log E_h(t,t')!}, \label{eq:LB-g-Eh} \\
      \limsup_{k \rightarrow \infty} H(\pi_{P_k}) &\leq H(\pi_P). \label{eq:LB-g-H-pi-k-pi}
    \end{align}
  \end{subequations}

First, to show~\eqref{eq:LB-g-HPk-HP}, note that for all $[\tilde{T},
\tilde{o}] \in \mTb_*^h$, we have $P_k([\tilde{T},\tilde{o}]) =
\int \mathbbm{1}[[T^{(k)}, o]_h = [\tilde{T}, \tilde{o}]] d \mu([T,o])$.
Therefore, the dominated convergence theorem implies that $P_k([\tilde{T},
\tilde{o}]) \rightarrow P([\tilde{T}, \tilde{o}])$ as $k \rightarrow \infty$.
Hence, \eqref{eq:LB-g-HPk-HP} follows from this and lower semi--continuity
of the Shannon entropy (see, for instance \cite{ho2010interplay}).
% this basically follows by Fatou's lemma:
% in general, if p_n is a sequence of discrete distributions so that p_n(x) ->
% p(x), then we have using Fatou's lemma:
% \liminf H(p_n) = \liminf \sum p_n(x) \log 1 / p_n(x) \geq \sum p(x) \log 1 /
% p(x) = H(P). 

Now we turn to showing \eqref{eq:LB-g-Eh}. Define $\mC := (\edgemark \times
  \mTb_*^{h-1}) \times (\edgemark \times \mTb_*^{h-1})$. Moreover, for $r \in
  \mTb_*$, let $F(r) := \sum_{c \in \mC} \log E_h(c)(r)!$. With this,
  we have $\sum_{t, t' \in \edgemark \times \mTb_*^{h-1}}
  \evwrt{P_k}{\log E_h(t,t')!} = \evwrt{\mu}{F([T^{(k)}, o])}$. Recall that 
  $[T^{(k)}, o] \in \mTb_*$, as defined above, is the rooted tree
  obtained from $[T, o]$ by removing all edges connected to vertices
  with degree larger than $k$ followed by taking the connected component
  of the root. Likewise, the right hand side of \eqref{eq:LB-g-Eh} is precisely
  $\evwrt{\mu}{F([T,o])}$. Observe that for each $[T,o] \in \mTb_*$, if $k$ is large enough,
  $[T^{(k)},o]_h = [T, o]_h$. Thereby, $F([T^{(k)}, o]) \rightarrow
  F([T,o])$ pointwise. Now, for $[T, o] \in \mTb_*$, using the inequality $\log a! \leq a \log
  a$ that holds for any nonnegative integer $a$ by interpreting $0 \log
  0 = 0$, we get
  \begin{align*}
    F([T, o]) &\leq \sum_{c \in \mC} E_h(c)(T, o) \log E_h(c)(T, o) \\
                 &\leq \sum_{c \in \mC} E_h(c)(T, o) \log \deg_{T}(o) \\
                 &= \deg_T(o) \log \deg_T(o).
  \end{align*}
Consequently,
\begin{equation*}
  \evwrt{P}{|F([T^{(k)}, o])|} = \evwrt{P}{F([T^{(k)}, o])} \leq \evwrt{P}{\deg_{T^{(k)}}(o) \log \deg_{T^{(k)}}(o)} \leq \evwrt{P}{\deg_{T}(o) \log \deg_{T}(o)}.
\end{equation*}
The fact that $P \in \mP_h$ implies that the right hand side is finite.
Thereby, we arrive at  \eqref{eq:LB-g-Eh} using the dominated convergence
theorem.

Next, we show \eqref{eq:LB-g-H-pi-k-pi}. Recall that, without
loss of generality, we have assumed  that $J_h(P) > -\infty$, which means
$H(\pi_P)< \infty$. Consequently, we have
  \begin{equation}
    \label{eq:sum-epc-log-epc-infty}
    \begin{aligned}
      \sum_{c \in \mC} |e_P(c) \log e_P(c) | &\leq \sum_{c \in \mC} |e_P(c) \log e_P(c) - e_P(c) \log d| + |e_P(c)\log d| \\
      &= \sum_{c \in \mC} e_P(c) \log \frac{d}{e_P(c)} + \sum_{c \in \mC} e_P(c) \log d \\
      &= d H(\pi_P) + d \log d < \infty,
    \end{aligned}
  \end{equation}
where, in the second line, we have used the fact that $e_P(c) \leq d$ for all $c
\in \mC$. Therefore, the sequence $e_P(c) \log e_P(c)$ is absolutely summable.
Hence, we may write 
\begin{equation*}
  H(\pi_P) = \sum_{c \in \mC} \frac{e_P(c)}{d} \log \frac{d}{e_P(c)} = \log d - \frac{1}{d} \sum_{c \in \mC} e_P(c) \log e_P(c). 
\end{equation*}
% I just wanted to make sure that the infinite sum $\sum_{c \in \mC} e_P(c)
% \log e_P(c)$ is absolutely convergent so that I can manipulate the sum in
% $H(\pi_P)$.
On the other hand, $P_k$ has finite support. Hence, with  $d_k = \sum_{c \in \mC}
e_{P_k}(c)$ being the  expected degree at the root in $P_k$, we have
\begin{equation*}
  H(\pi_{P_k}) = \log d_k - \frac{1}{d_k} \sum_{c \in \mC} e_{P_k}(c) \log e_{P_k}(c). 
\end{equation*}
Therefore, as $d_k \uparrow d$, in order to show \eqref{eq:LB-g-H-pi-k-pi}, 
it suffices to show that
  \begin{equation}
    \label{eq:LB-g-epk-ep}
    \liminf_{k \rightarrow \infty} \sum_{c \in \mC} e_{P_k}(c) \log e_{P_k}(c) \geq \sum_{c \in \mC} e_P(c) \log e_P(c).
  \end{equation}
  % to see why it is sufficient to show this, see the note
  % 2018-10-11_PN_ePk-ep-dk-liminf.pdf
  %
  %
  %
% Observe that for $c \in \mC$, we have $e_{P_k}(c) =
%   \evwrt{\mu}{E_h(c)([T^{(k)}, o])}$. But for $[T, o] \in \mTb_*$, if $k$
%   is large enough, $[T^{(k)}, o]_h = [T, o]_h$. Thereby, $E_h(c)([T^{(k)},
%   o]) \rightarrow E_h(c)([T, o])$ as $k \rightarrow \infty$. On the
%   other hand, $E_h(c)([T^{(k)},o]) \leq \deg_{T^{(k)}}(o) \leq
%   \deg_T(o)$. Hence,
%   \begin{equation*}
%     \evwrt{\mu}{|E_h(c)([T^{(k)}, o])|} = \evwrt{\mu}{E_h(c)([T^{(k)}, o])} \leq \evwrt{\mu}{\deg_T(o)} < \infty.
%   \end{equation*}
% This together with the dominated convergence theorem implies that 
%   \begin{equation}
%     \label{eq:ePkc-->ePc}
%     \lim_{k \rightarrow \infty}e_{P_k}(c) =  e_P(c) \qquad \forall c \in \mC.
%   \end{equation}
  Recall from \eqref{eq:e-Pk->e-P} that for all $c \in \mC$, we have $e_{P_k}(c)
  \rightarrow e_P(c)$ as $k \rightarrow \infty$.
Now, for a nonnegative integer $\delta$, define $\mCdelta \subset \mC$ to be the set
  of $(t,t') \in \mC$ such that all vertices in the subgraph components of $t$
  and $t'$, i.e.\ $t[s]$ and $t'[s]$, have
  degrees bounded by $\delta$.
Therefore, due to   \eqref{eq:sum-epc-log-epc-infty} and the fact that 
$\mC = \cup_{\delta = 1}^\infty \mCdelta$, we have 
\begin{equation}
  \label{eq:sum-ep-log-ep-Cdelta}
  \sum_{c \notin \mCdelta} |e_P(c) \log e_P(c)| < \epsilon_1(\delta),
\end{equation}
where $\epsilon_1(\delta) \rightarrow 0$ as $\delta \rightarrow
\infty$.
Note that $\mCdelta$ is finite. This together with \eqref{eq:e-Pk->e-P} and
  \eqref{eq:sum-ep-log-ep-Cdelta} implies that for $\delta > 0$ we have 
\begin{equation}
    \label{eq:lim-CdeltaePk--eP-epsilon1}
    \begin{aligned}
      \lim_{k \rightarrow \infty} \sum_{c \in \mCdelta} e_{P_k}(c) \log e_{P_k} (c) &= \sum_{c \in \mCdelta} e_P(c) \log e_P(c)  \\
      &\geq \sum_{c \in \mC} e_P(c) \log e_P(c) - \epsilon_1(\delta).
    \end{aligned}
  \end{equation}
Hence, we may write 
  \begin{align*}
    \liminf_{k \rightarrow \infty} \sum_{c \in \mC} e_{P_k}(c) \log e_{P_k}(c) &\geq \liminf_{k \rightarrow \infty} \sum_{c \in \mCdelta} e_{P_k}(c) \log e_{P_k}(c)  + \liminf_{k \rightarrow \infty} \sum_{c \notin \mCdelta} e_{P_k}(c) \log e_{P_k}(c) \\
&\geq \sum_{c \in \mC} e_P(c) \log e_P(c) - \epsilon_1(\delta) +  \liminf_{k \rightarrow \infty} \sum_{c \notin \mCdelta} e_{P_k}(c) \log e_{P_k}(c).
  \end{align*}
As this holds for all $\delta > 0$ and since $\epsilon_1(\delta) \rightarrow 0$ when
$\delta \rightarrow 0$, in order to show \eqref{eq:LB-g-epk-ep} it suffices to prove that
for all positive integers $k$ and $\delta$ such that $k > \delta$, we have
\begin{equation}
  \label{eq:sum-ePk--log-ePk--epsilon2}
  \sum_{c \notin \mCdelta} e_{P_k}(c) \log e_{P_k}(c) \geq -\epsilon_2(\delta),
\end{equation}
where $\epsilon_2(\delta) \rightarrow 0$ as $\delta \rightarrow
\infty$.
Now, we fix positive integers $k > \delta$ and show that
\eqref{eq:sum-ePk--log-ePk--epsilon2} holds for an appropriate choice of
$\epsilon_2(\delta)$. For an integer $r > 0$, let 
$\mAr_h \subset \mTb_*^h$ be the set of marked rooted trees of depth at most $h$
where all degrees
are 
bounded by $r$.
We  define 
$\mAr_{h+1} \subset \mTb_*^{h+1}$ similarly.
Note that $P_k$ has a finite support and so the left hand side of \eqref{eq:sum-ePk--log-ePk--epsilon2} is a
finite sum.
Indeed, $P_k$ is supported on  the finite set $\mAk_h \subset \mTb_*^h$
and $e_{P_k}(c) = 0$ for $c \notin \mCk$. Consequently, we have 
\begin{equation}
  \label{eq:ePkc-log-ePkc---simp-1}
  \begin{aligned}
    \sum_{c \notin \mCdelta} e_{P_k}(c) \log e_{P_k}(c) &= \sum_{c \in \mCk \setminus \mCdelta} \left (\sum_{s \in \mAk_h} P_k(s) E_h(c)(s) \right ) \log \left (\sum_{s' \in \mAk_h} P_k(s') E_h(c)(s') \right ) \\
    &\geq \sum_{c \in \mCk \setminus \mCdelta} \sum_{s \in \mAk_h} P_k(s) E_h(c)(s) \log (P_k(s) E_h(c)(s)).
  \end{aligned}
\end{equation}
Note that if $c \notin \mCdelta$ and $s \in \mAdelta_h$, we have   $E_h(c)(s) =
0$. Thereby,
  \begin{equation}
    \label{eq:ePkc-log-ePkc--simp-2}
    \begin{aligned}
      \sum_{c \notin \mCdelta} e_{P_k}(c) \log e_{P_k}(c) &\geq \sum_{s \in \mAk_h \setminus \mAdelta_h} \sum_{c \in \mCk \setminus \mCdelta}  P_k(s) E_h(c)(s) \log (P_k(s) E_h(c)(s)) \\
      &\geq \sum_{s \in \mAk_h \setminus \mAdelta_h} \sum_{c \in \mCk \setminus \mCdelta}  P_k(s) E_h(c)(s) \log P_k(s) \\
      &\geq \sum_{[T,o] \in \mAk_h \setminus \mAdelta_h}  \deg_T(o) P_k([T,o])  \log P_k([T,o])
    \end{aligned}
  \end{equation}
where the last inequality uses the fact that for $[T,o] \in \mAk_h \setminus
\mAdelta_h$, we have
\begin{equation*}
 \sum_{c \in \mCk \setminus \mCdelta}
E_h(c)([T,o]) \leq \sum_{c \in \mC} E_h(c)([T,o]) = \deg_T(o), 
\end{equation*}
 and $P_k([T,o]) \log P_k([T,o])
\leq 0$. %Recall that $\deg_s(0)$ denotes the degree at the root in $s$.
As we discussed above, for $[T,o] \in \mTb_*$, $[T^{(k)}, o]_h$ is determined by
$[T,o]_{h+1}$, since the degree of all vertices up to depth $h$ is determined by
the structure of the tree up to depth $h+1$.  Moreover, define $F_k :
  \mTb_*^{h+1} \rightarrow \mAk_{h}$ such that  for $[T, o] \in
  \mTb_*^{h+1}$, we have  $F_k([T,o]) := [T^{(k)},o]_h$.
With this,  $P_k$ is the pushforward of
  $R := \mu_{h+1}$ under the mapping $F_k$, i.e.\ for $[T,o] \in \mAk_h$, $P_k([T,o]) =
  R(F_k^{-1}([T,o]))$.
On the other hand, for $[T,o] \in \mAk_h$, if $[T',o'] \in F_k^{-1}([T,o])$, then $R([T',o']) \leq
  R(F_k^{-1}([T,o])) = P_k([T,o])$. Using these in 
  \eqref{eq:ePkc-log-ePkc--simp-2}, we have 
  \begin{align*}
    \sum_{c \notin \mCdelta} e_{P_k}(c) \log e_{P_k}(c) &\geq \sum_{[T,o] \in \mAk_h \setminus \mAdelta_h} \sum_{[T',o'] \in F_k^{-1}([T,o])}  \deg_T(o) R([T',o']) \log P_k([T,o]) \\
                                                        &\geq \sum_{[T,o] \in \mAk_h \setminus \mAdelta_h} \sum_{[T',o'] \in F_k^{-1}([T,o])} \deg_T(o) R([T',o']) \log R([T',o'])\\
                                                        &= \sum_{[T',o'] \in \mTb_*^{h+1}} \one{F_k([T',o']) \notin \mAdelta_h} \deg_{F_k([T',o'])}(o') R([T',o']) \log R([T',o']).
  \end{align*}
Here, in the last equality, we were allowed to change the order of
  summations since all the terms are nonpositive. Also note that, by definition,
  for  $[T',o'] \in \mTb_*^{h+1}$
  we have $F_k([T',o']) \in \mAk_h$. 
  Since the mapping $F_k$ decreases the degree of all the vertices, for all $[T',o']
  \in \mTb_*^{h+1}$ we have  $\one{F_k([T',o'])
    \notin \mAdelta_h} \leq \one{[T',o']\notin \mAdelta_{h+1}}$ and
  $\deg_{F_k([T',o'])}(o') \leq \deg_{[T',o']} (o')$.
Using these observations in the above chain of inequalities, since all the terms in the
summation are  nonpositive, we get 
  \begin{equation}
    \label{eq:ePkc-log-ePkc--simp-3}
    \sum_{c \notin \mCdelta} e_{P_k}(c) \log e_{P_k}(c) \geq  \sum_{[T',o'] \in \mTb_*^{h+1} \setminus \mAdelta_{h+1}} \deg_{T'}(o') R([T',o']) \log R([T',o']).
  \end{equation}
Since $P$ is strongly admissible, i.e.\ $P \in \mP_h$, Lemma~\ref{lem:PPh--PtildePh+1} implies that  $R
\in \mP_{h+1}$, which means $H(R) < \infty$. Also,
  $\evwrt{R}{\deg_T(o) \log \deg_T(o)} = \evwrt{P}{\deg_T(o) \log
    \deg_T(o)} < \infty$. Therefore, from
  Lemma~\ref{lem:pllogl--pllogp}, we have
  \begin{equation*}
   \sum_{[T',o'] \in \mTb_*^{h+1}}
  \deg_{T'}(o') R([T',o']) \log R([T',o']) > -\infty. 
  \end{equation*}
 Since $\cup_\delta
  \mAdelta_{h+1} = \mTb_*^{h+1}$, we have
  \begin{equation*}
    \sum_{[T',o'] \in \mTb_*^{h+1} \setminus \mAdelta_{h+1}} \deg_{T'}(o') R([T',o']) \log R([T',o']) \geq -\epsilon_2(\delta),
  \end{equation*}
  where $\epsilon_2(\delta) \rightarrow 0$ as $\delta \rightarrow
  \infty$.
  Putting this into \eqref{eq:ePkc-log-ePkc--simp-3}, we arrive at 
 \eqref{eq:sum-ePk--log-ePk--epsilon2}, which completes the proof.
\end{proof}

\subsection{Upper bound}
\label{sec:upperbound}

In this section, we prove the upper bound result of
Proposition~\ref{prop:upper-bound}.

\begin{proof}[Proof of Proposition~\ref{prop:upper-bound}]
  Let $P:= \mu_h$. Since $\mu$ is unimodular and $d < \infty$, 
  from Lemma~\ref{lem:unimodular-is-admissible} we see that $P$ is
  admissible. Further, since $P \in \mP(\mTb_*)$ with $\evwrt{P}{\deg_T(o)} = d >0 $ and $H(P) <
  \infty$, we see that $J_h(P)$, as introduced in \eqref{eq:Jh-def}, is 
  well-defined.
  From the local topology on $\mGb_*$ 
  %for all $h \in \nats$ 
  one sees that,
  for all $h \ge 1$ 
  and
  $\epsilon>0$, there exists $\eta_1(\epsilon)$ such that, for all $\rho_1,
  \rho_2 \in \mP(\mGb_*)$,  $\dlp(\rho_1, \rho_2) < \epsilon$ implies
  $\dtv((\rho_1)_h, (\rho_2)_h) < \eta_1(\epsilon)$. Here the function $\eta_1(.)$
  depends only on $h$ and has the property that $\eta_1(\epsilon) \rightarrow 0$ as
  $\epsilon \rightarrow 0$.
  % see proof note 2018-04-13_Proof-note_Levy-Prokhorov_Total-variation.pdf
  % for details. The note proves the claim for $\mP(\mTb_*)$, but by
  % substituting $\mTb_*$ with $\mGb_*$, the claim holds also for $\mP(\mGb_*)$.
  Therefore, if for $\delta > 0$ we define
  \begin{equation*}
    \Anmnun(P, \delta) := \{ G \in \mGnmnun: \dtv(U(G)_h, P) < \delta\},
  \end{equation*}
  we have
  \begin{equation*}
    \mGnmnun(\mu, \epsilon) \subseteq \Anmnun(P, \eta_1(\epsilon)).
  \end{equation*}
  Hence, to show \eqref{eq:upper-bound-Ph-statement}, it suffices to
    show that
    \begin{equation}
      \label{eq:upperbound-A-claim}
      \lim_{\epsilon\rightarrow 0} \limsup_{n \rightarrow \infty} \frac{1}{n} \left ( \log |\Anmnun(P, \epsilon)| - \snorm{\vmn}_1 \log n \right ) \leq J_h(P).
    \end{equation}

    In order to do this, fix a finite subset $\Delta \subset \mTb_*^h$ and define
    $\mF(\Delta) \subset \edgemark \times \mTb_*^{h-1}$ to be the set of  $T[o,v]_{h-1}$ and $T[v,o]_{h-1}$ for $[T, o] \in \Delta$
    and $v \sim_T o$. Since $\Delta$ is finite, $\mF(\Delta)$ is also
    finite and can be identified with the set of integers $\{1,
    \dots, L\}$ where $L =L(\Delta) := |\mF(\Delta)|$. With this, define the
    color set $\mC(\Delta) := \mF(\Delta) \times \mF(\Delta)$. Furthermore, 
    %let $\bar{\mF}(\Delta) := \mF \cup \{\star_x: x \in \edgemark\}$, 
    let $\bar{\mF}(\Delta) := \mF(\Delta) \cup \{\star_x: x \in \edgemark\}$, 
    where
    $\star_x$ for $ x \in \edgemark$ are additional distinct elements  not present in
    $\mF(\Delta)$. 
    %\color{red}
    %Note for Payam:
    %Changed $\mF$ to $\mF(\Delta)$ in the preceding equation. Check.
    %\color{black}
    %\pres{correct}
    Note that $\bar{\mF}(\Delta)$ is finite, thus can be
    identified with the set of integers $\{1,
    \dots, \bar{L}\}$ where $\bar{L} = \bar{L}(\Delta) =  L(\Delta) +
    |\edgemark|$, where the first $L$ elements represent $\mF(\Delta)$.
    Finally, extend the color set $\mC(\Delta)$ to
    $\bar{\mC}(\Delta) := \bar{\mF}(\Delta) \times \bar{\mF}(\Delta)$.

  Now, for a fixed $\Delta$ as above, given a simple marked graph $G$ on the
  vertex set $[n]$, we construct a simple
  directed colored
    graph $\tG \in \mG(\bar{\mC}(\Delta))$ on the same vertex set $[n]$, with color set $\bar{\mC}(\Delta)$, as follows. For each edge between vertices $u$ and $v$ in $G$, if $\etype_G^h(u,v) \in
      \mC(\Delta)$, we place an edge in $\tG$ directed from $u$ towards $v$ with color
      $\etype_G^h(u,v)$, and another edge directed from $v$ towards $u$ with color $\etype_G^h(v,u)$.
      Otherwise, if $\etype_G^h(u,v) \notin \mC(\Delta)$, we place an edge in $\tG$
      directed from $u$ towards $v$ with
      color $(\star_x, \star_{x'})$ and an edge directed from $v$ towards $u$ with color
      $(\star_{x'}, \star_x)$, where $x = \xi_G(v,u)$ and $x' = \xi_G(u,v)$.
      Let $\vD^{\tG}$ be the colored degree sequence of $\tG$. More precisely,
    for
    a vertex $v$, $D^{\tG}(v) \in \mM_{\bar{L}(\Delta)}$ has the following form.
    For $c \in \mC(\Delta)$, $D^{\tG}_c(v)$ is the number of
    vertices $w \sim_G v$ such that $\etype_G^h(v,w) = c$. Moreover, for $x, x'
    \in \edgemark$, $D^{\tG}_{(\star_x, \star_{x'})}(v)$ is the number
    of vertices $w \sim_G v$ such that $\etype_G^h(v,w) \notin
    \mC(\Delta)$, $\xi_G(w,v) = x$ and $\xi_G(v,w) = x'$. Note that, for $x \in
    \edgemark$ and $t \in \mF(\Delta)$, we have 
    $D^{\tG}_{(\star_x, t)}(v) = D^{\tG}_{(t, \star_x)}(v) = 0$.
    
With an abuse of notation, for a marked rooted graph $(G,o)$ on a finite or
    countable vertex set, and $c \in \bar{\mC}(\Delta)$ of the form $c =
    (\star_x, \star_{x'})$, $x, x' \in \edgemark$, we define
    \begin{equation}
      \label{eq:E_h-star-a}
      %\begin{gathered}
        E_h(\star_x, \star_{x'})(G, o) = |\{ v \sim_G o: \etype_G^h(o,v) \notin \mC(\Delta), \xi_G(v,o) = x,\xi_G(o,v)= x'\}|,
        \end{equation}
        and, for $x \in \edgemark$, we define
        \begin{equation}
        \label{eq:E_h-star-b}
        E_h(\star_x, t)(G, o) = E_h(t, \star_x)(G,o) = 0, \qquad \forall t \in \mF(\Delta).
      %\end{gathered}
    \end{equation}
With this convention, define the map $F_\Delta : \mGb_*^h \rightarrow \vermark \times
    \mM_{\bar{L}(\Delta)}$, such that for $[G,o] \in \mGb_*^h$, $F([G,o]) := (\theta, D)$ where $\theta = \tau_G(o)$ is the
    mark at the root in $G$, and for  $c \in \bar{\mC}(\Delta)$, $D_c =
    E_h(c)(G,o)$.
Moreover, let $\bar{P}^\Delta \in \mP(\vermark \times \mM_{\bar{L}(\Delta)})$ be
    the law of $F_\Delta([T,o])$ when $[T, o] \sim P$.
Note that if $G$ is a marked graph on the vertex set $[n]$, with the
directed colored graph $\tG$ defined above then for all vertices $v$ in $G$,
we have $(\tau_G(v),
D^{\tG}(v)) = F_\Delta([G,v]_h)$.
Therefore, if $G \in \Anmnun(P, \epsilon)$, since $\dtv(U(G)_h, P) < \epsilon$, we
    have
    \begin{equation}
      \label{eq:dtv-tau-G-D-G-PbarDelta-epsilon}
      \dtv \left ( \frac{1}{n} \sum_{i=1}^n \delta_{(\tau_G(v), D^{\tG}(v))}, \bar{P}^\Delta \right ) < \epsilon.
    \end{equation}
Let $\Bnmnun(P, \Delta, \epsilon)$ be the set of pairs of sequences
$(\vbeta, \vD)$, $\vbeta =
(\beta(i): 1 \leq i \leq n)$ and $\vD = (D(i): 1 \leq i \leq n) \in \mD_n$ where, for
$1 \leq i \leq n$, $\beta(i) \in \vermark$, $D(i) \in
\mM_{\bar{L}(\Delta)}$ are such that with
\begin{equation}
  \label{eq:R-beta-D}
  R(\vbeta, \vD) := \frac{1}{n} \sum_{i=1}^n \delta_{(\beta(i), D(i))},
\end{equation}
we have
\begin{subequations}
  \begin{gather}
    \dtv(R(\vbeta, \vD), \bar{P}^\Delta) < \epsilon, \label{eq:Bn-prop-TV}
    \\
    \sum_{i=1}^n \sum_{c \in \bar{\mC}(\Delta)} D_c(i) = 2
    \snorm{\vmn}_1, \label{eq:Bn-prop-mn} \\
    \sum_{i=1}^n \one{\beta(i) = \theta}
    = \un(\theta), \quad \forall \theta \in \vermark, \label{eq:Bn-prop-theta}
    \\
    D_{(\star_x, t)}(v) = D_{(t, \star_x)}(v) = 0, \quad \forall  x \in
    \edgemark, t \in \mF(\Delta), i \in [n]. \label{eq:Bn-prop-x-t-0}
  \end{gather}
\end{subequations}
Let $\vtau_G$
%, i.e.\ $\vtau_G := (\tau_G(v): v \in V(G))$. 
denote $(\tau_G(v): v \in V(G))$.
\glsadd{not:vtauG}
Note that, from \eqref{eq:dtv-tau-G-D-G-PbarDelta-epsilon}, for $G
    \in \Anmnun(P,\epsilon)$ we have $(\vtau_G, \vD^{\tG}) \in \Bnmnun(P,
    \Delta, \epsilon)$.
Now, we claim that for $(\vbeta, \vD) \in \Bnmnun(P, \Delta, \epsilon)$ and a colored
    directed graph $H \in \mG(\vD, 2)$, there is at most one marked graph
    $G \in \Anmnun(P, \epsilon)$ such that $\vtau_G = \vbeta$ and $\tG = H$.
 The reason is that, to start with, the condition $\vtau_G = \vbeta$ uniquely determines vertex
 marks for $G$. Moreover, since $\tG = H$, vertices $v$ and $w$ are adjacent
    in $G$ iff they are adjacent in $H$. Let $v$ and $w$ be adjacent vertices in
    $H$ with $c \in \bar{\mC}(\Delta)$ being the color of the edge directed from
    $v$ towards $w$. Note that, by the definition of the set $\Bnmnun(P, \Delta, \epsilon)$,
    $c$ is either of the form $(t, t')$ where $t, t' \in \mF(\Delta)$, or
    $c = (\star_{x_1}, \star_{x_2})$ for some $x_1, x_2 \in \edgemark$.
    Since $\tG = H$, in the former case, we have $\xi_G(w,v) = t[m]$ and
    $\xi_G(v,w) = t'[m]$, while, in the latter case, we have  $\xi_G(w,v) = x_1$ and
    $\xi_G(v,w) = x_2$. This implies that there is at most one marked
    graph satisfying $\vtau_G = \vbeta$ and $\tG = H$.
 Consequently, we have 
 \begin{equation}
   \label{eq:Amnun-Bmn-max-G-vD}
   |\Anmnun(P, \epsilon)| \leq |\Bnmnun(P, \Delta, \epsilon)| \max_{(\vbeta, \vD) \in \Bnmnun(P, \Delta, \epsilon)} |\mG(\vD, 2)|.
 \end{equation}
 Now, we find bounds for the two terms on the right hand side
 of \eqref{eq:Amnun-Bmn-max-G-vD}.

 \underline{Bounding $|\Bnmnun(P, \Delta, \epsilon)|$}: we claim that
     \begin{equation}
      \label{eq:UB-Bn-HP-claim}
      \lim_{\epsilon \rightarrow 0} \limsup_{n \rightarrow \infty} \frac{1}{n} \log |\Bnmnun(P, \Delta, \epsilon)| \leq H(P).
    \end{equation}
Note that $P$ and $\bar{P}^\Delta$ are not necessarily finitely supported,
so this is not a direct consequence of \eqref{eq:Bn-prop-TV} and requires
some work. In order establish the claim, fix a finite subset $X \subset \vermark \times
\mM_{\bar{L}(\Delta)}$ of the form  $X = \{ (\beta^1, D^1), \dots,
(\beta^{|X|}, D^{|X|})\}$. With this, for $1 \leq j \leq |X|$, let $\bar{p}_j :=
    \bar{P}^\Delta(\beta^j, D^j)$. Furthermore, define $\bar{p}_0 := 1 - \sum_{j=1}^{|X|}
    \bar{p}_j = 1 - \bar{P}^\Delta(X)$.
 Now, fix $(\vbeta, \vD ) \in \Bnmnun(P, \Delta, \epsilon)$ and let $I_j:= \{ i
 \in [n]: (\beta(i), D(i)) = (\beta^j, D^j)\}$ for $1 \leq j \leq |X|$.
 Additionally, let $I_0 := \{ i \in [n]: (\beta(i), D(i)) \notin X\}$. Moreover,
 define $a_j := |I_j|/ n$ for $0 \leq j \leq |X|$.
 Then, because of \eqref{eq:Bn-prop-TV}, we have $|a_j - \bar{p}_j| <
 \epsilon$ for $0 \leq j \leq |X|$.
 Also, due to \eqref{eq:Bn-prop-mn}, we have
    \begin{equation}
      \label{eq:UB-sum-Dc-I0-ub}
      \begin{aligned}
        \sum_{i \in I_0} \sum_{c \in \bar{\mC}(\Delta)} D_c(i) &= 2 \snorm{\vmn}_1 - \sum_{j=1}^{|X|} \sum_{i \in I_j} \sum_{c \in \bar{\mC}(\Delta)} D_c(i) \\
        &= 2 \snorm{\vmn}_1  - \sum_{j=1}^{|X|} n a_j \sum_{c \in \bar{\mC}(\Delta)} D^j_c \\
        &\leq 2 \snorm{\vmn}_1 - n \sum_{j=1}^{|X|} \bar{p}_j  \sum_{c \in \bar{\mC}(\Delta)} D^j_c + n \epsilon \sum_{j=1}^{|X|} \sum_{c \in \bar{\mC}(\Delta)} D^j_c \\
        &= 2 \snorm{\vmn}_1 - n \bar{d}(X) + n \epsilon \alpha(X),
      \end{aligned}
    \end{equation}
    where
    \begin{equation*}
      \bar{d}(X) := \sum_{j=1}^{|X|} \bar{p}_j \sum_{c \in
      \bar{\mC}(\Delta)} D^j_c = \evwrt{\bar{P}^\Delta}{\one{(\beta,
        D) \in X}\sum_{c \in \bar{\mC}(\Delta)} D_c},
    \end{equation*}
    and
    \begin{equation*}
     \alpha(X) := \sum_{j=1}^{|X|} \sum_{c \in \bar{\mC}(\Delta)} D^j_c.
    \end{equation*}

Motivated by this, in order to  find an upper bound for $\Bnmnun(P, \Delta, \epsilon)$,
 we may first count the number of choices  for $I_0, \dots, I_{|X|}$, and then the number of
pairs $(\vbeta, \vD)$ consistent with each of them. For this, let $\Yn_\epsilon$ be the set of $(a_0, \dots, a_{|X|})$
    such that $\sum_{j=0}^n a_j = 1$ and such that for $0 \leq j \leq |X|$ we have $n a_j \in
    \integers_+$ and  $|a_j - \bar{p}_j| < \epsilon$. We can see that
    $|\Yn_\epsilon| \leq (2n\epsilon)^{1+|X|}$. Moreover,  given $(a_0, \dots, a_{|X|})$, there are $\binom{n}{na_0
      \dots na_{|X|}}$ many ways to chose a partition $I_0, \dots, I_{|X|}$ of
    $[n]$ such that $|I_j| = n a_j$, $0 \leq j \leq |X|$. 
Fixing such a partition, for  $i \in I_j$, $j \neq 0$, we must have $(\beta(i), D(i)) =
    (\beta^j, D^j)$. Hence, we only need to count the
    number of choices for $\{ (\beta(i), D(i)) :i \in I_0\}$.
Note that, there are at most $|\vermark|^{|I_0|} \leq
    |\vermark|^{n(\bar{p}_0 + \epsilon)}$ many ways to choose $\beta(i)$ for $i \in I_0$.
On the other hand, for $(D_c(i): i \in I_0, c \in
    \bar{\mC}(\Delta))$ there are $|I_0|{\bar{\mC}(\Delta)}= na_0 |\bar{\mC}(\Delta)|$ many
    nonnegative integers satisfying \eqref{eq:UB-sum-Dc-I0-ub}.
    Hence, there are at most
    \begin{equation*}
      %\max_{(a_0, \dots, a_{|X|}) \in \Yn_\epsilon}
      \binom{2\snorm{\vmn}_1 - n \bar{d}(X) + n \epsilon \alpha(X)  + n a_0 |\bar{\mC}(\Delta)|}{n a_0 |\bar{\mC}(\Delta)|}
    \end{equation*}
    many ways to choose $D(i)$ for $i \in I_0$.
    % this is just counting
    % the number of solutions for x_1 + \dots x_k \leq
    % N among nonnegative integers which is \binom{N+k}{k}.
 Putting all these together, we have
    \begin{equation}
      \label{eq:UB-log-Bn-ub-1}
      \begin{aligned}
        \log |\Bnmnun(P, \Delta, \epsilon)| &\leq (1+ |X|) \log (2n \epsilon) + \max_{(a_0, \dots, a_{|X|}) \in \Yn_\epsilon} \log \binom{n}{na_0  \dots n a_{|X|}} + n (\bar{p}_0 + \epsilon) \log |\vermark| \\
        &\quad + \max_{(a_0, \dots, a_{|X|}) \in \Yn_\epsilon} \log \binom{2\snorm{\vmn}_1 - n \bar{d}(X) + n \epsilon \alpha(X)  + n a_0 |\bar{\mC}(\Delta)|}{n a_0 |\bar{\mC}(\Delta)|}.
      \end{aligned}
    \end{equation}
 Furthermore, using Stirling's approximation, one can show that for
    $(a_0, \dots, a_{|X|}) \in \Yn_\epsilon$, we have
    \begin{equation}
      \label{eq:UB-log-n-c-aj-1}
      \begin{aligned}
        \log \binom{n}{na_0  \dots n a_{|X|}} &\leq 1 + \frac{1}{2} \log n - n \sum_{j=0}^{|X|} a_j \log a_j \\
        &\leq 1 + \frac{1}{2} \log n - n \sum_{j=0}^{|X|} \bar{p}_j \log \bar{p}_j + n \eta_2(\epsilon),
      \end{aligned}
    \end{equation}
    % for calculation details, see the note
    % 2018-09-04_PN_log-n-naj-upper-bound.pdf
    % also, notes around the time 2018-04-19 and 2018-04-16 are relevant 
    where, in the second inequality, $\eta_2(\epsilon)
    \rightarrow 0$ as $\epsilon \rightarrow 0$, and we have
    used the fact that $x \mapsto x \log x$ is uniformly continuous on $(0,1]$ 
    and also the assumption that $|a_j - \bar{p}_j| < \epsilon$
    for $0 \leq j \leq |X|$.
    Note that
    \begin{equation}
      \label{eq:UB-pbarj-H-Pbardelta}
      \begin{aligned}
        - \sum_{j=0}^{|X|} \bar{p}_j \log \bar{p}_j &= - \left ( \sum_{(\beta, D) \in X} \bar{P}^\Delta(\beta, D) \log \bar{P}^\Delta(\beta, D) \right )- (1 - \bar{P}^\Delta(X)) \log (1 - \bar{P}^\Delta(X)) \\
        &\leq H(\bar{P}^\Delta) \leq H(P),
      \end{aligned}
    \end{equation}
    where the last inequality follows from the fact that
    $\bar{P}^\Delta$ is the pushforward of $P$
    under $F_\Delta$.
Putting \eqref{eq:UB-pbarj-H-Pbardelta} back in
    \eqref{eq:UB-log-n-c-aj-1}, we get
    \begin{equation}
      \label{eq:UB-log-n-c-aj--HP}
      \lim_{\epsilon \rightarrow 0} \limsup_{n \rightarrow \infty} \frac{1}{n} \max_{(a_0, \dots, a_{|X|}) \in \Yn_\epsilon} \log \binom{n}{na_0 \dots na_{|X|}} \leq H(P).
    \end{equation}

On the other hand, 
%for $(a_0, \dots, a_{|X|}) \in Y^{(n)}_\epsilon$, 
using the general  inequality $\log \binom{r}{s} \leq \log
    2^r \leq r$, which holds for integers $r \geq s \geq
    0$, together with $a_0 \leq \bar{p}_0 + \epsilon$, we realize that for all $(a_0, \dots, a_{|X|}) \in
    \Yn_\epsilon$, we have
    \begin{equation*}
      \frac{1}{n} \log \binom{2\snorm{\vmn}_1 - n \bar{d}(X) + n \epsilon \alpha(X) + n a_0 |\bar{\mC}(\Delta)|}{n a_0 |\bar{\mC}(\Delta)|} \leq 2 \frac{\snorm{\vmn}_1}{n} - \bar{d}(X) +\epsilon \alpha(X) + (\bar{p}_0 + \epsilon) |\bar{\mC}(\Delta)|.
    \end{equation*}
 Note that, since the sequences $\vmn$ and $\vun$ are such that $(\vmn,\vun)$ is adapted to $(\vdeg(\mu), \vvtype(\mu))$, as $n \rightarrow \infty$ we have $\snorm{\vmn}_1 / n
    \rightarrow d/2$, where $d = \deg(\mu)$ is the
    average degree at the root in $\mu$. Therefore,
    \begin{eqnarray}
      \label{eq:UB-log-mn-bar-dbar-ub}
      && \lim_{\epsilon \rightarrow 0} \limsup_{n \rightarrow \infty} \frac{1}{n} \max_{(a_0, \dots, a_{|X|}) \in \Yn_\epsilon} \log \binom{2\snorm{\vmn}_1 - n \bar{d}(X) + n \epsilon \alpha(X)  + n a_0 |\bar{\mC}(\Delta)|}{n a_0 |\bar{\mC}(\Delta)|} \nonumber \\
      && ~~~~ \leq d - \bar{d}(X) + (1 - \bar{P}^\Delta(X)) |\bar{\mC}(\Delta)|.
    \end{eqnarray}   
 Using \eqref{eq:UB-log-n-c-aj--HP} and
    \eqref{eq:UB-log-mn-bar-dbar-ub} in
    \eqref{eq:UB-log-Bn-ub-1}, we get
    \begin{equation}
      \label{eq:UB-log-Bn-ub-2}
      \lim_{\epsilon \rightarrow 0} \limsup_{n \rightarrow \infty} \frac{1}{n} \log |\Bnmnun(P, \Delta, \epsilon)| \leq H(P) + d - \bar{d}(X) + ( 1 - \bar{P}^\Delta(X))( \log |\vermark| + |\bar{\mC}(\Delta)|).
    \end{equation}
Since this holds for any finite $X \subset
    \Theta \times \mM_{\bar{L}(\Delta)}$, we may take a nested sequence $X_k$
    converging to $\Theta \times \mM_{\bar{L}(\Delta) + 1}$ so that
    $\bar{P}^\Delta(X_k) \rightarrow 1$ and
    \begin{equation*}
      \bar{d}(X_k) = \evwrt{\bar{P}^\Delta}{\one{(\beta, D) \in X_k} \sum_{c \in \bar{\mC}(\Delta)} D_c} \rightarrow \evwrt{\bar{P}^\Delta}{ \sum_{c \in \bar{\mC}(\Delta)} D_c} = \evwrt{P}{\deg_T(o)} = \deg(\mu) = d.
    \end{equation*}
 % the convergence is basically a result of Monotone convergence theorem, since
 % $X_k$ are nested.
 Using this in \eqref{eq:UB-log-Bn-ub-2}, we arrive at 
 \eqref{eq:UB-Bn-HP-claim}.
 
\underline{Bounding $|\mG(\vD, 2)|$:} Now, we find an upper bound for the second
term in the right hand side of \eqref{eq:Amnun-Bmn-max-G-vD}. We claim that for
$(\vbeta, \vD) \in \Bnmnun(P, \Delta, \epsilon)$, we have
\begin{equation}
  \label{eq:log-mD-vD--factorial-bound}
  |\mG(\vD, 2)| \leq \frac{\prod_{c \in \bar{\mC}(\Delta)_<} \Sn_c(\vD)! \prod_{c \in \bar{\mC}(\Delta)_=} (\Sn_c(\vD) - 1)!!}{\prod_{c \in \bar{\mC}(\Delta)} \prod_{v=1}^n D_c(v)!},
\end{equation}
where $\Sn_c(\vD) = \sum_{v=1}^n D_c(v)$.
In order to show this, we take a simple directed colored graph $H \in \mG(\vD, 2)$
and construct $N:= \prod_{c \in \bar{\mC}(\Delta)} \prod_{v=1}^n D_c(v)!$ many
configurations in the space $\Sigma$. Recall from
Section~\ref{sec:color-conf-model-1} that $\Sigma$ is the set of all possible
matchings of half edges. Note that by the definition of the set $\Bnmnun(P, \Delta,
\epsilon)$, we have $\vD \in \mD_n$ and $\Sigma$ is well-defined.
For $v \in [n]$ and $c \in
\mC(\Delta)$, we consider all the possible
numberings of $D_c(v)$ many edges going out of the vertex $v$, which is $D_c(v)!$. It is
easy to see that since $H$ does not have loops and there is at most one directed
edge between each two pair of vertices, the $N$ many objects constructed in this way
are all distinct members of $\Sigma$. Also, for distinct simple directed colored
graphs $H \neq H' \in
\mG(\vD, 2)$, the $N$ many objects corresponding to $H$ are indeed distinct from the
$N$ many objects corresponding to $H'$.  Hence, $|\mG(\vD, 2)| N \leq |\Sigma|$.  But $|\Sigma|$ is
precisely $\prod_{c \in \bar{\mC}(\Delta)_<} \Sn_c(\vD)! \prod_{c \in
  \bar{\mC}(\Delta)_=} (\Sn_c(\vD) - 1)!!$. This establishes \eqref{eq:log-mD-vD--factorial-bound}.
Applying Stirling's approximation to \eqref{eq:log-mD-vD--factorial-bound}, in a manner
similar to what we did in \eqref{eq:LB-f-log-GDn-2h+1}, we get 
\begin{equation}
  \label{eq:UB-log-G-vD-2-expand}
  \begin{aligned}
    \log | \mG(\vD, 2) | &\leq \frac{1}{2} \sum_{c \in \bar{\mC}(\Delta)} \left (\Sn_c(\vD) \log \Sn_c(\vD) - \Sn_c(\vD)\right) - \sum_{v=1}^n \sum_{c \in \bar{\mC}(\Delta)} \log D_c(v)! + o(n) \\
    &= \frac{n}{2} \sum_{c \in \bar{\mC}(\Delta)} \left (\frac{\Sn_c(\vD)}{n} \log \frac{\Sn_c(\vD)}{n} - \frac{\Sn_c(\vD)}{n}\right) - \sum_{v=1}^n \sum_{c \in \bar{\mC}(\Delta)} \log D_c(v)! \\
    &\qquad + \frac{1}{2} \sum_{c \in \bar{\mC}(\Delta)} \Sn_c(\vD) \log n + o(n)  \\
    &= \frac{n}{2} \sum_{c \in \bar{\mC}(\Delta)} \left (\frac{\Sn_c(\vD)}{n} \log \frac{\Sn_c(\vD)}{n} - \frac{\Sn_c(\vD)}{n}\right) - \sum_{v=1}^n \sum_{c \in \bar{\mC}(\Delta)} \log D_c(v)! \\
    &\qquad + \snorm{\vmn}_1 \log n + o(n), 
    \end{aligned}
\end{equation}
% for calculation details, see the note
% 2018-09-05_PN_UB-Stirling-Sc-log-Sc.pdf
where the $o(n)$ term does not depend on $\vD$.
% this is because it is related to $\log \snorm{\vmn}_1 = O(\log n)$. 
Now, we claim that, for $c \in \bar{\mC}(\Delta)$,
    \begin{equation}
      \label{eq:UB-limsup-Sc-Ep-0}
      \limsup_{\epsilon\rightarrow 0} \limsup_{n \rightarrow \infty} \max_{(\vbeta, \vD) \in \Bnmnun(P, \Delta, \epsilon)} | \Sn_c(\vD) / n - \evwrt{\bar{P}^\Delta}{D_c} | = 0.
    \end{equation}
Note that for $c \in \bar{\mC}(\Delta)$, we have $\evwrt{\bar{P}^\Delta}{D_c} \leq d = \deg(\mu)$, which is finite. On the
other hand, we have $\Sn_c(\vD) / n = \evwrt{R(\vbeta, \vD)}{D_c}$. Therefore, condition \eqref{eq:Bn-prop-TV} implies that for any integer $k > 0$ and
    any $(\vbeta, \vD) \in \Bnmnun(P, \Delta, \epsilon)$, we have 
    \begin{equation*}
      \Sn_c(\vD) / n \geq \evwrt{R(\vbeta, \vD)}{D_c \wedge k} \geq \evwrt{\bar{P}^\Delta}{D_c \wedge k} - 2k \epsilon.
    \end{equation*}
% this is basically the 2 factor that appears in bounding d_{TV}(\mu, \nu) = \sup
% |\mu(A) - \nu(A)| and \sum |\mu(\{z\}) - \nu(\{z\})| for probability distributions on
% countable samples spaces. In general, we have |\int f d \mu - \int f d \nu|
% \leq 2 \norm{f}_\infty d_{TV}(\mu, \nu). The above trick shows this for simple
% functions, and then for integrable ones. 
Taking the $\liminf$ as $n\rightarrow \infty$ and then sending $\epsilon$ to
zero, we get 
\begin{equation*}
  \liminf_{\epsilon\rightarrow 0} \liminf_{n \rightarrow \infty} \min_{(\vbeta, \vD) \in \Bnmnun(P, \Delta, \epsilon)} \frac{\Sn_c(\vD)}{n} \geq \evwrt{\bar{P}^\Delta}{D_c \wedge k}.
\end{equation*}
Furthermore, sending $k \rightarrow \infty$, we get
    \begin{equation}
      \label{eq:UB-liminf-Sc-Ep-k}
      \liminf_{\epsilon\rightarrow 0} \liminf_{n \rightarrow \infty} \min_{(\vbeta, \vD) \in \Bnmnun(P, \Delta, \epsilon)} \frac{\Sn_c(\vD)}{n} \geq \evwrt{\bar{P}^\Delta}{D_c}.
    \end{equation}
Now, we show that a matching upper bound exists. To do this, note that due to
\eqref{eq:Bn-prop-mn}, for $c \in \bar{\mC}(\Delta)$, we have $\Sn_c(\vD)
    = 2\snorm{\vmn}_1 - \sum_{\stackrel{c' \in \bar{\mC}(\Delta)}{c' \neq c}}
    \Sn_{c'}(\vD)$. 
Using $2\snorm{\vmn}_1 / n \rightarrow d = \deg(\mu) \in (0,\infty)$ and
\eqref{eq:UB-liminf-Sc-Ep-k}, since $\bar{\mC}(\Delta)$ is finite, we get
\begin{equation*}
  \begin{aligned}
    \limsup_{\epsilon\rightarrow 0} \limsup_{n \rightarrow \infty} \max_{(\vbeta, \vD) \in \Bnmnun(P, \Delta, \epsilon)} \frac{\Sn_c(\vD)}{n} &\leq d - \sum_{\stackrel{c' \in \bar{\mC}(\Delta)}{c' \neq c}} \liminf_{\epsilon\rightarrow 0} \liminf_{n \rightarrow \infty}\min_{(\vbeta, \vD) \in \Bnmnun(P, \Delta, \epsilon)} \frac{\Sn_{c'}(\vD)}{ n} \\
    &\leq d - \sum_{\stackrel{c' \in \bar{\mC}(\Delta)}{c' \neq c}} \evwrt{\bar{P}^\Delta}{D_{c'}} \\
    &= \sum_{c'' \in \bar{\mC}(\Delta)} \evwrt{\bar{P}^\Delta}{D_{c''}} - \sum_{\stackrel{c' \in \bar{\mC}(\Delta)}{c' \neq c}} \evwrt{\bar{P}^\Delta}{D_{c'}} \\
    &= \evwrt{\bar{P}^\Delta}{D_c}.
  \end{aligned}
\end{equation*}
This together with \eqref{eq:UB-liminf-Sc-Ep-k} completes the proof of
\eqref{eq:UB-limsup-Sc-Ep-0}.

On the other hand, observe that for $(\vbeta, \vD) \in \Bnmnun(P, \Delta, \epsilon)$ and
      $c \in \bar{\mC}(\Delta)$, $\frac{1}{n} \sum_{v=1}^n \log D_c(v)! =
      \evwrt{R(\vbeta, \vD)}{\log D_c!}$. Therefore, a similar truncation
      argument as in \eqref{eq:UB-liminf-Sc-Ep-k} implies that
      \begin{equation}
        \label{eq:UB-log-D!-lowerbound}
        \liminf_{\epsilon \rightarrow 0} \liminf_{n \rightarrow \infty} \min_{(\vbeta, \vD) \in \Bnmnun(P, \Delta, \epsilon)} \frac{1}{n} \sum_{v=1}^n \log D_c(v)! \geq \evwrt{\bar{P}^\Delta}{\log D_c!}. 
      \end{equation}
Note that $\log D_c! \geq 0$, hence %$\ev{\log D_c!}$ 
$\evwrt{\bar{P}^\Delta}{\log D_c!}$
is well-defined,
although it can be $\infty$. 
Also,   $\bar{\mC}(\Delta)$ is finite.
Therefore, using
\eqref{eq:UB-log-D!-lowerbound} together with \eqref{eq:UB-limsup-Sc-Ep-0}
in \eqref{eq:UB-log-G-vD-2-expand} and simplifying, we get
\begin{equation*}
  \begin{aligned}
    &\limsup_{\epsilon \rightarrow 0} \limsup_{n \rightarrow \infty} \max_{(\vbeta, \vD) \in \Bnmnun(P, \Delta, \epsilon)} \frac{1}{n} \left( \log |\mG(\vD, 2) - \snorm{\vmn}_1 \log n \right) \\
    &\quad \leq \frac{1}{2} \sum_{c \in \bar{\mC}(\Delta)} \left( \evwrt{\bar{P}^\Delta}{D_c} \log \evwrt{\bar{P}^\Delta}{D_c}  - \evwrt{\bar{P}^\Delta}{D_c} \right)  - \sum_{c \in \bar{\mC}(\Delta)} \evwrt{\bar{P}^\Delta}{\log D_c!} \\
    &\quad= -s(d) + \frac{d}{2} \sum_{c \in \bar{\mC}(\Delta)} \frac{\evwrt{\bar{P}^\Delta}{D_c}}{d} \log  \frac{\evwrt{\bar{P}^\Delta}{D_c}}{d} - \sum_{c \in \bar{\mC}(\Delta)} \evwrt{\bar{P}^\Delta}{\log D_c!}.
  \end{aligned}
\end{equation*}
% we have used continuity of x \mapsto x \log x
Note that, for each $c \in \bar{\mC}(\Delta)$,
$0 \leq \evwrt{\bar{P}^\Delta}{D_c} \leq d < \infty$, hence each term in the first
summation is nonpositive and finite. Also, $\evwrt{\bar{P}^\Delta}{\log D_c!} \geq
0$ for $c \in \bar{\mC}(\Delta)$.
As a result, the bound on the right hand side is well-defined,
although it can be $-\infty$. Also, since each term in the first summation is
nonpositive while each term in the second summation is nonnegative, we may
restrict both the summations to $\mC(\Delta) \subset \bar{\mC}(\Delta)$ to find
an upper bound for the right hand side. But for $c \in \mC(\Delta)$, we have 
$\evwrt{\bar{P}^\Delta}{D_c} = e_P(c)$ and $\evwrt{\bar{P}^\Delta}{\log
  D_c!}  = \evwrt{P}{\log E_h(c)!}$, which yields
\begin{equation}
\label{eq:log-mG-vD-leq-sum-C-Delta}
  \begin{aligned}
    \limsup_{\epsilon \rightarrow 0} \limsup_{n \rightarrow \infty} \max_{(\vbeta, \vD) \in \Bnmnun(P, \Delta, \epsilon)} \frac{1}{n} \left( \log |\mG(\vD, 2)| - \snorm{\vmn}_1 \log n \right) \\ \leq -s(d) + \frac{d}{2} \sum_{c \in \mC(\Delta)} \pi_P(c) \log \pi_P(c) - \sum_{c \in \mC(\Delta)} \evwrt{P}{\log E_h(c)!}.
  \end{aligned}
\end{equation}
Again, note that the terms in the first summation are finite and nonpositive,
while the terms in the second summation are nonnegative, but possibly $+\infty$.
Thereby, the above bound is well-defined, although it can be $-\infty$. 
% -------
% Note that for $c \in \mC(\Delta)$, we have
% \begin{equation*}
%   \evwrt{P}{\log E_h(c)!} \leq \evwrt{P}{\log \deg_T(o)!} \leq \evwrt{P}{\deg_T(o) \log \deg_T(o)} < \infty,
% \end{equation*}
% % k! \leq k^k, so \log k! \leq k \log k
% where the last inequality follows from the fact that $P \in \mP_h$. Therefore,
% all the terms in \eqref{eq:log-mG-vD-leq-sum-C-Delta} are finite.
% ------
% the above block is not correct, I want to prove this Proposition without the
% assumption P \in \mP_h so that later in Proposition 6 I can prove when
% \ev{\deg \log \deg} = \infty, the entropy is -\infty. 

By assumption, we have $H(P) < \infty$. Therefore, we can put the
bounds in \eqref{eq:log-mG-vD-leq-sum-C-Delta} and \eqref{eq:UB-Bn-HP-claim} back in
\eqref{eq:Amnun-Bmn-max-G-vD} to  get
\begin{equation}
\label{eq:boundforupper}
  \begin{aligned}
    &\lim_{\epsilon\rightarrow 0} \limsup_{n \rightarrow \infty} \frac{1}{n} \left(  \log |\Anmnun(P, \epsilon)| - \snorm{\vmn}_1 \log n \right) \\
    &\quad \leq -s(d) + H(P) -   \frac{d}{2} \sum_{c \in \mC(\Delta)} \pi_P(c) \log \frac{1}{\pi_P(c)} - \sum_{c \in \mC(\Delta)} \evwrt{P}{\log E_h(c)!} .
  \end{aligned}
\end{equation}
Note that this holds for any finite $\Delta \subset \mTb_*^h$, and that
$\pi_P(c) \log \frac{1}{\pi_P(c)} \geq 0$ and $\evwrt{P}{\log E_h(c)!}
\geq 0$ for all $c \in (\edgemark \times \mTb_*^{h-1}) \times (\edgemark
\times \mTb_*^{h-1})$.
%\color{red}
Interpreting the summations on the 
right hand side of \eqref{eq:boundforupper} 
as integrals, 
restricted to $\mC(\Delta)$, with respect to the 
uniform measure on 
$(\edgemark \times \mTb_*^{h-1}) \times (\edgemark \times \mTb_*^{h-1})$, by sending $\Delta$ to $\mTb_*^h$ and using
the monotone convergence theorem, we arrive at \eqref{eq:upperbound-A-claim}
which completes the proof.
%\color{black}
\end{proof}
%\end{mybox}

\subsection{Proof of Proposition~\ref{prop:upper-bound-infty}}
\label{sec:upperbound-new}

In this section, we prove the upper bound result of
%Proposition~\ref{prop:upper-bound}.
%and 
Proposition~\ref{prop:upper-bound-infty}. 

\begin{proof}[Proof of Proposition~\ref{prop:upper-bound-infty}]
  Let $P:= \mu_1 \in \mP(\mTb_*^1)$ be the distribution of the depth--1
  neighborhood of the root in $\mu$.
  Borrowing the idea in the proof of  Corollary~\ref{cor:deg-log-deg-Ph}, note
  that 
  each rooted tree equivalence class $[T,o] \in \mTb_*^1$
  is uniquely determined by knowing the integers 
  \begin{equation}
    \label{eq:Nxx'--theta-theta'--notation}
    N_{x,x'}^{\theta, \theta'}(T,o) := |\{v \sim_T o: \xi_T(v,o) = x, \tau_T(o) = \theta, \xi_T(o,v) = x', \tau_T(v) = \theta'\}|,
  \end{equation}
  for each $x, x' \in \edgemark$ and $\theta, \theta' \in \vermark$. Now, for $x,
  x' \in \edgemark$ and $\theta, \theta' \in \vermark$, we have
  $\evwrt{P}{N_{x,x'}^{\theta, \theta'}(T,o)} \leq \evwrt{P}{\deg_T(o)} <
  \infty$. Consequently, when  $[T, o] \sim P$, the entropy of the random
  variable $N_{x,x'}^{\theta, \theta'}(T,o)$ is
  finite for all $x, x'\in \edgemark$ and
  $\theta, \theta' \in \vermark$. Therefore, since $\edgemark$ and $\vermark$ are finite sets, we conclude that
  $H(P) < \infty$. 
Hence, using Proposition~\ref{prop:upper-bound} for $h =1$, we have
    \begin{equation}
      \label{eq:ub-inty-J}
      \bchover_{\vdeg(\mu), \vvtype(\mu)}(\mu)\condmnun \leq -s(d) + H(P) - \frac{d}{2} H(\pi_P) - \sum_{t, t' \in \edgemark \times \mTb_*^0} \evwrt{P}{\log E_1(t, t')!}.
    \end{equation}
Now we show that there exist $t$ and $ t'$ in $ \edgemark \times \mTb_*^0$   such that
$\evwrt{P}{\log E_1(t,t')!} = \infty$. Since every element of $\mTb_*^0$ is a marked
isolated vertex, $\mTb_*^0$ can be identified with $\vermark$. With an abuse of notation, we
may therefore write $t, t' \in \edgemark
    \times \mTb_*^0$ as  $ t= (x, \theta)$ and $t' = (x', \theta')$ respectively, where $x,x'
    \in \edgemark$ and $\theta, \theta' \in \vermark$. 
With this, for $[T,o] \in \mTb_*$, 
%using the notation in
%\eqref{eq:Nxx'--theta-theta'--notation}, 
we have $E_h(t, t')(T, o) =
N_{x,x'}^{\theta, \theta'}(T, o)$.
Therefore, from \eqref{eq:ub-inty-J}, it suffices to prove that
$\evwrt{P}{\log N_{x,x'}^{\theta, \theta'}(T, o)!} = \infty$ for some
$x,x' \in \edgemark$, $\theta, \theta' \in \vermark$.

We prove this by contradiction. Assume that $\evwrt{P}{\log
  N_{x,x'}^{\theta, \theta'}(T, o)!} < \infty$ for all $x,x' \in \edgemark$,
$\theta, \theta' \in \vermark$.
Using Stirling's approximation, for $k \geq 0$, we have $\log k! \geq k \log k -
k$, where $0 \log 0$ is interpreted as $0$.
Therefore, for $x,x' \in \edgemark$ and $\theta, \theta' \in \vermark$, we have
\begin{equation}
  \label{eq:log-Nxx'--theta-theta'--NlogN-N}
  \infty > \evwrt{P}{\log N_{x,x'}^{\theta, \theta'}(T,o)!} \geq \evwrt{P}{N_{x,x'}^{\theta, \theta'}(T,o) \log N_{x,x'}^{\theta, \theta'}(T,o)} - \evwrt{P}{N_{x,x'}^{\theta, \theta'}(T,o)}.
\end{equation}
On the other hand, $\deg_T(o) = \sum_{\stackrel{x,x' \in
    \edgemark}{\theta, \theta' \in \vermark}} N_{x,x'}^{\theta,
  \theta'}(T, o)$ for all $[T, o] \in \mTb_*$. Also, we have
$\evwrt{P}{\deg_T(o)} = \deg(\mu) < \infty$. Hence
$\evwrt{P}{N_{x,x'}^{\theta, \theta'}(T,o)} < \infty$ for all $x,x' \in
\edgemark$ and $ \theta, \theta' \in \vermark$. 
Using this in \eqref{eq:log-Nxx'--theta-theta'--NlogN-N}, we realize that 
\begin{equation}
  \label{eq:contradiction--NlogN<infty}
 \evwrt{P}{N_{x,x'}^{\theta, \theta'}(T, o) \log N_{x,x'}^{\theta,
     \theta'}(T, o)} < \infty \qquad \forall  x, x' \in \edgemark \text{ and } \theta,
    \theta' \in \vermark
\end{equation}
Moreover, for $[T,o] \in \mTb_*$, using $\deg_T(o) = \sum_{\stackrel{x,x' \in
    \edgemark}{\theta, \theta' \in \vermark}} N_{x,x'}^{\theta,
  \theta'}(T, o)$ and the convexity of $x \mapsto x \log x$, we have 
\begin{equation*}
  \frac{\deg_T(o)}{|\edgemark|^2 |\vermark|^2} \log \frac{\deg_T(o)}{|\edgemark|^2 |\vermark|^2} \leq \frac{1}{|\edgemark|^2 |\vermark|^2} \sum_{\stackrel{x,x' \in
          \edgemark}{\theta, \theta' \in \vermark}} N_{x,x'}^{\theta, \theta'}(T, o) \log N_{x,x'}^{\theta, \theta'}(T, o),
    \end{equation*}
    where as usual, we interpret $0 \log 0$ as $0$. 
   % \color{red}
   % Note for Payam: 
   % I changed the normalization (by squaring terms) in the preceding calculation. Check.
    %\color{black}
   % \pres{correct, thanks.}
Taking the expectation with respect to $P$ on both sides 
%followed by  using the
%fact that $\edgemark$ and $\vermark$ are finite and 
%$\evwrt{P}{\deg_T(o)} = \deg(\mu) < \infty$ together with \eqref{eq:contradiction--NlogN<infty}, 
we realize that $\evwrt{P}{\deg_T(o) \log \deg_T(o)} < \infty$, which is a
    contradiction.
Hence, there must exist $x, x' \in \edgemark$ and $\theta, \theta'
      \in \vermark$ such that $\evwrt{P}{\log N_{x,x'}^{\theta, \theta'}!} =
      \infty$. 
Finally, using this in \eqref{eq:ub-inty-J} implies $\bchover_{\vdeg(\mu),
  \vvtype(\mu)}(\mu)\condmnun = -\infty$ and completes the proof.
\end{proof}

\subsection*{Acknowledgments}
%\label{sec:upperbound}

Research supported by the NSF grants
CNS-1527846 and CCF-1618145, the NSF Science \& Technology
Center grant CCF-0939370 (Science of Information), and the
William and Flora Hewlett Foundation supported Center for
Long Term Cybersecurity at Berkeley.

%%% Local Variables: 
%%% mode: latex
%%% TeX-master: "Note-41_BC-ent-arxiv.tex"
%%% End: 

\appendix

\section{Some Properties of Marked Rooted Trees of Finite Depth}
\label{sec:marked-rooted-trees-some-props}

In this section we gather some useful properties of marked rooted trees of finite depth, which are used at various points during the discussion.

%Before that, we define some notation and state some lemmas. 
Given a marked rooted tree $(T,
o)$, integers $ k,l \geq 1$,  $t \in \edgemark \times  \mTb_*^{k-1}$, and $t' \in \edgemark \times  \mTb_*^{l-1}$, define
\begin{equation*}
  E_{k,l}(t,t')(T,o) := |\{v \sim_T o : T(v,o)_{k-1} \equiv t, T(o,v)_{l-1} \equiv t' \}|.
\end{equation*}
%\color{red}
%Note to Payam:
%The notation $E_{k,l}(t,t')(T,o)$ needs to be introduced in the glossary.
%\color{black}
%\pres{added to the glossary}
When $k=l$ this reduces to the notation we defined in Section~\ref{sec:markovian-ugwt}, i.e.\ $E_{k,k}(t,t')(T,o)$
is the same as $E_k(t,t')(T,o)$.
\glsadd{not:Ekl-tt'-To}

\begin{lem}
  \label{lem:ov=ov'--vo-v'o}
  Assume $(T,o)$ is a rooted marked tree with finite depth, and $v$ and $v'$ are
  offspring of the root. Then, if $T(o,v) \equiv T(o,v')$ and $\xi_T(v,o) =
  \xi_T(v',o)$, we have $T(v,o) \equiv T(v',o)$.
\end{lem}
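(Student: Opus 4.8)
The plan is to reconstruct the full rooted subtrees hanging from $v$ and $v'$ (including their respective edges towards $o$) from the data of $T(o,v)$, $T(o,v')$, and the edge marks, and to observe that this reconstruction is symmetric. First I would set up notation: write $T'$ for the component of $v$ in $T$ after deleting the edge $\{v,o\}$, so that $T(o,v) = (\xi_T(o,v), [T',v])$, and similarly $T''$ for the component of $v'$ after deleting $\{v',o\}$, with $T(o,v') = (\xi_T(o,v'), [T'',v'])$. The hypothesis $T(o,v) \equiv T(o,v')$ says precisely that $\xi_T(o,v) = \xi_T(o,v')$ and $[T',v] = [T'',v']$ — i.e.\ there is a rooted isomorphism $\psi: (T',v) \to (T'',v')$ preserving vertex marks, adjacency, and edge marks. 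The additional hypothesis gives $\xi_T(v,o) = \xi_T(v',o)$.

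Next I would describe $T(v,o)$ and $T(v',o)$ explicitly. By definition, $T(v,o) = (\xi_T(v,o), (S, o))$ where $S$ is the component of $o$ in $T$ with the edge $\{v,o\}$ removed; since $T$ is a tree, $S$ is obtained by taking the root $o$ together with all its offspring $w \neq v$ and their subtrees, plus reattaching... wait — more carefully, $S$ is $T$ with the entire branch $T'$ (the $v$-side) deleted, so $S$ consists of $o$, all offspring $w \sim_T o$ with $w \neq v$, their descendant subtrees, and the edge marks among them. Likewise $T(v',o) = (\xi_T(v',o), (S', o))$ where $S'$ is $T$ with the $v'$-branch $T''$ deleted. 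The key point is that the tree $S'$ can be obtained from $S$ by a "branch swap": in $T$, the branches attached to $o$ are $T'$ (via the edge with marks $\xi_T(v,o)$ towards $o$ and $\xi_T(o,v)$ towards $v$) together with all other branches; $S$ keeps all branches except $T'$; $S'$ keeps all branches except $T''$. Since $\psi$ is an isomorphism of the rooted marked trees $(T',v)$ and $(T'',v')$ with matching edge marks in both directions, the branch $T''$ (with its edge to $o$) is an exact isomorphic copy of the branch $T'$ (with its edge to $o$). Therefore $S'$ — which is $o$ plus all branches except $T''$ — is isomorphic to $S$ with $T'$ re-inserted and $T''$ removed; but $T' \cong T''$ as marked branches attached to $o$, so $S' \cong S$ via the identity on $o$ and on all branches other than $T',T''$, and via $\psi$ (composed appropriately) matching the reinserted branch. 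Hence $(S,o) \equiv (S',o)$, and combined with $\xi_T(v,o) = \xi_T(v',o)$ we get $T(v,o) \equiv T(v',o)$.

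The main obstacle, and the only place requiring real care, is the bookkeeping of the branch-swap isomorphism: one must check that the isomorphism $\psi: (T',v)\to(T'',v')$ extends to a well-defined isomorphism $(S,o)\to(S',o)$ fixing $o$, which uses crucially that $T$ is a tree (so the branches at $o$ are genuinely disjoint and their union is all of $T$) and that $\psi$ preserves edge marks in \emph{both} directions — so the edge from $o$ into the $T'$-branch in $S$ with its marks corresponds exactly to the edge from $o$ into the $T''$-branch in $S'$ with its marks, using $\xi_T(v,o)=\xi_T(v',o)$ and $\xi_T(o,v)=\xi_T(o,v')$. I would phrase this by explicitly defining the candidate bijection on $V(S)$: it is the identity outside the $v$-branch of $S$ (which is the copy of $T'$ sitting inside $S$ — note $v$ itself is \emph{not} in $S$, so really the relevant piece is the subtrees of the \emph{other} offspring, which are untouched), and I realize the cleaner statement is: $S$ and $S'$ are both equal, as rooted marked trees, to "$o$ with marks $\tau_T(o)$, carrying the multiset of marked branches $\{$ branch of $w : w\sim_T o, w\neq v\}$" versus "$\{$ branch of $w: w\sim_T o, w\neq v'\}$", and since the $v$-branch and $v'$-branch are isomorphic marked branches (by $T(o,v)\equiv T(o,v')$ together with the edge-mark hypothesis), these two multisets of branches are equal, so $S\equiv S'$. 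Everything else is routine unwinding of the definitions of $T(u,w)$ and of rooted isomorphism.
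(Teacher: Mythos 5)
Your argument is correct and rests on the same branch-swap idea as the paper, which constructs an explicit automorphism $f$ of $T$ with $f(o)=o$, $f(v)=v'$, $f(v')=v$ (using the given isomorphism between the two subtrees and the identity elsewhere) and then restricts it to the component $S$ of $o$. Your multiset-of-branches formulation at the end is the same underlying construction stated more abstractly, and it is indeed the tidiest way to finish; the earlier momentary slip (speaking of ``the $v$-branch of $S$,'' which $S$ does not contain) is self-corrected, so there is no gap.
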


\begin{proof}
  We construct the rooted automorphism $f: V(T) \rightarrow V(T)$ as follows. We
  set $f(o) = o$, $f(v) = v'$ and $f(v') = v$. Moreover, we use the isomorphism
  $T(o,v) \equiv T(o,v')$ to map the nodes in the subtree of  $v$ to the
  nodes in the subtree of  $v'$ and vice versa. Finally, we set $f$ to be
  the identity map on the rest of the tree. Indeed, $f$ is an adjacency preserving
  bijection. On the other hand, the assumptions $\xi_T(v,o) \equiv \xi_T(v',o)$
  and $T(o,v) \equiv T(o,v')$ imply that $f$ preserves the marks. Therefore, $f$
  is an automorphism which maps $T(v,o)$ to $T(v',o)$. This completes the proof.
\end{proof}

\begin{lem}
  \label{lem:t1-oplus-t'--t2-oplus-t'}
  Assume $t^{(1)}, t^{(2)} \in \edgemark \times  \mTb_*^h$ and $t' \in \edgemark
 \times \mTb_*^k$ are given such
  that $t^{(1)} \oplus t' = t^{(2)} \oplus t'$. Further, assume that
  %$t^{(1)}$ and $t^{(2)}$ have the same mark components, i.e.
  $t^{(1)}[m] =
  t^{(2)}[m]$. Then, we have $t^{(1)} = t^{(2)}$. 
\end{lem}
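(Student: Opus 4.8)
The plan is to recover $t^{(i)}[s]$ from the isomorphism class $t^{(i)} \oplus t'$ \emph{together with} the mark $t^{(i)}[m]$, by a surgery at the root, and to show this recovery is canonical. Concretely, I would fix the representative $(T_i, o_i)$ of $t^{(i)} \oplus t'$ coming from the definition of $\oplus$: the root $o_i$ carries the root mark of $t^{(i)}[s]$, its offspring are exactly those of $t^{(i)}[s]$ (with the same subtrees and edge marks), plus one extra offspring $v_i$ whose subtree is a copy of $t'[s]$, with $\xi_{T_i}(v_i, o_i) = t^{(i)}[m]$ and $\xi_{T_i}(o_i, v_i) = t'[m]$. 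For a finite‑depth rooted marked tree $(S,o)$ and $x \in \edgemark$, let $\mathcal{O}_x(S,o)$ denote the set of offspring $w$ of $o$ such that the subtree of $S$ rooted at $w$ is isomorphic, as an element of $\mTb_*$, to $t'[s]$ and moreover $\xi_S(w,o) = x$, $\xi_S(o,w) = t'[m]$. By construction $v_i \in \mathcal{O}_{t^{(i)}[m]}(T_i, o_i)$, so this set is nonempty, and deleting $v_i$ and all its descendants from $(T_i,o_i)$ returns (a representative of) $t^{(i)}[s] \in \mTb_*^h$.

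The crux is the claim that the result of such a deletion does not depend on which element of $\mathcal{O}_x(S,o)$ is removed: if $w_1, w_2 \in \mathcal{O}_x(S,o)$ then there is a root‑fixing marked automorphism of $S$ interchanging $w_1$ and $w_2$. I would construct it as the identity outside the subtrees $S_{w_1}$ and $S_{w_2}$ rooted at $w_1$ and $w_2$ (which are disjoint when $w_1 \neq w_2$, since distinct children of a vertex in a tree have disjoint descendant sets), and on $S_{w_1} \cup S_{w_2}$ as a chosen isomorphism $g \colon S_{w_1} \to S_{w_2}$ together with $g^{-1} \colon S_{w_2} \to S_{w_1}$. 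The defining conditions of $\mathcal{O}_x$ — isomorphic subtrees (hence equal vertex marks at $w_1, w_2$) and equal edge marks on $o w_1$ and $o w_2$ — are exactly what is needed for this piecewise map to be a genuine graph automorphism of $S$ that preserves all vertex and edge marks and fixes $o$. Restricting this automorphism to the complements of the two deleted subtrees yields a rooted marked isomorphism between ``$S$ with $w_1$ deleted'' and ``$S$ with $w_2$ deleted''.

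Finally I would assemble the pieces. Set $x := t^{(1)}[m] = t^{(2)}[m]$, so that the defining condition of $\mathcal{O}_x$ is literally the same for $(T_1,o_1)$ and for $(T_2,o_2)$. Since $(T_1,o_1) \equiv (T_2,o_2)$, any rooted isomorphism between them carries $\mathcal{O}_x(T_1,o_1)$ bijectively onto $\mathcal{O}_x(T_2,o_2)$, and carries ``$T_1$ with a given $w \in \mathcal{O}_x(T_1,o_1)$ deleted'' to ``$T_2$ with the corresponding offspring deleted''; so these two deletions produce the same element of $\mTb_*^h$. Deleting $v_1$ gives $t^{(1)}[s]$ and deleting $v_2$ gives $t^{(2)}[s]$; by the previous paragraph these agree with the result of deleting \emph{any} member of $\mathcal{O}_x(T_1,o_1)$, resp. $\mathcal{O}_x(T_2,o_2)$. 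Hence $t^{(1)}[s] = t^{(2)}[s]$, and since also $t^{(1)}[m] = t^{(2)}[m]$ we conclude $t^{(1)} = t^{(2)}$.

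I expect the main obstacle to be purely the bookkeeping in the automorphism construction: checking carefully that the piecewise map is well defined (disjointness of $S_{w_1}, S_{w_2}$), that it is adjacency‑preserving across the edges $o w_1$ and $o w_2$, and that it respects vertex marks everywhere including at the root. I would also note explicitly that the whole argument is insensitive to whether $t^{(i)} \oplus t' \in \mTb_*^h$ or $\mTb_*^{k+1}$ (i.e.\ to the relative sizes of $h$ and $k$), since it is phrased entirely in terms of the concrete trees $(T_i,o_i)$, whose depth is simply whatever it happens to be.
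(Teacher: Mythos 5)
Your proof is correct and uses the same key idea as the paper: a root-fixing marked automorphism that swaps two offspring of the root having isomorphic attached subtrees and matching edge marks on both sides. The paper isolates this construction as a separate lemma (Lemma~\ref{lem:ov=ov'--vo-v'o}) and applies it inside a single representative $(T,o)$ of the common class $t^{(1)} \oplus t' = t^{(2)} \oplus t'$, whereas you inline the automorphism argument and carry two representatives, but the substance is the same.
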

% k, h \geq 0

\begin{proof}
  Let $(T,o)$ be an arbitrary member of the equivalence class $t^{(1)} \oplus t'$. Therefore, we have
  $T(v,o) \equiv t^{(1)}$ and $T(o,v) \equiv t'$ for some $v \sim_T o$. On the
  other hand, by assumption, $(T,o)$ is also a member of the equivalence class
  $t^{(2)} \oplus t'$. This means that $T(v',o) \equiv t^{(2)}$ and $T(o,v')
  \equiv t'$ for some $v' \sim_T o$. Moreover, by assumption, $\xi_T(v,o) =
  t^{(1)}[m] = t^{(2)}[m] = \xi_T(v',o)$. Since $T(o,v) \equiv T(o,v') \equiv t'$,
  Lemma~\ref{lem:ov=ov'--vo-v'o} above implies that $T(v,o) \equiv T(v',o)$, or
  equivalently $t^{(1)} = t^{(2)}$. 
\end{proof}

\begin{lem}
  \label{lem:Eh-1+Nh}
  Assume $(T, o)$ is a rooted marked tree with depth at most 
  $k \ge 1$. Moreover, assume that, for some $l \ge 1$, $t \in \edgemark \times \mTb_*^l$, and $t' \in \edgemark \times \mTb_*^{k-1}$, we have $E_{l+1, k}(t,t')(T,o) > 0$. Then,
  \begin{equation*}
    E_{l+1, k}(t, t')(T,o) = |\{ v \sim_T o: T(o,v)_{k-1} \equiv t', \xi_T(v,o) = t[m] \}|.
  \end{equation*}
\end{lem}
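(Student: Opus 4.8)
The plan is to establish equality between the two counted sets by proving the two inclusions separately. The inclusion "$\le$" is immediate from the definitions: if $v \sim_T o$ is counted in $E_{l+1,k}(t,t')(T,o)$, then $T(v,o)_l \equiv t$ and $T(o,v)_{k-1} \equiv t'$, and the former already forces $\xi_T(v,o) = t[m]$ (the mark component is preserved under truncation and isomorphism), so $v$ lies in the set on the right-hand side.

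For the reverse inclusion, the first step is to exploit the hypothesis that $(T,o)$ has depth at most $k$. Since any offspring $v \sim_T o$ sits at distance $1$ from the root, every vertex in the subtree hanging from $v$ is at distance at most $k-1$ from $v$; hence that subtree has depth at most $k-1$, and therefore $T(o,v)_{k-1}$ already coincides with the full object $T(o,v) \in \edgemark \times \mTb_*$. In particular, the condition $T(o,v)_{k-1} \equiv t'$ appearing on the right-hand side is equivalent to the full isomorphism $T(o,v) \equiv t'$, using that $t' \in \edgemark \times \mTb_*^{k-1}$ has depth at most $k-1$ as well.

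Next I would invoke the assumption $E_{l+1,k}(t,t')(T,o) > 0$ to fix a witness $w \sim_T o$ with $T(w,o)_l \equiv t$ and $T(o,w)_{k-1} \equiv t'$; by the previous paragraph the second condition upgrades to $T(o,w) \equiv t'$, and the first gives $\xi_T(w,o) = t[m]$. Now take an arbitrary $v \sim_T o$ counted on the right-hand side, so $\xi_T(v,o) = t[m] = \xi_T(w,o)$ and $T(o,v) \equiv t' \equiv T(o,w)$. Lemma~\ref{lem:ov=ov'--vo-v'o} (applicable since $(T,o)$ has finite depth) then yields $T(v,o) \equiv T(w,o)$, whence $T(v,o)_l \equiv T(w,o)_l \equiv t$. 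Together with $T(o,v)_{k-1} \equiv t'$ this shows $v$ is counted in $E_{l+1,k}(t,t')(T,o)$, giving the inclusion "$\ge$" and hence the claimed equality.

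The only genuinely substantive point is the depth observation in the second paragraph — that the depth-$k$ bound on $(T,o)$ collapses the depth-$(k-1)$ truncation of each pendant rooted subtree to the subtree itself — since this is precisely what allows $T(o,v)_{k-1} \equiv t'$ to be promoted to a full isomorphism and Lemma~\ref{lem:ov=ov'--vo-v'o} to be applied; everything else is bookkeeping with the definitions of $E_{l+1,k}$ and of truncation.
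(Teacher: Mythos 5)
Your proof is correct and follows essentially the same route as the paper: the ``$\le$'' inclusion is immediate from the definition of $E_{l+1,k}$, and the ``$\ge$'' inclusion fixes a witness from $E_{l+1,k}(t,t')(T,o) > 0$ and applies Lemma~\ref{lem:ov=ov'--vo-v'o}. The one place you go beyond the paper's write-up is making explicit why $T(o,v)_{k-1} \equiv t' \equiv T(o,v')_{k-1}$ can be upgraded to the full isomorphism $T(o,v) \equiv T(o,v')$ that Lemma~\ref{lem:ov=ov'--vo-v'o} actually requires as a hypothesis — namely that the depth-$k$ bound on $(T,o)$ forces every pendant subtree $T(o,v)$ to have depth at most $k-1$, so the depth-$(k-1)$ truncation is the subtree itself (and likewise $t'_{k-1} = t'$). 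The paper leaves this step implicit; your version is a touch more careful but proves the same thing in the same way.
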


\begin{proof}
  From the definition of $E_{l+1, k}(t, t')(T,o)$, we have
  \begin{align*}
    E_{l+1, k}(t, t')(T,o) &= |\{ v \sim_T o: T(o,v)_{k-1} \equiv t', T(v,o)_l \equiv t \}| \\
                           &\leq | \{ v \sim_T o : T(o,v)_{k-1} \equiv t', \xi_T(v,o) = t[m] \}|.
  \end{align*}
  Now, we show the inequality in the opposite direction. The assumption $E_{l+1, k}(t, t')(T, o) > 0$ implies that there exists $v \sim_T o$ such that $T(o,v)_{k-1} \equiv t'$ and $T(v,o)_l \equiv t$. This in particular means that $\xi_T(v, o) = t[m]$. On the other hand, if $v' \sim_T o$ is such that $T(o,v')_{k-1} \equiv t'$ and $\xi_T(v',o) = t[m]$, Lemma~\ref{lem:ov=ov'--vo-v'o} above implies that $T(v',o) \equiv T(v,o)$, which means  $T(v',o)_l \equiv t$. This establishes the other direction of the inequality and completes the proof.
\end{proof}

%aa
\begin{lem}
  \label{lem:r-r'-same-degree_are-the-same}
  Given  $h \ge 1$ and two marked rooted trees $(T,o)$ and
  $(T',o')$ with depth at most $h$,  assume that the
  mark at the root in $T$ and $T'$ are the same and also, for all $t, t' \in
  \edgemark \times \mTb_*^{h-1}$, we have $E_h(t, t')(T,o) = E_h(t, t')(T',o')$.
  Then, $(T,o) \equiv (T',o')$.
  %\color{red} 
  %Note for Payam:
  %It is unclear why an assumption about the mark at the root is being singled
  %out. It seems to follow from the other assumption. \color{black} \pres{it does
    %not necessarily follow. When $(T,o)$ and $(T',o')$ are isolated roots, then we automatically have $E_h(t, t')(T,o) = E_h(t, t')(T',o') = 0$ for
    %all $t, t' \in \edgemark \times \mTb_*^{h-1}$, but $(T,o) \equiv (T',o')$
    %only when the mark at the root in the two rooted trees are the same. I also
    %added this below as a comment in the text before the proof starts.}
\end{lem}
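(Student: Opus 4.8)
The plan is to show directly that $(T,o) \equiv (T',o')$. Since both trees have depth at most $h$, we have $[T,o] = [T,o]_h$ and $[T',o'] = [T',o']_h$, so it suffices to establish $[T,o]_h = [T',o']_h$. First I would rephrase the hypothesis: $E_h(t,t')(T,o)$ is, by definition, the number of neighbours $v$ of $o$ with $\etype_T^h(o,v) = (t,t')$, so knowing $E_h(t,t')(T,o)$ for all $t,t' \in \edgemark \times \mTb_*^{h-1}$ is the same as knowing the family $(\etype_T^h(o,v))_{v \sim_T o}$ counted with multiplicity. Thus the assumption says this family for $(T,o)$ and the corresponding family for $(T',o')$ agree as multisets; in particular $\deg_T(o) = \deg_{T'}(o')$. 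If this common degree is $0$, then $(T,o)$ and $(T',o')$ are single roots carrying the same mark (by the hypothesis on root marks), and we are done; so assume $\deg_T(o) = \deg_{T'}(o') \ge 1$.

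Next I would invoke the structural identity \eqref{eq:G-v-h--odot-otimes}. Since $(T,o)$ is a rooted tree with $\deg_T(o) \ge 1$, it gives
\[
[T,o] \;=\; [T,o]_h \;=\; \bigodot_{v \sim_T o} \Bigl( (\tau_T(o), \xi_T(v,o)) \otimes T[o,v]_{h-1} \Bigr),
\]
and similarly for $(T',o')$. The key observation is that each factor on the right depends only on $\tau_T(o)$ and on the pair $\etype_T^h(o,v) = (t_v, t_v') = (T[v,o]_{h-1}, T[o,v]_{h-1})$: indeed $\xi_T(v,o)$ is the mark component $t_v[m]$ of the first coordinate, while $T[o,v]_{h-1}$ is literally the second coordinate $t_v'$. (The truncation to depth $h-1$ costs nothing here: a neighbour of $o$ is at distance $1$ from $o$, so once $T$ has depth at most $h$ the branch hanging off that neighbour has depth at most $h-1$.)

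Finally, fix a bijection $\phi$ between the neighbours of $o$ in $T$ and those of $o'$ in $T'$ realizing the equality of the two multisets of $\etype$-values, so $\etype_T^h(o,v) = \etype_{T'}^h(o',\phi(v))$ for every $v$. Combined with $\tau_T(o) = \tau_{T'}(o')$, the previous paragraph shows that the two $\bigodot$ expressions match factor by factor, and since $\odot$ is commutative and associative the full products are equal, i.e.\ $[T,o] = [T',o']$. (Equivalently, one can finish by hand: the agreement of the subgraph components of the second coordinates of the $\etype$-values gives, for each $v$, a mark-preserving rooted isomorphism from the branch of $v$ onto the branch of $\phi(v)$, and one glues these to $o \mapsto o'$; Lemma~\ref{lem:ov=ov'--vo-v'o} is the natural tool for the consistency check in that variant.) The only delicate point — and where a hasty argument would go wrong — is the bookkeeping in the second paragraph: one must keep straight that the edge mark \emph{towards the root} lives in the first coordinate of $\etype_T^h(o,v)$, whereas the entire subtree beyond the offspring together with the edge mark \emph{towards the offspring} lives in the second coordinate, and that the depth-$(h-1)$ truncations are vacuous on these branches.
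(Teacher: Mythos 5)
Your proof is correct, but it takes a genuinely different route from the paper's. The paper reduces the claim to an equality of \emph{branch-type} counts: for each $x \in \edgemark$ and $t \in \edgemark \times \mTb_*^{h-1}$, it shows $|\{v \sim_T o: \xi_T(v,o)=x,\, T(o,v)_{h-1}\equiv t\}| = |\{v \sim_{T'} o': \xi_{T'}(v,o')=x,\, T'(o',v)_{h-1}\equiv t\}|$, deducing this from the hypothesis by way of Lemma~\ref{lem:Eh-1+Nh}. That lemma encapsulates the fact that within a single rooted tree all neighbours of a given branch type must share the same full $\etype$-type, which is exactly what is needed to pass between the coarse branch-type count and the $E_h(t',t)$ count appearing in the hypothesis. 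You instead work directly at the finer level of $\etype$-types: the hypothesis hands you the multiset $(\etype_T^h(o,v))_{v \sim_T o}$ with multiplicities, so a matching bijection $\phi$ exists, and then the $\odot$-decomposition \eqref{eq:G-v-h--odot-otimes} turns both trees into commutative $\odot$-products of factors that depend only on $\tau_T(o)$ and the relevant coordinates of $\etype_T^h(o,v)$; matching factors via $\phi$ finishes. You thereby sidestep Lemma~\ref{lem:Eh-1+Nh} entirely, at the cost of invoking \eqref{eq:G-v-h--odot-otimes}. Both arguments rest on the same underlying fact — a rooted marked tree of depth at most $h$ is determined by its root mark and the multiset of its branches — which the paper asserts implicitly (``it suffices to show\ldots'') and you derive from the $\odot$-product formula. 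One small organizational caveat: \eqref{eq:G-v-h--odot-otimes} is stated in Section~\ref{sec:graphs-with-given}, after the Appendix~\ref{sec:marked-rooted-trees-some-props} lemmas in the paper's layout, so if you wanted your proof to slot into the paper's text verbatim you would want to verify \eqref{eq:G-v-h--odot-otimes} first (it is immediate from the definitions of $\odot$ and $\otimes$, so no circularity actually arises).
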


% bb

\begin{proof}

Note that the assumption regarding the mark at the root in $T$ and that in $T'$ being
equal is necessary in this statement. To see this, consider the example where $(T,o)$ and $(T',o')$ are isolated roots, then we automatically have $E_h(t, t')(T,o) = E_h(t, t')(T',o') = 0$ for
    all $t, t' \in \edgemark \times \mTb_*^{h-1}$, but $(T,o) \equiv (T',o')$
    only when the marks at the root in the two rooted trees are the same.

  Since the root marks in $[T,o]$ and $[T',o']$ are the same, %(\color{red} It is unclear why an assumption about the mark at the root is being singled out. It seems to follow from the other assumption. \color{black}) 
  it suffices to
  show that for all $x \in \edgemark$ and $t \in \edgemark \times
  \mTb_*^{h-1}$ we have
  %$N_{x,t}(T,o) = N_{x,t}(T',o')$ where 
  \begin{equation}
  \label{eq:trees-hanging}
    %N_{x,t}(T,o) := 
    | \{ v \sim_T o: \xi_T(v,o) = x, T(o,v)_{h-1} \equiv t\}|
    =
    | \{ v \sim_{T'} o': \xi_{T'}(v,o') = x, {T'}(o',v)_{h-1} \equiv t\}|
    .
  \end{equation}
  %and $N_{x,t}(T',o')$ is defined similarly.
  
  Note that if 
  $| \{ v \sim_T o: \xi_T(v,o) = x, T(o,v)_{h-1} \equiv t\}| > 0$
  %$N_{x,t}(T,o) > 0$ 
  %for some $x \in \edgemark$ and $t \in \edgemark \times \mTb_*^{h-1}$, 
  then there exists $v \sim_T o $ such that
  $\xi_T(v,o) = x$ and $T(o,v)_{h-1} \equiv t$. This means that with $t' :=
  T[v,o]_{h-1}$, we have $E_h(t',t)(T,o) > 0$. Moreover,
  Lemma~\ref{lem:Eh-1+Nh} 
  %in Appendix~\ref{sec:marked-rooted-trees-some-props} 
  implies that
  %$E_h(t',t)(T,o) = N_{x,t}(T,o)$. 
  $E_h(t',t)(T,o) = | \{ v \sim_T o: \xi_T(v,o) = x, T(o,v)_{h-1} \equiv t\}|$. 
  On the other hand, from the hypothesis of this lemma, we also know that
  $E_h(t',t)(T',o') = E_h(t',t)(T,o)>0$. Another usage of
  Lemma~\ref{lem:Eh-1+Nh} shows
  \eqref{eq:trees-hanging}.
  %that
  %\begin{equation*}
   % N_{x,t}(T',o') = E_h(t',t)(T',o') = E_h(t',t)(T,o) = N_{x,t}(T,o). 
  %\end{equation*}
  So far, we have shown that 
  $| \{ v \sim_T o: \xi_T(v,o) = x, T(o,v)_{h-1} \equiv t\}| > 0$
  %$N_{x,t}(T,o) > 0$ for some $x \in \edgemark$ and
  %$t \in \edgemark \times \mTb_*^{h-1}$ 
  implies \eqref{eq:trees-hanging}.
  %that $N_{x,t}(T,o) = N_{x,t}(T',o')$. 
  Similarly, 
  $| \{ v \sim_{T'} o: \xi_{T'}(v,o') = x, {T'}(o',v)_{h-1} \equiv t\}| > 0$
  %$N_{x,t}(T',o') > 0$ 
  implies \eqref{eq:trees-hanging}.
  %$N_{x,t}(T',o') = N_{x,t}(T,o)$. 
  %Thus, $N_{x,t}(T,o) = N_{x,t}(T',o')$ for all $x \in
  %\edgemark$ and $t \in \edgemark \times \mTb_*^{h-1}$, and the proof is complete. 
  This completes the proof.
\end{proof}

\section{Some Properties of Unimodular Galton--Watson Trees with given
  Neighborhood Distribution}
\label{sec:UGWT-some-props}

\newcommand{\parent}{p}
\newcommand{\hT}{\hat{T}}
\newcommand{\ho}{\hat{o}}
\newcommand{\mun}{\mu^{(n)}}
\newcommand{\gamman}{\gamma^{(n)}}

Fix $h \ge 1$ and $P \in \mP(\mTb_*^h)$ admissible. 
In this section, we prove some properties of $\ugwt_h(P)$.
%First, we introduce some notation. 

Given
$[T,o] \in \mTb_*$, and $v \in V(T)$ with $v \neq o$, let
$\parent(v)$ denote the parent node of $v$.
For such $v$, we denote $(T[\parent(v), v]_{h-1}, T[v, \parent(v)]_{h-1})$
by $c(v)$ and $(T[v,\parent(v)]_{h-1}, T[\parent(v),
v]_{h-1})$ by $\bar{c}(v)$.
%With these, given a marked rooted tree $[T,o] \in \mTb_*$ and a vertex $v \in V(T)$, we define
Let
\begin{equation}
  \label{eq:gamma-def}
  \gamma_{[T,o]}(v) =
  \begin{cases}
    P([T,o]_h) & v = o, \\
    \hP_{c(v)}(T[\parent(v), v]_h) & v \neq o.
  \end{cases}
\end{equation}
When the marked rooted tree $[T,o]$ is clear from the context
we will simply write $\gamma(v)$ for $\gamma_{[T,o]}(v)$.

\begin{lem}
  \label{lem:UGWT-gamma-pos-ep-pos}
  Given $[T,o] \in \mTb_*$ and $v \in V(T)$ 
 with $\dist_T(v,o) = k$ where $k \geq 1$,
 let $o = v_0, v_1, \dots, v_k =
  v$ denote the path connecting $v$ to the root. 
  
  Then, if $\gamma(v_i) > 0$
  for all $0 \leq i \leq k-1$,  we have $P([T,v_i]_h) > 0$ for $0 \leq i
  \leq k-1$, and $e_P(c(v_i)) > 0$ for $1 \leq i \leq k$. 
\end{lem}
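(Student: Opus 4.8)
The plan is to prove the statement by induction on $k = \dist_T(v,o)$, moving along the path $o = v_0, v_1, \dots, v_k = v$ one step at a time and extracting the two desired conclusions ($P([T,v_i]_h) > 0$ and $e_P(c(v_i)) > 0$) as we go. The base of the induction is $k = 1$: here the hypothesis $\gamma(v_0) > 0$ means $P([T,o]_h) > 0$, which is exactly the claimed $P([T,v_0]_h) > 0$; and since $v_1 \sim_T v_0$ with $T[v_0, v_1]_{h-1} = c(v_1)[1\text{st comp}]$ etc., the tree $[T, v_0]_h$ witnesses that $E_h(c(v_1))([T,v_0]_h) \ge 1$, so $e_P(c(v_1)) = \evwrt{P}{E_h(c(v_1))(T,o)} \ge P([T,v_0]_h) \cdot 1 > 0$.

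For the inductive step, suppose the conclusion holds for paths of length $k-1$, and consider $v = v_k$ with $\gamma(v_i) > 0$ for $0 \le i \le k-1$. Applying the induction hypothesis to $v_{k-1}$ (whose path from the root is $v_0, \dots, v_{k-1}$ and which satisfies $\gamma(v_i) > 0$ for $0 \le i \le k-2$) gives $P([T,v_i]_h) > 0$ for $0 \le i \le k-2$ and $e_P(c(v_i)) > 0$ for $1 \le i \le k-1$. It remains to establish $P([T,v_{k-1}]_h) > 0$ and $e_P(c(v_k)) > 0$. For the first: $\gamma(v_{k-1}) = \hP_{c(v_{k-1})}(T[p(v_{k-1}), v_{k-1}]_h) > 0$, and from the induction hypothesis $e_P(c(v_{k-1})) > 0$, so we are in the nontrivial branch of the definition \eqref{eq:size-biased-def} of $\hP_{c(v_{k-1})}$. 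Writing $c(v_{k-1}) = (t, t')$ and $\tilde t := T[p(v_{k-1}), v_{k-1}]_h$, positivity of $\hP_{t,t'}(\tilde t)$ forces $P(\tilde t \oplus t') E_h(t,t')(\tilde t \oplus t') > 0$, in particular $P(\tilde t \oplus t') > 0$. The key geometric observation, which I would justify using Lemma~\ref{lem:t1-oplus-t'--t2-oplus-t'} (and the discussion of $\oplus$ and of $G[u,v]_h$), is that $[T, v_{k-1}]_h = \tilde t \oplus t'$: indeed $\tilde t = T[p(v_{k-1}), v_{k-1}]_h$ is the rooted neighborhood at $v_{k-1}$ with the edge to its parent removed, while $t' = T[v_{k-1}, p(v_{k-1})]_{h-1}$ describes the edge-mark and the parent's side truncated to depth $h-1$, and gluing these via $\oplus$ reconstructs $(T, v_{k-1})_h$. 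Hence $P([T,v_{k-1}]_h) = P(\tilde t \oplus t') > 0$.

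Finally, $e_P(c(v_k)) > 0$ follows just as in the base case: since $v_k \sim_T v_{k-1}$ and $c(v_k) = \etype_T^h(v_{k-1}, v_k)$ (in the convention of \eqref{eq:depth-h-type}), the rooted tree $(T, v_{k-1})_h$ contributes $E_h(c(v_k))([T,v_{k-1}]_h) \ge 1$, so $e_P(c(v_k)) = \evwrt{P}{E_h(c(v_k))(T,o)} \ge P([T, v_{k-1}]_h) > 0$ using the just-established positivity of $P([T,v_{k-1}]_h)$. This closes the induction.

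I expect the main obstacle to be the bookkeeping identity $[T, v_{k-1}]_h = \tilde t \oplus t'$ with $\tilde t = T[p(v_{k-1}), v_{k-1}]_h$ and $t' = T[v_{k-1}, p(v_{k-1})]_{h-1}$: one must check the depth truncations interact correctly (note $\tilde t$ is at depth $h$ while the second argument of $\oplus$ is at depth $h-1$, and $t \oplus t' \in \mTb_*^{\max\{k,l+1\}}$), that the mark components are placed on the right sides of the reconstructed edge, and that the support condition $\tilde t_{h-1} = t$ built into \eqref{eq:size-biased-def} is automatically satisfied here because $T[p(v_{k-1}),v_{k-1}]_h$ truncated to depth $h-1$ is $T[p(v_{k-1}),v_{k-1}]_{h-1} = t$. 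Everything else is a routine unwinding of definitions plus the elementary bound $e_P(c) \ge P(s)\, E_h(c)(s)$ for any single $s$ in the support of $P$.
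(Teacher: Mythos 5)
The approach—induction along the path, extracting positivity of $P([T,v_i]_h)$ from $\gamma(v_{i}) > 0$ via the formula for $\hP$, and the identification $[T,v_{k-1}]_h = \tilde t \oplus t'$—matches the paper's proof. However, there is a genuine gap in the step that derives $e_P(c(v_i)) > 0$, in both the base case and the final step of the induction.

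You claim that the rooted tree $[T,v_0]_h$ witnesses $E_h(c(v_1))([T,v_0]_h) \ge 1$, and likewise that $c(v_k) = \etype^h_T(v_{k-1},v_k)$, so that $E_h(c(v_k))([T,v_{k-1}]_h) \ge 1$. Neither is correct as stated. Unwinding the paper's conventions: $c(v) = (T[\parent(v),v]_{h-1},\, T[v,\parent(v)]_{h-1})$, whereas $\etype^h_T(\parent(v),v) = (T[v,\parent(v)]_{h-1},\, T[\parent(v),v]_{h-1})$, which is $\bar c(v)$, the components in the opposite order. Since $E_h(g,g')(T,\parent(v))$ counts neighbors $w\sim \parent(v)$ with $\etype^h_T(\parent(v),w)=(g,g')$, it is $v$ as a child of its parent that forces $E_h(\bar c(v))([T,\parent(v)]_h)\ge 1$, not $E_h(c(v))$. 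In fact, if the two edge marks on the edge $(\parent(v),v)$ differ, one has $c(v)\ne\bar c(v)$ and $E_h(c(v))([T,\parent(v)]_h)$ can easily be zero, so the bound $e_P(c(v)) \ge P([T,\parent(v)]_h)\cdot 1$ is simply false in general. What you have actually shown is $e_P(\bar c(v)) \ge P([T,\parent(v)]_h) > 0$, and to close the argument you must then invoke the admissibility of $P$, namely $e_P(g,g') = e_P(g',g)$, to conclude $e_P(c(v)) = e_P(\bar c(v)) > 0$. This is exactly what the paper's proof does. The fact that your proof never invokes admissibility is the tell: without it the desired conclusion about $e_P(c(v))$ would not follow, and the lemma would be false. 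The rest of your argument (extracting $P(\tilde t \oplus t') > 0$ from $\gamma(v_{k-1})>0$ once $e_P(c(v_{k-1}))>0$ is known, and the identity $\tilde t \oplus t' = [T,v_{k-1}]_h$) is sound; one small quibble is that Lemma~\ref{lem:t1-oplus-t'--t2-oplus-t'} is an injectivity statement and is not really what justifies that identity—it is the structural identities \eqref{eq:G-w-v-h--odot-otimes}--\eqref{eq:G-v-h--odot-otimes} type of reasoning, but the paper treats it as routine, so that is minor.
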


\begin{proof}
  We prove this by induction on $k$. First, consider $k=1$. In this case, we have $\gamma(v_0) = \gamma(o) = P([T,o]_h)$,
  and so the hypothesis that $\gamma(v_0) > 0$ implies that $P([T,v_0]_h) = P([T,o]_h) > 0$, which establishes the first claim. Using this, we get
  \begin{align*}
    e_P(\bar{c}(v)) &= e_P(T[v,o]_{h-1}, T[o,v]_{h-1}) \geq P([T,o]_h) E_h(T[v,o]_{h-1}, T[o,v]_{h-1})([T,o]_h) \\
    &\geq P([T,o]_h) > 0,
  \end{align*}
  where the last equality uses the fact that, by definition, $E_h(T[v,o]_{h-1},
  T[o,v]_{h-1})([T,o]_h) \geq 1$. But, since $P$ is admissible, we have
  $e_P(c(v)) = e_P(\bar{c}(v)) > 0$ which completes the proof for $k=1$.

  Now, for $k > 1$, we have $e_P(c(v_{k-1})) > 0$
  from the  induction hypothesis. This
  implies that, with $t:= T[v_{k-2}, v_{k-1}]_{h-1}$, $t' := T[v_{k-1},
  v_{k-2}]_{h-1}$ and $\tilde{t} := T[v_{k-2}, v_{k-1}]_{h}$, we have
  \begin{align*}
    0 &< \gamma(v_{k-1}) = \hP_{t,t'}(\tilde{t}) \\
    & = \frac{P(\tilde{t} \oplus t')E_h(t,t')(\tilde{t} \oplus t')}{e_P(t,t')}.
  \end{align*}
  In particular, we have $P(\tilde{t} \oplus t') > 0$. But $\tilde{t} \oplus t'
  = [T,v_{k-1}]_h$. This together with the induction hypothesis implies that
  $P([T,v_i]_h) > 0$ for $0 \leq i \leq k-1$. Moreover, we have 
  \begin{equation*}
    e_P(\bar{c}(v)) \geq P([T,v_{k-1}]_h) E_h(\bar{c}(v))([T,v_{k-1}]_h) \geq P([T,v_{k-1}]_h)> 0,
  \end{equation*}
  where the last equality follows from the fact that, by definition, we have
  \begin{equation*}
   E_h(\bar{c}(v))([T,v_{k-1}]_h) \geq 1. 
  \end{equation*}
  The proof is complete by noting that,
  since $P$ is admissible, we have $e_P(\bar{c}(v)) = e_P(c(v))$. 
\end{proof}

  \begin{cor}
    \label{cor:ugwt-gamma>0-ep>0}
  Let $\mu = \ugwt_h(P)$ with $h \geq 1$ and let  $P \in \mP(\mTb_*^h)$, i.e. $P$ admissible.
  Then, for $\mu$--almost all $[T,o] \in \mTb_*$, we have $\gamma_{[T,o]}(v) > 0$ for
  all $v \in V(T)$ and $e_P(c(w)) > 0$ for
  all $w \in V(T) \setminus \{o\}$. 
\end{cor}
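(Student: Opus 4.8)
The plan is to reduce the whole statement, via Lemma~\ref{lem:UGWT-gamma-pos-ep-pos}, to the single claim that $\mu$-almost surely $\gamma_{[T,o]}(v) > 0$ for every $v \in V(T)$. Indeed, suppose this claim holds and fix $[T,o]$ in the corresponding full-measure set. Then $\gamma_{[T,o]}(o) = P([T,o]_h) > 0$ already, and for any $w \in V(T) \setminus \{o\}$ with $\dist_T(w,o) = k \ge 1$, writing $o = v_0, v_1, \dots, v_k = w$ for the path from the root to $w$, we have $\gamma_{[T,o]}(v_i) > 0$ for all $0 \le i \le k$, in particular for $0 \le i \le k-1$, so Lemma~\ref{lem:UGWT-gamma-pos-ep-pos} applied with $v = w$ yields $e_P(c(v_i)) > 0$ for all $1 \le i \le k$; taking $i = k$ gives $e_P(c(w)) > 0$. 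Since $w$ is arbitrary, this proves the corollary.

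To prove the claim I would use the explicit sequential construction of $\ugwt_h(P)$ from Section~\ref{sec:markovian-ugwt}. The first point is that $\mTb_*^h$ is a countable set: the base case $\mTb_*^0$ is essentially $\vermark$, hence finite, and by induction on $h$, since $\edgemark$ and $\vermark$ are finite, a member of $\mTb_*^h$ is determined by its root mark together with a finite multiset whose elements lie in the countable set $\edgemark \times \edgemark \times \mTb_*^{h-1}$, and finite multisets over a countable set form a countable set. Consequently $P$, and each size-biased measure $\hP_{t,t'} \in \mP(\edgemark \times \mTb_*^h)$, is a discrete probability measure. In the construction one first samples $[T,o]_h$ according to $P$; since $\{x \in \mTb_*^h : P(x) = 0\}$ is a $P$-null set, with probability one $P([T,o]_h) > 0$, i.e.\ $\gamma_{[T,o]}(o) > 0$. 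Whenever the construction processes a non-root vertex $v$ with parent $w$, the pair $c(v) = (T[w,v]_{h-1}, T[v,w]_{h-1}) =: (t,t')$ has already been determined by the data seen so far, and the message $\tilde t := T[w,v]_h$ is sampled from $\hP_{t,t'}$ (and when $e_P(t,t') = 0$, from the Dirac mass at $t$, so that $\tilde t = t$ surely). In every case, conditionally on the data seen so far, the realized value lies in the support of the discrete distribution it was drawn from with conditional probability one, so $\gamma_{[T,o]}(v) = \hP_{t,t'}(\tilde t) > 0$ almost surely.

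The only genuine obstacle is passing from these conditional ``positive with probability one'' statements to the joint statement that \emph{all} the $\gamma$'s are positive simultaneously, given that $V(T)$ is itself random. I would handle this by realizing the construction on a canonical probability space carrying an independent family of samples indexed by a fixed countable set of potential vertices (for instance finite strings over $\nats$ used to address successive offspring), so that for each potential vertex $u$ the event that the sample attached to $u$ misses the support of the distribution governing it has conditional probability zero; a countable union of null events is null, hence outside a $\mu$-null set every realized vertex $v$ has $\gamma_{[T,o]}(v) > 0$. Together with the root case this establishes the claim, and the corollary follows as in the first paragraph.
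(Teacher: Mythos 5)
Your proof is correct and follows essentially the same route as the paper: reduce the corollary via Lemma~\ref{lem:UGWT-gamma-pos-ep-pos} to showing $\gamma_{[T,o]}(v) > 0$ for all $v$ $\mu$-a.s., observe that each $\gamma_{[T,o]}(v)$ is the (conditional) probability of the value actually sampled at $v$ in the sequential construction, and take a countable union of null events. The one point you treat more carefully than the paper is the last step: the paper simply says ``since there are countably many vertices,'' whereas you correctly note that $V(T)$ is random and handle this by indexing the independent draws by a fixed countable set of potential vertices, which is indeed what is implicitly needed to make the null-union argument rigorous.
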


\begin{proof}
  First recall that $\gamma_{[T,o]}(o) = P([T,o]_h)$ is the probability of
  sampling $[T,o]_h$ in the process of generating $[T,o]$ with law $\mu =
  \ugwt_h(P)$. Hence, $\mu$--almost surely, we have $\gamma_{[T,o]}(o) > 0$.
  Moreover, for a vertex $v \in V(T) \setminus \{o\}$, $\gamma_{[T,o]}(v) =
  \hP_{c(v)}(T[\parent(v), v]_h)$ is the probability of sampling $T[\parent(v),
  v]_h$ given $T[\parent(v), v]_{h-1}$ and $T[v, \parent(v)]_{h-1}$ in the process
  of generating $[T,o]$ with law $\mu$. Since there are countably many vertices
  in $[T,o] \in \mTb_*$, $\mu$--almost surely we have
  $\gamma_{[T,o]}(v) > 0$ for all $v \in V(T)$.

  Motivated by the above discussion, if $[T,o] \in \mTb_*$ is outside a measure zero set with respect to
  $\mu$, we have $\gamma_{[T,o]}(v) > 0$ for all $v \in V(T)$. Thus,
  Lemma~\ref{lem:UGWT-gamma-pos-ep-pos} above implies that $e_P(c(w)) > 0$ for
  all $w \in V(T) \setminus \{o\}$ and completes the proof.
\end{proof}

%%% Local Variables:
%%% mode: latex
%%% TeX-master: "Note-41_BC-ent-arxiv"
%%% End:

\section{A Convergence Property of Unimodular Galton--Watson Trees with respect to the Neighborhood Distribution}
\label{sec:UGWT-convg}

In this section we give the proof of Lemma~\ref{lem:Pn-conv-P--ugwt-Pn-conv-ugwt-P}.

\begin{proof}[Proof of Lemma~\ref{lem:Pn-conv-P--ugwt-Pn-conv-ugwt-P}]
Let $\mun := \ugwt_h(\Pn)$ and $\mu:= \ugwt_h(P)$.
  We claim  that for any integer $l \in \nats$ and $[\hT, \ho] \in
  \mTb_*^l$, we have
  \begin{equation}
    \label{eq:mun-ATo--mu-Ato}
    \lim_{n \rightarrow \infty} \mun(A_{[\hT, \ho]}) = \mu(A_{[\hT, \ho]}),
  \end{equation}
  where
  \begin{equation*}
    A_{[\hT, \ho]} := \{[T,o] \in \mTb_*: [T,o]_l = [\hT, \ho]\}. 
  \end{equation*}
  Before proving our claim, we show why this implies $\mun \Rightarrow \mu$.
  To do this, we take a bounded and uniformly continuous function $f: \mTb_*
  \rightarrow \reals$ and show that $\int f d \mun \rightarrow \int f d \mu$.
  Fix $\epsilon > 0$. Due to the local topology on $\mTb_*$, there is $l \in
  \nats$ such that for all $[\hT, \ho] \in \mTb_*^l$ and $[T,o] \in A_{[\hT,
    \ho]}$, we have $d_*([T,o], [\hT, \ho]) < \epsilon$. Recall that $d_*$
  denotes the local metric on $\mTb_*$. Since $f$ is uniformly continuous, this
  implies that $|f([T,o]) - f([\hT, \ho])| < \eta(\epsilon)$ where
  $\eta(\epsilon) \rightarrow 0$ as $\epsilon \rightarrow 0$. Now, fix a finite
  collection $\mS$ of marked rooted trees $[\hT, \ho] \in \mTb_*^l$ such that
  \begin{equation*}
    \sum_{[\hT, \ho] \in \mS} \mu(A_{[\hT, \ho]}) > 1 - \epsilon.
  \end{equation*}
  Then, \eqref{eq:mun-ATo--mu-Ato} implies that, for $n$ large enough, we have
  \begin{equation*}
    \sum_{[\hT, \ho] \in \mS} \mun(A_{[\hT, \ho]}) > 1 - 2\epsilon.
  \end{equation*}
  This implies that
  \begin{equation*}
    \left|\int f d \mun - \int f d \mu\right| \leq 2 \eta(\epsilon) + 3\epsilon \snorm{f}_\infty + \sum_{[\hT, \ho] \in \mS} |f([\hT, \ho])| | \mun(A_{[\hT, \ho]}) - \mu(A_{[\hT, \ho]}) |.
  \end{equation*}
  % to see calculations see the note
  % 2018-10-11_PN_lwc-ATo-integral-diff.pdf
  By first sending $n$ to infinity and then $\epsilon$ to zero, we get our desired
  result.

  We now get back to proving our claim in \eqref{eq:mun-ATo--mu-Ato}.
First, observe that, by the definition of
$\ugwt_h(P)$, we have
\begin{equation}
  \label{eq:mu-A-C-prod-gamma}
  \mu(A_{[\hT, \ho]}) = C \prod_{v \in B} \gamma(v),
\end{equation}
where $C$ is a constant which only depends on $[\hT, \ho]$,
and $B := \{ v \in
V(\hT): \dist_{\hT}(v,\ho) \leq (l-h)_+\}$ where $(l-h)_+ := \max\{l-h, 0\}$. 
Here, we have employed the notation $\gamma(v) =
  \gamma_{[\hat{T}, \hat{o}]}(v)$ from \eqref{eq:gamma-def} in Appendix~\ref{sec:UGWT-some-props}.
On
the other hand, if we define $\gamman$ by replacing $P$ with $\Pn$ and $\hP_{c(v)}$
with $\hP^{(n)}_{c(v)}$ in the definition of $\gamma$, we have
\begin{equation}
  \label{eq:mun-A-C-prod-gamma}
  \mun(A_{[\hT, \ho]}) = C \prod_{v \in B} \gamman(v).
\end{equation}
Note that, as $C$ only depends on  $[\hT, \ho]$, the constants on
\eqref{eq:mu-A-C-prod-gamma} and \eqref{eq:mun-A-C-prod-gamma} are the same. 
With this, we show  \eqref{eq:mun-ATo--mu-Ato} by considering two cases.

\underline{Case 1, $\mu(A_{[\hT, \ho]}) > 0$:} Using
\eqref{eq:mu-A-C-prod-gamma}, this means that for all $v \in
B$, we have $\gamma(v) > 0$. In particular, $P([\hT, \ho]_h)>  0$ and 
Lemma~\ref{lem:UGWT-gamma-pos-ep-pos} above
% (note that $\gamma(v)>0$ for all $v \in B$, hence the conditions of this lemma
% hold)
implies that 
for all $v \in B$, $v \neq \ho$,  we have $e_P(c(v)) > 0$. Hence, for $v \in
B$, $v\neq \ho$, we have
\begin{align*}
  0 < \gamma(v) = \hP_{c(v)} (\hT[\parent(v), v]_h) = \frac{P([\hT,v]_h) E_h(c(v))([\hT, v]_h)}{e_P(c(v))}.
\end{align*}
Consequently, we have $P([\hT,v]_h) > 0$. As a result, for $n$ large enough, we
have $\Pn([\hT, v]_h) > 0$. On the other hand, since $e_{\Pn}(c(v)) \rightarrow
e_P(c(v))$, for $n$ large enough, we have $e_{\Pn}(c(v)) > 0$ for all $v \in B$,
$v \neq \ho$. Therefore, for $v \in B$, $v \neq \ho$ and $n$ large enough, we
have
\begin{equation}
  \label{eq:gamman-v-->gamma-v-B}
  \gamman(v) = \frac{\Pn([\hT, v]_h) E_h(c(v)) ([\hT, v]_h)}{e_{\Pn}(c(v))} \rightarrow  \frac{P([\hT, v]_h) E_h(c(v)) ([\hT, v]_h)}{e_{P}(c(v))} = \gamma(v).
\end{equation}
Moreover,
\begin{equation*}
  \gamman(\ho) = \Pn([\hT, \ho]_h) \rightarrow P([\hT, \ho]_h) = \gamma(\ho).
\end{equation*}
Thus, together with~\eqref{eq:gamman-v-->gamma-v-B}, and comparing with
\eqref{eq:mu-A-C-prod-gamma} and~\eqref{eq:mun-A-C-prod-gamma}, we realize that
$\mun(A_{[\hT, \ho]}) \rightarrow \mu(A_{[\hT, \ho]})$.

\underline{Case 2, $\mu(A_{[\hT, \ho]}) = 0$:} Using
\eqref{eq:mu-A-C-prod-gamma}, there is at least one node $v \in B$ such that
$\gamma(v) = 0$. If $\gamma(\ho) = P([\hT, \ho]_h)= 0$, we have $\Pn([\hT, \ho]_h)
\rightarrow P([\hT, \ho]_h) = 0$. Hence, $\mun(A_{[\hT, \ho]}) \rightarrow 0$
and we are done. Otherwise, 
let $v \in B$, $v \neq \ho$, be  a node with minimal depth such that $\gamma(v) = 0$,
i.e.\ if $1 \leq k =
\dist_{\hT}(v, \ho)$ and $\ho = v_0,
v_1, \dots, v_k = v$ is the path connecting $v$ to the root, we have
$\gamma(v_i) > 0$ for $0 \leq i \leq k-1$ and $\gamma(v_k) = 0$. Using
Lemma~\ref{lem:UGWT-gamma-pos-ep-pos}, we conclude that $e_P(c(v_k)) > 0$ and
thus
\begin{align*}
  0 = \gamma(v_k) &= \frac{P([\hT, v_k]_h) E_h(c(v_k))([\hT, v_k]_h)}{e_P(c(v_k))} \\
  & \geq \frac{P([\hT, v_k]_h)}{e_P(c(v_k))},
\end{align*}
where the last line uses the fact that $E_h(c(v_k))([\hT, v_k]_h)
\geq 1$. This implies that $P([\hT, v_k]_h) = 0$. Furthermore, since $\Pn([\hT,
v_k]_h) \rightarrow P([\hT, v_k]_h) = 0$ and $e_{\Pn}(c(v_k)) \rightarrow
e_P(c(v)) > 0$, we realize that for $n$ large enough,
\begin{align*}
  \gamman(v_k) &= \frac{\Pn([\hT, v_k]_h) E_h(c(v_k))([\hT, v_k]_h)}{e_{\Pn}(c(v_k))} \rightarrow 0.
\end{align*}
Consequently, using~\eqref{eq:mun-A-C-prod-gamma}, we have $\mun(A_{[\hT, \ho]}) \rightarrow 0= \mu(A_{[\hT, \ho]})$
which completes the proof.

\section{Unimodularity of $\ugwt_h(P)$}
\label{sec:UGWT-unimod}

%{\color{blue}
%The following is the proof of the first part of Proposition~1 in the document of
%2018-11-25, i.e.\ a standalone proof of the fact that for $h \in \nats$ and $P
%\in \mP(\mTb_*^h)$ admissible, $\ugwt_h(P)$ is unimodular.

%The proof is along the lines of the proof Lemma~3.1 in the BC paper, but with
%some differences. Specifically, I found that in part of their proof, they claim
%without enough justification, therefore I tried to clarify. Also, calculation
%details are different. For example, instead of labeling the children of the root
%as $1, \dots, \deg_T(o)$ in the BC paper, here we pick a node chosen uniformly at random among
%the children of the root. This in particular leads to a viewpoint connected to the ``edge perspective'' distribution
%$\pi_P$ and 
%Lemma 2 in the document of 2019-03-11 (this document was sent a few days ago,
%and its goal was to prove that for $P \in \mP(\mTb_*^h)$ with
%$\evwrt{P}{\deg_T(o)} > 0$, when $\evwrt{P}{\deg_T(o) \log \deg_T(o)} <
%\infty$, then $J_h(P) > -\infty$).
%Therefore, I believe that the  proof here is more intuitive and easy to
%understand compared to that of Lemma~3.1 in the BC paper.

%The proof uses two lemmas: Lemma 2 in the document of 2019-03-11, and Lemma  13
%in the 2018-11-25 version. Both of these lemmas have short proofs, and do not
%depend on the earlier results in the document.
%(Lemma 13 in the 2018-11-25 version only depends on Lemma 11 therein, which itself has a
%short proof and does not depend on any other result).

%}
%\vspace{1cm}
%\hrule
%\vspace{1cm}

We give a proof of Lemma \ref{lem:ugwt-P-is-unimodular}.
Let $h \ge 1$ and $P \in \mP(\mTb_*^h)$ be an
admissible probability distribution.
Let $\mu = \ugwt_h(P)$. 
In order to show that $\mu$ is unimodular, we need to show that for any Borel
function $f: \mTb_{**} \rightarrow \reals_+$, we have 
\begin{equation*}
    \evwrt{\mu}{\sum_{v \sim_T o} f(T, o, v)} =   \evwrt{\mu}{\sum_{v \sim_T o} f(T, v, o)}.
\end{equation*}
Without loss of generality, we may assume that $\deg(\mu) > 0$, since
otherwise nothing remains to be proved. 
We have
\begin{equation}
  \label{eq:ugwt-unim_1}
  \begin{aligned}
  \evwrt{\mu}{\sum_{v \sim_T o} f(T, o, v)} &= \sum_{g \in \mTb_*^h} P(g) \evwrt{\mu}{\sum_{v \sim_T o} f(T, o, v)\bigg|(T,o)_h \equiv g} \\
  &= \sum_{g \in \mTb_*^h: \deg(g) > 0} \deg(g) P(g) \evwrt{\mu}{\frac{1}{\deg(g)}\sum_{v \sim_T o} f(T, o, v)\bigg|(T,o)_h \equiv g}, \\
  \end{aligned}
\end{equation}
where $\deg(g)$ denotes the degree at the root in $g$. Define the
probability distribution $\tP \in \mP(\mTb_*^h)$ such that
\begin{equation*}
  \tP([T,o]) := \frac{P([T,o]) \deg_T(o)}{d},
\end{equation*}
where $d:= \evwrt{P}{\deg_T(o)}$ is the expected degree at the root in $P$.
Moreover, define the probability measure $\tilde{\mu} \in \mP(\mTb_*)$ in a way
identical to $\ugwt_h(P)$, with the exception that $(T,o)_h$ in $\tilde{\mu}$ is
sampled from $\tP$ instead of $P$, and we use the distributions $\hP_{t,t'}$ to
extend $(T,o)_h$ exactly as in $\ugwt_h(P)$.
Since, by definition, conditioned on $(T,o)_h$, the distribution of $(T,o)$ is
the same in $\mu$ and $\tilde{\mu}$, we may write~\eqref{eq:ugwt-unim_1} as
follows
\begin{equation*}
  \evwrt{\mu}{\sum_{v \sim_T o} f(T, o, v)}  = d \sum_{g \in \mTb_*^h} \tP(g) \evwrt{\tilde{\mu}}{\frac{1}{\deg(g)}\sum_{v \sim_T o} f(T, o, v)\bigg|(T,o)_h \equiv g}.
\end{equation*}
With $\hat{v}$ being a node chosen uniformly at random among the nodes $v\sim_T
o$ adjacent to the root in $[T,o] \sim \tilde{\mu}$, we may rewrite the above expression as follows,
\begin{equation}
  \label{eq:ugwt-unim_2}
  \evwrt{\mu}{\sum_{v \sim_T o} f(T, o, v)}  = d \evwrt{\tilde{\mu}}{f(T,o,\hat{v})}.
\end{equation}
Note that, $\tilde{\mu}$--almost surely, $\deg_T(o) > 0$ and $\hat{v}$ is well
defined. 
Now, we find the distribution of $[T,o,\hat{v}] \in \mTb_{**}$ when $[T,o] \sim
\tilde{\mu}$ and $\hat{v}$ is chosen uniformly at random among the neighbors of
the root, as was defined above. 

%For $t, t' \in \edgemark \times \mTb_*^{h-1}$, define $H_{t,t'} \in \mTb_{**}$
%such that $H_{t,t'}[o', o] = t$ and $H_{t,t'}[o,o'] = t'$.

In order to do so,  we define the
probability measure $\nu \in \mP(\mTb_{**})$ to be the law of $[H, o, o']$ where
$H$ is a connected random marked tree with two distinguished adjacent vertices $o$ and
$o'$, defined as follows. We first sample $t,t'$
from the distribution $\pi_P(t,t') = e_P(t,t') / d$, and construct $H$ such that
$H(o',o) = H(o',o)_{h-1} \equiv t$ and $H(o,o') = H(o,o')_{h-1} \equiv t'$.
Then, similar to the construction of 
$\ugwt_h(P)$, we extend $H(o',o)$ and $H(o,o')$ inductively to construct $H$.
More precisely, first we sample $\tilde{t}$ from $\hP_{t,t'}(.)$ and use it to add
at most one layer to $H(o',o)_{h-1}$ so that $H(o',o)_h \equiv \tilde{t}$.
Similarly, we sample $\tilde{t}'$ from $\hP_{t',t}(.)$ and use it to add at most
one layer to $H(o,o')_{h-1}$ so that $H(o,o')_h \equiv \tilde{t}'$. Next,
independently for $v \sim_H o, v \neq o'$, we sample $\tilde{t}$ from
$\hP_{H[o,v]_{h-1}, H[v,o]_{h-1}}(.)$ and use it to add at most one layer to
$H(o,v)_{h-1}$ such that $H(o,v)_h \equiv \tilde{t}$. We apply the same
procedure to $w \sim_H o', w \neq o$. We continue this procedure inductively and
define $\nu$ to be the law of $[H, o, o']$.

We now claim that if $[T,o]$ has distribution $\tilde{\mu}$ and $\hat{\nu}$ is
chosen uniformly at random among the neighbors of the root in $T$ as above, then $[T,o,
\hat{v}]$ has distribution $\nu$. Before proving this, we show how it
completes the proof of the unimodularity of $\mu$. Note that, with this claim proved, 
%motivated from~\eqref{eq:ugwt-unim_2}, 
%we have
\eqref{eq:ugwt-unim_2} becomes
\begin{equation*}
    \evwrt{\mu}{\sum_{v \sim_T o} f(T, o, v)} = d \evwrt{\nu}{f(H, o, o')}.
  \end{equation*}
  Similarly, we have
\begin{equation*}
  \evwrt{\mu}{\sum_{v \sim_T o} f(T, v, o)} = d \evwrt{\nu}{f(H, o', o)}.
\end{equation*}
However, the admissibility of $P$ implies that $\pi_P(t,t') = \pi_P(t', t)$ for
all $t, t' \in \edgemark \times \mTb_*^{h-1}$.
Therefore, $\nu$ is symmetric in the sense that $[H, o, o']$ and $[H, o', o]$
have the same distribution. Therefore, we have $    \evwrt{\mu}{\sum_{v \sim_T
    o} f(T, o, v)} =     \evwrt{\mu}{\sum_{v \sim_T o} f(T, v, o)}$, which
is precisely what we needed to show. 

Therefore, it remains to prove that with $[T,o] \sim \tilde{\mu}$ and $\hat{v}$
defined as above, $[T,o,\hat{v}] \sim \nu$. First, we claim that since $[T,o]_h\sim \tP$,
%from {\color{blue} [Lemma 2 in the document of 2019-03-11, with
%  label \verb+lem:pi-p_tP+]},
we have $(T[\hat{v},o]_{h-1}, T[o,\hat{v}]_{h-1})$
has distribution $\pi_P$.
In order to show this, note that due to the definition of $\tP$ above, for $[T,o] \sim \tP$, we have
$\deg_T(o) \geq 1$ almost surely.  Let $Q$ be the distribution of   $(T[\hat{v} ,o]_{h-1}, T[o,\hat{v}]_{h-1})$ with $[T,o]$ and
$\hat{v}$ as stated. Then, for $t, t' \in \edgemark \times \mTb_*^{h-1}$, we have
\begin{align*}
  Q(t, t') &= \sum_{[T,o] \in \mTb_*^h: \deg_T(o) \geq 1} \tP([T,o]) \frac{E_h(t,t') (T,o)}{\deg_T(o)} \\
           &= \sum_{[T,o] \in \mTb_*^h: \deg_T(o) \geq 1} \frac{P([T,o]) \deg_T(o)}{d} \frac{E_h(t,t') (T,o)}{\deg_T(o)} \\
  &= \frac{e_P(t,t')}{d} = \pi_P(t,t'),
\end{align*}
which completes the proof of our claim.
%\color{red}
%Note to Payam: Please include
%in this document a proof of the lemma mentioned in 
%the blue text.
%\color{black}
%\pres{brought the proof of the mentioned lemma inside the text.}
This, in particular, implies that, $\tilde{\mu}$--almost
surely, we have $\pi_P(T[\hat{v},o]_{h-1}, T[o,\hat{v}]_{h-1}) > 0$.  Moreover, we
claim that for $t, t' \in \mTb_*^{h-1}$ such that $\pi_P(t,t') > 0$ and $\tilde{t}
\in \mTb_*^h$ such that $\tilde{t}_{h-1} = t$, we have
\begin{equation}
  \label{eq:ugwt-unim_tildemu-Tvo}
  \prwrt{\tilde{\mu}}{T[\hat{v}, o]_h = \tilde{t} \, \big| \,  T[o,\hat{v}]_{h-1} = t', T[\hat{v}, o]_{h-1} = t} = \hP_{t,t'}(\tilde{t}). 
\end{equation}
In order to show this, first note that, as was mentioned above, we have
\begin{equation}
  \label{eq:ugwt-unim_3}
  \prwrt{\tilde{\mu}}{T[o,\hat{v}]_{h-1} = t', T[\hat{v}, o]_{h-1} = t} = \pi_P(t,t'). 
\end{equation}
On the other hand, we have
\begin{align*}
  \prwrt{\tilde{\mu}}{T[\hat{v}, o]_h = \tilde{t}, T[o,\hat{v}]_{h-1} = t'} &\stackrel{(a)}{=} \prwrt{\tilde{\mu}}{[T,o]_h = \tilde{t} \oplus t'} \frac{1}{\deg(\tilde{t} \oplus t')} E_{h+1, h}(\tilde{t}, t')(\tilde{t} \oplus t') \\
                                                                            &= \frac{\tP(\tilde{t} \oplus t')}{\deg(\tilde{t} \oplus t')} E_{h+1, h}(\tilde{t}, t')(\tilde{t} \oplus t') \\
  &= \frac{1}{d} P(\tilde{t} \oplus t') E_{h+1, h}(\tilde{t}, t')(\tilde{t} \oplus t'),
\end{align*}
where  $(a)$ is obtained by employing the assumption that $\hat{v}$ is chosen
uniformly at random among the neighbors of the root,  and the fact that  conditioned on $[T,o]_h = \tilde{t}
\oplus t'$, there are precisely $E_{h+1, h}(\tilde{t}, t')(\tilde{t} \oplus t')$
many $v \sim_T o $ such that $T[v, o]_h = \tilde{t}$ and $T[o,v]_{h-1} = t'$.
%{\color{blue} [comment: the notation $E_{h+1, h}(\tilde{t}, t')(\tilde{t} \oplus
%  t')$ was defined at the beginning of Appendix B in the 2018-11-25 version.
%  Recall that with $[\tilde{T}, \tilde{o}] = \tilde{t} \oplus t'$, $E_{h+1,
 %   h}(\tilde{t}, t')(\tilde{t} \oplus t')$ denotes the number of $v
%  \sim_{\tilde{T}} \tilde{o}$ such that $\tilde{T}(v, \tilde{o})_h \equiv
%  \tilde{t} $ and $\tilde{T}(\tilde{o}, v)_{h-1} \equiv t'$.]}
Here, $\deg(\tilde{t} \oplus t')$ denotes the degree at the root in $\tilde{t}
\oplus t'$. Using this together with \eqref{eq:ugwt-unim_3}, we get
\begin{equation}
  \label{eq:ugwt-unim_4}
  \begin{aligned}
    \prwrt{\tilde{\mu}}{T[\hat{v}, o]_h = \tilde{t} \, \big| \,  T[o,\hat{v}]_{h-1} = t', T[\hat{v}, o]_{h-1} = t} 
    &= \frac{P(\tilde{t} \oplus t') E_{h+1, h}(\tilde{t}, t')(\tilde{t} \oplus t')}{d \pi_P(t,t')} \\
    &= \frac{P(\tilde{t} \oplus t') E_{h+1, h}(\tilde{t}, t')(\tilde{t} \oplus t')}{e_P(t,t')}.
    \end{aligned}
  \end{equation}
Now, if $\tilde{o}$ denotes the root in $\tilde{t} \oplus t'$, we have 
\begin{align*}
  E_{h+1, h}(\tilde{t}, t')(\tilde{t} \oplus t') & \stackrel{(a)}{=} |\{ v \sim_{\tilde{t} \oplus t'} \tilde{o}: (\tilde{t} \oplus t')(\tilde{o},v)_{h-1} \equiv t', \xi_{\tilde{t} \oplus t'}(v,\tilde{o}) = \tilde{t}[m]\}| \\
&\stackrel{(b)}{=} |\{ v \sim_{\tilde{t} \oplus t'} \tilde{o}: (\tilde{t} \oplus t')(\tilde{o},v)_{h-1} \equiv t', \xi_{\tilde{t} \oplus t'}(v,\tilde{o}) =t[m]\}| \\
&\stackrel{(c)}{=} E_h(t,t')(\tilde{t} \oplus t'),
\end{align*}
where in $(a)$ we have used Lemma~\ref{lem:Eh-1+Nh}, 
  %\color{red}
  %Note to Payam:
   %Please identify which lemma you are referring to here and (I think it is in the first appendix here) and put in the appropriate reference. If the lemma you had in mind is not already in this document, please provide it in the document.
   %\color{black}
   %\pres{added appropriate reference}
  $(b)$ is implied by the fact that $t[m] =  \tilde{t}[m]$,
and in $(c)$ we have again used Lemma~\ref{lem:Eh-1+Nh}.
Substituting this into~\eqref{eq:ugwt-unim_4} and comparing with the definition
of $\hP_{t,t'}$, we arrive at~\eqref{eq:ugwt-unim_tildemu-Tvo}.

So far, we have shown that the distribution of $(T[o,\hat{v}]_{h-1}, T[\hat{v},
o]_h)$ is the same as that of $(H[o,o']_{h-1}, H[o',o]_h)$ when $[H,o,o'] \sim
\nu$. Observe that $T[o,\hat{v}]_{h-1}$ and $T[\hat{v},o]_h$ together form
$[T,o]_h$. Moreover, by definition,  conditioned on $(T,o)_h$, $(T,o)$ is
constructed using the $\hP_{t,t'}(.)$ distributions,  in a
way similar to the process of defining $(H, o, o')$ above. Consequently, the
distribution of $[T,o, \hat{v}]$ is identical to $\nu$. As was discussed above,
this completes the proof of the unimodularity of $\ugwt_h(P)$. 
\end{proof}

\section{Proof of Proposition~\ref{prop:ugwthP-markov}}
\label{sec:prop-markov-galt}

In this section, we prove Proposition~\ref{prop:ugwthP-markov}.

\begin{proof}[Proof of Proposition~\ref{prop:ugwthP-markov}]

Let $\mu := \ugwt_h(P)$.
 Using induction, it suffices to show \eqref{eq:prop-consistency} only for $k = h+1$, i.e.\ with $Q := \mu_{h+1}$, we claim that
  \begin{equation}
    \label{eq:consistency-h-h+1}
    \ugwt_{h+1}(Q) = \mu.
  \end{equation}
  Recall from Section~\ref{sec:markovian-ugwt} that $\mu$ is the law of $[T,o]$
  where  $(T, o)_h$ is sampled from $P$ and
  the distributions $(\hP_{t,t'}: t,t' \in \edgemark \times \mTb_*^{h-1})$ are
  used to extend depth $h-1$ rooted trees to depth $h$ rooted trees in a
  recursive fashion. However, this process is equivalent to the following:
  first, we sample $(T, o)_{h+1}$ using $Q$ and then recursively use
  $(\hP_{t,t'}: t,t' \in \edgemark \times \mTb_*^{h-1})$ to extend depth $h-1$
  trees to depth $h$ trees, starting from nodes at depth $2$. More precisely,
  for $v$ being an offspring of the root and $w$ being an offspring of $v$, we
  extend $T(v,w)_{h-1}$ to $T(v,w)_h$ using $\hP_{T[v,w]_{h-1}, T[w,v]_{h-1}}$.
  This is done independently for all nodes $w$ with depth $2$.
  Equivalently,  for each offspring $v$ of the root, $T(o,v)_h$ is extended to
  $T(o,v)_{h+1}$, independent from all other offspring $v'$ of the root.
  However, motivated by the above discussion, in order to extend $T(o,v)_h$, we
  need to know $T[v,w]_{h-1}$ and $T[w,v]_{h-1}$ for the offspring $w$ of $v$. But
  this is known if we are given $T(o,v)_h$ and $T(v,o)_h$ (in fact, it is easy
  to see that even knowing $T(o,v)_h$ and $T(v,o)_{h-2}$ is sufficient). In
  other words, the distribution of $T[o,v]_{h+1}$ is uniquely determined by
  knowing $T[o,v]_h$ and $T[v,o]_h$. Motivated by this, for $s, s' \in \edgemark
  \times  \mTb_*^h$ such that $e_Q(s,s') = \evwrt{\mu}{E_{h+1}(s,s')(T,o)} > 0$
  and $\tilde{s} \in \mTb_*^{h+1}$ such that $\tilde{s}_h = s$, define
  $\tP_{s,s'}(\tilde{s})$ to be the probability of $T(o,v)_{h+1} \equiv
  \tilde{s}$ given $T(o,v)_h \equiv s$ and $T(v,o)_h \equiv s'$. The
  unimodularity of $\mu$ implies that if $e_Q(s,s') = 0$ for some $s,s' \in
  \edgemark \times \mTb_*^h$, the probability under $\mu$ of observing a node
  $w$ with parent $v$ such that $T(v,w)_h \equiv s$ and $T(w,v)_h \equiv s'$ is
  zero; therefore, we may define $\tP_{s,s'}$ arbitrarily for such $s,s'$.
  Continuing this argument recursively for nodes at higher depths, we realize
  that $\mu$ is the law of $(T,o)$ where $(T,o)_{h+1}$ is sampled from $Q$ and
  then $(\tP_{s,s'}(.): s,s' \in \edgemark \times \mTb_*^h)$ is used to extend
  subtrees of depth $h$ to subtrees of depth $h+1$.
  %, similar to the way
  %$(\hP_{t,t'}(.): t, t' \in \edgemark \times \mTb_*^{h-1})$  are used to extend
  %subtrees of depth $h-1$ to subtrees of depth $h$.
  Comparing this with the
  construction of $\ugwt_{h+1}(Q)$, we realize that in order to show
  \eqref{eq:consistency-h-h+1}, it suffices to show that for every  $s, s' \in
  \edgemark \times \mTb_*^h$ with $e_Q(s,s') > 0$,  and for all $\tilde{s} \in \edgemark \times \mTb_*^{h+1}$, we have
  \begin{equation}
    \label{eq:consistency-hQ-tP}
    \hQ_{s,s'}(\tilde{s}) = \tP_{s,s'}(\tilde{s}),
  \end{equation}
  where $\hQ_{s,s'}$ is defined using \eqref{eq:size-biased-def} based on the distribution $Q$. More precisely, for $s,s' \in \edgemark \times \mTb_*^h$ such that $e_Q(s,s') > 0$, and $\tilde{s} \in \edgemark \times \mTb_*^{h+1}$, we have
  \begin{equation}
    \label{eq:hQ-expanded}
    \hQ_{s,s'}(\tilde{s}) = \one{\tilde{s}_h = s} \frac{Q(\tilde{s} \oplus s') E_{h+1}(s,s')(\tilde{s} \oplus s')}{e_Q(s,s')}.
  \end{equation}

  We now fix $s,s' \in \edgemark \times \mTb_*^h$ and show \eqref{eq:consistency-hQ-tP}. Without loss of generality, we may assume that $\tilde{s}_h = s$, since otherwise both  sides of \eqref{eq:consistency-hQ-tP} are zero. We claim that
  \begin{equation}
    \label{eq:tP-frac-ev}
    \tP_{s,s'}(\tilde{s}) = \frac{\evwrt{\mu}{E_{h+1, h+2}(s', \tilde{s})(T, o)}}{\evwrt{\mu}{E_{h+1}(s', s)(T, o)}}.
  \end{equation}
  To see this, note that
  \begin{align*}
    &\evwrt{\mu}{E_{h+1, h+2}(s', \tilde{s})(T, o)} = \sum_{r \in \mTb_*^{h+1}} Q(r) \evwrt{\mu}{\sum_{v \sim_T o} \one{T(v,o)_h \equiv s', T(o,v)_{h+1} \equiv \tilde{s}} \Bigg | (T, o)_{h+1} \equiv r} \\
    &\qquad= \sum_{r \in \mTb_*^{h+1}} Q(r) \evwrt{\mu}{\sum_{v \sim_T o} \one{T(v,o)_h \equiv s', T(o,v)_h \equiv s} \one{T(o,v)_{h+1} \equiv \tilde{s}} \Bigg | (T, o)_{h+1} \equiv r}.
  \end{align*}
  %Note that for $v \sim_T o$ such that $T(v,o)_h \equiv s'$ and $T(o,v)_h \equiv s$, we have $T(o,v)_{h+1} \equiv \tilde{s}$ with probability $\tP_{s,s'}(\tilde{s})$, independent from $(T,o)_{h+1}$. 
  Note that the event $\{ T(o,v)_{h+1} \equiv \tilde{s} \}$
  is conditionally independent of the event
  $\{(T, o)_{h+1} \equiv r\}$, given the event
  $\{ T(v,o)_h \equiv s', T(o,v)_h \equiv s \}$.
  %\color{red}
  %The preceding sentence is a rephrasing of what I think you intended to say. Check.
  %\color{black}
  %\pres{correct.}
  Therefore,
  \begin{equation}
    \label{eq:tP-hQ-proof-1}
    \evwrt{\mu}{E_{h+1, h+2}(s', \tilde{s})(T, o)} = \sum_{r \in \mTb_*^{h+1}} Q(r) E_{h+1}(s',s)(r) \tP_{s,s'}(\tilde{s}) =  \evwrt{\mu}{E_{h+1}(s', s)(T, o)} \tP_{s,s'}(\tilde{s}).
  \end{equation}
  Using the unimodularity of $\mu$, we have
  \begin{equation}
    \label{eq:E-s'-s-eQ-s-s'}
    \evwrt{\mu}{E_{h+1}(s', s)(T, o)} = \evwrt{\mu}{E_{h+1}(s, s')(T, o)} = e_Q(s,s').
  \end{equation}
  Thereby,  $e_Q(s,s') > 0$ implies  that $\evwrt{\mu}{E_{h+1}(s', s)(T, o)} > 0$. Therefore, dividing both sides of \eqref{eq:tP-hQ-proof-1} by $\evwrt{\mu}{E_{h+1}(s', s)(T, o)}$, we arrive at \eqref{eq:tP-frac-ev}. Now, we simplify the right hand side of \eqref{eq:tP-frac-ev} to establish \eqref{eq:consistency-hQ-tP}. Using the  unimodularity of $\mu$ for the numerator, we have
  \begin{equation*}
    \evwrt{\mu}{E_{h+1,h+2}(s',\tilde{s})(T,o)} = \evwrt{\mu}{E_{h+2,h+1}(\tilde{s}, s')(T, o)}.
  \end{equation*}
  Observe that $E_{h+2,h+1}(\tilde{s}, s')(T, o) > 0$ iff $(T, o)_{h+1} \equiv
  \tilde{s} \oplus s'$.
  % reason: 
  %Indeed, if $(T, o)_{h+1} \equiv \tilde{s} \oplus s'$, we have
  %$E_{h+2,h+1}(\tilde{s}, s')(T, o) \geq 1$. To see the other direction, note
  %that if $E_{h+2,h+1}(\tilde{s}, s')(T, o) >0$, there exists $v \sim_T o$ such
  %that $T(v,o)_{h+1} \equiv \tilde{s}$ and $T(o,v)_h \equiv s'$, which implies
  %that $(T, o)_{h+1} \equiv \tilde{s} \oplus s'$.
  On the other hand, if $(T,o)_{h+1} \equiv \tilde{s} \oplus s'$, we have $E_{h+2,h+1}(\tilde{s}, s')(T, o) = E_{h+2,h+1}(\tilde{s}, s')(\tilde{s} \oplus s')$. Consequently,
  \begin{equation}
    \label{eq:ev-mu-s'-tildes-T-o}
    \begin{aligned}
      \evwrt{\mu}{E_{h+1,h+2}(s',\tilde{s})(T,o)} &= \prwrt{\mu}{(T,o)_{h+1} \equiv \tilde{s} \oplus s'} E_{h+2,h+1}(\tilde{s}, s')(\tilde{s} \oplus s') \\
      &= Q(\tilde{s} \oplus s') E_{h+2,h+1}(\tilde{s}, s')(\tilde{s} \oplus s').
    \end{aligned}
  \end{equation}
  Note that $\tilde{s} \oplus s'$ by construction has the property that
  $E_{h+2, h+1}(\tilde{s}, s')(\tilde{s} \oplus s') \geq 1$. Thereby,
  Lemma~\ref{lem:Eh-1+Nh} implies that $E_{h+2, h+1}(\tilde{s}, s')(\tilde{s}
  \oplus s') = |\{ v \sim_{\tilde{s} \oplus s'} o: (\tilde{s} \oplus s')(o,v)_h
  \equiv s', \xi_{\tilde{s} \oplus s'}(v, o) = \tilde{s}[m] \}|$. Here, $o$
  denotes the root in $\tilde{s} \oplus s'$. Likewise, since $\tilde{s}_h = s$,
  $E_{h+1}(s, s')(\tilde{s} \oplus s') \geq 1$, and another usage of
  Lemma~\ref{lem:Eh-1+Nh} implies that  $E_{h+1}(s, s')(\tilde{s} \oplus s') = |\{ v
  \sim_{\tilde{s} \oplus s'} o: (\tilde{s} \oplus s')(o,v)_h \equiv s',
  \xi_{\tilde{s} \oplus s'}(v, o) = s[m] \}|$. Also, $\tilde{s}_h = s$ in
  particular means $s[m] = \tilde{s}[m]$. Therefore, $E_{h+2, h+1}(\tilde{s},
  s')(\tilde{s} \oplus s') = E_{h+1}(s, s')(\tilde{s} \oplus s')$.
  Substituting into \eqref{eq:ev-mu-s'-tildes-T-o}, we get
  \begin{equation}
    \label{eq:ev-mu-s',tilde-s-simplified}
    \evwrt{\mu}{E_{h+1,h+2}(s',\tilde{s})(T,o)} = Q(\tilde{s} \oplus s') E_{h+1}(s, s')(\tilde{s} \oplus s').
  \end{equation}
  %On the other hand, the unimodularity of $\mu$ implies that
  %$\evwrt{\mu}{E_{h+1}(s', s)(T,o)} = \evwrt{\mu}{E_{h+1}(s, s')(T,o)} =
  %e_Q(s,s')$.
  Putting \eqref{eq:E-s'-s-eQ-s-s'} and \eqref{eq:ev-mu-s',tilde-s-simplified} back into
  \eqref{eq:tP-frac-ev} and comparing with \eqref{eq:hQ-expanded}, we arrive at
  \eqref{eq:consistency-hQ-tP}, which completes the
  proof.
\end{proof}

\section{Proof of Lemma~\ref{lem:PPh--PtildePh+1}}
\label{sec:proof-lem-Ptilde-Ph+1}

In this section we prove Lemma~\ref{lem:PPh--PtildePh+1}.
First, we state the following lemma from
\cite{bordenave2015large} which will be useful in the proof. 

\begin{lem}[Lemma~5.4 in \cite{bordenave2015large}]
  \label{lem:pllogl--pllogp}
  Let $P=\{p_x,\,x\in\mX\}$ be a probability measure on a discrete space $\mX$
  such that $H(P) < \infty$. Let $(\ell_x)_{x \in \mX}$ be  a sequence with
  $\ell_x \in \integers_+$, $x \in \mX$, such that $ \sum_{x} p_x \ell_x \log \ell_x < \infty$. Then
  $ - \sum_{x} p_x \ell_x \log  p_x < \infty$.
\end{lem}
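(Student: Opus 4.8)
The plan is to split $-\sum_x p_x\ell_x\log p_x$ into a part controlled directly by the two hypotheses and a single residual regime — the indices $x$ for which $p_x$ is super-polynomially small in $\ell_x$ — which is where the finiteness of $H(P)$ must enter in an essential way, and which I expect to be the main obstacle. First I would dispose of trivialities. Terms with $\ell_x=0$ contribute nothing, so assume $\ell_x\ge 1$ for all $x$. Terms with $\ell_x=1$ contribute exactly $\sum_{\ell_x=1}p_x\log\frac{1}{p_x}\le H(P)<\infty$. Next, $\bar\ell:=\sum_x p_x\ell_x$ is finite, because $\sum_{\ell_x\ge 2}p_x\ell_x\le(\log 2)^{-1}\sum_x p_x\ell_x\log\ell_x<\infty$ and $\sum_{\ell_x=1}p_x\ell_x\le 1$; in particular at most $\lfloor\bar\ell\rfloor$ indices satisfy $p_x\ell_x\ge 1$, and these finitely many terms contribute a finite amount to the target sum. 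It therefore remains to bound the sum over the indices with $\ell_x\ge 2$ and $0<p_x\ell_x<1$.

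On this index set I would use the elementary identity $-\log p_x=\log\ell_x+\log\frac{1}{p_x\ell_x}$, all three terms now being nonnegative, which yields
\[
\sum_x p_x\ell_x(-\log p_x)=\sum_x p_x\ell_x\log\ell_x+\sum_x p_x\ell_x\log\frac{1}{p_x\ell_x},
\]
where the first sum is finite by hypothesis. (Equivalently, writing $q_x:=p_x\ell_x/\bar\ell$ for the probability measure obtained from $P$ by size-biasing with respect to $\ell$, the second sum equals $\bar\ell\bigl(H(q)-\log\bar\ell\bigr)$, so the whole lemma reduces to showing that the size-biased measure $q$ has finite entropy.) To control the second sum I would fix a large integer $K$ and split the indices according to whether $p_x\ge\ell_x^{-K}$ or $p_x<\ell_x^{-K}$. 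On $\{p_x\ge\ell_x^{-K}\}$ one has $-\log p_x\le K\log\ell_x$, so $p_x\ell_x\log\frac{1}{p_x\ell_x}\le p_x\ell_x\log\frac{1}{p_x}\le K\,p_x\ell_x\log\ell_x$, and this part is summable by hypothesis.

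The remaining regime $\{p_x<\ell_x^{-K}\}$ is the crux, and I expect it to absorb essentially all of the work. Here $\ell_x<p_x^{-1/K}$, so the contribution is at most $\sum_{p_x<\ell_x^{-K}}p_x^{1-1/K}\log\frac{1}{p_x}$, and the task is to show this is finite once $K$ is taken large enough, using the entropy bound $H(P)=\sum_x p_x\log\frac{1}{p_x}<\infty$. A naive estimate — replacing $p_x\ell_x\log\frac{1}{p_x}$ by $p_x^{1-1/K}\log\frac{1}{p_x}$ and then decomposing the range of $p_x$ into dyadic blocks — is too lossy by itself, so the argument has to retain the finer cancellation $\log\frac{1}{p_x\ell_x}=\log\frac{1}{p_x}-\log\ell_x$ and weigh the penalty $\log\ell_x$ against the power gain obtained from $p_x\ell_x$ being small; making that balance work, and checking that the finite-entropy hypothesis genuinely closes the estimate in this regime, is the technical heart of the lemma and the step I would allocate the bulk of my effort to.
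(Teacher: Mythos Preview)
The paper does not prove this lemma --- it is quoted from \cite{bordenave2015large} (their Lemma~5.4) and used as a black box --- so there is nothing in the paper to compare your proposal against. Assessing it on its own: your reductions (disposing of $\ell_x\le 1$ and of the finitely many $x$ with $p_x\ell_x\ge 1$; rewriting $-\log p_x=\log\ell_x+\log\frac{1}{p_x\ell_x}$ to reduce the problem to finiteness of the size-biased entropy; handling the regime $p_x\ge\ell_x^{-K}$ via $\log(1/p_x)\le K\log\ell_x$) are all correct.

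The gap is exactly where you locate it, and it is genuine: your upper bound on the bad set $\{p_x<\ell_x^{-K}\}$, namely $p_x\ell_x\log\frac{1}{p_x\ell_x}\le p_x^{1-1/K}\log\frac{1}{p_x}$, is \emph{not} controlled by $H(P)<\infty$. For instance, with $\ell_x\equiv 2$ and $p_n=c/(n\log^3 n)$ one has $H(P)<\infty$ and the actual target $\sum_n 2p_n\log(1/p_n)=2H(P)$ is finite, yet your bound $\sum_n p_n^{1-1/K}\log(1/p_n)$ diverges for every fixed $K$; so this estimate discards too much. Your suggested remedy --- ``retain the finer cancellation $\log\frac{1}{p_x\ell_x}=\log\frac{1}{p_x}-\log\ell_x$'' --- is just the identity you already used to set up the decomposition, and you do not indicate how the hypothesis $H(P)<\infty$ would actually enter on a set where $\ell_x$ can be unbounded while $\log\ell_x<\frac{1}{K}\log(1/p_x)$ makes the hypothesis $\sum p_x\ell_x\log\ell_x<\infty$ \emph{weaker} there than the quantity you want to bound. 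As written, you have isolated the difficulty but not resolved it; for the missing idea, consult the original proof in \cite{bordenave2015large}.
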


\begin{proof}[Proof of Lemma~\ref{lem:PPh--PtildePh+1}]
By Lemma~\ref{lem:unimodular-is-admissible},
since $\mu$ is unimodular, $\tP$ is admissible. 
Also,
\begin{equation*}
 \evwrt{\tP}{\deg_T(o) \log \deg_T(o)} = \evwrt{P}{\deg_T(o) \log
  \deg_T(o)} < \infty.
\end{equation*}
Therefore, we only need to verify that $H(\tP) < \infty$.
Define $\nu := \ugwt_h(P)$ and let $P' := \nu_{h+1}\in \mP(\mTb_*^{h+1})$ be the
distribution of the $h+1$--neighborhood of the root in $\nu$.
Here we have again used Lemma~\ref{lem:unimodular-is-admissible} to note that the unimodularity of $\mu$ implies that $P$ is admissible, and hence
$\ugwt_h(P)$ is well-defined. 
Now, we claim that 
\begin{equation}
  \label{eq:tP-log-P'<infty-claim}
  \sum_{s \in \mTb_*^{h+1}} \tP(s) \log \frac{1}{P'(s)} < \infty.
\end{equation}
Using Gibbs' inequality, this  implies that $H(\tP) <
\infty$ and completes the proof. Hence, it suffices to show \eqref{eq:tP-log-P'<infty-claim}.
% for some details on Gibbs' inequality, see the note
% 2018-09-28_PN_Gibbs-for-lemma-Ph-Ph+1.pdf

  % \item Let  $P' := \nu_{h+1}$. 
  % \item First, we show that 
  % \item If we show this, using the identity
  %   \begin{equation}
  %     \label{eq:tP-P'-identity}
  %     \tP(s) \log \frac{1}{\tP(s)} + \tP(s) \log \frac{\tP(s)}{P'(s)} = \tP(s) \log \frac{1}{P'(s)} \qquad \forall s \in \mTb_*^{h+1},
  %   \end{equation}
  %   \begin{equation*}
  %     \begin{aligned}
  %       \sum_{s \in \mTb_*^{h+1}} \left | \tP(s) \log \frac{\tP(s)}{P'(s)} \right |  &=  \sum_{s \in \mTb_*^{h+1}} P'(s) \left |\frac{\tP(s)}{P'(s)} \log \frac{\tP(s)}{P'(s)} \right | \\
  %       &\leq \sum_{s \in \mTb_*^{h+1}} P'(s) \left( \frac{1}{e} + \tP(s) \log \frac{1}{P'(s)} \right) \\
  %       &< \infty
  %     \end{aligned}
  %   \end{equation*}
  %   where the first inequality follows from \eqref{eq:tP-log-P'<infty-claim} and
  %   the fact that $x \log x \geq -1/e$ for
  %   $x > 0$.
  % \item Consequently, using 
  %   \begin{equation*}
  %     \sum_{s \in \mTb_*^{h+1}} \tP(s) \log \frac{\tP(s)}{P'(s)}= H(\tP \Vert P')   \geq 0,
  %   \end{equation*}
  %   together with \eqref{eq:tP-P'-identity}, we conclude that $H(\tP) < \infty$.

Recall that, by the definition of $\ugwt_h(P)$, for $[T, o] \in \mTb_*^{h+1}$ we have
    \begin{equation}
      \label{eq:tP-T-o}
      P'([T, o]) = C P([T, o]_h) \prod_{v \sim_T o} \hP_{T[ o, v]_{h-1}, T[v,o]_{h-1}}(T[o,v]_h),
    \end{equation}
    where $C \geq 1$ is a constant that only depends on $[T,o]$ and counts the
    number of extensions of $[T,o]_h$ that result in $[T,o]$.
      Now, take $[T, o] \in \mTb_*^{h+1}$ such that 
$P'([T,o]) > 0$ and note that, using \eqref{eq:tP-T-o}, we have $P([T,o]_h) > 0$. This,
together with the fact that $P$ is admissible, implies that for all $v \sim_T o$
we have 
\begin{equation}
  \label{eq:P-T-o>0--ep>0}
  \begin{aligned}
    e_P(T[o,v]_{h-1}, T[v,o]_{h-1}) &=   e_P(T[v,o]_{h-1}, T[o,v]_{h-1}) \\
    & \geq P([T,o]_h) E_h(T[v,o]_{h-1}, T[o,v]_{h-1})(T,o) \\
    &\geq P([T,o]_h) > 0.
    \end{aligned}
  \end{equation}
  Therefore, 
for $v \sim_T o $,
with $t := T[o,v]_{h-1}$ and $t' := T[v,o]_{h-1}$, we have  $e_P(t,t') > 0$ and,
using~\eqref{eq:size-biased-def}, 
\begin{equation}
  \label{eq:hP-t-t'-lowerbound}
  \begin{aligned}
    \hP_{t,t'}(T[o,v]_h) &= \frac{P([T, v]_h) E_h(t, t')([T, v]_h)}{e_P(t,t')} \\
    &\geq \frac{P([T,v]_h)}{e_P(t,t')},
  \end{aligned}
\end{equation}
where the last line follows from the fact that $E_h(t,t')([T,v]_h) \geq 1$.
% this is simply because $o$ counts for it, i.e. $\phi_G^h(v,o) = (t,t')$ by
% definition.
Note that, as we have assumed $P'([T,o]) > 0$, from \eqref{eq:tP-T-o} we
have $\hP_{t,t'}(T[o,v]_h) > 0$. Thereby, the first line
in~\eqref{eq:hP-t-t'-lowerbound} implies that $P([T,v]_h) > 0$. So far, we have
shown that for $[T,o] \in \mTb_*^{h+1}$ such that $P'([T,o]) > 0$, for all $v
\sim_T o$ we have $P([T,v]_h) > 0$ and
\begin{equation*}
  \hP_{T[ o, v]_{h-1},
  T[v,o]_{h-1}}(T[o,v]_h) \geq \frac{P([T,v]_h)}{ e_P(T[ o, v]_{h-1},  T[v,o]_{h-1})}.
\end{equation*}

% Moreover, using~\eqref{eq:size-biased-def}, 
% for $v \sim_T o $,
% with $t := T[o,v]_{h-1}$ and $t' := T[v,o]_{h-1}$, if $e_P(t,t') > 0$, we have
% \begin{equation}
%   \label{eq:hP-t-t'-lowerbound}
%   \begin{aligned}
%     \hP_{t,t'}(T[o,v]_h) &= \frac{P([T, v]_h) E_h(t, t')([T, v]_h)}{e_P(t,t')} \\
%     &\geq \frac{P([T,v]_h)}{e_P(t,t')},
%   \end{aligned}
% \end{equation}
% where the last line follows from the fact that $E_h(t,t')([T,v]_h) \geq 1$.
% % this is simply because $o$ counts for it, i.e. $\phi_G^h(v,o) = (t,t')$ by
% % definition.
% Now, if $[T, o] \in \mTb_*^{h+1}$ is such that
% $P'([T,o]) > 0$ then using \eqref{eq:tP-T-o} we have $P([T,o]_h) > 0$. This,
% together with the fact that $P$ is admissible, implies that for all $v \sim_T o$
% we have 
% \begin{equation}
%   \label{eq:P-T-o>0--ep>0}
%   \begin{aligned}
%     e_P(T[o,v]_{h-1}, T[v,o]_{h-1}) &=   e_P(T[v,o]_{h-1}, T[o,v]_{h-1}) \\
%     & \geq P([T,o]_h) E_h(T[v,o]_{h-1}, T[o,v]_{h-1})(T,o) \\
%     &\geq P([T,o]_h) > 0.
%     \end{aligned}
% \end{equation}
% Observe that 
% \eqref{eq:P-T-o>0--ep>0} implies that \eqref{eq:hP-t-t'-lowerbound} holds.
% Also, the first line in~\eqref{eq:hP-t-t'-lowerbound} suggests that $P([T,v]_h) > 0$
%\color{red}
%Note to Payam:
%I have no idea what the preceding sentence means and 
%cannot figure out what you are intending to say.
%Therefore I will check this appendix in the next phase.
%\color{black}
%\pres{modified this part to clarify}
Substituting this
  in \eqref{eq:tP-T-o}, we realize that for
$[T,o] \in \mTb_*^{h+1}$ with $P'([T,o]) > 0$, we have 
\begin{equation}
  \label{eq:log-1-tP-T-o--expansion}
  \log \frac{1}{P'([T,o])} \leq \log \frac{1}{P([T,o]_h)} + \sum_{v \sim_T o} \log \frac{1}{P([T,v]_h)} + \sum_{v \sim_T o} \log e_P(T[o,v]_{h-1}, T[v,o]_{h-1}).
\end{equation}

Next, we claim that $\tP \ll P'$. Observe that from \eqref{eq:tP-T-o}, for
$[T,o] \in \mTb_*^{h+1}$, $P'([T,o])
    = 0$  implies that either $P([T,o]_h) = 0$ or $P([T,o]_h) > 0$ and $\hP_{T[o,v]_{h-1},
      T[v,o]_{h-1}}(T[o,v]_h) = 0$ for some $v \sim_T o$.
But if $P([T,o]_h) > 0$, \eqref{eq:P-T-o>0--ep>0} implies that for all $v \sim_T
o$, we have  $e_P(T[o,v]_{h-1}, T[v,o]_{h-1}) > 0$. Therefore,
using \eqref{eq:hP-t-t'-lowerbound}, if $ \hP_{T[o,v]_{h-1},
      T[v,o]_{h-1}}(T[o,v]_h) = 0$ for some $v \sim_T o$, it must be
the case that $P([T,v]_h) = 0$. Consequently, $P'([T,o]) = 0$ implies that
either $P([T,o]_h) = 0 $ or $P([T,o]_h) > 0$ and for some $v \sim_T o$, we have
$P([T,v]_h) = 0$.
Note that since $\tP_h = P$, if $P([T,o]_h) = 0$, we have $\tP([T,o]) = 0$. Now,
we claim that if $P([T,v]_h) = 0$ for some $v \sim_T o$, then $\tP([T,o]) = 0$.
In order to establish this claim, using $\tP = \mu_{h+1}$, we have
\begin{equation*}
  \begin{aligned}
    \evwrt{\tP}{\sum_{v \sim_T o} \one{P([T,v]_h) = 0}} &= \int \sum_{v \sim_T o } \one{P([T,v]_h) = 0} d \mu([T,o]) \\
    &\stackrel{(a)}{=} \int \sum_{v \sim_T o} \one{P([T,o]_h) = 0} d \mu([T,o]) \\
    &= \int \deg_T(o) \one{P([T,o]_h = 0)} d\mu([T,o]) \\
    &\stackrel{(b)}{=} 0,
  \end{aligned}
\end{equation*}
where $(a)$ uses the unimodularity of $\mu$ and $(b)$ uses the fact that
$\mu_h = P$.
This means that for $\tP$--almost all $[T,o] \in \mTb_*^{h+1}$,
$P([T,v]_h) > 0$ for all $v \sim_T o$.
Equivalently, if $P([T,v]_h) = 0$ for $v \sim_T o$, we have $\tP([T,o]) = 0$. 
To sum up, we showed that for $[T,o] \in \mTb_*^{h+1}$, $P'([T,o]) = 0$ implies
$\tP([T,o]) = 0$ and hence $\tP \ll P'$. 
As a result, using this and \eqref{eq:log-1-tP-T-o--expansion}, we may write the
LHS of \eqref{eq:tP-log-P'<infty-claim} as
\begin{equation}
  \label{eq:sum-tP-log-P'--3-components}
\begin{aligned}
  \sum_{s \in \mTb_*^{h+1}} \tP(s) \log \frac{1}{P'(s)} &= \sum_{s \in \mTb_*^{h+1}: P'(s) > 0} \tP(s) \log \frac{1}{P'(s)} \\
                                                        &\leq \sum_{[T,o]\in \mTb_*^{h+1}: P'([T,o]) > 0} \tP([T,o]) \Bigg(  \log \frac{1}{P([T,o]_h)} + \sum_{v \sim_T o } \log \frac{1}{P([T,v]_h)} \\
  &\qquad \qquad + \sum_{v \sim_T o} \log e_P(T[o,v]_{h-1}, T[v,o]_{h-1}) \Bigg).
\end{aligned}
\end{equation}
%Since all the terms are nonnegative,
We may bound each component separately as follows.
First, note that the facts $\tP \ll P'$, $\tP = \mu_{h+1}$ and $P = \mu_h$
imply that 
\begin{equation}
  \label{eq:int-log-1-P-mu}
  \begin{aligned}
    \sum_{[T,o]\in \mTb_*^{h+1}: P'([T,o]) > 0} \tP([T,o]) \log \frac{1}{P([T,o]_h)} &= \int \log \frac{1}{P([T,o]_h)} d \mu([T,o]) \\
    &= - \sum_{s \in \mTb_*^h} P(s) \log P(s) = H(P) < \infty.
    \end{aligned}
\end{equation}
We also have
    \begin{equation}
      \label{eq:int-sum-1-P-T-v}
      \begin{aligned}
        \sum_{[T,o]\in \mTb_*^{h+1}: P'([T,o]) > 0} \tP([T,o]) \sum_{v \sim_T o} \log \frac{1}{P([T,v]_h)} &= \int \sum_{v \sim_T o} \log \frac{1}{P([T,v]_h)} d \mu([T,o]) \\
        &\stackrel{(a)}{=} \int \sum_{v \sim_T o} \log \frac{1}{P([T,o]_h)} d \mu([T,o])\\
        &= \int \deg_T(o) \log \frac{1}{P([T,o]_h)} d \mu([T,o]) \\
        &= -\sum_{[T,o] \in \mTb_*^h} \deg_T(o) P([T,o]) \log P([T,o]) \\
        &\stackrel{(b)}{<}  \infty
      \end{aligned}
    \end{equation}
    where $(a)$ follows from unimodularity of $\mu$ and $(b)$ follows from
    Lemma~\ref{lem:pllogl--pllogp} and the
    fact that since {$P$ is strongly admissible, i.e.} $P \in \mP_h$, we have  $\evwrt{P}{\deg_T(o) \log \deg_T(o)} <
    \infty$.
    %Recall that $\deg_s(0)$ denotes the degree at the root in $s$.
Finally, for the third component, note that since
    \begin{equation*}
      d = \evwrt{P}{\deg_T(o)} = \evwrt{P}{\sum_{t,t' \in \edgemark \times \mTb_*^{h-1}} E_h(t,t')(T,o)} = \sum_{t,t' \in \edgemark \times \mTb_*^{h-1}} e_P(t,t'),
    \end{equation*}
    we have $e_P(T[o,v]_{h-1}, T[v,o]_{h-1}) \leq d$ for all $[T,o] \in \mTb_*^{h+1}$
    and $v \sim_T o$. Consequently,
    \begin{equation}
      \label{eq:sum-tP-sum-ep--dlogd--bound}
    \begin{aligned}
      \sum_{[T,o]\in \mTb_*^{h+1}: P'([T,o]) > 0} \tP( [T,o]) &\sum_{v \sim_T o} \log e_P(T[o,v]_{h-1}, T[v,o]_{h-1}) \\
      &=  \int \sum_{v \sim_T o} \log e_P(T[o,v]_{h-1}, T[v,o]_{h-1}) d \mu([T,o])\\
      &\leq \int \deg_T(o) (\log d) d \mu([T,o]) = d \log d < \infty.
    \end{aligned}
  \end{equation}
  Putting~\eqref{eq:int-log-1-P-mu}, \eqref{eq:int-sum-1-P-T-v} and
  \eqref{eq:sum-tP-sum-ep--dlogd--bound} back in
  \eqref{eq:sum-tP-log-P'--3-components} we arrive at
  \eqref{eq:tP-log-P'<infty-claim}, which completes the proof.
    % \item For the third term, let $\mC := (\vermark \times \mTb_*^{h-1}) \times
  %   (\edgemark \times \mTb_*^{h-1})$ and note that
  %   \begin{align*}
  %     \sum_{c \in \mC} e_P(c) \log e_P(c) &= \sum_{c \in \mC} \sum_{[T, o] \in \mTb_*^h} P([T,o]) E_h(c)(T,o) \log e_P(c) \\
  %                                         &\stackrel{(a)}{=} \sum_{[T,o] \in \mTb_*^h} \sum_{c \in \mC} P([T,o]) E_h(c)(T,o) \log e_P(c) \\
  %                                         &=\sum_{[T,o] \in \mTb_*^h}  P([T,o]) \sum_{c \in \mC} E_h(c)(T,o) \log e_P(c) \\
  %                                         &=\sum_{[T,o] \in \mTb_*^h} P([T,o]) \sum_{v \sim_T o} \log e_P(T[v,o]_{h-1}, T[o,v]_{h-1}) \\
  %                                         &= \int \sum_{v \sim_T o} \log e_P(T[v,o]_{h-1}, T[o,v]_{h-1}) d\mu([T,o]) \\
  %                                         &\stackrel{(b)}{=} \int \sum_{v \sim_T o} \log e_P(T[o,v]_{h-1}, T[v,o]_{h-1}) d\mu([T,o])
  %                                         % &= \sum_{[T,o] \in \mTb_*^h} P([T,o]) \sum_{v \sim_T o} \log e_P(T[o,v]_{h-1}, T[v,o]_{h-1})
  %   \end{align*}
  %   where $(a)$ follows the assumption $\sum_{c \in \mC}|e_P(c) \log e_P(c)| <
  %   \infty$ and Fubini's theorem, and $(b)$ follows from the unimodularity of
  %   $\mu$.
  % \item Since $\sum_{c \in \mC} e_P(c) \log e_P(c)<\infty $, we realize that
  %   the integral of the last term is also finite.
\end{proof}

\section{Calculations for Deriving \eqref{eq:log-mGnmnun-Stirling}}
\label{sec:app-Stirling}

First note that since $u^{(n)}(\theta) / n \rightarrow q_\theta$ for all $\theta \in
\vermark$, we have
\begin{equation}
  \label{eq:n!-un!-HQ-on}
  \log \frac{n!}{\prod_{\theta \in \vermark} u^{(n)}(\theta)!} = n H(Q) + o(n).
\end{equation}
Furthermore, since $\mn(x,x') / n \rightarrow d_{x,x'}$ for $x \neq x'$,
we have 
\begin{equation}
  \label{eq:Stirling-2-power-term-simplification}
  \begin{aligned}
    \log 2^{\sum_{x < x'} \mn(x,x')} &= n \sum_{x < x'} \frac{\mn(x,x')}{n} \log 2 \\
    &= n \left( \sum_{x<x'} d_{x,x'} \log 2 + o(1) \right) \\
    &= n \sum_{x < x'} d_{x,x'} \log 2 + o(n).
  \end{aligned}
\end{equation}
Moreover, 
from Stirling's approximation for the factorial, for a positive integer $k$ we have $\log k! = k \log
k - k + O(\log k)$. Moreover, since for all $x \neq  x' \in \edgemark$ we have
$\mn(x,x') / n \rightarrow d_{x,x'} < \infty$ and for $x \in
  \edgemark$ we have $\mn(x,x) / n \rightarrow d_{x,x} / 2 < \infty$, we conclude that we have $\mn(x,x') =
O(n)$ for all $x, x' \in \edgemark$, and so
\begin{align*}
  &\log \frac{\frac{n(n-1)}{2}!}{\prod_{x\leq x' \in \edgemark} \mn(x,x')! \times \left ( \frac{n(n-1)}{2} - \snorm{\vmn}_1 \right)!} = \frac{n(n-1)}{2} \log \frac{n(n-1)}{2} - \frac{n(n-1)}{2} \\
  &\qquad \qquad  - \sum_{x \leq x'} \left(  \mn(x,x')  \log \mn(x,x') - \mn(x,x')  \right) \\
  &\qquad \qquad - \left[ \left( \frac{n(n-1)}{2} - \snorm{\vmn}_1 \right)  \log \left( \frac{n(n-1)}{2} - \snorm{\vmn}_1 \right) - \left( \frac{n(n-1)}{2} - \snorm{\vmn}_1 \right) \right] \\
  &\qquad \qquad + O(\log n) \\
  &= \frac{n(n-1)}{2} \log \frac{n(n-1)}{2} 
  %- \frac{n(n-1)}{2} 
  - \sum_{x \leq x'} \mn(x,x') \log \mn(x,x') 
  %+ \underbrace{\sum_{x \leq x'} \mn(x,x')}_{= \snorm{\vmn}_1} 
  \\
  &\qquad \qquad - \frac{n(n-1)}{2} \log \left( \frac{n(n-1)}{2} - \snorm{\vmn}_1 \right) + \snorm{\vmn}_1 \log \left( \frac{n(n-1)}{2} - \snorm{\vmn}_1 \right) + o(n)\\
  %&\qquad \qquad + \frac{n(n-1)}{2} - \snorm{\vmn}_1 + o(n) \\
  &= - \frac{n(n-1)}{2} \log\left( 1 - \frac{2\snorm{\vmn}_1}{n(n-1)} \right) - n \sum_{x \leq x'} \frac{\mn(x,x')}{n} \log \frac{\mn(x,x')}{n} - n \sum_{x \leq x'} \frac{\mn(x,x')}{n} \log n \\
  &\qquad \qquad + \snorm{\vmn}_{1} \log \left[ n^2 \left( \frac{n-1}{2n} - \frac{\snorm{\vmn}_1}{n^2} \right) \right] + o(n).\\
  \intertext{Using the facts that $2\snorm{\vmn}_1 / n(n-1) \rightarrow 0$ and $\log(1-x) = -x + O(x^2)$, this simplifies to} 
  &= -\frac{n(n-1)}{2} \left[ -\frac{2\snorm{\vmn}_1}{n(n-1)} + O\left( \frac{4\snorm{\vmn}_1^2}{(n(n-1))^2} \right) \right] - n \left( \sum_{x < x'} d_{x,x'} \log d_{x,x'} + \sum_{x} \frac{d_{x,x}}{2} \log \frac{d_{x,x}}{2} \right) \\
  &\qquad \qquad - \snorm{\vmn}_1 \log n + 2 \snorm{\vmn}_1 \log n + n \frac{\snorm{\vmn}_1}{n} \log \left( \frac{n-1}{2n} - \frac{\snorm{\vmn}_1}{n^2} \right) + o(n).\\
  \intertext{Since, by assumption, $\snorm{\vmn}_1/n \rightarrow \sum_{x<x'} d_{x,x'} + \sum_x d_{x,x}/2 = \sum_{x, x'} d_{x,x'} / 2$, this simplifies to }
  &= n \frac{\snorm{\vmn}_1}{n} + \underbrace{O\left( \frac{\snorm{\vmn}_1^2}{n(n-1)} \right)}_{O(1)} -n \left( \sum_{x < x'} d_{x,x'} \log d_{x,x'} + \sum_{x} \frac{d_{x,x}}{2} \log \frac{d_{x,x}}{2} \right) \\
  &\qquad \qquad + \snorm{\vmn}_1 \log n + n \left( \sum_{x,x'} \frac{d_{x,x'}}{2} \log \frac{1}{2} + o(1) \right) + o(n) \\
 % &= n \sum_{x,x'} \frac{d_{x,x'}}{2} - n \left( \sum_{x \neq x'} \frac{d_{x,x'}}{2} \log d_{x,x'} + \sum_x \frac{d_{x,x}}{2} \log d_{x,x} + \sum_x \frac{d_{x,x}}{2} \log \frac{1}{2} \right) \\
  %&\qquad \qquad + \snorm{\vmn}_1 \log n + n \left( \sum_{x,x'} \frac{d_{x,x'}}{2} \log \frac{1}{2} \right) + o(n) \\
  %&= n \left( \sum_{x,x'} \frac{d_{x,x'}}{2} - \frac{d_{x,x'}}{2} \log d_{x,x'} \right) + \snorm{\vmn}_1 \log n + n \sum_{x \neq x'} \frac{d_{x,x'}}{2} \log \frac{1}{2} + o(n)\\
  &= \snorm{\vmn}_1 \log n + n \sum_{x,x'} s(d_{x,x'}) - n \sum_{x < x'} d_{x,x'} \log 2 + o(n).
\end{align*}
Using this together with \eqref{eq:n!-un!-HQ-on} and
\eqref{eq:Stirling-2-power-term-simplification}, we get 
\begin{align*}
  \log |\mGnmnun| &= n H(Q) + \snorm{\vmn}_1 \log n + n \sum_{x,x'} s(d_{x,x'}) - n \sum_{x < x'} d_{x,x'} \log 2 \\
                  &\qquad \qquad + n \sum_{x < x'} d_{x,x'} \log 2 + o(n) \\
  &= \snorm{\vmn}_1 \log n + n H(Q) + n \sum_{x,x'} s(d_{x,x'}) + o(n),
\end{align*}
which is precisely what was stated in~\eqref{eq:log-mGnmnun-Stirling}.

%%% Local Variables:
%%% mode: latex
%%% TeX-master: "Note-41_BC-ent-arxiv"
%%% End:

%\section{Properties of Unimodular Galton-Watson Trees with given neighborhood}
\section{Proof of Proposition~\ref{prop:ugwthP-properties}}
\label{sec:prop-unim-galt}

In this section, we prove Proposition~\ref{prop:ugwthP-properties}.

\begin{proof}[Proof of Proposition~\ref{prop:ugwthP-properties}]
  Let $\mu := \ugwt_h(P)$. If $P$ has a finite support then,
  %unimodularity of $\mu$
  %follows from the fact that,
  as implied by
   Lemma~\ref{lem:P-finite-seq-converging} in
    Section~\ref{sec:weak-convg-to-admissible} and
  Proposition~\ref{prop:MCB-colored-unif-D-2h+1--converge} in
  Section~\ref{sec:connection-color-ugwt}, $\mu$ is sofic. If $P$ does not have
  a finite support then, along the lines of the proof of Proposition~\ref{prop:lower-bound}
  in Section~\ref{sec:lowerbound}, for $k > 1$, let $\muk$ be the law of
  $[T^{(k)}, o]$ obtained from $[T, o] \sim \mu$ as follows. For each vertex $v
  \in V(T)$, we remove all the edges connected to $v$ if $\deg_T(v) \geq k$.
  Then, we let $T^{(k)}$ denote the connected component of the root in the resulting forest. As was shown
  in the proof of Proposition~\ref{prop:lower-bound}, with
  $P_k := (\muk)_h$, as $k \rightarrow \infty$,  we have $P_k \Rightarrow P$ and $e_{P_k}(t,t') \rightarrow
  e_P(t,t')$ for all $t, t' \in \edgemark \times \mTb_*^{h-1}$  (see
  \eqref{eq:e-Pk->e-P}).
  Note that, from Lemma~\ref{lem:ugwt-P-is-unimodular} in
    Appendix~\ref{sec:UGWT-unimod}, $\mu$ is unimodular. Thereby, it is easy to
    see that $\mu^{(k)}$ is also unimodular. Furthermore, from Lemma~\ref{lem:unimodular-is-admissible}, $P_k$
  is admissible.
   The above discussion together with
  Lemma~\ref{lem:Pn-conv-P--ugwt-Pn-conv-ugwt-P} in
  Appendix~\ref{sec:UGWT-convg} implies that 
$\ugwt_h(P_k) \Rightarrow \mu$. On the other hand, as we have discussed above,
since $P_k$ has a finite support,  $\ugwt_h(P_k)$ is sofic. Therefore, a
diagonal argument implies that $\mu$ is also  sofic and completes the proof. 
%\color{red}
%Note to Payam: Why is it only being claimed that 
%$\mu$ is unimodular? The claim that is supposed to be proved is that $\mu$ is
%sofic.
%\color{black}
%\pres{sorry, the last part of the argument was missing which is now added.}
It is worth recalling that we have earlier directly shown the unimodularity of $\mu$ in Appendix~\ref{sec:UGWT-unimod}.
However, in general, being sofic might be a stronger property than being unimodular for all one knows at the moment.
%\color{red}
%Note to Payam: Is the preceding sentence refering to the proof that was already given in this document?
%If so, the phrasing of that sentence should be modified.
%\color{black}
%\pres{modified the sentence and added reference to Appendix~\ref{sec:UGWT-unimod}.}
\end{proof}

%%% Local Variables: 
%%% mode: latex
%%% TeX-master: "Note-41_BC-ent-arxiv.tex"
%%% End: 

%aa
%\bibliographystyle{alpha}
%\bibliography{refs.bib}
% \input{Note-41_BC-ent-arxiv.bbl}

% bb

\pagebreak
\printglossaries
%\printglossary[type=notation]
%\printglossary[type=term]

%aa
\end{document}
%bb

%%% reftex-default-bibliography: ("~/Documents/Projects/Venkat/Notes/Note-41_Entropy-well-defined_to-put-on-arxiv/V1/refs.bib")